\documentclass[envcountsame,runningheads]{llncs}

\usepackage{bbding}

\makeatletter
\RequirePackage[bookmarks,unicode,colorlinks=true]{hyperref}%
    \def\@citecolor{blue}%
    \def\@urlcolor{blue}%
    \def\@linkcolor{blue}%

\def\orcidID#1{\smash{\href{http://orcid.org/#1}{\protect\raisebox{-1.25pt}{\protect\includegraphics{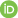}}}}}
\makeatother

\usepackage[utf8]{inputenc}
\usepackage[english]{babel}

\usepackage{amsmath}
\usepackage{amssymb}
\usepackage{mathtools}
\usepackage{thmtools}
\usepackage{stmaryrd}
\usepackage{wasysym}
\usepackage{xfrac}

\usepackage{tikz}
\usepackage{pgfplots}
\usepackage{lstautogobble}
\usepackage{listings}
\lstset{
    float,
    captionpos=b
}
\colorlet{punct}{red!60!black}
\definecolor{delim}{RGB}{20,105,176}
\colorlet{numb}{magenta!60!black}
\lstdefinelanguage{JSON}{
    autogobble=true,
    basicstyle=\normalfont\small\ttfamily,
    numbers=left,
    numberstyle=\scriptsize,
    stepnumber=1,
    showstringspaces=false,
    breaklines=true,
    frame=lines,
    literate=
     *{0}{{{\color{numb}0}}}{1}
      {1}{{{\color{numb}1}}}{1}
      {2}{{{\color{numb}2}}}{1}
      {3}{{{\color{numb}3}}}{1}
      {4}{{{\color{numb}4}}}{1}
      {5}{{{\color{numb}5}}}{1}
      {6}{{{\color{numb}6}}}{1}
      {7}{{{\color{numb}7}}}{1}
      {8}{{{\color{numb}8}}}{1}
      {9}{{{\color{numb}9}}}{1}
      {:}{{{\color{punct}{:}}}}{1}
      {,}{{{\color{punct}{,}}}}{1}
      {\{}{{{\color{delim}{\{}}}}{1}
      {\}}{{{\color{delim}{\}}}}}{1}
      {[}{{{\color{delim}{[}}}}{1} 
      {]}{{{\color{delim}{]}}}}{1}, 
    morestring=[b]", 
    morestring=[d]',
}
\lstset{
    numberbychapter=false
}

\usepackage{xparse}
\usepackage{xspace}

\usepackage{csquotes}

\usepackage[nameinlink, capitalize]{cleveref}

\usepackage{todonotes}

\usepackage[noend]{algpseudocode}
\usepackage{algorithm}
\usepackage{tabularx}
\usepackage{booktabs}
\usepackage{subcaption}
\newcolumntype{R}{>{\raggedleft\arraybackslash}X}%

\usepackage{biblatex}

\usepackage{mathrsfs}

\usepackage[inline]{enumitem}

\usetikzlibrary{
    automata,
    arrows.meta,
    calc,
    positioning,
    backgrounds,
    shapes.geometric,
}
\tikzset {
    edge with arrow/.style = {
        ->,
        >=stealth,
        shorten >=1pt,
    },
    directed/.style = {
        edge with arrow,
        node distance=2cm,
        on grid,
        semithick,
        double distance=1.5pt,
    },
    automaton/.style = {
        directed,
        auto,
        initial text={},
        state/.append style = {
            ellipse,
        },
    },
    deltaZero/.style = {
        blue
    },
    deltaNotZero/.style = {
        teal
    }
}
\pgfplotsset{
    compat=newest,
    colormap={Blues}{rgb=(1, 1, 1) rgb=(0, 0, 1)}, 
    colormap={Reds}{rgb=(1, 1, 1) rgb=(1, 0, 0)}, 
    legend pos=outer north east
}

\NewDocumentCommand{\N}{}{\mathbb{N}}

\NewDocumentCommand{\prefixes}{m}{\mathit{Pref}{(#1)}} 
\NewDocumentCommand{\suffixes}{m}{\mathit{Suff}{(#1)}}
\NewDocumentCommand{\SigmaStarLl}{O{\ell}}{\Sigma_{L,#1}^*}
\NewDocumentCommand{\sizeOfSet}{m}{|#1|}

\RenewDocumentCommand{\iff}{}{\Leftrightarrow}
\RenewDocumentCommand{\implies}{}{\Rightarrow}

\NewDocumentCommand{\complexity}{}{\mathcal{O}}

\NewDocumentCommand{\emptyword}{}{\varepsilon}
\NewDocumentCommand{\automaton}{O{A}}{\mathcal{#1}}
\NewDocumentCommand{\nerodeCongruence}{}{\mathbin{\sim}}

\NewDocumentCommand{\equivalenceClass}{m}{\llbracket#1\rrbracket}
\NewDocumentCommand{\languageOf}{m}{\mathcal{L}{(#1)}}
\NewDocumentCommand{\lengthOf}{m}{\lvert#1\rvert}
\DeclareMathOperator*{\transition}{\mathbin{\longrightarrow}}

\NewDocumentCommand{\deltaZero}{o}{\IfNoValueTF{#1}{\delta_{=0}}{\delta^{=0}_{#1}}}
\NewDocumentCommand{\deltaNotZero}{o}{\IfNoValueTF{#1}{\delta_{>0}}{\delta^{>0}_{#1}}}
\NewDocumentCommand{\equivBGROCA}{}{\equiv}
\NewDocumentCommand{\visiblyDeltaZero}{}{\visibly{\delta}_{=0}}
\NewDocumentCommand{\visiblyDeltaNotZero}{}{\visibly{\delta}_{>0}}

\NewDocumentCommand{\pushdownAlphabet}{}{\visibly{\Sigma}}
\NewDocumentCommand{\callAlphabet}{}{\Sigma_{c}}
\NewDocumentCommand{\returnAlphabet}{}{\Sigma_{r}}
\NewDocumentCommand{\internalAlphabet}{}{\Sigma_{int}}
\NewDocumentCommand{\boundedHeights}{O{\ell}}{\Sigma_{0, #1}^*}
\NewDocumentCommand{\boundedLanguage}{O{\ell}}{L_{\leq {#1}}}
\NewDocumentCommand{\countervalue}{m}{cv{(#1)}}
\NewDocumentCommand{\counterAut}{m}{c_{\automaton}{(#1)}}
\NewDocumentCommand{\height}{m}{h{(#1)}}
\NewDocumentCommand{\heightAut}{mO{\automaton}}{h_{#2}{(#1)}}
\NewDocumentCommand{\signOf}{m}{\chi{(#1)}}
\NewDocumentCommand{\visibly}{m}{\overline{#1}}
\NewDocumentCommand{\toVisibly}{}{\lambda}

\NewDocumentCommand{\behaviorGraphAut}{O{\automaton}}{BG(#1)} 
\NewDocumentCommand{\limBehaviorGraphAut}{O{\ell}O{\automaton}}{BG_{#1}(#2)} 
\NewDocumentCommand{\behaviorGraph}{O{L}}{BG_{#1}}
\NewDocumentCommand{\limitedBehaviorGraph}{O{\ell}O{L}}{BG_{#2\mid#1}}

\NewDocumentCommand{\isomorphism}{}{\varphi}

\NewDocumentCommand{\representatives}{}{R}
\NewDocumentCommand{\separatorsCounters}{}{S}
\NewDocumentCommand{\separatorsMapping}{}{\widehat{S}}
\NewDocumentCommand{\Rf}{}{R_f}
\NewDocumentCommand{\RfSize}{}{\alpha}

\NewDocumentCommand{\SHatf}{}{\widehat{S}_f}
\NewDocumentCommand{\binWord}{}{$\bot$-word\xspace}
\NewDocumentCommand{\binWords}{}{$\bot$-words\xspace}
\NewDocumentCommand{\withoutBinWords}{m}{\overline{#1}}
\NewDocumentCommand{\hypothesis}{}{\automaton[H]}

\NewDocumentCommand{\behaviorGraphHypothesis}{O{\ell}O{L}}{\automaton[H]_{#2 \mid #1}}

\NewDocumentCommand{\LStar}{}{\ensuremath{{L}^*}\xspace}

\NewDocumentCommand{\observationTable}{O{\ell}}{\mathscr{O}_{#1}}

\NewDocumentCommand{\Approx}{}{\mathit{Approx}}
\NewDocumentCommand{\approxSet}{}{approximation set}
\NewDocumentCommand{\obsTable}{}{\observationTable}
\NewDocumentCommand{\obsMapping}{O{\ell}}{\mathcal{L}_{#1}}
\NewDocumentCommand{\obsCounters}{O{\ell}}{\mathcal{C}_{#1}}
\NewDocumentCommand{\equivTable}{O{\obsTable}}{\equiv_{#1}}
\NewDocumentCommand{\autTable}{O{\obsTable}}{{\automaton}_{#1}}
\NewDocumentCommand{\limAutTable}{O{\automaton}O{\obsTable}O{t}}{{#1}_{#2,#3}}
\NewDocumentCommand{\binClass}{O{\obsTable}}{\bot_{#1}}
\NewDocumentCommand{\prefTable}{O{\obsTable}}{\mathit{Pref}({#1})} 
\NewDocumentCommand{\prefLevel}{O{\ell}O{L}}{\mathit{Pref}_{#1}({#2})} 
\NewDocumentCommand{\prefixed}{}{$\bot$-consistent\xspace}

\NewDocumentCommand{\notPrefixed}{}{$\bot$-inconsistent\xspace}

\NewDocumentCommand{\inconsistencyBot}{}{$\bot$-inconsistency\xspace}

\NewDocumentCommand{\consistent}{}{$\Sigma$-consistent\xspace}

\NewDocumentCommand{\inconsistent}{}{$\Sigma$-inconsistent\xspace}
\NewDocumentCommand{\inconsistency}{o}{\IfNoValueTF{#1}{$\Sigma$-inconsistency}{$\Sigma$-(#1)-inconsistency}\xspace} 

\NewDocumentCommand{\mismatch}{}{mismatch\xspace}

\NewDocumentCommand{\lengthCe}{}{t} 
\NewDocumentCommand{\observationTree}{O{\ell}}{\mathcal{T}_{#1}}

\algblockdefx{DoWhile}{EndDoWhile}
    {\algorithmicdo}
    [1]{\algorithmicwhile\ #1}
\algnewcommand{\LineComment}[1]{\Statex \(\triangleright\) #1}

\title{Learning Realtime One-Counter Automata\thanks{This work was partially supported by the Belgian FWO \enquote{SAILor} project (G030020N). Ga\"etan Staquet is a research fellow (Aspirant) of the Fonds de la Recherche Scientifique -- FNRS.} 
}
\author{%
    Véronique Bruyère\inst{1}\orcidID{0000-0002-9680-9140} \and
    Guillermo A. P\'erez\inst{2}\orcidID{0000-0002-1200-4952} \and
    Gaëtan Staquet\inst{1,2}(\Envelope)\orcidID{0000-0001-5795-3265}
}
\authorrunning{V. Bruy\`ere, G. A. P\'erez, and G. Staquet}
\institute{%
    University of Mons (UMONS), Mons, Belgium
	\email{\{veronique.bruyere,gaetan.staquet\}@umons.ac.be}
    \and
    University of Antwerp (UAntwerp) -- Flanders Make, Antwerp, Belgium 
	\email{guillermoalberto.perez@uantwerpen.be}
}

\addbibresource{main.bib}

\setcounter{tocdepth}{3}

\begin{document}

\maketitle

\begin{abstract}
We present a new learning algorithm for realtime one-counter automata.
Our algorithm uses membership and equivalence queries as in Angluin's $\LStar$ algorithm, as well as
counter value queries and partial equivalence queries. In a partial
equivalence query, we ask the teacher whether the language of a given
finite-state automaton coincides with a counter-bounded subset of the target
language. We evaluate an implementation of our algorithm on a number of random
benchmarks and on a use case regarding efficient JSON-stream validation.
\keywords{Realtime one-counter automata \and Active learning}
\end{abstract}

\section{Introduction}
In \emph{active learning}, a \emph{learner} has to infer a model of an unknown machine by interacting with a \emph{teacher}. Angluin's seminal \LStar algorithm does precisely this for finite-state automata while using only \emph{membership} and \emph{equivalence queries}~\cite{DBLP:journals/iandc/Angluin87}. An important application of active learning is to learn black-box models from (legacy) software and hardware systems~\cite{DBLP:journals/igpl/GrocePY06,DBLP:journals/jalc/PeledVY02}. Though recent works have greatly advanced the state of the art in finite-state automata learning, handling real-world applications usually involves tailor-made abstractions to circumvent elements of the system which result in an infinite state space~\cite{DBLP:journals/fmsd/AartsJUV15}. This highlights the need for learning algorithms that focus on more expressive models.

One-counter automata are obtained by extending finite-state automata with an integer-valued variable, i.e., the counter, that can be increased, decreased, and tested for equality against a finite number of values.
The counter allows such automata to capture the behavior of some infinite-state systems. Additionally, their expressiveness has been shown sufficient to verify programs with lists~\cite{DBLP:journals/fmsd/BouajjaniBHIMV11} and validate  XML streams~\cite{DBLP:conf/webdb/ChiticR04}. To the best of our knowledge, there is no learning algorithm for general one-counter automata. 

For \emph{visibly} one-counter automata (that is, when the alphabet is such that letters determine whether the counter is decreased, increased, or not affected), Neider and L\"oding describe an elegant algorithm in~\cite{neider2010learning}. Besides the usual membership and equivalence queries, their algorithm uses \emph{partial equivalence queries}: given a finite-state automaton $\mathcal{A}$ and a bound $k$, the teacher answers whether the language of $\mathcal{A}$ corresponds to the $k$-bounded subset of the target language. (This sub-language intuitively corresponds to what the one-counter automaton accepts if we only allow its counter to take values of at most $k$.) Additionally, Fahmy and Roos~\cite{DBLP:conf/alt/FahmyR95} claim to have solved a more general case. Namely, the case of realtime one-counter automata (that is, when the automaton is assumed to be configuration-deterministic and no $\emptyword$-transitions are allowed).
However, we were unable to understand the algorithm and proofs in that paper
due to lack of precise formalization and detailed proofs.
We also found an example where the provided algorithm did not produce the expected results.
It is noteworthy that B\"ohm et al.~\cite{DBLP:journals/jcss/BohmGJ14} made similar remarks about related works of Roos~\cite{DBLP:conf/focs/BermanR87,roos1988deciding}.

\paragraph*{Our contribution.}
We present a learning algorithm for realtime one-counter automata. Our
algorithm uses membership, equivalence and partial equivalence queries. It
also makes use of \emph{counter value queries}. That is, we make the
assumption that we have an executable black box with observable counter
values. We prove that our algorithm runs in exponential time and
space and that it uses at most an exponential number of queries.

One-counter automata with counter-value observability were observed to have
desirable properties by Bollig in~\cite{DBLP:conf/fsttcs/Bollig16}.
Importantly, in the same paper Bollig highlights a connection between such
automata and visibly one-counter automata. We expose a similar connection and
are thus able to leverage Neider and L\"oding's learning algorithm for visibly
one-counter languages~\cite{neider2010learning} as a sort of sub-routine for
ours. This is why our algorithm uses a superset of the query types required
in~\cite{neider2010learning}. Nevertheless, due to the fact that the counter
values cannot be inferred from a given word, our learning algorithm is more
complex. 
Technically, the latter required us to extend the classical
definition of \emph{observation tables} as used in, e.g.,~\cite{DBLP:journals/iandc/Angluin87,neider2010learning}.
Namely, entries in
our tables are composed of Boolean language information as well as a counter
value or a \emph{wildcard} encoding the fact that we do not (yet) care about
the value of the corresponding word. (Our use of wildcards is reminiscent of
the work of Leucker and Neider~\cite{DBLP:conf/isola/LeuckerN12} on learning a
regular language from an ``inexperienced'' teacher who may answer queries in
an unreliable manner.) Due to these extensions, much work is required to prove
that it is always possible to make a table closed and consistent in finite
time. A crucial element of our algorithm is that we formulate queries for the
teacher in a way which ensures the observation table eventually induces a
right congruence refining the classical Myhill-Nerode congruence with
counter-value information. (This is in contrast
with~\cite{DBLP:conf/isola/LeuckerN12}, where the ambiguity introduced by
wildcards is resolved using SAT solvers.)

We evaluate an implementation of our algorithm on random benchmarks and a use
case inspired by~\cite{DBLP:conf/webdb/ChiticR04}. Namely, we learn a realtime
one-counter automaton model for a simple JSON schema validator --- i.e., a
program that verifies whether a JSON document satisfies a given JSON schema.
The advantage of having a finite-state model of such a validator is that
JSON-stream validation becomes trivially efficient (cf.\ automata-based
parsing~\cite{DBLP:books/aw/AhoSU86}).

\paragraph*{Related work.} Our assumption about counter-value observability
means that the system with which we interact is not really a black box.
Rather, we see it as a gray box. Several recent active-learning works make
such assumptions to learn complex languages or ameliorate query-usage bounds.
For instance, in~\cite{DBLP:conf/dlt/BerthonBPR21}, the authors assume they
have information about the target language $L$ in the form of a superset of
it. Similarly, in~\cite{DBLP:conf/nfm/0002R16}, the authors assume $L$ is
obtained as the composition of two languages, one of which they know in
advance. In~\cite{DBLP:conf/ecai/MichaliszynO20}, the teacher is assumed to
have an executable automaton representation of the (infinite-word) target
language. This helps them learn the automaton directly and to do so more
efficiently than other active learning algorithms for the same task. Finally,
in~\cite{DBLP:conf/ifm/GarhewalVHSLS20} it is assumed that constraints
satisfied along the run of a system can be made visible. They leverage this
(tainting technique) to give a scalable learning algorithm for register
automata.

\paragraph*{Structure.} The paper is structured as follows. In
\Cref{sec:Prelim}, we recall the concept of realtime one-counter
automaton and the $\LStar$ algorithm from~\cite{DBLP:journals/iandc/Angluin87}
for learning finite-state automata. We also recall what visibly
one-counter automata are and how they can be learned, as explained
in~\cite{neider2010learning}. In \Cref{sec:learningROCAs}, we present
our learning algorithm for realtime one-counter automata and its complexity in
terms of time, space and number of queries. 
In \Cref{sec:isomorphism} we prove several properties of
realtime one-counter automata useful for the correctness of our learning
algorithm. In particular, we establish an important connection with visibly
one-counter automata. In \Cref{sec:correctness}, the announced
complexity of our learning algorithm is completely justified. This is the most
technical section since, as mentioned before, we have to argue that our
extended observation tables can always be made closed and consistent.
In \Cref{sec:experiments}, we evaluate the implementation
of our algorithm on two kinds of benchmarks: random realtime one-counter
automata and a use case regarding validating JSON documents. We provide a
conclusion in \Cref{sec:conclusion}.

\section{Preliminaries}\label{sec:Prelim}

In this section we recall all necessary definitions to present our learning algorithm. In particular, we give a definition of realtime one-counter automata adapted from~\cite{DBLP:conf/alt/FahmyR95,DBLP:journals/jcss/ValiantP75} and formally define the learning task.

An \emph{alphabet} $\Sigma$ is a non-empty finite set of \emph{symbols}. A \emph{word} is a finite sequence of symbols from $\Sigma$, and the \emph{empty word} is denoted by $\emptyword$. The set of all words over $\Sigma$ is denoted by $\Sigma^*$. The \emph{concatenation} of two words $u, v \in \Sigma^*$ is denoted by $uv$. A \emph{language} $L$ is a subset of $\Sigma^*$. Given a word \(w \in \Sigma^*\) and a language \(L \subseteq \Sigma^*\), the \emph{set of prefixes of \(w\)} is $\prefixes{w} = \{u \in \Sigma^* \mid \exists v \in \Sigma^*, w = uv\}$  and the \emph{set of prefixes of \(L\)} is $\prefixes{L} = \bigcup_{w \in L} \prefixes{w}$. Similarly, we have the sets of \emph{suffixes} $\suffixes{w} = \{u \in \Sigma^* \mid \exists v \in \Sigma^*, w = vu \}$ and $\suffixes{L} = \bigcup_{w \in L} \suffixes{w}$. Moreover, $L$ is said to be \emph{prefix-closed} (resp.\ \emph{suffix-closed}) if $L = \prefixes{L}$ (resp.\ $L = \suffixes{L}$).

\begin{remark}\label{rem:nonempty}
    In this paper, we always work with non-empty languages $L$ to avoid having to treat particular cases.
\end{remark}

\subsection{Realtime one-counter automata}

\begin{definition}
	A \emph{realtime one-counter automaton (ROCA)} \(\automaton\) is a tuple \(\automaton = (Q, \Sigma, \deltaZero, \deltaNotZero, q_0, F)\) where:
	\begin{itemize}
		\item $\Sigma$ is an alphabet,
		\item \(Q\) is a non-empty finite set of states,
		\item \(q_0 \in Q\) is the initial state,
		\item \(F \subseteq Q\) is the set of final states, and
		\item \(\deltaZero\) and \(\deltaNotZero\) are two (total) transition functions defined as:
			\begin{align*}
				\deltaZero : {} & Q \times \Sigma \to Q \times \{0, +1\},\\
				\deltaNotZero : {} & Q \times \Sigma \to Q \times \{-1, 0, +1\}.
			\end{align*}
	\end{itemize}
\end{definition}

\noindent
The second component of the output of \(\deltaZero\) and \(\deltaNotZero\) gives the counter operation to apply when taking the transition. Notice that it is impossible to decrement the counter when it is already equal to zero.

A \emph{configuration} is a pair \((q, n) \in Q \times \N\), that is, it contains the current state and the current counter value. The \emph{transition relation} \(\transition\limits_{\automaton} \subseteq (Q \times \N) \times \Sigma \times (Q \times \N)\) is defined as follows:
	\[
		(q, n) \transition_{\automaton}^a (p, m) \iff \begin{cases}
			\deltaZero(q, a) = (p, c) \land m = n + c & \text{if \(n = 0\)},\\
			\deltaNotZero(q, a) = (p, c) \land m = n + c & \text{if \(n > 0\).}
		\end{cases}
	\]
When the context is clear, we omit \(\automaton\) to simply write \(\transition\limits^a\). We lift the relation to words in the natural way. Notice that this relation is \emph{deterministic} in the sense that given a configuration $(q,n)$ and a word $w$, there exists a unique configuration $(p,m)$ such that $(q,n) \transition\limits^{w} (p,m)$. 

Given a word $w$, let $(q_0,0) \transition\limits^{w} (q, n)$ be the \emph{run} on $w$. When $n = 0$ and $q \in F$, we say that this run is accepting. The \emph{language accepted} by \(\automaton\) is the set $\languageOf{\automaton} = \{w \in \Sigma^* \mid (q_0,0) \transition\limits^{w} (q, 0)$ with $q \in F\}$. If a language \(L\) is accepted by some ROCA, we say that \(L\) is a \emph{realtime one-counter language (ROCL)}.

Given \(w \in \Sigma^*\), we define the \emph{counter value} of \(w\) according to \(\automaton\), noted \(\counterAut{w}\), as the counter value $n$ of the configuration $(q,n)$ such that $(q_0, 0) \transition\limits^{w} (q, n)$. We define the \emph{height of \(w\) according to \(\automaton\)}, noted \(\heightAut{w}\), as the maximal counter value among the prefixes of \(w\), i.e., $\heightAut{w} = \max_{x \in \prefixes{w}} \counterAut{x}$.

\begin{example}\label{example:roca} 
	\begin{figure}
		\centering
		\begin{tikzpicture}[
    automaton,
    node distance=120pt,
]
    \node [state, initial]                  (q0)    {\(q_0\)};
    \node [state, accepting, right=of q0]   (q1)    {\(q_1\)};
    \node [state, accepting, right=of q1]   (q2)     {\(q_2\)};

    \path[deltaZero]
        (q0)    edge [loop above]   node {\(a, =0, +1\)}    ()
        (q0)    edge[bend right]    node {\(b, =0, 0\)} (q2)
        (q1)    edge                node [align=center] {\(a, =0, 0\)\\\(b, =0, 0\)} (q2)
        (q2) edge [loop above]    node [align=center] {\(a, =0, 0\)\\\(b, =0, 0\)}    ()
    ;
    \path [deltaNotZero]
        (q0)    edge [loop below]   node {\(a, \neq 0, +1\)}    ()
                edge                node {\(b, \neq 0, 0\)}     (q1)
        (q1)    edge [in=110,out=140,loop]   node {\(a, \neq 0, -1\)}    ()
                edge [in=40,out=70,loop]   node {\(b, \neq 0, 0\)}     (q2)
        (q2)    edge [loop below]    node [align=center] {\(a, \neq 0, 0\)\\\(b, \neq 0, 0\)}    ()
    ;
\end{tikzpicture}
		\caption{An example of an ROCA.}%
		\label{fig:example:roca}
	\end{figure}

	A 3-state ROCA \(\automaton\) over $\Sigma =\{a,b\}$ is given in \Cref{fig:example:roca}. The initial state $q_0$ is marked by a small arrow and the two final states \(q_1\) and \(q_2\) are double-circled.
    The transitions give the input symbol, the condition on the counter value, and the counter operation, in this order. In particular $\deltaZero$ is indicated in blue while $\deltaNotZero$ is indicated in green.

    Let us study the following run on \(w = aababaa\):
    \[
        (q_0, 0) \transition^a (q_0, 1) \transition^a (q_0, 2) \transition^b (q_1, 2) \transition^a (q_1, 1) \transition^b (q_1, 1) \transition^a (q_1, 0) \transition^a (q_2, 0).
    \]
    This run is accepting and thus \(aababaa\) is accepted by \(\automaton\).
    Moreover, \(\counterAut{w} = 0\) and \(\heightAut{w} = 2\).
    One can verify that $\languageOf{\automaton}$ is the language
    $\{ w \in {\{a,b\}}^* \mid \exists n \geq 0, \exists k_1, k_2, \ldots, k_n \geq 0, \exists u \in {\{a,b\}}^*, w = a^n b (b^{k_1} a b^{k_2} a \cdots b^{k_n} a) u \}$.
\qed\end{example}

\subsection{Learning deterministic finite automata}\label{subsec:Lstar}

The aim of this paper is to design a learning algorithm for ROCAs. 
Let us first recall the well-known concept of learning a deterministic finite automaton (DFA), as introduced in~\cite{DBLP:journals/iandc/Angluin87}.
Let \(L \subseteq \Sigma^*\) be a regular language. The task of the \emph{learner} is to construct a DFA \(\hypothesis\) such that \(\languageOf{\hypothesis} = L\) by interacting with the \emph{teacher}.
The two possible types of interactions are \emph{membership queries} (does \(w \in \Sigma^*\) belong to \(L\)?), and \emph{equivalence queries} (does the DFA \(\hypothesis\) accept \(L\)?). For the latter type, if the answer is negative, the teacher also provides a counterexample, i.e., a word \(w\) such that \(w \in L \iff w \notin \hypothesis\).

The so-called \emph{\LStar algorithm} of~\cite{DBLP:journals/iandc/Angluin87} (see \Cref{alg:lstar} in \Cref{app:Lstar}) learns step by step at least one representative per equivalence class of the \emph{Myhill-Nerode congruence} \(\nerodeCongruence\) of \(L\).
We recall that, for \(u, v \in \Sigma^*\),
\(u \nerodeCongruence v\) if and only if \(\forall w \in \Sigma^*, uw \in L \iff vw \in L\)~\cite{hu00}. More precisely, the learner uses an \emph{observation table $\obsTable[] = (\representatives, \separatorsCounters, \obsMapping[])$} to store his current knowledge, with:
\begin{itemize}
    \item \(\representatives \subseteq \Sigma^*\) a finite prefix-closed set of \emph{representatives},
    \item \(\separatorsCounters \subseteq \Sigma^*\) a finite suffix-closed set of \emph{separators},
    \item \(\obsMapping[] : (\representatives \cup \representatives \Sigma) \separatorsCounters \to \{0, 1\}\) such that \(\forall w \in (\representatives \cup \representatives \Sigma) \separatorsCounters, \obsMapping[](w) = 1 \iff w \in L\).
\end{itemize}
An equivalence relation over \(\representatives \cup \representatives\Sigma\) can be defined from a table \(\obsTable[]\) as follows: for any \(u, v \in \representatives \cup \representatives \Sigma\), we say that \(u \nerodeCongruence_{\obsTable[]} v\) if and only if \(\forall s \in \separatorsCounters, \obsMapping[](us) = \obsMapping[](vs)\).
Notice that \(\nerodeCongruence_{\obsTable[]}\) is coarser than \(\nerodeCongruence\). The \LStar algorithm uses membership and equivalence queries to refine \(\nerodeCongruence_{\obsTable[]}\) until it coincides with \(\nerodeCongruence\).

In order to be able to construct a DFA \(\automaton_{\obsTable[]}\) from \(\nerodeCongruence_{\obsTable[]}\), the table must satisfy two requirements: 
it must be 
\begin{enumerate}
    \item \emph{closed}: \(\forall u \in \representatives \Sigma, \exists v \in \representatives\), \(u \nerodeCongruence_{\obsTable[]} v\),
    \item \emph{\consistent}: \(\forall u, v \in \representatives, \forall a \in \Sigma\), \(u \nerodeCongruence_{\obsTable[]} v \implies ua \nerodeCongruence_{\obsTable[]} va\).
\end{enumerate}

\noindent
If \(\obsTable[]\) is not closed,
adding \(u\) to \(\representatives\) is enough to resolve the problem. If \(\obsTable[]\) is not \consistent,
this means that there exist \(a \in \Sigma\) and $s \in \separatorsCounters$ such that $\obsMapping[](uas) = 1 \iff \obsMapping[](vas) = 0$. Therefore we add \(as\) to \(\separatorsCounters\) to place $u,v$ in different equivalence classes.
The final part is how to handle the counterexample \(w\) provided by the teacher after a negative equivalence query.
The learner simply adds all the prefixes of \(w\) in \(\representatives\) and updates the table.

Angluin~\cite{DBLP:journals/iandc/Angluin87} showed that this learning process terminates and it requires a polynomial number of membership and equivalence queries in the size of the minimal DFA accepting \(L\), and in the length of the longest counterexample returned by the teacher.

\subsection{Visibly one-counter automata}\label{sec:visibly}
Since we are interested in learning ROCAs, we will modify the \LStar algorithm to match our needs. But, first, we introduce visibly one-counter automata and how to learn them~\cite{neider2010learning}, as our algorithm uses this as a sub-routine.

A visibly one-counter automaton is defined over a \emph{pushdown alphabet} \(\Sigma = \callAlphabet \cup \returnAlphabet \cup \internalAlphabet\) which is a union of three disjoint alphabets such that \(\callAlphabet\) is the set of \emph{calls} where every call increments the counter by one, \(\returnAlphabet\) is the set of \emph{returns} where every return decrements the counter by one, and \(\internalAlphabet\) is the set of \emph{internal actions} where internal action does not change the counter. We also define the \emph{sign} \(\signOf{a}\) of a symbol \(a \in \Sigma\), i.e., the counter operation it induces, as $\signOf{a} = 1$ if \(a \in \callAlphabet\), $\signOf{a} = -1$ if \(a \in \returnAlphabet\), and $\signOf{a} = 0$ if \(a \in \internalAlphabet\).

\begin{definition}[Adapted from~\cite{neider2010learning}]
    A \emph{visibly one-counter automaton (VCA)} is an ROCA $\automaton = (Q, \Sigma, \deltaZero, \deltaNotZero, q_0, F)$ such that \(\Sigma\) is a pushdown alphabet and if \((q, n) \transition\limits_{\automaton}^a (p, m)\) then 
    $m-n = \signOf{a}$.
\end{definition}

Unlike ROCAs, VCAs have the particularity that the counter operations are solely dictated by the input symbols. It is thus natural to see the transition functions\footnote{In~\cite{neider2010learning}, a VCA is defined with \(m + 1\) transition functions, with \(m\) a natural parameter. In this work, we fix \(m = 1\).} of a VCA as being of the form:
\begin{align*}
	\deltaZero : {} & Q \times \Sigma \setminus \returnAlphabet \to Q, \\
	\deltaNotZero : {} & Q \times \Sigma \to Q.
\end{align*}
We then define the \emph{counter value} of a word \(w = a_1 \dots a_n \in \Sigma^*\) as $\countervalue{w} = \sum_{i=1}^n \signOf{a_i}$. In particular, \(\countervalue{\emptyword} = 0\). The \emph{height} of $w$ is the maximal counter value of any of its prefixes, that is, $\height{w} = \max_{u \in \prefixes{w}} \countervalue{u}$.

If a language \(L \subseteq \Sigma^*\)
is accepted by some VCA, we say that \(L\) is a \emph{visibly one-counter language (VCL)}.

\begin{example}\label{example:vca}
    \begin{figure}
        \centering
        \begin{tikzpicture}[
    automaton,
    node distance=120pt,
]
    \node [state, initial]                  (q0)    {\(q_0\)};
    \node [state, accepting, right=of q0]   (q1)    {\(q_1\)};
    \node [state, accepting, right=of q1]   (q2)     {\(q_2\)};
    \path[deltaZero]
        (q0)    edge [loop above]   node {\(a_c, =0\)}    ()
        (q0)    edge[bend right]                node {\(b_{int}, =0\)} (q2)
        (q1)    edge                node [align=center] {\(a_{int}, =0\)\\\(b_{int}, =0\)} (q2)
        (q2) edge [loop below]    node [align=center] {\(a_{int}, =0\)\\\(b_{int}, =0\)}    ()
    ;
    \path [deltaNotZero]
        (q0)    edge [loop below]   node {\(a_c, \neq 0\)}    ()
                edge                node {\(b_{int}, \neq 0\)}     (q1)
        (q1)    edge [in=110,out=140,loop]   node {\(a_r, \neq 0\)}    ()
                edge [in=40,out=70,loop]   node {\(b_{int}, \neq 0\)}     ()
        (q2)    edge [loop above]    node [align=center] {\(a_{int}, \neq 0\)\\\(b_{int}, \neq 0\)}    ()
    ;
\end{tikzpicture}
        \caption{An example of a VCA.}%
        \label{fig:example:vca}
    \end{figure}

    A 3-state VCA \(\visibly{\automaton}\) over the pushdown alphabet $\visibly{\Sigma} = \{a_c\} \cup \{a_r\} \cup \{a_{int},b_{int}\}$ is given in \Cref{fig:example:vca}.
    Transitions functions give the input symbol and the condition on the counter value, in this order.
    Since counter operations are dictated by the input symbols, they are not repeated on the transitions.
    
    Notice that \(\visibly{\automaton}\) is built upon the ROCA \(\automaton\) from \Cref{fig:example:roca} such that the symbols of the transitions in \(\visibly{\automaton}\) encode the counter operations of the transitions in \(\automaton\). Notice also that the transitions functions are partial (lacking transitions to a sink state should be added). 

    By studying the sequence of configurations for \(w = a_c a_c b_{int} a_r b_{int} a_r a_{int}\), we deduce that this word is accepted by \(\visibly{\automaton}\). Moreover, \(\countervalue{w} = 0\) and \(\height{w} = 2\).
\qed\end{example}

A finite representation for the transition relation of VCLs is proposed by Neider and Löding in~\cite{neider2010learning} and used in their learning algorithm. 
To recall it, we begin by stating that two equivalent words according to the Myhill-Nerode congruence~$\nerodeCongruence$ have the same counter value, if they are in the prefix of the language.

\begin{lemma}\label{lemma:visibly:same_cv}
    Let \(L \subseteq \Sigma^*\) be a VCL and \(u, v \in \prefixes{L}\) such that \(u \nerodeCongruence v\).
    Then, \(\countervalue{u} = \countervalue{v}\).
\end{lemma}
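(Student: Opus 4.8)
The plan is to exploit the defining feature of VCAs recalled just above the statement: since the counter operation is dictated entirely by the input symbol, the counter value of a word is the additive quantity $\countervalue{w} = \sum_{i} \signOf{a_i}$, and in particular $\countervalue{xy} = \countervalue{x} + \countervalue{y}$ for all $x, y \in \Sigma^*$. The second ingredient is the acceptance condition: every word $w \in \languageOf{\automaton}$ reaches a configuration with counter value $0$, so $\countervalue{w} = 0$ whenever $w \in L$. Combining these two observations with the hypothesis $u \nerodeCongruence v$ will pin down $\countervalue{u}$ and $\countervalue{v}$ to the same value.

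Concretely, I would proceed as follows. First, since $u \in \prefixes{L}$, there is some $x \in \Sigma^*$ with $ux \in L$. From $ux \in L$ and the acceptance condition we get $\countervalue{ux} = 0$, and by additivity this reads $\countervalue{u} + \countervalue{x} = 0$, i.e. $\countervalue{x} = -\countervalue{u}$. Next, I invoke the congruence: because $u \nerodeCongruence v$, we have $ux \in L \iff vx \in L$, and since $ux \in L$ we conclude $vx \in L$ as well. Applying the acceptance condition and additivity a second time yields $\countervalue{v} + \countervalue{x} = 0$, hence $\countervalue{v} = -\countervalue{x} = \countervalue{u}$, which is exactly the claim. (Note that $v \in \prefixes{L}$ is then automatic, but it is not even needed for this direction.)

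There is essentially no serious obstacle here; the argument is short and relies only on elementary facts. The one point that deserves care, and which I would flag explicitly, is that the whole proof hinges on the \emph{visibly} assumption: it is precisely because the counter value is a function of the label that $\countervalue{ux} = 0$ forces $\countervalue{vx} = 0$ to impose $\countervalue{u} = \countervalue{v}$. For a general ROCA the counter value of a word is not determined by its symbols, so the analogous statement fails, and indeed this is the very difficulty that later motivates the counter-value queries in our learning algorithm. I would therefore make sure the write-up isolates the two uses of additivity together with the acceptance condition $\countervalue{\cdot} = 0$ on $L$, as these are the only places the VCA structure is used.
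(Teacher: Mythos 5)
Your proof is correct and is essentially identical to the paper's: both pick a witness $x$ (the paper calls it $w$) with $ux \in L$, transfer membership to $vx \in L$ via the Nerode congruence, and then use additivity $\countervalue{\cdot}$ together with the fact that words in a VCL have counter value $0$ to force $\countervalue{u} = \countervalue{v}$. No meaningful difference in approach.
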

\begin{proof}
    Let \(u, v \in \prefixes{L}\).
    Since \(u \in \prefixes{L}\), there exists \(w \in \Sigma^*\) such that \(uw \in L\) and \(vw \in L\) (as \(u \nerodeCongruence v\)).
    We have \(\countervalue{uw} = \countervalue{u} + \countervalue{w}= 0\) and \(\countervalue{vw} = \countervalue{v} + \countervalue{w} = 0\).
    We conclude that \(\countervalue{u} = \countervalue{v}\).
\qed\end{proof}

We now recall the concept of behavior graph of a VCL\@, which is the (potentially infinite) deterministic automaton classically defined from $\nerodeCongruence$.
\begin{definition}[\cite{neider2010learning}]\label{def:visibly:behavior_graph}
	Let \(L \subseteq \Sigma^*\)
	be a VCL and \(\nerodeCongruence\) be the Myhill-Nerode congruence for this language. Then, the \emph{behavior graph} of \(L\) is the tuple \(\behaviorGraph = (Q_{\nerodeCongruence}, \Sigma, \delta_{\nerodeCongruence}, q_{\nerodeCongruence}^0, F_{\nerodeCongruence})\) where:
	\begin{itemize}
		\item \(Q_{\nerodeCongruence} =
			  \{\equivalenceClass{w}_{\nerodeCongruence} \mid w \in \prefixes{L}\}\) is the set of states,
	    \item \(q_{\nerodeCongruence}^0 = \equivalenceClass{\emptyword}_{\nerodeCongruence}\) is the initial
		      state,
        \item \(F_{\nerodeCongruence} = \{\equivalenceClass{w}_{\nerodeCongruence} \mid w \in L\}\) is the set of final states,
		\item \(\delta_{\nerodeCongruence} : Q_{\nerodeCongruence} \times \Sigma \to Q_{\nerodeCongruence}\) is the partial transition function defined by $\delta(\equivalenceClass{w}_{\nerodeCongruence}, a) = \equivalenceClass{wa}_{\nerodeCongruence}$ for all $\equivalenceClass{w}_{\nerodeCongruence}, \equivalenceClass{wa}_{\nerodeCongruence} \in Q_{\nerodeCongruence}$ and $a \in \Sigma$.
	\end{itemize}
\end{definition}

Note that $Q_{\nerodeCongruence}$ is not empty as we assume the languages we consider are non-empty. Note also that by definition, the states of $\behaviorGraph$ are all reachable from the initial state and co-reachable from some final state\footnote{In the sequel, we simply speak about reachable and co-reachable states and we say that the automaton is \emph{trim}.}, hence why the transition function is partial.\footnote{The definition of behavior graph from~\cite{neider2010learning} is different since non-co-reachable states are allowed. However the two coincide on reachable and co-reachable states.} We have $\languageOf{\behaviorGraph} = L$.

It is proved in~\cite{neider2010learning} that the behavior graph \(\behaviorGraph\) of \(L\) always has a finite representation, even though the graph itself is potentially infinite.
This finite representation relies on the fact that \(\behaviorGraph\) has an ultimately periodic structure, i.e., it has an \enquote{initial part} that is followed by a \enquote{repeating part} repeated ad infinitum.
We give here a short overview of the idea (details can be found in~\cite{neider2010learning}).

As all words in the same equivalence class in \(\prefixes{L}\) have the same counter value (by \Cref{lemma:visibly:same_cv}), we define the \emph{level} \(\ell\) of \(\behaviorGraph\) as the set of states with counter value \(\ell\), i.e., $\{\equivalenceClass{w}_{\nerodeCongruence} \in Q_{\nerodeCongruence} \mid \countervalue{w} = \ell\}$.  The first observation made in~\cite{neider2010learning} is that the number of states in each level of \(\behaviorGraph\) is bounded by a constant \(K \in \N\), in particular by the number of states of any VCA accepting $L$. The minimal value of \(K\) is called the \emph{width} of \(\behaviorGraph\). This observation allows to enumerate the states in level \(\ell\) using a mapping \(\nu_{\ell} : \{\equivalenceClass{w}_{\nerodeCongruence} \in Q_{\nerodeCongruence} \mid \countervalue{w} = \ell\} \to \{1, \dotsc, K\}\).
Using these enumerations $\nu_{\ell}$, $\ell \in \N$, we can encode the transitions of \(\behaviorGraph\) as a sequence of (partial) mappings \(\tau_{\ell} : \{1, \dots, K\} \times \Sigma \to \{1, \dots, K\}\) (with \(\ell \in \N\)).
For all \(\ell \in \N, i \in \{1, \dots, K\}\), and \(a \in \Sigma\), the mapping \(\tau_{\ell}\) is defined as
\[
    \tau_{\ell}(i, a) = \begin{cases}
        j  & \parbox[t]{.6\textwidth}{if there exist \(\equivalenceClass{u}_{\nerodeCongruence}, \equivalenceClass{ua}_{\nerodeCongruence} \in Q_{\nerodeCongruence}\) such that \(\countervalue{u} = \ell, \nu_{\ell}(\equivalenceClass{u}_{\nerodeCongruence}) = i\), and \(\nu_{\ell + \signOf{a}}(\equivalenceClass{ua}_{\nerodeCongruence}) = j\),}\\
        \text{undefined} & \text{otherwise.}
    \end{cases}
\]
Thus, if we fix the enumerations \(\nu_{\ell}\), the behavior graph can be encoded as the sequence of mappings  \(\alpha = \tau_0 \tau_1 \tau_2 \dotso\), called a \emph{description} of \(\behaviorGraph\).
The following theorem states that there always is such a description which is \emph{periodic}.

\begin{theorem}[{\cite[Theorem 1]{neider2010learning}}]\label{thm:visibly:behavior_graph:periodicity}
    Let \(L \subseteq \Sigma^*\)
    be a VCL, \(\behaviorGraph\) be the behavior graph of \(L\), and \(K\) be the width of \(\behaviorGraph\).
    Then, there exist enumerations \(\nu_{\ell} : \{\equivalenceClass{u}_{\nerodeCongruence} \in Q_{\nerodeCongruence} \mid \countervalue{u} = \ell\} \to \{1, \dotsc, K\}\), $\ell \in \N$, such that the corresponding description \(\alpha\) of \(\behaviorGraph\) is an ultimately periodic word with offset \(m > 0\) and period \(k \geq 0\), i.e., \(\alpha = \tau_0 \dotso \tau_{m-1} {(\tau_m \dotso \tau_{m+k-1})}^{\omega}\).
\end{theorem}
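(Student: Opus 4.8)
The idea is to use a VCA for $L$ to expose an $\ell$-independent ``bulk'' transition structure, and then to extract periodicity by a pigeonhole argument on a finite notion of \emph{level type}. Since $L$ is a VCL, fix a VCA $\automaton = (Q, \Sigma, \deltaZero, \deltaNotZero, q_0, F)$ with $\lvert Q \rvert = N$ accepting $L$; recall the width satisfies $K \le N$. The key observation is that every step taken at a strictly positive counter value uses $\deltaNotZero$, whose definition does not mention the actual level. Hence, once we are above level $0$, the way the configuration graph of $\automaton$ connects level $\ell$ to levels $\ell-1,\ell,\ell+1$ is literally the same for every $\ell \ge 1$; the only level-dependent phenomena are the boundary at level $0$ (where $\deltaZero$ and acceptance live) and the $\nerodeCongruence$-identifications, which tie high levels back to level $0$.

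First I would control, for each level $\ell$, the set $R_\ell \subseteq Q$ of states that occur in $\behaviorGraph$ at level $\ell$, i.e.\ that are simultaneously reachable and co-reachable at counter value $\ell$, and show that $(R_\ell)_\ell$ is eventually periodic. Reachability is handled by a \emph{first-passage} argument: the states reachable at level $\ell+1$ are exactly the images, under a call from a reachable state at level $\ell$, closed under runs that stay at levels $\ge \ell+1$; since such runs use only $\deltaNotZero$, the reachable set at level $\ell+1$ is the value of a fixed, $\ell$-independent operator on $2^Q$ applied to the reachable set at level $\ell$. Co-reachability is symmetric, using a downward first passage to level $\ell-1$. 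Both the reachable-state and the co-reachable-state sequences are thus orbits of finite functions on the finite set $2^Q$, hence ultimately periodic; so is their intersection $R_\ell$.

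Next I would define the \emph{type} of level $\ell$ (for $\ell \ge 1$) as the pair $(R_\ell, \approx_\ell)$, where $\approx_\ell$ is the partition of $R_\ell$ induced by $\nerodeCongruence$ (two states are identified when the words reaching them are $\nerodeCongruence$-equivalent, i.e.\ have the same future language to an accepting configuration). There are only finitely many such types, bounded in terms of $N$. If I can show that the type of level $\ell$, together with the $\deltaNotZero$-transitions, \emph{determines the type of level $\ell+1$ and the transition mapping between the two levels}, then a pigeonhole yields two levels $m < m'$ of equal type; choosing the enumerations $\nu_m,\dots,\nu_{m'-1}$ freely and then transporting them upward along the resulting level isomorphisms forces $\tau_\ell = \tau_{\ell + (m'-m)}$ for all $\ell \ge m$. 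This gives the ultimately periodic description $\alpha = \tau_0 \cdots \tau_{m-1}(\tau_m \cdots \tau_{m'-1})^{\omega}$ with offset $m > 0$ (the strict inequality reflecting the special level-$0$ boundary) and period $k = m'-m \ge 0$.

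The main obstacle is precisely the ``type determines successor type'' step, because $\approx_\ell$ is \emph{not} governed by a one-directional recurrence: whether $q \approx_\ell q'$ depends, through the first-letter decomposition $F_{q,\ell} = \bigcup_a a\, F_{\deltaNotZero(q,a),\,\ell+\signOf{a}}$, on the identifications at levels $\ell-1$, $\ell$ and $\ell+1$ simultaneously, while acceptance only enters at level $0$. Resolving this coupling is the heart of the proof. My plan is to replace $\approx_\ell$ by a \emph{forward-stable} refinement obtained as a fixpoint: I would enrich the level type with the distinguishing information that flows downward (for each pair in $R_\ell$, which descent patterns separate the two states), argue that above some threshold this enriched datum does satisfy an $\ell$-independent upward recurrence, and then show that $\approx_\ell$ is recovered from the enriched type. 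Once the enriched type propagates upward by a fixed operator on a finite set, the finite-orbit/pigeonhole argument applies verbatim, and the periodicity of the plain description $\alpha$ follows. Care must also be taken at the very bottom levels, which is exactly why the offset $m$ must be allowed to be positive rather than $0$.
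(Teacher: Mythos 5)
This theorem is not proved in the paper at all: it is imported verbatim, with its proof, from \cite{neider2010learning}, so your attempt must be measured against the proof in that source rather than against anything in the present text. Your skeleton gets the peripheral parts right: the shift-invariance of $\deltaNotZero$ above level $0$, the first-passage recurrences showing that the per-level reachable-state sets and co-reachable-state sets are orbits of fixed operators on $2^{Q}$ (hence ultimately periodic), and the final pigeonhole-plus-enumeration-transport step are all sound. But there is a genuine gap at exactly the point you yourself flag as ``the heart of the proof'': the forward-stable finite refinement of the level partitions $\approx_{\ell}$ is never constructed. You do not define the enriched type, do not show it takes finitely many values, do not show it propagates by an $\ell$-independent operator, and do not show that $\approx_{\ell}$ can be recovered from it. Since both the pigeonhole and the transport of enumerations hinge on this datum, the proposal as written reduces the theorem to its hardest step and stops there.

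Moreover, this missing step is not a routine patch, and there is reason to doubt the fix as you describe it. Decomposing residuals at the first passage to the level below gives $L(q,\ell+1)=\bigcup_{p,\,a\in\returnAlphabet} U_{q,p}\,a\,L(\deltaNotZero(p,a),\ell)$ with level-invariant first-passage languages $U_{q,p}$, so $\approx_{\ell+1}$ is the equality pattern of such unions evaluated at the tuple of residuals one level down. That pattern is genuinely not a function of $\approx_{\ell}$: coincidental equalities among unions can hold at one level and fail at another, and distinguishing words can be unboundedly long, so there is no evident finite ``downward-flowing distinguishing information'' with which to enrich the type --- union and concatenation of infinite languages do not admit finite types in the sense your fixpoint plan requires. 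The ultimate periodicity of $(\approx_{\ell})_{\ell}$ is precisely the nontrivial content here; the proofs in the literature (Valiant--Paterson for deterministic one-counter automata \cite{DBLP:journals/jcss/ValiantP75}, the belt-style analysis of B\"ohm et al.~\cite{DBLP:journals/jcss/BohmGJ14}, and the proof of the cited Theorem~1 itself \cite{neider2010learning}) obtain it through quantitative bounds on shortest distinguishing words rather than a finite-type pigeonhole, which is also where polynomial quantities such as the bound $s = m + {(K \cdot k)}^4$ used later in this paper originate. Unless you can actually exhibit an invariant with the three properties above --- which would constitute a new and notably simpler proof of the theorem --- the argument is incomplete.
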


Conversely, from a periodic description of \(\behaviorGraph\), it is possible to construct a VCA accepting \(L\)~\cite[Lemma 1]{neider2010learning}.

\begin{example}\label{example:vca:behavior_graph}
    \begin{figure}
        \centering
        \begin{tikzpicture}[
    automaton,
    node distance=52pt and 72pt,
    state/.append style = {
        minimum size = 1cm,
    },
    pin distance=0pt,
    pin position=above right,
    every pin edge/.style = {
        draw=none,
    }
]
    \node [state, initial, pin=1]                   (eps)   {\(\emptyword\)};
    \node [state, right=of eps, pin=1]              (a)     {\(a_c\)};
    \node [state, right=of a, pin=1]                (aa)    {\(a_ca_c\)};
    \node [state, right=of aa, pin=1]               (aaa)   {\(a_ca_ca_c\)};
    \node [right=of aaa]                            (aaaa)  {\(\ldots\)};
    \node [state, accepting, below=of eps, pin=2]   (b)     {\(b_{int}\)};
    \node [state, right=of b, pin=2]                (ab)    {\(a_cb_{int}\)};
    \node [state, right=of ab, pin=2]               (aab)   {\(a_ca_cb_{int}\)};
    \node [state, right=of aab, pin=2]              (aaab)  {\(a_ca_ca_cb_{int}\)};
    \node [right=of aaab]                           (aaaab) {\(\ldots\)};
    \path
        foreach \s/\l/\t in {eps/a_c/a, a/a_c/aa, aa/a_c/aaa, aaa/a_c/aaaa, ab/a_r/b, aab/a_r/ab, aaab/a_r/aab, aaaab/a_r/aaab} {
            (\s)    edge                node {\(\l\)}   (\t)
        }
        foreach \s/\l/\t in {eps/b_{int}/b, a/b_{int}/ab, aa/b_{int}/aab, aaa/b_{int}/aaab} {
            (\s)    edge                node [near start] {\(\l\)} (\t)
        }
        (b)         edge [loop below]   node {\(a_{int}, b_{int}\)} (b)
        foreach \s/\l in {ab/b_{int}, aab/b_{int}, aaab/b_{int}} {
            (\s)    edge [loop below]   node {\(\l\)}   ()
        }
    ;
    \begin{pgfonlayer}{background}
        \filldraw [join=round, black!30]
            ($(eps.north west)+(-0.8, 0.6)$) rectangle ($(b.south east)+(0.5, -1.2)$)
        ;
        \filldraw [join=round, black!10]
            foreach \s/\t in {a/ab, aa/aab, aaa/aaab} {
                ($(\s.north west)+(-0.6, 0.6)$) rectangle ($(\t.south east)+(0.5, -1.2)$)
            }
        ;
    \end{pgfonlayer}
    \draw [-,thick]
        ($(eps.north west)+(-1.3, 1)$)
            node [above right] {Initial part}
            -| ($(eps.north)+(0, 0.43)$)
        ($(a.north east)+(3, 1)$)
            -| ($(a.north)+(0, 0.43)$)
        ($(aa.north east)+(3, 1)$)
            node [above left] {Repeating part}
            -| ($(aa.north)+(0, 0.43)$)
        ($(aaa.north east)+(2, 1)$)
            -| ($(aaa.north)+(0, 0.4)$)
    ;
\end{tikzpicture}
        \caption{Behavior graph of the language accepted by the VCA from \Cref{fig:example:vca}.}%
        \label{fig:example:vca:behavior_graph}
    \end{figure}

    Let $\visibly{L}$ be the language accepted by the VCA $\visibly{\automaton}$ of \Cref{fig:example:vca}. Its behavior graph \(\behaviorGraph[\visibly{L}]\) is given in \Cref{fig:example:vca:behavior_graph}.
    For clarity sake, each state $\equivalenceClass{w}_{\nerodeCongruence}$ is represented by one of its words, and all states in the same level are aligned in the same column.
    For instance, states $\emptyword$ and $b_{int}$ constitute level $0$ as \(\countervalue{\emptyword} = \countervalue{b_{int}} = 0\).
    We can see that \(\behaviorGraph[\visibly{L}]\) has a width \(K = 2\).

    Let us now explain why \(\behaviorGraph[\visibly{L}]\) has a finite representation. 
    We define the following enumerations \(\nu_{\ell}\), $\ell \in \N$:
    \begin{align*}
        \nu_0(\equivalenceClass{\emptyword}) &= 1   &   \nu_0(\equivalenceClass{b_{int}}) &= 2\\
        \nu_1(\equivalenceClass{a_c}) &= 1          &   \nu_1(\equivalenceClass{a_c b_{int}}) &= 2\\
        \nu_2(\equivalenceClass{a_c a_c}) &= 1      &   \nu_2(\equivalenceClass{a_c a_c b_{int}}) &= 2\\
        &\cdots                                     &   & \cdots
    \end{align*}
    In the figure, each state has its associated number next to it.
    From these enumerations, we construct the following \(\tau_{\ell}\) mappings (the missing values are undefined):
    \begin{align*}
        \tau_0(1, a_c) &= 1 & \tau_0(1, b_{int}) &= 2   & \tau_0(2, a_{int}) &= 2   & \tau_0(2, b_{int}) &= 2\\
        \tau_1(1, a_c) &= 1 & \tau_1(1, b_{int}) &= 2   & \tau_1(2, a_r) &= 2       & \tau_1(2, b_{int}) &= 2\\
        \tau_2(1, a_c) &= 1 & \tau_2(1, b_{int}) &= 2   & \tau_2(2, a_r) &= 2       & \tau_2(2, b_{int}) &= 2\\
        &\cdots             & &\cdots                   & &\cdots                   & &\cdots
    \end{align*}
    In this way we get a description \(\alpha = \tau_0 \tau_1 \tau_2 \dotso\) of \(\behaviorGraph[\visibly{L}]\).
    As \(\tau_1 = \tau_2 = \tau_3 = \dotso\), we have that $\alpha = \tau_0 {(\tau_1)}^\omega$ is a periodic description of \(\behaviorGraph[\visibly{L}]\).
\qed\end{example}

\subsection{Learning VCAs}\label{subsec:LearningVCAs}

We now discuss the \LStar-inspired learning algorithm for VCAs, introduced in~\cite{neider2010learning}.
Let \(L\) be some VCL\@. 
The idea is to learn an initial fragment of the behavior graph of \(L\) up to a fixed counter limit \(\ell\), to extract every possible periodic description of the fragment, and to construct a VCA from each of these descriptions.
If we find one VCA accepting \(L\), we are done. Otherwise, we increase \(\ell\) and repeat the process. 

Formally, the initial fragment of \(\behaviorGraph\) up to \(\ell\) is a subgraph of \(\behaviorGraph\).
That is, it is the DFA \(\limitedBehaviorGraph = (Q, \Sigma, \delta, q_0, F)\) with \(Q = \{\equivalenceClass{w}_{\nerodeCongruence} \in Q_{\nerodeCongruence} \mid \height{w} \leq \ell\}\).
The values of the (partial) function \(\delta\), the initial state, and the final states are naturally defined over the subgraph. This DFA is called the \emph{limited behavior graph} and \(L_{\ell}\) denotes the language accepted by \(\limitedBehaviorGraph\).

The data structure of the learner is again an observation table.
Taking into account a counter-value limit of \(\ell\), we define the table as \(\obsTable = (\representatives, \separatorsCounters, \obsMapping)\) with:\footnote{In~\cite{neider2010learning}, a so-called \emph{stratified} observation table is used instead. We present a simple table to match the rest of the paper.}
\begin{itemize}
    \item \(\representatives \subseteq \Sigma^*\) a finite prefix-closed set of representatives,
    \item \(\separatorsCounters \subseteq \Sigma^*\) a finite suffix-closed set of separators,
    \item \(\obsMapping : (\representatives \cup \representatives \Sigma) \separatorsCounters \to \{0, 1\}\) such that \(\forall w \in (\representatives \cup \representatives\Sigma) \separatorsCounters, \obsMapping(w) = 1 \iff w \in L_{\ell}\).
\end{itemize}
Intuitively the aim of this table is to approximate the equivalence classes of $\nerodeCongruence_{L_{\ell}}$ to learn the limited behavior graph $\limitedBehaviorGraph$. 
On top of the membership queries (does $w$ belong to $L$?) and equivalence queries (does the learned ROCA accept $L$?), we add \emph{partial equivalence queries}: given \(\ell\), does the DFA \(\hypothesis\) accept \(L_{\ell}\)?
This allows us to check whether the initial fragment is correctly learned. 

Suppose that the limited behavior graph \(\limitedBehaviorGraph\) has been learned. 
Extracting every possible periodic description of \(\limitedBehaviorGraph\) can be performed by identifying an isomorphism between two consecutive subgraphs of \(\limitedBehaviorGraph\).
That is, we fix values for the offset \(m\) and period \(k\) and see if the subgraphs induced by the levels \(m\) to \(m + k - 1\), and by the levels \(m + k\) to \(m + 2k - 1\) are isomorphic. Note that this means we need to consider all pairs of \(m\) and \(k\) such that \(m + 2k - 1 \leq \ell\).
This can be done in polynomial time by executing two depth-first searches in parallel~\cite{neider2010learning}.
Note that multiple periodic descriptions may be found, due to the finite knowledge of the learner.

This procedure yields potentially many different VCAs (one per description).
We ask an equivalence query for each of them.
If one of the VCAs accepts the target language $L$, we are done.
Otherwise, we need to refine the table and increase the counter limit $\ell$.
However, not all of the counterexamples given by the teacher can be used.
Indeed, it may happen that the teacher returns a word that was incorrectly accepted or rejected by a VCA, but for which the information is already present in the table (due to a description that covered only the first levels in \(\limitedBehaviorGraph\), for instance).
Therefore, we only consider counterexamples \(w\) with a height \(\height{w} > \ell\).
If we cannot find such a counterexample, we use \(\limitedBehaviorGraph\) directly as a VCA (where only \(\deltaZero\) is used).
Since we know that \(\languageOf{\limitedBehaviorGraph} = L_{\ell}\), we are guaranteed to obtain a useful counterexample.

In~\cite[Theorem 2]{neider2010learning}, it is shown that this algorithm (see \Cref{alg:lstar_vca}) has polynomial time and space complexity in the width of the behavior graph, the offset and period of a periodic description of the behavior graph.

\begin{algorithm} 
    \caption{Learning a VCA~\cite{neider2010learning}}%
    \label{alg:lstar_vca}
    \begin{algorithmic}[1]
        \Require The target VCL \(L\)
        \Ensure A VCA accepting \(L\) is returned
        \Statex
        \State Initialize the observation table \(\obsTable\) with \(\ell = 0, \representatives = \separatorsCounters = \{\emptyword\}\)
        \While{true}
            \State Make $\obsTable$ closed and \consistent
            \State Construct the DFA \(\autTable\) from \(\obsTable\)
            \State Ask a partial equivalence query over \(\autTable\)
            \If{the answer is negative}
                \State Update \(\obsTable\) with the provided counterexample \Comment{\(\ell\) is not modified}
            \Else
                \State \(W \gets \emptyset\)
                \ForAll{periodic descriptions \(\alpha\) of \(\autTable\)}
                    \State Construct the VCA \(\automaton_{\alpha}\) from \(\alpha\)
                    \State Ask an equivalence query over \(\automaton_{\alpha}\)
                    \If{the answer is positive}
                        \State \Return \(\automaton_{\alpha}\)
                    \ElsIf{\(\height{w} > \ell\) for the provided counterexample $w$}
                        \State \(W \gets W \cup \{w\}\)
                    \EndIf
                \EndFor
                \If{\(W\) is empty}
                    \State Ask an equivalence query over \(\autTable\) seen as a VCA
                    \If{the answer is positive} \Comment{\(L\) is regular}
                        \State \Return \(\autTable\)
                    \Else \State \(W \gets W \cup \{w\}\) with $w$ the provided counterexample
                    \EndIf
                \EndIf
                \State Update \(\observationTable\) with an arbitrary element from \(W\) \Comment{\(\ell\) is increased}
            \EndIf
        \EndWhile
    \end{algorithmic}
\end{algorithm}

\subsection{Behavior graph of an ROCA}\label{subsec:BG_ROCA}

In this section, we introduce the concept of behavior graph of an ROCA $\automaton$ and we present two interesting properties of this graph. In particular the behavior graph of $\automaton$ will have an ultimately periodic description that will be useful for the learning of ROCAs.

We first introduce a congruence relation from which the behavior graph of an ROCA will be defined.

\begin{definition}
    Let \(\automaton\) be an ROCA accepting the language $L = \languageOf{\automaton}$ over $\Sigma$.
    We define the congruence relation $\equivBGROCA$ over $\Sigma^*$ such that $u \equivBGROCA v$ if and only if:
    \begin{eqnarray*}
		\forall w \in \Sigma^* &:& uw \in L \iff vw \in L,\\
		\forall w \in \Sigma^* &:& uw,vw \in \prefixes{L} \implies \counterAut{uw} = \counterAut{vw}.
	\end{eqnarray*}
	The equivalence class of $u$ is denoted by $\equivalenceClass{u}_{\equivBGROCA}$.
\end{definition}

The congruence relation $\equivBGROCA$ is a refinement of the Myhill-Nerode congruence. It is easy to check that $\equivBGROCA$ is indeed a congruence relation. Finally, it is noteworthy that the second condition depends on the ROCA \(\automaton\) and is limited to \(\prefixes{L}\) because even if \(\automaton\) has different counter values for words not in \(\prefixes{L}\), we still require all those words to be equivalent.
From this relation, one can define the following (possibly infinite) deterministic automaton.

\begin{definition}\label{def:behavior_graph}
    Let $\automaton = (Q, \Sigma, \deltaZero, \deltaNotZero, q_0, F)$ be an ROCA accepting $L = \languageOf{\automaton}$ over $\Sigma$. The \emph{behavior graph of} $\automaton$ is $\behaviorGraphAut = (Q_\equivBGROCA,\Sigma,\delta_\equivBGROCA,q^0_\equivBGROCA,F_\equivBGROCA)$ where:
    	\begin{itemize}
		\item $Q_\equivBGROCA = \{\equivalenceClass{u}_\equivBGROCA \mid u \in \prefixes{L}\}$ is the set of states,
		\item $q^0_\equivBGROCA = \equivalenceClass{\emptyword}_\equivBGROCA$ is the initial state,
		\item $F_\equivBGROCA = \{\equivalenceClass{u}_\equivBGROCA \mid u \in L\}$ is the set of final states,
		\item $\delta_\equivBGROCA : Q_\equivBGROCA \times \Sigma \to Q_\equivBGROCA$ is the partial transition function defined by: $\delta_\equivBGROCA(\equivalenceClass{u}_\equivBGROCA,a) = \equivalenceClass{ua}_\equivBGROCA$, for all $\equivalenceClass{u}_\equivBGROCA, \equivalenceClass{ua}_\equivBGROCA \in Q_\equivBGROCA$ and $a \in \Sigma$.
		\end{itemize}
\end{definition}

Note that $Q_{\equivBGROCA}$ is not empty by \Cref{rem:nonempty}. A straightforward induction shows that $\behaviorGraphAut$ accepts the same language $L$ as $\automaton$. Note also that by definition, $\behaviorGraph$ is trim which implies that the transition function is partial. 

\begin{example}\label{example:behavior_graph}
    \begin{figure}
        \centering
        \begin{tikzpicture}[
    automaton,
    node distance=2cm,
    state/.append style = {
        minimum size = 1cm,
    },
]
    \node [state, initial]                  (eps)   {\(\emptyword\)};
    \node [state, right=of eps]             (a)     {\(a\)};
    \node [state, right=of a]               (aa)    {\(aa\)};
    \node [state, right=of aa]              (aaa)   {\(aaa\)};
    \node [right=of aaa]                    (aaaa)  {\(\ldots\)};
    \node [state, accepting, below=of eps]  (b)     {\(b\)};
    \node [state, right=of b]               (ab)    {\(ab\)};
    \node [state, right=of ab]              (aab)   {\(aab\)};
    \node [state, right=of aab]             (aaab)  {\(aaab\)};
    \node [right=of aaab]                   (aaaab) {\(\ldots\)};
    \path
        foreach \s/\l/\t in {eps/a/a, eps/b/b, a/a/aa, a/b/ab, aa/a/aaa, aa/b/aab, aaa/a/aaaa, aaa/b/aaab, ab/a/b, aab/a/ab, aaab/a/aab, aaaab/a/aaab} {
            (\s)    edge                node {\(\l\)}   (\t)
        }
        (b)         edge [loop below]   node {\(a, b\)} (b)
        foreach \s/\l in {ab/b, aab/b, aaab/b} {
            (\s)    edge [loop below]   node {\(\l\)}   ()
        }
    ;
    \begin{pgfonlayer}{background}
        \filldraw [join=round, black!30]
            ($(eps.north west)+(-0.8, 0.6)$) rectangle ($(b.south east)+(0.5, -1.2)$)
        ;
        \filldraw [join=round, black!10]
            foreach \s/\t in {a/ab, aa/aab, aaa/aaab} {
                ($(\s.north west)+(-0.5, 0.6)$) rectangle ($(\t.south east)+(0.5, -1.2)$)
            }
        ;
    \end{pgfonlayer}
\end{tikzpicture}
        \caption{Behavior graph of the ROCA from \Cref{fig:example:roca}.}%
        \label{fig:example:behavior_graph}
    \end{figure}

    Let us come back to the ROCA $\automaton$ from \Cref{fig:example:roca}. Recall the language accepted by $\automaton$ is equal to $L = \{ u \in {\{a,b\}}^* \mid \exists n \geq 0, \exists k_1, k_2, \ldots, k_n \geq 0, \exists w \in {\{a,b\}}^*, u = a^n b (b^{k_1} a b^{k_2} a \cdots b^{k_n} a) w \}$.
    We can check that \(b \equivBGROCA abba\). Indeed \(\forall w \in \Sigma^*, b w \in L \iff abba w \in L\).
    Moreover, $\forall w \in \Sigma^*$ such that $bw, abbaw \in \prefixes{L}$, we have \(\counterAut{bw} = \counterAut{abbaw}\). 
    However, \(ab \not\equivBGROCA aab\) since \(ab\) and \(aab\) are both in \(\prefixes{L}\) but \(\counterAut{ab} = 1 \neq \counterAut{aab} = 2\).
    
    The behavior graph \(\behaviorGraphAut\) of \(\automaton\) is given in \Cref{fig:example:behavior_graph}. As in \Cref{example:vca:behavior_graph}, this behavior graph is ultimately periodic and has thus a finite representation.
    We now proceed on stating that this observation is true for the behavior graph of any ROCA\@.
\qed\end{example}

We use the same terminology of width $K$ for the behavior of an ROCA as introduced for VCAs.
Moreover, it holds that \(K \leq \sizeOfSet{Q}\) (where \(Q\) is the set of states of \(\automaton\)).
\begin{lemma}\label{lem:sizeLimBGAut}
    The width $K$ of \(\behaviorGraphAut\) is bounded by $\sizeOfSet{Q}$.
\end{lemma}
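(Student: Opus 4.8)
The plan is to bound, for each level $\ell \in \N$ separately, the number of equivalence classes of $\equivBGROCA$ sitting in that level, and to show this count never exceeds $\sizeOfSet{Q}$. Since the width $K$ is the maximum over all levels of these counts, the bound $K \leq \sizeOfSet{Q}$ follows immediately. I would first record that the level of a state $\equivalenceClass{u}_{\equivBGROCA}$ (with $u \in \prefixes{L}$) is $\counterAut{u}$ and that this is well defined: if $u \equivBGROCA v$ with $u, v \in \prefixes{L}$, then instantiating the second defining condition of $\equivBGROCA$ with $w = \emptyword$ yields $\counterAut{u} = \counterAut{v}$. So I fix $\ell$ and set $P_\ell = \{u \in \prefixes{L} \mid \counterAut{u} = \ell\}$, the prefixes living at that level.

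The core idea is to map each word of $P_\ell$ to the single ROCA state it reaches. Define $f \colon P_\ell \to Q$ by $f(u) = q$, where $(q_0, 0) \transition^{u} (q, \ell)$ is the run of $\automaton$ on $u$ (unique by determinism of $\transition$). The key step is to establish the implication
\[
    f(u) = f(v) \implies u \equivBGROCA v .
\]
Suppose $f(u) = f(v) = q$, so $\automaton$ reaches the very same configuration $(q, \ell)$ after reading either $u$ or $v$. By determinism, for every $w \in \Sigma^*$ the runs of $\automaton$ on $uw$ and on $vw$ continue identically from $(q, \ell)$ and reach one and the same configuration $(q', m)$. Hence $uw \in L \iff (q' \in F \wedge m = 0) \iff vw \in L$, which is the first condition of $\equivBGROCA$; and whenever $uw, vw \in \prefixes{L}$ we get $\counterAut{uw} = m = \counterAut{vw}$, which is the second. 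Therefore $u \equivBGROCA v$.

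This implication says that each fiber of $f$ is contained in a single $\equivBGROCA$-class, i.e., the partition of $P_\ell$ into fibers of $f$ refines the partition of $P_\ell$ into level-$\ell$ equivalence classes. Since a coarser partition has no more blocks than a finer one, the number of level-$\ell$ classes is at most the number of distinct values taken by $f$, which is at most $\sizeOfSet{Q}$. As $\ell$ was arbitrary, every level of $\behaviorGraphAut$ contains at most $\sizeOfSet{Q}$ states, so $K \leq \sizeOfSet{Q}$.

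I expect the only delicate point to be the displayed implication, namely arguing cleanly that a shared configuration $(q, \ell)$ forces identical continuations for both words; once determinism of $\transition$ is invoked this is essentially immediate, but it is where both conditions of $\equivBGROCA$ (language membership and agreement of counter values on prefixes of $L$) must be checked simultaneously. The final counting via partition refinement is then entirely routine.
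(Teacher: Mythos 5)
Your proof is correct and follows essentially the same route as the paper: the key observation in both is that two words driving $\automaton$ to the same configuration $(q,\ell)$ are $\equivBGROCA$-equivalent (by determinism of the transition relation), so each level of $\behaviorGraphAut$ has at most $\sizeOfSet{Q}$ classes. Your write-up merely makes explicit the well-definedness of levels and the final counting step, which the paper leaves implicit.
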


\begin{proof}
    For all $w_1,w_2$ such that $(q_0,0) \transition\limits^{w_1}_{\automaton} (q, n)$ and $(q_0,0) \transition\limits^{w_2}_{\automaton} (q, \ell)$ for the same configuration $(q,\ell)$, we have $w_1 \equivBGROCA w_2$. It follows that $K \leq \sizeOfSet{Q}$.
\qed\end{proof}

To get a description of \(\behaviorGraphAut\), we proceed as in \Cref{sec:visibly} by defining an enumeration $\nu_{\ell}$ of each level $\ell$ of \(\behaviorGraphAut\), and then encoding the transitions of \(\behaviorGraphAut\) as a sequence of mappings $\tau_{\ell}$, $\ell \in \N$. However, since we cannot deduce the counter operation from the symbol, we need to explicitly store the target counter value for each \(\tau_{\ell}\).
That is, for each $\ell \in \N$, the mapping \(\tau_{\ell} : \{1, \dotsc, K\} \times \Sigma \to \{1, \dotsc, K\} \times \{-1, 0, +1\}\) is defined as: for all \(i \in \{1, \dots, K\}\), \(a \in \Sigma\),
\[
    \tau_{\ell}(i, a) = \begin{cases}
        (j,c)  & \parbox[t]{.6\textwidth}{if there exist \(\equivalenceClass{u}_{\equivBGROCA}, \equivalenceClass{ua}_{\equivBGROCA} \in Q_{\equivBGROCA}\) such that \\ \(\counterAut{u} = \ell, ~\counterAut{ua} = \ell + c, \\ \nu_{\ell}(\equivalenceClass{u}_{\equivBGROCA}) = i\),~\(\nu_{\ell + c}(\equivalenceClass{ua}_{\equivBGROCA}) = j\),}\\
        \text{undefined} & \text{otherwise.}
    \end{cases}
\]
In this way, we obtain a description $\alpha = \tau_0 \tau_1 \tau_2 \dotso$ of $\behaviorGraphAut$.

The next theorem is the counterpart of \Cref{thm:visibly:behavior_graph:periodicity} for ROCAs. It follows from \Cref{thm:visibly:behavior_graph:periodicity} and an isomorphism we establish between the behavior graph of an ROCA $\mathcal{A}$ and that of a suitable VCA constructed from $\mathcal{A}$.
This isomorphism is detailed in \Cref{sec:isomorphism} (see \Cref{thm:behavior_graph:isomorphism_to_visibly} below).

\begin{theorem}\label{thm:behavior_graph:periodicity}
    Let \(\automaton\) be an ROCA, \(\behaviorGraphAut = (Q_{\equivBGROCA}, \Sigma, \delta_{\equivBGROCA}, q_{\equivBGROCA}^0, F_{\equivBGROCA})\) be the behavior graph of \(\automaton\), and \(K\) be the width of \(\behaviorGraphAut\).
    Then, there exists a sequence of enumerations \(\nu_{\ell} : \{\equivalenceClass{u}_{\equivBGROCA} \in Q_{\equivBGROCA} \mid \counterAut{u} = \ell\} \to \{1, \dots, K\}\) such that the corresponding description \(\alpha\) of \(\behaviorGraphAut\) is an ultimately periodic word with offset \(m > 0\) and period \(k \geq 0\), i..e, \(\alpha = \tau_0 \dotso \tau_{m-1} {(\tau_m \dotso \tau_{m+k-1})}^\omega\).
\end{theorem}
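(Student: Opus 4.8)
The plan is to transport the periodicity result for VCAs (\Cref{thm:visibly:behavior_graph:periodicity}) to ROCAs through the isomorphism announced in \Cref{sec:isomorphism}. Concretely, from the ROCA $\automaton$ I would first construct a VCA $\visibly{\automaton}$ over a pushdown alphabet $\visibly{\Sigma}$ obtained by tagging each input symbol with the counter operation it triggers, so that a word $u$ of $\automaton$ is relabelled into a word $\toVisibly(u)$ of $\visibly{\automaton}$ with $\counterAut{u} = \countervalue{\toVisibly(u)}$ and $u \in L \iff \toVisibly(u) \in \visibly{L}$, where $\visibly{L} = \languageOf{\visibly{\automaton}}$. \Cref{thm:behavior_graph:isomorphism_to_visibly} then provides an isomorphism $\isomorphism$ between $\behaviorGraphAut$ and the behavior graph $\behaviorGraph[\visibly{L}]$ of the VCL $\visibly{L}$. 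The crucial feature I would rely on is that $\isomorphism$ preserves counter values, hence maps the level $\ell$ of $\behaviorGraphAut$ bijectively onto the level $\ell$ of $\behaviorGraph[\visibly{L}]$, and that it translates each $a$-edge of $\behaviorGraphAut$ carrying counter operation $c$ into the $\visibly{a}$-edge of $\behaviorGraph[\visibly{L}]$ whose underlying symbol is $a = \fromVisibly(\visibly{a})$ and whose sign is $\signOf{\visibly{a}} = c$.

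Granting this, the width of $\behaviorGraph[\visibly{L}]$ equals the width $K$ of $\behaviorGraphAut$ because $\isomorphism$ is a level-preserving bijection (consistently with \Cref{lem:sizeLimBGAut}). I would then apply \Cref{thm:visibly:behavior_graph:periodicity} to $\behaviorGraph[\visibly{L}]$, obtaining enumerations $\visibly{\nu}_\ell$ and an ultimately periodic description $\visibly{\alpha} = \visibly{\tau}_0 \dotso \visibly{\tau}_{m-1}{(\visibly{\tau}_m \dotso \visibly{\tau}_{m+k-1})}^\omega$ with offset $m > 0$ and period $k \geq 0$. Pulling these enumerations back along $\isomorphism$, I define, for each level $\ell$, the enumeration $\nu_\ell = \visibly{\nu}_\ell \circ \isomorphism$ on the states of $\behaviorGraphAut$ with counter value $\ell$; since $\isomorphism$ restricts to a bijection on each level, every $\nu_\ell$ is a valid enumeration into $\{1, \dotsc, K\}$.

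It remains to check that the description $\alpha = \tau_0 \tau_1 \tau_2 \dotso$ of $\behaviorGraphAut$ induced by these $\nu_\ell$ inherits the ultimate periodicity of $\visibly{\alpha}$. For a fixed level $\ell$, index $i$, and symbol $a \in \Sigma$, the edge of $\behaviorGraphAut$ leaving $\nu_\ell^{-1}(i)$ on $a$ corresponds under $\isomorphism$ to the edge of $\behaviorGraph[\visibly{L}]$ leaving $\visibly{\nu}_\ell^{-1}(i)$ on the associated letter $\visibly{a} \in \visibly{\Sigma}$; hence $\tau_\ell(i, a) = (j, c)$ with $j = \visibly{\tau}_\ell(i, \visibly{a})$ and $c = \signOf{\visibly{a}}$. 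Consequently both components of $\tau_\ell(i,a)$ are determined by $\visibly{\tau}_\ell$: the target index $j$ is read directly from $\visibly{\tau}_\ell$, while the counter operation $c$ is recovered from the sign of the unique $\visibly{\Sigma}$-letter selected at $(\ell, i)$, and the set of available such letters is exactly the domain of $\visibly{\tau}_\ell(i, \cdot)$, from which the underlying symbol $a = \fromVisibly(\visibly{a})$ is read off by the fixed, level-independent projection $\fromVisibly$. Thus the equality $\visibly{\tau}_\ell = \visibly{\tau}_{\ell+k}$ for all $\ell \geq m$ forces $\tau_\ell = \tau_{\ell + k}$ for all $\ell \geq m$, so $\alpha = \tau_0 \dotso \tau_{m-1}{(\tau_m \dotso \tau_{m+k-1})}^\omega$ is ultimately periodic with the same offset $m$ and period $k$.

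The main obstacle I anticipate is not the periodicity transfer itself but the faithful construction of $\isomorphism$ in \Cref{sec:isomorphism}: one must verify that relabelling the ROCA into a VCA neither merges distinct $\equivBGROCA$-classes nor splits a single one, that counter values are preserved exactly (so that levels and the defined entries of the $\tau_\ell$ and $\visibly{\tau}_\ell$ mappings line up), and that the sign of the chosen $\visibly{\Sigma}$-letter recovers the ROCA counter operation $c$ while its projection recovers $a$. Once \Cref{thm:behavior_graph:isomorphism_to_visibly} guarantees these properties, the argument above is essentially a routine pullback of the VCA periodicity along a level-preserving, edge-label-respecting graph isomorphism.
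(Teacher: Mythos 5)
Your proposal is correct and follows essentially the same route as the paper: the paper proves \Cref{thm:behavior_graph:periodicity} precisely by combining \Cref{thm:visibly:behavior_graph:periodicity} with the level- and description-preserving isomorphism of \Cref{thm:behavior_graph:isomorphism_to_visibly}, exactly as you do. Your elaboration of the pullback of enumerations and of how $\tau_\ell$ is recovered from $\visibly{\tau}_\ell$ (using determinism to select the unique defined letter among $a_c, a_r, a_{int}$) is a faithful filling-in of details the paper leaves implicit.
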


Given a periodic description of a behavior graph $\behaviorGraphAut$
allows us to construct an ROCA that accepts the same language. The proof is also delayed in \Cref{sec:isomorphism}.

\begin{proposition}\label{prop:DescriptionAut}
    Let \(\automaton\) be an ROCA accepting a language \(L \subseteq \Sigma^*\), \(\behaviorGraphAut\) be its behavior graph of width \(K\), \(\alpha = \tau_0 \dotso \tau_{m-1} {(\tau_m \dotso \tau_{m+k-1})}^\omega\) be a periodic description of \(\behaviorGraphAut\) with offset \(m\) and period \(k\).
    Then, from $\alpha$, it is possible to construct an ROCA \(\automaton_{\alpha}\) accepting \(L\) such that the size of \(\automaton_{\alpha}\) is polynomial in \(m, k\) and \(K\).
\end{proposition}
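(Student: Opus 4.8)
The plan is to avoid constructing $\automaton_\alpha$ by hand and instead to piggyback on the analogous result for VCAs, namely \cite[Lemma 1]{neider2010learning}, transferring everything through the isomorphism of \Cref{thm:behavior_graph:isomorphism_to_visibly} between $\behaviorGraphAut$ and the behavior graph $\behaviorGraph[\visibly{L}]$ of the VCL $\visibly{L} = \languageOf{\visibly{\automaton}}$, where $\visibly{\automaton}$ is the VCA over a pushdown alphabet $\visibly{\Sigma}$ associated with $\automaton$. The overall shape is: transport $\alpha$ to a periodic description of $\behaviorGraph[\visibly{L}]$, apply \cite{neider2010learning} to obtain a VCA, and finally \enquote{de-visibilise} that VCA into the desired ROCA.

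First I would transport the given description. Since the isomorphism of \Cref{thm:behavior_graph:isomorphism_to_visibly} preserves counter values (a class of counter value $\ell$ is mapped to a class of counter value $\ell$), it may be taken to preserve the level enumerations $\nu_\ell$, so the periodic word $\alpha = \tau_0 \dotso \tau_{m-1}{(\tau_m \dotso \tau_{m+k-1})}^{\omega}$ induces a description $\visibly{\alpha}$ of $\behaviorGraph[\visibly{L}]$ with the same offset $m$, period $k$, and width $K$. The only difference is bookkeeping: where an entry $\tau_\ell(i,a)=(j,c)$ of $\alpha$ records a counter operation $c$, the corresponding VCA entry records none, because $c$ is recoverable as the sign $\signOf{\visibly{a}}$ of the (unique) visible symbol $\visibly{a}$ labelling that edge. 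Thus $\visibly{\alpha}$ is a genuine periodic description of $\behaviorGraph[\visibly{L}]$, and \cite[Lemma 1]{neider2010learning} yields a VCA $\visibly{\automaton_\alpha}$ over $\visibly{\Sigma}$ with $\languageOf{\visibly{\automaton_\alpha}} = \visibly{L}$ and size polynomial in $m$, $k$, and $K$.

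It then remains to convert $\visibly{\automaton_\alpha}$ into an ROCA $\automaton_\alpha$ over $\Sigma$. I would keep the states, the initial state and the final states unchanged, and replace each transition reading a visible symbol $\visibly{a}$ by an ROCA transition reading the underlying letter $a \in \Sigma$ and applying the counter operation $\signOf{\visibly{a}}$, placed in $\deltaZero$ or $\deltaNotZero$ according to the counter-value guard; undefined entries are completed with a sink state so that $\deltaZero$ and $\deltaNotZero$ become total. Correctness follows by relating runs: every $w \in \Sigma^*$ has a unique run in $\automaton_\alpha$, whose sequence of counter operations annotates $w$ into the visible word $\visibly{w}$; by construction the run of $\automaton_\alpha$ on $w$ mirrors that of $\visibly{\automaton_\alpha}$ on $\visibly{w}$ step by step with identical counter values, so $w \in \languageOf{\automaton_\alpha} \iff \visibly{w} \in \languageOf{\visibly{\automaton_\alpha}} = \visibly{L} \iff w \in L$. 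The size of $\automaton_\alpha$ equals that of $\visibly{\automaton_\alpha}$, hence is polynomial in $m$, $k$, and $K$.

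I expect the main obstacle to be the de-visibilisation step, specifically its well-definedness. For the replacement above to define transition \emph{functions}, one must argue that, for a fixed state and a fixed counter sign, each letter $a \in \Sigma$ labels at most one enabled visible symbol of $\visibly{\automaton_\alpha}$; otherwise the ROCA reading $a$ could not choose a single counter operation. This is precisely where the structural properties established alongside \Cref{thm:behavior_graph:isomorphism_to_visibly} are needed: because $\visibly{\automaton_\alpha}$ represents $\behaviorGraph[\visibly{L}]$, which is isomorphic to $\behaviorGraphAut$, and each node of $\behaviorGraphAut$ has, for every $a$, a unique outgoing edge with a unique counter operation (as $\behaviorGraphAut$ is the behavior graph of the \emph{deterministic} ROCA $\automaton$), the counter operation attached to $a$ is determined by the state and the counter sign. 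Verifying that the visible alphabet thereby faithfully encodes these unique operations---so that projecting $\visibly{\Sigma}$ back to $\Sigma$ neither merges distinct behaviors nor leaves the transition functions underdetermined---is the crux of the argument.
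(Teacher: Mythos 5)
Your route is genuinely different from the paper's: the paper proves this proposition by an explicit construction (states $\{1,\dotsc,K\}\times\{0,\dotsc,m+k-1\}$ plus a sink, the ROCA counter storing the excess $\max\{0,n-m\}$ over the offset, transitions read off from the $\tau_\ell$'s, correctness by induction on the length of the input word), whereas you reduce to the visibly case through \Cref{thm:behavior_graph:isomorphism_to_visibly}. Your transport of $\alpha$ to a periodic description of $\behaviorGraph[\visibly{L}]$ is fine --- the paper itself states that the isomorphism preserves offset and period --- and you correctly identify, and give the right reason for, the well-definedness crux of de-visibilisation (at most one of $a_c,a_r,a_{int}$ is enabled per class, because the counter operation is class-determined in $\behaviorGraphAut$). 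However, there is a genuine gap where you invoke \cite[Lemma 1]{neider2010learning}: that lemma does not output a VCA in this paper's sense, i.e., with only the two transition functions $\visiblyDeltaZero$ and $\visiblyDeltaNotZero$. It outputs an $m$-VCA, an automaton with $m+1$ transition functions $\delta_0,\dotsc,\delta_m$ where $\delta_i$ is used when the counter equals $i$ (for $i<m$) and $\delta_m$ when it is at least $m$, with $m$ the offset of the description (the footnote in \Cref{sec:visibly} flags exactly this discrepancy, and this paper only \emph{fixes} $m=1$ for its own definition). The thresholds are unavoidable in the visible world: since a VCA's counter is forced to equal the counter value of the input word, a finite control with only a zero test cannot know on which of the levels $1,\dotsc,m-1$ it sits after the counter has climbed above $m$ and come back down, so the initial part of the description genuinely requires the extra guards.

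Consequently your de-visibilisation recipe --- ``keep the states and place each transition in $\deltaZero$ or $\deltaNotZero$ according to the counter-value guard'' --- has no meaning for the automaton Lemma~1 actually gives you: an ROCA has no home for a guard ``counter $=i$'' with $0<i<m$. (Nor can you sidestep this by hoping for some two-guard visible automaton for $\visibly{L}$ of size polynomial in $m,k,K$: the only such automaton the paper exhibits is $\visibly{\automaton}$, which is built from $\automaton$ itself, not from $\alpha$, and whose size is not controlled by $m,k,K$.) Repairing the step requires eliminating the thresholds: enlarge the state space so that states remember the current level (exactly below $m$, modulo $k$ above it), and exploit the fact that an ROCA --- unlike a VCA --- chooses its counter operations freely, so its counter can store $\max\{0,n-m\}$ instead of $n$. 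But that level-bookkeeping with its boundary cases at levels $m$ and $m+k-1$ is precisely the paper's construction; it is the actual content of the proposition, not a detail the citation can absorb. As written, your reduction does not avoid the hard part --- it relocates it into a step that fails.
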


\section{Learning ROCAs}\label{sec:learningROCAs}

In this section, let us fix a language \(L \subseteq \Sigma^*\) and an ROCA \(\automaton\) such that \(\languageOf{\automaton} = L\).
We are going to explain how a learner will learn $L$ by querying a teacher. More precisely, the learner will learn a periodic description of the behavior graph $\behaviorGraphAut$ of $\automaton$ from which he will derive an ROCA accepting $L$ (by \Cref{prop:DescriptionAut}). The teacher has to answer four different types of queries:

\begin{enumerate}
    \item \emph{Membership} query. Let \(w \in \Sigma^*\), is \(w \in L\)?
    \item \emph{Counter value} query (when the word is in $\prefixes{L}$). Let \(w \in \prefixes{L}\), what is \(\counterAut{w}\)?
    \item \emph{Partial equivalence} query. Let $\automaton[H]$ be a DFA over \(\Sigma\) and \(\ell \in \N\), is it true that \(\languageOf{\automaton[H]} = L_{\ell}\)?
    \item \emph{Equivalence} query. Let $\automaton[B]$ be an ROCA over \(\Sigma\), is it true that \(\languageOf{\automaton[B]} = L\)?
\end{enumerate}

Note that counter-value queries were not used in the VCA-learning algorithm (\Cref{alg:lstar_vca}). Indeed, words over a pushdown alphabet can be assigned a counter value without an automaton. For general alphabets, the counter value cannot be inferred from a word and we must then ask the teacher for this information. As in the VCA-learning algorithm, we will use partial-equivalence queries to find the basis of a periodic description for the target automaton.

Our main result is the following theorem.

\begin{theorem}\label{thm:main}
Let $\automaton$ be an ROCA accepting a language \(L \subseteq \Sigma^*\). Given a teacher for $L$, which answers membership, counter value, partial equivalence and equivalence queries, an ROCA accepting $L$ can be computed in time and space exponential in $|Q|,|\Sigma|$ and $\lengthCe$, where $Q$ is the set of states of $\automaton$ and $\lengthCe$ is the length of the longest counterexample returned by the teacher on (partial) equivalence queries. The learner asks $\complexity(\lengthCe^3)$ partial equivalence queries, $\complexity(|Q| \lengthCe^2)$ equivalence queries and a number of membership (resp.\ counter value) queries which is exponential in $|Q|,|\Sigma|$ and $\lengthCe$.   
\end{theorem}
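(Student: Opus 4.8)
The plan is to prove \Cref{thm:main} by describing a learning algorithm that simulates the VCA-learning algorithm (\Cref{alg:lstar_vca}) on a suitable pushdown-alphabet abstraction of $L$, then bounding each resource (time, space, and the number of each kind of query) separately. The core idea is to leverage the isomorphism between $\behaviorGraphAut$ and the behavior graph of a VCA built from $\automaton$ (promised as \Cref{thm:behavior_graph:isomorphism_to_visibly}) together with the periodicity guaranteed by \Cref{thm:behavior_graph:periodicity}. Since the ROCA behavior graph is ultimately periodic with some offset $m$, period $k$, and width $K \leq |Q|$ (by \Cref{lem:sizeLimBGAut}), the learner can follow the same outer loop as \Cref{alg:lstar_vca}: progressively learn the limited behavior graph $\limitedBehaviorGraph$ up to counter limit $\ell$ via an (extended) observation table, extract all periodic descriptions from isomorphic consecutive level-blocks, build an ROCA from each by \Cref{prop:DescriptionAut}, and submit equivalence queries until one succeeds.

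The key steps, in order, are as follows. First I would argue correctness and termination: once $\ell$ grows large enough that the table captures at least $m + 2k - 1$ levels faithfully, the periodicity of \Cref{thm:behavior_graph:periodicity} guarantees that a correct periodic description is found, so the induced ROCA (via \Cref{prop:DescriptionAut}) accepts $L$ and the algorithm halts. The handling of counterexamples is the same as in the VCA case: only counterexamples $w$ with $\height{w} > \ell$ increase the limit $\ell$, and if none exists the limited behavior graph used directly as a hypothesis yields a useful one, which forces $\ell$ to strictly increase and hence bounds the number of outer iterations by roughly the final value of $\ell$. Second, I would bound $\ell$: since each returned counterexample has length at most $\lengthCe$ and its height is at most its length, the final counter limit satisfies $\ell \leq \lengthCe$, which controls the table dimensions. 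Third, I would count the queries: the number of levels considered before finding the period is $\complexity(\lengthCe)$, and for each of the $\complexity(\lengthCe^2)$ candidate $(m,k)$ pairs with $m + 2k - 1 \leq \ell$ one equivalence query is asked, yielding the stated $\complexity(|Q|\lengthCe^2)$ equivalence queries when multiplied by the number of descriptions (bounded in terms of $K \leq |Q|$); the $\complexity(\lengthCe^3)$ partial-equivalence-query bound follows from at most $\complexity(\lengthCe)$ values of $\ell$, each requiring $\complexity(\lengthCe^2)$ partial equivalence queries to close and make consistent the table (this last bound being exactly the difficult analysis deferred to \Cref{sec:correctness}).

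Finally, I would bound the membership and counter value queries and the time/space. Because the table entries now carry both Boolean membership information and a counter value (or a wildcard), the number of words that must be resolved grows with the number of representatives and separators, which in turn is governed by the number of equivalence classes of $\equivBGROCA$ restricted to the bounded region $\boundedLanguage$. Since this region can contain exponentially many words in $|Q|, |\Sigma|$ and $\lengthCe$ (the bounded language over $\Sigma$ up to height $\ell \leq \lengthCe$ has size exponential in these parameters), both the membership and counter value query counts, and consequently the overall time and space, are exponential in $|Q|, |\Sigma|$ and $\lengthCe$, matching the statement.

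I expect the main obstacle to be proving that the extended observation table can always be made closed and consistent in finite time and with the claimed $\complexity(\lengthCe^2)$ partial equivalence queries per level. This is precisely the technical difficulty flagged in the introduction: the presence of wildcard counter values means that closing and consistency must simultaneously resolve Boolean separation and counter-value agreement, and one must show that the learner's queries progressively refine $\equivTable$ toward the counter-value-aware congruence $\equivBGROCA$ (rather than merely the Myhill-Nerode $\nerodeCongruence$) without looping. Establishing this termination-with-bounds argument — that no infinite sequence of closedness and consistency repairs is possible and that each repair is justified by an actual violation discoverable through the four query types — is the crux, and is the reason the detailed justification is postponed to \Cref{sec:correctness}.
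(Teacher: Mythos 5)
Your proposal reproduces the paper's algorithmic skeleton (learn $L_\ell$ with an extended observation table, extract periodic descriptions, build candidate ROCAs via \Cref{prop:DescriptionAut}, bound $\ell \leq \lengthCe$), but it does not prove the theorem: everything quantitative --- that the table can be made closed, \consistent and \prefixed in finite time, and how large it becomes while doing so --- is exactly what you defer as ``the crux postponed to \Cref{sec:correctness}''. In the paper, that deferred material \emph{is} the proof. Concretely: \Cref{prop:nbrOpen-sizeR} shows $|\Rf\cup\Rf\Sigma|$ stays \emph{polynomial}, namely $\complexity(|Q||\Sigma|\lengthCe^4)$; \Cref{prop:NbrInc-sizeS} analyzes a recurrence (via \Cref{lemma:bounds_recurrence}) showing that one round of restoring closedness and consistency multiplies $|\separatorsMapping|$ by a factor $\complexity(\RfSize^{4\RfSize^2+4})$, driven by $\bot$-inconsistency and mismatch resolution; \Cref{prop:sizeSHatf} compounds this over the $\complexity(\lengthCe^3)$ rounds to get the exponential bound on $\SHatf$; and \Cref{cor:nbrQueries} together with \Cref{lem:basicOpLearner} converts these set sizes into the membership/counter-value query, time, and space bounds. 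Without this analysis you have neither termination of the repair procedure nor any bound on membership and counter value queries, so the statement is not established.

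Moreover, the places where you do commit to concrete arguments are incorrect. First, partial equivalence queries play no role in ``closing and making consistent the table''; closedness, \consistency and \consistencyBot are repaired with membership and counter value queries only. The correct count is: for a fixed $\ell$, each counterexample to a partial equivalence query strictly increases the index of $\equivTable$ (\Cref{prop:betterApproximation}), and that index is capped by the number of classes of $\equivBGROCA$ up to counter limit $\ell$, which is $(\ell+1)|Q|+1$ (\Cref{lem:sizeEquiv}); multiplying by the $\complexity(\lengthCe)$ values taken by $\ell$ gives the bound. Your equivalence-query accounting is likewise muddled: each pair $(m,k)$ with $m+2k-1\leq\ell$ yields at most one description and hence one equivalence query (at most $\ell^2$ per learned $L_\ell$), and the missing factor is the number of counter-limit increases, not a factor $|Q|$ for ``number of descriptions per pair''. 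Second, your source of exponentiality is misidentified: that the region of words of height at most $\ell$ contains exponentially many words is irrelevant, because the table never enumerates it --- the number of $\equivBGROCA$-classes there is polynomial and the representative set $\Rf$ is provably polynomial. The exponential cost comes solely from the growth of the separator set $\separatorsMapping$ forced by restoring \consistencyBot (the suffix sets $\suffixes{s''}$ added at each mismatch), which is precisely the recurrence analysis you skipped.
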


In the following subsections, we present the main steps of our learning algorithm. Leveraging the isomorphism stated in \Cref{thm:behavior_graph:isomorphism_to_visibly}, we give an algorithm with the same structure as the one presented for VCAs in \Cref{subsec:LearningVCAs} (see \Cref{alg:lstar_vca}). Given an ROCA $\automaton$ accepting $L$, we learn an initial fragment of the behavior graph $\behaviorGraphAut$ up to a fixed counter limit \(\ell\). When $\ell$ is large enough, it will be possible to extract a periodic description for \(\behaviorGraphAut\) from this fragment and then to construct an ROCA accepting $L$.

The initial fragment up to $\ell$ is called the \emph{limited behavior graph} $\limBehaviorGraphAut$. This is the subgraph of \(\behaviorGraphAut\) whose set of states is \(\{\equivalenceClass{w}_{\equivBGROCA} \in Q_{\equivBGROCA} \mid \heightAut{w} \leq \ell\}\). This DFA accepts the language \(L_{\ell} = \{w \in L \mid \forall x \in \prefixes{w}, 0 \leq \counterAut{x} \leq \ell\}\). Notice that $\limBehaviorGraphAut$ is composed of the first $\ell + 1$ levels of $\behaviorGraphAut$ (from $0$ to $\ell$) such that each level is restricted to states $\equivalenceClass{w}_{\equivBGROCA}$ with $\heightAut{w} \leq \ell$. In this way all its states are reachable.

We have the next lemma.

\begin{lemma}\label{lem:sizeEquiv}
The limited behavior graph $\limBehaviorGraphAut$ has at most $(\ell + 1)|Q|$ states. The number of equivalence classes $\equivalenceClass{w}_{\equivBGROCA}$ with $\heightAut{w} \leq \ell$ is bounded by $(\ell + 1)|Q| + 1$. 
\end{lemma}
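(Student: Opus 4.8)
The plan is to prove the two bounds separately, exploiting the level structure of $\behaviorGraphAut$ together with the inequality $K \leq \sizeOfSet{Q}$ from \Cref{lem:sizeLimBGAut}.

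First I would establish that each state of $\behaviorGraphAut$ carries a well-defined \emph{level}. Indeed, if $u \equivBGROCA v$ with $u, v \in \prefixes{L}$, then instantiating the second condition of the congruence with $w = \emptyword$ gives $\counterAut{u} = \counterAut{v}$; hence the counter value is constant on each class $\equivalenceClass{w}_{\equivBGROCA} \in Q_{\equivBGROCA}$, and we may call it the level of that state. For the first claim, every state $\equivalenceClass{w}_{\equivBGROCA}$ of $\limBehaviorGraphAut$ satisfies $\heightAut{w} \leq \ell$, so its level $\counterAut{w}$ lies in $\{0, \dotsc, \ell\}$ (using $0 \leq \counterAut{w} \leq \heightAut{w}$). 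Since by definition of the width each level of $\behaviorGraphAut$ contains at most $K$ states and $K \leq \sizeOfSet{Q}$ by \Cref{lem:sizeLimBGAut}, summing over the $\ell + 1$ admissible levels yields at most $(\ell+1)\sizeOfSet{Q}$ states.

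For the second claim, the key observation is that all words lying outside $\prefixes{L}$ collapse into a single equivalence class. To see this, take $u, v \notin \prefixes{L}$ and check both defining conditions of $\equivBGROCA$: for the first, since $\prefixes{L}$ is prefix-closed and $u, v \notin \prefixes{L}$, we have $uw, vw \notin L$ for every $w$, so $uw \in L \iff vw \in L$ holds with both sides false; for the second, again by prefix-closedness $uw, vw \notin \prefixes{L}$ for every $w$, so the premise is never met and the implication holds vacuously. Thus there is exactly one such \enquote{sink} class, and it is disjoint from $Q_{\equivBGROCA}$: a class containing a word $w \in \prefixes{L}$ cannot contain a word $v \notin \prefixes{L}$, since picking $x$ with $wx \in L$ would force $vx \in L$ and hence $v \in \prefixes{L}$, a contradiction.

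Finally I would conclude by partitioning the classes $\equivalenceClass{w}_{\equivBGROCA}$ with $\heightAut{w} \leq \ell$ counted in the second claim according to the low-height witness $w$. Either this witness satisfies $w \in \prefixes{L}$, in which case the class lies in $Q_{\equivBGROCA}$ and is precisely a state of $\limBehaviorGraphAut$, so there are at most $(\ell+1)\sizeOfSet{Q}$ of these by the first claim; or $w \notin \prefixes{L}$, in which case the class is the unique sink class and contributes at most $1$. Summing gives the bound $(\ell+1)\sizeOfSet{Q} + 1$. I expect the main obstacle to be the verification that the words outside $\prefixes{L}$ form a single class: this is exactly what produces the additive $+1$, and it hinges on showing that both congruence conditions degenerate to trivially true statements, which in turn relies on the prefix-closedness of $\prefixes{L}$.
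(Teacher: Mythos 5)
Your proof is correct and follows essentially the same route as the paper's: bound the states of $\limBehaviorGraphAut$ by $(\ell+1)|Q|$ by applying the width bound of \Cref{lem:sizeLimBGAut} levelwise, and obtain the extra $+1$ by observing that all words of $\Sigma^* \setminus \prefixes{L}$ collapse into a single $\equivBGROCA$-class. The paper merely asserts these facts, whereas you verify them explicitly (well-definedness of levels, the vacuous/trivial satisfaction of both congruence conditions outside $\prefixes{L}$, and disjointness of that class from $Q_{\equivBGROCA}$); this is the same argument, only carried out in more detail.
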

\begin{proof}
By \Cref{lem:sizeLimBGAut}, each level of $\behaviorGraphAut$ has at most $|Q|$ states. Therefore $\limBehaviorGraphAut$ has at most $(\ell+1)|Q|$ states. We have also that the number of equivalence classes of
$\equivBGROCA$ up to counter limit $\ell$ is bounded by
$(\ell+1)|Q| + 1$ (all the words in $\Sigma^* \setminus \prefixes{L}$ are in one additional class).
\qed\end{proof}

In what follows we describe all the elements our algorithm requires to learn $\limBehaviorGraphAut$.

\subsection{Observation table and approximation sets}

As for learning VCAs, we use an observation table to store the data gathered during the learning process. This table aims at approximating the equivalence classes of $\equiv$ and therefore stores information about both membership to $L$ and counter values (for words known to be in $\prefixes{L}$). It depends on a counter limit $\ell \in \N$ since we first want to learn $\limBehaviorGraphAut$. We highlight the fact that our table uses two sets of suffixes: $\separatorsMapping$ and $\separatorsCounters$. Intuitively, we use the former to store membership information and the latter for counter value information (a formal explanation and an example follow). 

\begin{definition}\label{def:observation_table}
    Let $\ell \in \mathbb{N}$ be a counter limit and $L_\ell$ be the language accepted by the limited behavior graph $\limBehaviorGraphAut$ of an ROCA $\automaton$. An \emph{observation table} $\obsTable$ up to $\ell$ is a tuple $(\representatives, \separatorsCounters, \separatorsMapping, \obsMapping, \obsCounters)$ with:
    \begin{itemize}
          \item a finite prefix-closed set \(\representatives \subseteq \Sigma^*\) of \emph{representatives},
          \item two finite suffix-closed sets $\separatorsCounters, \separatorsMapping $ of \emph{separators} such that \(\separatorsCounters \subseteq \separatorsMapping \subseteq \Sigma^*\),
          \item a function \(\obsMapping : (\representatives \cup \representatives \Sigma) \separatorsMapping \to \{0, 1\}\),
          \item a function \(\obsCounters : (\representatives \cup \representatives \Sigma) \separatorsCounters \to \{\bot, 0, \dotsc, \ell\}\).
    \end{itemize}
    Let $\prefTable$ be the set $\{w \in \prefixes{us} \mid u \in \representatives \cup \representatives\Sigma, s \in \separatorsMapping, \obsMapping(us) = 1\}$. Then for all $u \in \representatives \cup \representatives \Sigma$ the following holds:
    \begin{itemize}
        \item for all $s \in \separatorsMapping$, $\obsMapping(us)$ is $1$ if $us \in L_\ell$ and $0$ otherwise,
        \item for all $s \in \separatorsCounters$,
        $\obsCounters(us)$ is $\counterAut{us}$ if $us \in \prefTable$ and $\bot$ otherwise.
    \end{itemize}
\end{definition}

In the previous definition, notice that the domains of $\obsMapping$ and $\obsCounters$ are different. Notice also that $\prefTable \subseteq \prefixes{L_\ell}$. 
We again highlight the fact that our data structure is different from the one used in~\cite{neider2010learning}.
The difference lies in our use of two sets of separators, due to the fact that we can not compute the counter value of a word directly from its symbols, but we need to ask a query to the teacher.

\begin{example}\label{example:observation_table}
    \begin{figure}
        \centering
        \begin{tabular}{l|c|cc} 
                    & \(\emptyword\)    & \(a\)         & \(ba\)\\
    \hline 
    \(\emptyword\)  & \(0, 0\)          & \(0\)         & \(1\)\\
    \(a\)           & \(0, 1\)          & \(0\)         & \(1\)\\
    \(ab\)          & \(0, 1\)          & \(1\)         & \(1\)\\
    \(aba\)         & \(1, 0\)          & \(1\)         & \(1\)\\
    \(aa\)          & \(0, \bot\)       & \(0\)         & \(0\)\\
    \hline 
    \(b\)           & \(1, 0\)          & \(1\)         & \(1\)\\
    \(abb\)         & \(0, 1\)          & \(1\)         & \(1\)\\
    \(abaa\)        & \(1, 0\)          & \(1\)         & \(1\)\\
    \(abab\)        & \(1, 0\)          & \(1\)         & \(1\)\\
    \(aaa\)         & \(0, \bot\)       & \(0\)         & \(0\)\\
    \(aab\)         & \(0, \bot\)       & \(0\)         & \(0\)
\end{tabular}
        \caption{Example of an observation table.}%
        \label{fig:example:observation_table}
    \end{figure}

In this example, we give an observation table $\obsTable$ for the ROCA $\automaton$ from \Cref{fig:example:roca} and the counter limit $\ell = 1$, see \Cref{fig:example:observation_table}. Hence we want to learn $\limBehaviorGraphAut$ whose set of states is given by the first two levels from \Cref{fig:example:behavior_graph}. 

The first column of $\obsTable$ contains the elements of $\representatives \cup \representatives \Sigma$ such that the upper part is constituted by $\representatives = \{\emptyword,a,ab,aba,aa\}$ and the lower part by $\representatives\Sigma \setminus \representatives$. The first row contains the elements of $\separatorsMapping$ such that the left part is constituted by $\separatorsCounters = \{\emptyword\}$ and the right part by $\separatorsMapping \setminus \separatorsCounters$. For each element $us \in (\representatives \cup \representatives\Sigma)\separatorsCounters$, we store the two values $\obsMapping(us)$ and $\obsCounters(us)$ in the intersection of row $u$ and column $s$. For instance, these values are equal to $0,\bot$ for $u = aa$ and $s = \emptyword$. For each element $us \in (\representatives \cup \representatives\Sigma)(\separatorsMapping \setminus \separatorsCounters)$, we have only one value $\obsMapping(us)$ to store.

Notice that \(\prefTable\) is a proper subset of \(\prefixes{L_{\ell}}\).
For instance, \(aababaa \not\in \prefTable\) and \(aababaa \in \prefixes{L_{\ell}}\).
\qed\end{example}

To compute the values of the table $\obsTable$, the learner proceeds by asking the following queries to the teacher. To compute $\obsMapping(us)$ for all $u \in \representatives \cup \representatives \Sigma$ and $s \in \separatorsMapping$, he needs to check whether $us \in L_\ell$. He first asks a membership query for $us$. If the answer is yes (that is, $us \in L$), he additionally asks a counter value query for all prefixes $x$ of $us$ (to check whether $us \in L_{\ell}$). Notice that such a query is allowed since $x \in \prefixes{L}$. The learner thus knows what is the set $\prefTable$. In order to compute $\counterAut{us}$, for all $u \in \representatives \cup \representatives \Sigma$ and $s \in \separatorsCounters$, he asks a counter value query for $us$ whenever $us \in \prefTable$. These queries are also allowed because $\prefTable \subseteq \prefixes{L_\ell}$. In summary, computing the value $\obsMapping(us)$ requires one membership query and at most $|us|$ counter value queries, and computing the value $\counterAut{us}$ requires at most one counter value query. As $\representatives \cup \representatives\Sigma$ is prefix-closed and $\separatorsMapping$ is suffix-closed, we get the next lemma.

\begin{lemma}\label{lem:nbrQueries}
    Computing $\obsMapping(us)$ and $\obsCounters(us)$ requires a polynomial number of membership and counter value queries in the sizes of $\representatives \cup \representatives\Sigma$ and $\separatorsMapping$.
\qed\end{lemma}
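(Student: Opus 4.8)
The statement to prove is that computing all the values $\obsMapping(us)$ and $\counterAut{us}$ requires only a polynomial number of membership and counter value queries, measured in $\sizeOfSet{\representatives \cup \representatives\Sigma}$ and $\sizeOfSet{\separatorsMapping}$. The whole argument has in fact been set up by the paragraph immediately preceding the lemma, so the proof is essentially a bookkeeping computation; I expect no conceptual obstacle.

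\begin{proof}
Let $N = \sizeOfSet{\representatives \cup \representatives\Sigma}$ and $M = \sizeOfSet{\separatorsMapping}$, and let $t$ be the length of the longest word in $(\representatives \cup \representatives\Sigma)\separatorsMapping$, which is polynomial in $N$ and $M$. To determine the table we must fix $\obsMapping(us)$ for every pair $(u,s) \in (\representatives \cup \representatives\Sigma) \times \separatorsMapping$ and $\counterAut{us}$ for every pair $(u,s) \in (\representatives \cup \representatives\Sigma) \times \separatorsCounters$; there are at most $NM$ pairs of the first kind and at most $NM$ of the second (since $\separatorsCounters \subseteq \separatorsMapping$). For each pair $(u,s)$ of the first kind, the learner asks one membership query for $us$; if $us \in L$ he asks at most $\lvert us\rvert \le t$ counter value queries, one for each prefix of $us$, which are legal because every such prefix lies in $\prefixes{L}$. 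Hence computing a single $\obsMapping(us)$ costs at most $1 + t$ queries, and computing all of them costs at most $NM(1+t)$ queries, which is polynomial. For each pair $(u,s)$ of the second kind, once the set $\prefTable$ is known from the previous step, the learner asks at most one further counter value query for $us$ (only when $us \in \prefTable \subseteq \prefixes{L_\ell} \subseteq \prefixes{L}$, so the query is legal), contributing at most $NM$ additional queries. Summing these contributions yields a total of at most $NM(2 + t)$ membership and counter value queries, which is polynomial in the sizes of $\representatives \cup \representatives\Sigma$ and $\separatorsMapping$.
\qed\end{proof}

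The only point requiring a moment of care, rather than being purely mechanical, is verifying the \emph{legality} of each query: counter value queries may only be posed on words in $\prefixes{L}$. This is precisely why the counter value queries for $\counterAut{us}$ are issued only once $us$ is known to lie in $\prefTable$, since $\prefTable \subseteq \prefixes{L_\ell} \subseteq \prefixes{L}$, and why the prefix-testing queries are issued only after a membership query has confirmed $us \in L$, so that all prefixes of $us$ are guaranteed to lie in $\prefixes{L}$. Aside from this legality check, the bound follows directly from counting the table entries and the per-entry query cost.
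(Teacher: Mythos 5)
Your proof is correct and follows essentially the same route as the paper: one membership query per entry of $\obsMapping$, counter value queries on all prefixes of $us$ to decide membership in $L_\ell$ (thereby determining $\prefTable$), and at most one further counter value query per entry of $\obsCounters$, all with the same legality checks. The only step you assert without justification is that the maximal word length $t$ is polynomial in $N$ and $M$; this is precisely where the paper invokes prefix-closedness of $\representatives \cup \representatives\Sigma$ and suffix-closedness of $\separatorsMapping$, which bound every word length in $(\representatives \cup \representatives\Sigma)\separatorsMapping$ by $N + M + 1$.
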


In the next example, we explain why it is necessary to use the additional set $\separatorsMapping$ in \Cref{def:observation_table}.

\begin{example}\label{example:table:infinite_loop}
    \begin{figure}
        \begin{subfigure}[t]{.3\textwidth}
            \centering
            \begin{tabular}{l|c} 
                    & \(\emptyword\)\\
    \hline 
    \(\emptyword\)  & \(0, 0\)\\
    \(a\)           & \(0, 1\)\\
    \(ab\)          & \(0, 1\)\\
    \(aba\)         & \(1, 0\)\\
    \hline 
    \(b\)           & \(1, 0\)\\
    \(aa\)          & \(0, \bot\)\\
    \(abb\)         & \(0, \bot\)\\
    \(abaa\)        & \(1, 0\)\\
    \(abab\)        & \(1, 0\)\\
\end{tabular}
        \end{subfigure}
        ~ 
        \begin{subfigure}[t]{.3\textwidth}
            \centering
            \begin{tabular}{l|c} 
                    & \(\emptyword\)\\
    \hline 
    \(\emptyword\)  & \(0, 0\)\\
    \(a\)           & \(0, 1\)\\
    \(ab\)          & \(0, 1\)\\
    \(aba\)         & \(1, 0\)\\
    \(abb\)         & \(0, 1\)\\
    \hline 
    \(b\)           & \(1, 0\)\\
    \(aa\)          & \(0, \bot\)\\
    \(abaa\)        & \(1, 0\)\\
    \(abab\)        & \(1, 0\)\\
    \(abba\)        & \(1, 0\)\\
    \(abbb\)        & \(0, \bot\)
\end{tabular}
        \end{subfigure}
        ~ 
        \begin{subfigure}[t]{.3\textwidth}
            \centering
            \begin{tabular}{l|c} 
                    & \(\emptyword\)\\
    \hline 
    \(\emptyword\)  & \(0, 0\)\\
    \(a\)           & \(0, 1\)\\
    \(ab\)          & \(0, 1\)\\
    \(aba\)         & \(1, 0\)\\
    \(abb\)         & \(0, 1\)\\
    \(abbb\)        & \(0, 1\)\\
    \hline 
    \(b\)           & \(1, 0\)\\
    \(aa\)          & \(0, \bot\)\\
    \(abaa\)        & \(1, 0\)\\
    \(abab\)        & \(1, 0\)\\
    \(abba\)        & \(1, 0\)\\
    \(abbba\)       & \(1, 0\)\\
    \(abbbb\)       & \(0, \bot\)\\
\end{tabular}
        \end{subfigure}
        \caption{Observation tables exposing an infinite loop when using the \LStar algorithm.}%
        \label{fig:example:table:infinite_loop}
    \end{figure}

    Assume that we only use the set $\separatorsCounters$ in \Cref{def:observation_table} and that the current observation table is the leftmost table $\obsTable$, with $\ell = 1$, given in \Cref{fig:example:table:infinite_loop} for the ROCA from \Cref{fig:example:roca}. 
    
    On top of that, assume we are using the classical $\LStar$ algorithm (see \Cref{subsec:Lstar}). As we can see, the table is not closed since the stored information for \(abb\), that is, $0,\bot$, does not appear in the upper part of the table for any $u \in \representatives$. So, we add \(abb\) in this upper part and \(abba\) and \(abbb\) in the lower part, to obtain the second table of \Cref{fig:example:table:infinite_loop}. Notice that this shift of $abb$ has changed its stored information, which is now equal to $0,1$. Indeed the set $\prefixes{\observationTable}$ now contains $abb$ as a prefix of $abba \in L_{\ell}$. 
    Again, the new table is not closed because of \(abbb\). After shifting
    this word in the upper part of the table, we obtain the third table
    of \Cref{fig:example:table:infinite_loop}. It is still not closed due to \(abbbb\), etc.
    This process will continue ad infinitum.
\qed\end{example}

To avoid an infinite loop when making the table closed, as described in the
previous example, we modify both the concept of table and how to derive an
equivalence relation from that table. Our solution is to introduce the set
$\separatorsMapping$, as already explained, but also the concept of
approximation set to approximate $\equivBGROCA$.

\begin{definition}\label{def:approx}
    Let $\obsTable  = (\representatives, \separatorsCounters, \separatorsMapping, \obsMapping, \obsCounters)$ be an observation table up to $\ell$. Let \(u, v \in \representatives \cup \representatives \Sigma\).
    Then, \(u \in \Approx(v)\) if and only if:
    \begin{itemize}
        \item $\forall s \in \separatorsCounters$, $\obsMapping(us) = \obsMapping(vs)$ and
        \item $\forall s \in \separatorsCounters$, $\obsCounters(us) \neq \bot \text{ and } \obsCounters(vs) \neq \bot \implies \obsCounters(us) = \obsCounters(vs)$.
    \end{itemize}
    The set $\Approx(v)$ is called an \emph{\approxSet}.
\end{definition}

Notice that in this definition, we consider \(\bot\) values as wildcards and
that we focus on words with suffixes from $\separatorsCounters$ only (and not from
$\separatorsMapping \setminus \separatorsCounters$). Interestingly, such wildcard entries in observation tables also feature in the work of Leucker and Neider on learning from an ``inexperienced'' teacher~\cite{DBLP:conf/isola/LeuckerN12}. Just like in that work, a crucial part of our learning algorithm concerns how to obtain an equivalence relation from an observation table with wildcards. Indeed, note that
\(\Approx\) does not define an equivalence relation as it is not transitive,
i.e., it is not true in general that if \(u \in \Approx(v)\) and \(v \in
\Approx(w)\), then \(u \in \Approx(w)\) due to the \(\bot\) values.  However,
\(\Approx\) is reflexive (i.e., \(u \in \Approx(u)\)) and symmetric (i.e., \(u
\in \Approx(v) \iff v \in \Approx(u)\)).

\begin{example}
    Let \(\observationTable\) be the table from \Cref{fig:example:observation_table} and let us compute \(\Approx(\emptyword)\).
    Notice that \(\separatorsCounters = \{\emptyword\}\) and we do not take into account \(\separatorsMapping = \{a, ba\}\) to compute the approximation sets.
    We can see that \(aba \notin \Approx(\emptyword)\) as \(\obsMapping(aba) = 1\) and \(\obsMapping(\emptyword) = 0\).
    Moreover, \(a \notin \Approx(\emptyword)\) since \(\obsCounters(a) \neq \bot, \obsCounters(\emptyword) \neq \bot\), and \(\obsCounters(a) \neq \obsCounters(\emptyword)\)
    With the same arguments, we also discard \(ab, b, abb, abaa, abab\).
    Thus, \(\Approx(\emptyword) = \{\emptyword, aa, aaa, aab\}\).
    On the other hand, \(\Approx(aa) = \{\emptyword, a, ab, aa, abb, aaa, aab\}\) due to the fact that \(\obsCounters(aa) = \bot\).
\qed\end{example}

The following notation will be convenient later. Let \(\observationTable\) be an observation table up to \(\ell \in \N\) and \(u \in \representatives \cup \representatives \Sigma\). If \(\obsCounters(u) = \bot\) (which means that $u \not\in \prefTable$), then we say that \(u\) is a \emph{\binWord}. Let us denote by \(\withoutBinWords{\representatives}\) the set \(\representatives\) from which the \binWords have been removed. We define \(\withoutBinWords{\representatives \cup \representatives \Sigma}\) in a similar way.

We can now formalize the relation between $\equivBGROCA$ and $\Approx$. 

\begin{proposition}\label{long:prop:observation_table:u_equiv_v_implies_u_in_approx_v}
    Let \(\observationTable \) be an observation table up to \(\ell \in \N\). 
    \begin{itemize}
        \item Let \(u, v \in \withoutBinWords{\representatives \cup \representatives \Sigma}\). Then, \(u \equivBGROCA v \implies u \in \Approx(v)\).
        \item All the words \(u \in \representatives \cup \representatives \Sigma \setminus \withoutBinWords{\representatives \cup \representatives \Sigma}\) are in the same \approxSet.
    \end{itemize}
    
\end{proposition}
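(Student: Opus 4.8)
The plan is to verify the two defining conditions of $u \in \Approx(v)$ (membership agreement on $\separatorsCounters$, and counter-value agreement on $\separatorsCounters$) directly from the two clauses of $\equivBGROCA$, and to handle \binWords separately as a degenerate case. The only non-routine point is that $\obsMapping$ records membership in the \emph{bounded} language $L_\ell$, whereas the first clause of $\equivBGROCA$ speaks only about the \emph{full} language $L$; I therefore expect bridging this $L$-versus-$L_\ell$ gap to be the heart of the argument. First I would note that every non-\binWord lies in $\prefTable \subseteq \prefixes{L_\ell} \subseteq \prefixes{L}$, so that both $u$ and $v$ are prefixes of words of $L_\ell$. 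I would then isolate the following height-transfer claim: if $u, v \in \prefixes{L_\ell}$, $u \equivBGROCA v$, and $w \in \Sigma^*$, then $uw \in L_\ell \iff vw \in L_\ell$.

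For the height-transfer claim I would argue by symmetry, assuming $uw \in L_\ell$. The first clause of $\equivBGROCA$ gives $vw \in L$, so it only remains to bound the counter value of every prefix of $vw$ by $\ell$ (non-negativity being automatic). I would split these prefixes into those contained in $v$ and those of the form $vw'$ with $w' \in \prefixes{w}$. For the former, since $v \in \prefixes{L_\ell}$ all of its prefixes already have counter value at most $\ell$, being prefixes of a word of $L_\ell$. For the latter, $uw'$ is a prefix of $uw \in L_\ell$, so $\counterAut{uw'} \le \ell$ and $uw' \in \prefixes{L}$, while $vw' \in \prefixes{vw} \subseteq \prefixes{L}$; the second clause of $\equivBGROCA$ applied to $uw', vw'$ then yields $\counterAut{vw'} = \counterAut{uw'} \le \ell$. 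Hence $vw \in L_\ell$. Instantiating the claim at $w = s$ for each $s \in \separatorsCounters$ gives $\obsMapping(us) = \obsMapping(vs)$, the first condition of $\Approx$. The second condition is immediate: if $\obsCounters(us) \neq \bot$ and $\obsCounters(vs) \neq \bot$, then $us, vs \in \prefTable \subseteq \prefixes{L}$ and these two values equal $\counterAut{us}$ and $\counterAut{vs}$, which coincide by the second clause of $\equivBGROCA$ at $w = s$. This settles the first bullet.

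For the second bullet, the key observation is that $\prefTable$ is prefix-closed. Hence if $u$ is a \binWord, i.e.\ $u \notin \prefTable$, then $us \notin \prefTable$ for every $s$ (otherwise $u$, being a prefix of $us$, would lie in $\prefTable$), so $\obsCounters(us) = \bot$ for all $s \in \separatorsCounters$; likewise $\obsMapping(us) = 0$ for all $s \in \separatorsMapping$, since $\obsMapping(us) = 1$ would place $us$—and therefore its prefix $u$—into $\prefTable$. Consequently any two \binWords agree on $\obsMapping$ over $\separatorsCounters$ (both constantly $0$) and satisfy the counter-value condition vacuously (all $\obsCounters$-entries being $\bot$). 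Fixing an arbitrary \binWord $v$, the set $\Approx(v)$ then contains every \binWord, so they all lie in one \approxSet (the statement being vacuous when no \binWord exists).

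Overall I expect the height-transfer claim to be the only real obstacle: it is exactly where the counter-value clause of $\equivBGROCA$ is needed to propagate the $\le \ell$ height bound from $u$-continuations to $v$-continuations, which is what makes the $L_\ell$-membership entries of $\obsMapping$ agree. Once that claim is in hand, both conditions of $\Approx$ reduce to direct instantiations of the two clauses of $\equivBGROCA$, and the \binWord case is a short prefix-closedness argument.
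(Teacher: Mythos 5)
Your proposal is correct and follows essentially the same approach as the paper's proof: your \enquote{height-transfer claim} is precisely the paper's argument for the $\obsMapping$-agreement (same case split between prefixes of $v$, handled via $v \in \prefTable \subseteq \prefixes{L_\ell}$, and extensions $vw'$, handled via the counter-value clause of $\equivBGROCA$), and the counter-value condition and the \binWord case are treated identically. Your explicit prefix-closedness argument for the second bullet merely fills in a step the paper states tersely (\enquote{it follows that $\obsMapping(us)=0$ and $\obsCounters(us)=\bot$}), which is a welcome but not substantive difference.
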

\begin{proof}
    Let \(u, v \in \withoutBinWords{\representatives \cup \representatives\Sigma}\) be such that \(u \equivBGROCA v\).
    Our goal is to prove that \(u \in \Approx(v)\).
    That is,
    \begin{enumerate}
        \item \(\forall s \in \separatorsCounters, \obsMapping(us) = \obsMapping(vs)\),
        \item \(\forall s \in \separatorsCounters, \obsCounters(us) \neq \bot \land \obsCounters(vs) \neq \bot \implies \obsCounters(us) = \obsCounters(vs)\).
    \end{enumerate}
    
    Let us prove Claim 1. Let \(u, v \in \withoutBinWords{\representatives
    \cup \representatives\Sigma}\), and \(s \in \separatorsCounters\) be such
    that \(us \in L_{\ell}\).   Let us show that \(vs \in L_{\ell}\). Since
    \(u \equivBGROCA v\) and $us \in L_{\ell} \subseteq L$, it follows that
    $vs \in L$. In order to conclude \(vs \in L_{\ell}\), we need to prove
    that \(\forall x \in \prefixes{vs}\), we have \(\counterAut{x} \leq
    \ell\). Let \(x \in \prefixes{vs}\).
    \begin{itemize}
        \item Assume \(x \in \prefixes{v}\). By hypothesis, \(v \in \withoutBinWords{\representatives \cup \representatives\Sigma}\). Thus, \(v \in \prefTable \subseteq \prefixes{L_{\ell}}\). Therefore, \(\counterAut{x} \leq \ell\).
        \item Assume \(x = vy\) with \(y \in \prefixes{s}\). The fact that \(u \equivBGROCA v\) and \(uy, vy \in \prefixes{L}\) (since \(us, vs \in L\)) implies that \(\counterAut{uy} = \counterAut{vy}\). As $us \in L_{\ell}$ by hypothesis, we know that \(\counterAut{uy} \leq \ell\). Therefore, \(\counterAut{x} = \counterAut{vy} \leq \ell\).
    \end{itemize}
    We have proved \(us \in L_{\ell} \implies vs \in L_{\ell}\). The other implication is proved similarly.
    
    Let us prove Claim 2. Let \(s \in \separatorsCounters\) be such that \(\obsCounters(us) \neq \bot\) and \(\obsCounters(vs) \neq \bot\). By hypothesis, we have \(us, vs \in \prefTable \subseteq \prefixes{L_\ell}\) and, thus, \(us, vs \in \prefixes{L}\).
    Since \(u \equivBGROCA v\), it holds \(\counterAut{us} = \counterAut{vs} \leq \ell\).
    Moreover, since \(us, vs \in \prefTable\), we have \(\obsCounters(us) = \counterAut{us}\) and \(\obsCounters(vs) = \counterAut{vs}\). Therefore, we conclude that \(\obsCounters(us) = \obsCounters(vs)\).
    
    We now proceed to the proof of the second part of the proposition. Any word $u \in \representatives \cup \representatives \Sigma \setminus \withoutBinWords{\representatives \cup \representatives \Sigma}$ is a \binWord. It follows that for all $s \in \separatorsCounters$, $\obsMapping(us) = 0$ and $\obsCounters(us) = \bot$. Therefore all the \binWords are in the same \approxSet.
\qed\end{proof}

\subsection{Closed and consistent observation table}\label{sec:closedConsistant}

As for the $\LStar$ algorithm~\cite{DBLP:journals/iandc/Angluin87}, we need to define constraints that the table must respect in order to obtain a congruence relation from \(\Approx\). This is more complex than for $\LStar$, namely, the table must be closed, \consistent, and \prefixed. The first two constraints are close to the ones already imposed by $\LStar$. The last one is new. Crucially, it implies that $\Approx$ is transitive.

\begin{definition}\label{def:closedConsistent}
    Let \(\observationTable\) be an observation table up to \(\ell \in \N\).
    We say the table is:
    \begin{itemize}
        \item \emph{closed} if \(\forall u \in \representatives \Sigma, \Approx(u) \cap \representatives \neq \emptyset\),
        \item \emph{\consistent} if \(\forall u \in \representatives\), $\forall a \in \Sigma$,
            \[
                ua \in \bigcap_{v \in \Approx(u) \cap \representatives} \Approx(va),
            \]
        \item \emph{\prefixed} if \(\forall u, v \in \representatives \cup \representatives \Sigma\) such that \(u \in \Approx(v)\),
            \[
                \forall s \in \separatorsCounters,~ \obsCounters(us) \neq \bot \iff \obsCounters(vs) \neq \bot.
            \]
    \end{itemize}
\end{definition}

\begin{example} 
    Let \(\observationTable\) be the table from \Cref{fig:example:observation_table}. We have $\Approx(b) \cap \representatives \neq \emptyset$ because $aba \in \Approx(b)$. More generally one can check that \(\observationTable\) is closed. However, \(\observationTable\) is not \consistent.  Indeed, \(\emptyword b \notin \bigcap_{v \in \Approx(\emptyword) \cap \representatives} \Approx(v b)\) since $\Approx(\emptyword) \cap \representatives = \{\emptyword, aa\}$ and $\emptyword b \notin \Approx(aa b)$. Finally, \(\observationTable\) is also not \prefixed since \(aa \in \Approx(\emptyword)\) but \(\obsCounters(aa) = \bot\) and \(\obsCounters(\emptyword) = 0\).
\qed\end{example}

When $\observationTable$ is closed, $\Sigma$- and \prefixed, we define the following relation over $\representatives \cup \representatives \Sigma$: two words \(u, v \in \representatives \cup \representatives \Sigma\) are \(\equivTable\)-equivalent,
\[
u \equivTable v \iff u \in \Approx(v).
\]
In this case, $\equivTable$ is a congruence over $\representatives$ as stated in the next proposition.

\begin{proposition}\label{lem:rightcongruence}
    Let \(\observationTable\) be a closed, $\Sigma$- and \prefixed observation
    table up to \(\ell \in \N\).  Then, \(\equivTable\) is an equivalence
    relation over $\representatives \cup \representatives \Sigma$ that is a
    congruence over \(\representatives\).
\end{proposition}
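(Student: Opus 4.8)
The plan is to verify the three defining properties of an equivalence relation for $\equivTable$ and then the congruence property, invoking each of the three table conditions exactly where it is needed. Reflexivity and symmetry are inherited directly from $\Approx$: as already observed, $u \in \Approx(u)$ always holds and $u \in \Approx(v) \iff v \in \Approx(u)$, so $\equivTable$ is reflexive and symmetric without using any of the table conditions.

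The main work, and the only delicate point, is transitivity. Suppose $u \in \Approx(v)$ and $v \in \Approx(w)$ for some $u, v, w \in \representatives \cup \representatives \Sigma$; I want to conclude $u \in \Approx(w)$. The membership half of the definition of $\Approx$ transfers immediately: from $\obsMapping(us) = \obsMapping(vs)$ and $\obsMapping(vs) = \obsMapping(ws)$ for every $s \in \separatorsCounters$ we obtain $\obsMapping(us) = \obsMapping(ws)$. The counter-value half is where $\Approx$ fails to be transitive in general, since a wildcard $\obsCounters(vs) = \bot$ on the middle word would leave $\obsCounters(us)$ and $\obsCounters(ws)$ mutually unconstrained. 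This is exactly what \consistencyBot rules out: applied to $u \in \Approx(v)$ it gives $\obsCounters(us) \neq \bot \iff \obsCounters(vs) \neq \bot$, and applied to $v \in \Approx(w)$ it gives $\obsCounters(vs) \neq \bot \iff \obsCounters(ws) \neq \bot$. Hence for each fixed $s \in \separatorsCounters$ the three entries are simultaneously $\bot$ or simultaneously non-$\bot$. In the non-$\bot$ case, $u \in \Approx(v)$ yields $\obsCounters(us) = \obsCounters(vs)$ and $v \in \Approx(w)$ yields $\obsCounters(vs) = \obsCounters(ws)$, so $\obsCounters(us) = \obsCounters(ws)$; in the $\bot$ case there is nothing to verify. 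Therefore $u \in \Approx(w)$, and transitivity holds. Combined with reflexivity and symmetry, this shows that $\equivTable$ is an equivalence relation on $\representatives \cup \representatives \Sigma$.

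It then remains to prove that $\equivTable$ is a congruence over $\representatives$, that is, that $u \equivTable v$ with $u, v \in \representatives$ implies $ua \equivTable va$ for every $a \in \Sigma$ (note that $ua, va \in \representatives \Sigma$ do lie in the domain of $\equivTable$). Here I would use \consistency together with the symmetry just established. From $u \equivTable v$ and symmetry we get $v \in \Approx(u) \cap \representatives$. Instantiating the \consistency condition at $u$ and $a$,
\[
    ua \in \bigcap_{w \in \Approx(u) \cap \representatives} \Approx(wa),
\]
and taking the term $w = v$ gives $ua \in \Approx(va)$, i.e.\ $ua \equivTable va$, as desired. Closedness is the hypothesis guaranteeing that every element of $\representatives \Sigma$ shares its approximation set with some representative in $\representatives$, which is what lets $\equivTable$ descend to a well-defined congruence on the states of the induced automaton $\autTable$.

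I expect the transitivity argument to be the crux: the key insight is recognising that \consistencyBot is precisely the property synchronising the $\bot$ pattern across the three words, thereby repairing the failure of transitivity of $\Approx$ noted after \Cref{def:approx}. The congruence part is then a one-line instantiation of \consistency, and reflexivity and symmetry require no work.
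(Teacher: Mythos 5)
Your proof is correct and follows essentially the same route as the paper's: reflexivity and symmetry are inherited from $\Approx$, transitivity is repaired exactly by invoking \consistencyBot to synchronize the $\bot$-pattern across the three entries, and the congruence step is the same one-line instantiation of \consistency at $w = v$. The only cosmetic difference is that the paper additionally uses closedness to exhibit representatives $u', v' \in \representatives$ equivalent to $ua, va$ and concludes $u' \equivTable v'$, whereas you relegate that role of closedness to a closing remark — both are faithful to the statement.
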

\begin{proof}
    First, we have to prove that \(\equivTable\) is an equivalence relation over $\representatives \cup \representatives \Sigma$.
    It is easy to see that \(\equivTable\) is reflexive and symmetric. Let us show that it is transitive. Let \(u, v, w \in \representatives \cup \representatives\Sigma\) be such that \(u \in \Approx(v)\) and \(v \in \Approx(w)\).
    We want to show that \(u \in \Approx(w)\), i.e., \(\forall s \in \separatorsCounters\), \(\obsMapping(us) = \obsMapping(ws)\) and \( \obsCounters(us) \neq \bot \land \obsCounters(ws) \neq \bot \implies \obsCounters(us) = \obsCounters(ws)\).
    Let \(s \in \separatorsCounters\). By hypothesis, we have \(\obsMapping(us) = \obsMapping(vs)\) and \(\obsMapping(vs) = \obsMapping(ws)\).
    So, \(\obsMapping(us) = \obsMapping(ws)\). Now, assume \(\obsCounters(us) \neq \bot \land \obsCounters(ws) \neq \bot\).
    Since the table is \prefixed and \(u \in \Approx(v)\), it must hold that \(\obsCounters(vs) \neq \bot\).
    Therefore, \(\obsCounters(us) = \obsCounters(vs)\).
    Likewise, we deduce that \(\obsCounters(vs) = \obsCounters(ws)\) since \(v \in \Approx(w)\).
    We conclude that \(\obsCounters(us) = \obsCounters(ws)\).

    Second, we prove that \(\equivTable\) is a congruence over \(\representatives\).
    Let \(u, v \in \representatives\) be such that \(u \equivTable v\) and let \(a \in \Sigma\).
    Since the table is closed, there exist \(u', v' \in \representatives\) such that \(u' \equivTable ua\) and \(v' \equivTable va\).
    Since the table is \consistent, we have $ua \in \bigcap_{w \in \Approx(u) \cap \representatives} \Approx(wa)$.
    Thus as  \(v \in \Approx(u) \cap \representatives\), we have \(ua \in \Approx(va)\), i.e., \(ua \equivTable va\).
    It follows that \(u' \equivTable v'\).
\qed\end{proof}

We will explain in \Cref{sec:correctness} below how to proceed to make
a table closed, $\Sigma$- and \prefixed. This will require to \emph{extend}
the table with some new words added to $\representatives \cup \representatives\Sigma$, $\separatorsCounters$, or $\separatorsMapping$ and to \emph{update} already known values of the table (since $\prefTable$  may have changed). This will be feasible in a finite amount of time. 

\begin{proposition}\label{prop:finiteProcess}
    Given an observation table \(\observationTable\) up to \(\ell \in \N\), there exists an algorithm that makes it closed, $\Sigma$- and \prefixed in a finite amount of time.
\end{proposition}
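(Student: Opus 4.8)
The plan is to specify a repair loop which, as long as the table exhibits one of the three defects of \Cref{def:closedConsistent}, applies a dedicated fix and then re-queries the teacher to update every affected entry (recomputing $\prefTable$ and, from it, the values of $\obsCounters$). I would first make the three fixes explicit. A closedness violation witnessed by $u \in \representatives\Sigma$ with $\Approx(u) \cap \representatives = \emptyset$ is repaired by adding $u$ to $\representatives$ (hence $u\Sigma$ to $\representatives\Sigma$). A \consistency{} violation witnessed by $u,v \in \representatives$ with $v \in \Approx(u)\cap\representatives$, some $a \in \Sigma$ and some $s \in \separatorsCounters$ on which $uas$ and $vas$ disagree (either on $\obsMapping$, or on $\obsCounters$ with both values in $\{0,\dots,\ell\}$) is repaired by adding $as$ and all its suffixes to $\separatorsCounters$, and therefore to $\separatorsMapping$. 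The delicate fix is for a \consistencyBot{} violation: here $u \in \Approx(v)$ while, say, $\obsCounters(us) \neq \bot$ but $\obsCounters(vs) = \bot$ for some $s \in \separatorsCounters$. Since $\obsCounters(us) \neq \bot$ means $us \in \prefTable$, the table already exhibits a suffix $t$ with $ust \in L_{\ell}$; I then ask the teacher (a membership query plus counter-value queries) whether $vst \in L_{\ell}$. If the answer is yes, I add $st$ only to $\separatorsMapping$, which places $vs$ in $\prefTable$ and hence makes $\obsCounters(vs) \neq \bot$; if the answer is no, I add $st$ with its suffixes to $\separatorsCounters$, which separates $u$ and $v$ because now $\obsMapping(ust)=1 \neq 0 = \obsMapping(vst)$.

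Correctness of each individual fix is local: each strictly removes its witnessing violation, as sketched above. The two structural facts I would lean on are a monotonicity and a finiteness property. For monotonicity, note that $\prefTable$ only grows when words are added to $\representatives$ or to $\separatorsMapping$, since $L_{\ell}$ is fixed and $\prefTable$ depends only on $\obsMapping$-values that never change once known; consequently a non-$\bot$ entry stays non-$\bot$, so a word that is not a \binWord{} never becomes one. In particular the ``yes'' branch of the \consistencyBot{} fix only flips $\bot$-entries to genuine counter values and, crucially, touches only $\separatorsMapping$, so it leaves $\Approx$ (which reads $\separatorsCounters$ alone) untouched. For finiteness, \Cref{lem:sizeEquiv} bounds the number of classes of $\equivBGROCA$ among words of height at most $\ell$ by $(\ell+1)\sizeOfSet{Q}+1$, and by \Cref{long:prop:observation_table:u_equiv_v_implies_u_in_approx_v} two $\equivBGROCA$-equivalent non-\binWords{} always satisfy $u \in \Approx(v)$, whatever separators have been added; hence no separator addition ever separates genuinely equivalent words, and all \binWords{} always share a single \approxSet.

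Termination is the heart of the matter, and I would establish it through a lexicographic measure capped by the $(\ell+1)\sizeOfSet{Q}+1$ figure above. Its leading component counts the distinct rows the table induces over $\separatorsCounters$ (reading $\obsMapping$ together with the non-$\bot$ values of $\obsCounters$); adding a separator can only increase this count, which by the finiteness property can never exceed the number of $\equivBGROCA$-classes, so only finitely many \consistency{} fixes and ``no'' branches of \consistencyBot{} fixes occur. The next component bounds closedness fixes: a non-\binWord{} moved into $\representatives$ occupies a previously unrepresented $\equivBGROCA$-class, while at most one \binWord{} is ever added (they all lie in the same \approxSet, so once one is a representative every \binWord{} is covered). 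The lowest component tracks the remaining $\bot$-entries on the current frontier, which the ``yes'' branch only decreases.

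The main obstacle is precisely the interaction that is absent from the classical \LStar analysis and illustrated in \Cref{example:table:infinite_loop}: each fix re-queries the teacher, enlarges $\prefTable$, and may thereby convert further $\bot$-entries into genuine counter values, re-introducing a fresh \consistencyBot{} violation elsewhere. I must therefore argue that such cascades cannot recur without bound, i.e.\ that the lexicographic measure genuinely decreases across a complete fix-and-update step rather than merely across the targeted entry. Here the role of \consistencyBot{} is exactly to break the loop of \Cref{example:table:infinite_loop}: once the table is \prefixed{}, the restriction of $\Approx$ to non-\binWord{} rows is transitive (as in the proof of \Cref{lem:rightcongruence}) and the $\bot$-pattern of the rows is stable, after which the bounded row-, representative- and $\bot$-counts force the process to halt.
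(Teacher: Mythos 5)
Your repair operations are essentially the paper's: your unified membership test on $vst$ recovers, in the case where the witness word $u'$ is a prefix of $u$, the forced ``no'' answer, and your ``yes''/``no'' branches coincide with the paper's treatment of a $(u,v,s)$-\inconsistencyBot{} (\Cref{lem:resolveBotInc1}); likewise your bound on openness fixes via \Cref{lem:sizeEquiv} and \Cref{long:prop:observation_table:u_equiv_v_implies_u_in_approx_v} is the paper's. The gap is in the leading component of your lexicographic measure, i.e., exactly in the part that must tame the cascade you yourself flag as the main obstacle. First, the number of distinct rows over $\separatorsCounters$ is \emph{not} capped by the index of $\equivBGROCA$: two $\equivBGROCA$-equivalent non-\binWords{} $u, v$ can induce distinct rows because their wildcard patterns differ (say $\obsCounters(us_1) \neq \bot = \obsCounters(vs_1)$ while $\obsCounters(vs_2) \neq \bot = \obsCounters(us_2)$); \Cref{long:prop:observation_table:u_equiv_v_implies_u_in_approx_v} only yields $u \in \Approx(v)$, not equality of rows, and differing wildcard patterns among equivalent words is precisely the phenomenon that \consistencyBot{} exists to police, so it cannot be presupposed in the termination proof. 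Second, the row count is not monotone under your fix-and-requery steps: any step that enlarges $\prefTable$ (an openness fix, or the ``yes'' branch) may turn $\bot$-entries into concrete counter values and thereby \emph{merge} two previously distinct rows (e.g.\ $(1,\bot)$ and $(1,3)$ both becoming $(1,3)$). Hence a lower-priority fix can decrease your leading component, the tuple need not decrease lexicographically, and your closing claim --- that eventually the $\bot$-pattern of the rows is ``stable'' --- is circular, since a stable $\bot$-pattern is what \consistencyBot{} is supposed to deliver.

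The correct replacement for your leading component is the sizes of the \approxSet s themselves: by \Cref{lem:observation_table:u_in_approx_prime_v_implies_u_in_approx_v}, $\Approx(v)$ restricted to already-present elements can only shrink under \emph{every} kind of extension (filling a $\bot$-entry can separate words but never re-merges them, unlike rows), so each pair is separated at most once and the number of \inconsistency{} fixes plus ``no''-branch fixes is at most $\RfSize^2$, where $\RfSize = \sizeOfSet{\Rf \cup \Rf\Sigma}$ is a priori bounded because only openness fixes enlarge $\representatives$ and those are bounded by the index of $\equivBGROCA$. Your third component then does its job: between two separations $\separatorsCounters$ is fixed, each ``yes''-branch converts at least one $\bot$-entry of $(\representatives \cup \representatives\Sigma)\separatorsCounters$, and entries never revert, giving at most $\sizeOfSet{\separatorsCounters} \cdot \RfSize$ mismatch eliminations per phase. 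This is the structure of the paper's proof (\Cref{prop:nbrOpen-sizeR,prop:NbrInc-sizeS}), which on top of finiteness also bounds the growth of $\separatorsCounters$ and $\separatorsMapping$ through a recurrence to obtain explicit (exponential) bounds.
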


The proof of this proposition is deferred in \Cref{subsec:growth}.
Presently, we argue that
from a closed, $\Sigma$- and \prefixed observation table and its equivalence
relation $\equivTable$, it is possible to construct a DFA\@.
\begin{definition}\label{def:observation_table:automaton}
    Let \(\observationTable\) be a closed, $\Sigma$- and \prefixed observation table up to \(\ell \in \N\).
    From \(\equivTable\), we define the DFA \(\autTable =
    (Q_{\observationTable}, \Sigma, \delta_{\observationTable},
    q^0_{\observationTable}, F_{\observationTable})\) with:
    \begin{itemize}
        \item \(Q_{\observationTable} = \{\equivalenceClass{u}_{\equivTable} \mid u \in \representatives\}\), 
        \item \(q^{0}_{\observationTable} = \equivalenceClass{\emptyword}_{\equivTable}\),
        \item \(F_{\observationTable} = \{\equivalenceClass{u}_{\equivTable} \mid \obsMapping(u) = 1\}\), and
        \item the (total) transition function $\delta_{\observationTable}$ is defined by $\delta_{\observationTable}(\equivalenceClass{u}_{\equivTable}, a) = \equivalenceClass{ua}_{\equivTable}$, for all $\equivalenceClass{u}_{\equivTable} \in Q_{\observationTable}$ and $a \in \Sigma$.
    \end{itemize}
\end{definition}

The next lemma states that $\autTable$ is well-defined and its definition makes sense with regards to the information stored in the table $\observationTable$.

\begin{lemma}\label{lemma:observation_table:automaton:correct_representatives}
    Let \(\observationTable\) be a closed, $\Sigma$- and \prefixed observation
    table up to \(\ell \in \N\), and \(\autTable\) be the related DFA\@. Then,
    \begin{itemize}
        \item $\autTable$ is well-defined and
        \item for all \(u \in \representatives \cup \representatives\Sigma\),
          \(u \in \languageOf{\autTable} \iff u \in L_{\ell}\).
    \end{itemize}
\end{lemma}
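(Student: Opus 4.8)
The plan is to treat the two bullet points separately: first verify that $\autTable$ is a well-defined DFA, and then show by a short induction that its run on any word of $\representatives \cup \representatives\Sigma$ faithfully records membership in $L_\ell$. Throughout I will lean on \Cref{lem:rightcongruence}, which tells me that $\equivTable$ is an equivalence relation over $\representatives \cup \representatives\Sigma$ and a congruence over $\representatives$.

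For well-definedness I would check each component of \Cref{def:observation_table:automaton} in turn. The state set poses no problem, and the initial state is legitimate because $\representatives$ is prefix-closed and non-empty (\Cref{rem:nonempty}), so $\emptyword \in \representatives$ and $q^0_{\observationTable} = \equivalenceClass{\emptyword}_{\equivTable}$ is a genuine state. For the set of final states I must argue that membership in $F_{\observationTable}$ is independent of the chosen representative, i.e. that $u \equivTable v$ forces $\obsMapping(u) = \obsMapping(v)$; this is the first clause of \Cref{def:approx} instantiated at $s = \emptyword$, where $\emptyword \in \separatorsCounters$ since $\separatorsCounters$ is suffix-closed and non-empty. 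For the transition function two things are needed. First, $\delta_{\observationTable}(\equivalenceClass{u}_{\equivTable}, a) = \equivalenceClass{ua}_{\equivTable}$ must land in $Q_{\observationTable}$: this is exactly closedness, since for $ua \in \representatives\Sigma$ there is some $u' \in \representatives$ with $u' \in \Approx(ua)$, whence $\equivalenceClass{ua}_{\equivTable} = \equivalenceClass{u'}_{\equivTable} \in Q_{\observationTable}$. Second, the image must not depend on which representative $u \in \representatives$ of the source state we pick: this is precisely the congruence property of \Cref{lem:rightcongruence}, as $u \equivTable v$ with $u,v \in \representatives$ yields $ua \equivTable va$.

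For the language claim, the key fact to establish first (writing $\delta^*_{\observationTable}$ for the iterated transition function) is that the run of $\autTable$ on any $u \in \representatives$ ends in the expected class, namely $\delta^*_{\observationTable}(q^0_{\observationTable}, u) = \equivalenceClass{u}_{\equivTable}$. I would prove this by induction on $|u|$: the base case $u = \emptyword$ is immediate, and for the inductive step prefix-closedness of $\representatives$ guarantees that if $u = u'a \in \representatives$ then $u' \in \representatives$, so the induction hypothesis applies to $u'$ and one application of $\delta_{\observationTable}$ gives $\equivalenceClass{u'a}_{\equivTable} = \equivalenceClass{u}_{\equivTable}$. Reading one additional symbol extends the identity to every $u \in \representatives\Sigma$ as well, so it holds on all of $\representatives \cup \representatives\Sigma$.

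With this in hand I conclude as follows. For $u \in \representatives \cup \representatives\Sigma$ we have $u \in \languageOf{\autTable}$ iff $\equivalenceClass{u}_{\equivTable} \in F_{\observationTable}$; by the representative-independence of $F_{\observationTable}$ established above this is equivalent to $\obsMapping(u) = 1$, which is in turn equivalent to $u \in L_\ell$ by the defining property of the table (\Cref{def:observation_table}, again using the separator $\emptyword$). Chaining these equivalences gives $u \in \languageOf{\autTable} \iff u \in L_\ell$. I expect no genuinely hard step here: the only delicate points are bookkeeping ones — ensuring that $\emptyword$ is available in $\separatorsCounters$, and invoking closedness and the congruence of \Cref{lem:rightcongruence} at exactly the right places — and once those are granted the argument mirrors the classical Myhill--Nerode/$\LStar$ correctness proof restricted to $\representatives \cup \representatives\Sigma$.
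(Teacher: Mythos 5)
Your proposal is correct and follows essentially the same route as the paper's own proof: checking well-definedness of the initial state, final states (via the first clause of \Cref{def:approx} at $s=\emptyword$), and transition function (via closedness and the congruence of \Cref{lem:rightcongruence}), and then an induction showing $\delta_{\observationTable}(q^0_{\observationTable}, u) = \equivalenceClass{u}_{\equivTable}$ for all $u \in \representatives \cup \representatives\Sigma$, from which the language equivalence follows by the defining property of $\obsMapping$. Your write-up merely spells out bookkeeping details (e.g.\ why $\emptyword \in \representatives$ and $\emptyword \in \separatorsCounters$) that the paper leaves implicit.
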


\begin{proof}
Let us first explain why $\autTable$ is well-defined. The initial state is
well defined since $\emptyword \in R$. The set of final states is well defined
because $u \equivTable v \wedge \obsMapping(u) = 1 \implies \obsMapping(v) =
1$. The transition function is well defined because the table
$\observationTable$ is closed and $\equivTable$ is a congruence.

Let us prove the second statement of the lemma. An easy induction shows that \(\forall u \in \representatives \cup \representatives\Sigma, \delta_{\observationTable}(q_{\observationTable}^0, u) = \equivalenceClass{u}_{\equivTable}\). 
Therefore, for all such $u$, we get \(u \in \languageOf{\autTable} \iff u \in L_{\ell}\).
\qed\end{proof}

\subsection{Handling counterexamples to partial equivalence queries}\label{sec:cex}

Given an observation table \(\observationTable\) that is closed, $\Sigma$- and
$\bot$- consistent, let \(\autTable\) be the DFA constructed from
\(\observationTable\) like described
in \Cref{def:observation_table:automaton}.
If the teacher's answer to a partial equivalence query over \(\autTable\) is
positive, the learned DFA $\autTable$ exactly accepts \(L_{\ell}\).  Otherwise,
the teacher returns a counterexample, that is, a word \(w \in \Sigma^*\) such
that \(w \in L_{\ell} \iff w \notin \languageOf{\autTable}\).  We extend and
update the table $\observationTable$ to obtain a new observation table
\(\observationTable' = (\representatives', \separatorsCounters',
\separatorsMapping', \obsMapping', \obsCounters')\) such that:
\[
  \representatives' = \representatives \cup \prefixes{w}.  
\]
We then compute \(\obsMapping'\) and \(\obsCounters'\) using membership
and counter value queries. We finally make \(\observationTable'\) closed, $\Sigma$- and \prefixed.

We show that the new equivalence relation $\equivTable[\obsTable']$ is a refinement of $\equivTable$ with strictly more equivalence classes.

\begin{proposition}\label{prop:betterApproximation}
    Let \(\observationTable = (\representatives, \separatorsCounters, \separatorsMapping, \obsMapping, \obsCounters)\) be a closed, $\Sigma$- and \prefixed observation table up to \(\ell \in \N\), \(w \in \Sigma^*\) be the counterexample provided by the teacher to a partial equivalence query, and \(\observationTable' = (\representatives', \separatorsCounters', \separatorsMapping', \obsMapping', \obsCounters')\) be the closed, $\Sigma$- and \prefixed observation table obtained after processing \(w\).
    Then,
    \[
        \forall u, v \in \representatives \cup \representatives \Sigma, u \equivTable[\obsTable'] v \implies u \equivTable v.
    \]
    Furthermore, the index of \(\equivTable[\obsTable']\) is strictly greater than the index of \(\equivTable\).
\end{proposition}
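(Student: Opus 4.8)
The plan is to derive both statements from the monotonicity of the table update, and then to obtain strictness by assuming the two indices coincide and deriving an isomorphism between the two hypothesis DFAs, contradicting the fact that $w$ is a genuine counterexample.

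First I would record that processing the counterexample only grows the table: $\representatives \subseteq \representatives'$, $\separatorsCounters \subseteq \separatorsCounters'$, and $\separatorsMapping \subseteq \separatorsMapping'$, while the counter limit $\ell$ (hence $L_\ell$) is unchanged. Two monotonicity facts drive everything. Since $\obsMapping(x) = 1 \iff x \in L_\ell$ is a property of the word alone, $\obsMapping(us) = \obsMapping'(us)$ for every $u \in \representatives \cup \representatives\Sigma$ and $s \in \separatorsMapping$. And since the set of confirmed prefixes can only grow, $\prefTable \subseteq \prefTable[\observationTable']$, every non-$\bot$ counter value is preserved: if $\obsCounters(us) \neq \bot$ then $us \in \prefTable \subseteq \prefTable[\observationTable']$, so $\obsCounters'(us) = \counterAut{us} = \obsCounters(us)$. (A $\bot$ entry may, however, become a genuine value in $\observationTable'$, which is exactly the asymmetry one must respect.)

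With these facts the first statement is short. Fix $u, v \in \representatives \cup \representatives\Sigma$ with $u \equivTable[\obsTable'] v$, i.e.\ $u \in \Approx(v)$ computed in $\observationTable'$, and take any $s \in \separatorsCounters \subseteq \separatorsCounters'$. Membership agrees by the first monotonicity fact, so $\obsMapping(us) = \obsMapping(vs)$. If moreover $\obsCounters(us) \neq \bot$ and $\obsCounters(vs) \neq \bot$, the second fact gives $\obsCounters'(us) = \obsCounters(us)$ and $\obsCounters'(vs) = \obsCounters(vs)$; as $u \equivTable[\obsTable'] v$ forces $\obsCounters'(us) = \obsCounters'(vs)$, we conclude $\obsCounters(us) = \obsCounters(vs)$. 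Hence $u \in \Approx(v)$ in $\observationTable$, that is $u \equivTable v$.

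For strictness, the refinement just proved shows $\equivTable[\obsTable']$ refines $\equivTable$ on $\representatives$, and since $\representatives \subseteq \representatives'$ the number of $\equivTable$-classes on $\representatives$ is at most the number of $\equivTable[\obsTable']$-classes on $\representatives$, which is at most the number of $\equivTable[\obsTable']$-classes on $\representatives'$; by closedness these counts are the state counts, giving $|Q_{\observationTable}| \le |Q_{\observationTable'}|$. I would then suppose, for contradiction, that equality holds, so both inequalities collapse, yielding: (a) every word of $\representatives'$ is $\equivTable[\obsTable']$-equivalent to a word of $\representatives$, and (b) $\equivTable[\obsTable']$ and $\equivTable$ coincide on $\representatives$. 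Define $\isomorphism : Q_{\observationTable'} \to Q_{\observationTable}$ by $\isomorphism(\equivalenceClass{r}_{\equivTable[\obsTable']}) = \equivalenceClass{r}_{\equivTable}$ for $r \in \representatives$; it is defined on every state by (a), and it is a bijection by (b) together with the refinement. Checking that $\isomorphism$ is a DFA isomorphism is where the real work lies: initial and final states match because $\emptyword \in \representatives$ and $\obsMapping = \obsMapping'$, while for transitions, given $r \in \representatives$ and $a \in \Sigma$ I would pick $r' \in \representatives$ with $ra \equivTable[\obsTable'] r'$ (by (a)) and $r'' \in \representatives$ with $ra \equivTable r''$ (by closedness of $\observationTable$); applying the first statement to $ra, r' \in \representatives \cup \representatives\Sigma$ gives $r' \equivTable ra \equivTable r''$, so the two images agree. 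Consequently $\languageOf{\autTable[\obsTable']} = \languageOf{\autTable}$. But $w \in \prefixes{w} \subseteq \representatives'$, so \Cref{lemma:observation_table:automaton:correct_representatives} applied to $\observationTable'$ gives $w \in \languageOf{\autTable[\obsTable']} \iff w \in L_\ell$, whence $w \in \languageOf{\autTable} \iff w \in L_\ell$, contradicting that $w$ is a counterexample (i.e.\ $w \in L_\ell \iff w \notin \languageOf{\autTable}$). Thus the index strictly increases.

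I expect the transition-preservation step of the isomorphism to be the main obstacle, since it must reconcile three distinct ``rounding to representatives'' operations — closedness in each of the two tables and condition (a) — and it is precisely there that the refinement from the first statement is indispensable. The secondary point of care is the growing $\prefTable$ and the resulting $\bot$-to-value changes, but this is fully tamed by the two monotonicity facts established at the outset.
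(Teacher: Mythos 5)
Your proof is correct and follows essentially the same route as the paper's: the first part is exactly \Cref{lem:observation_table:u_in_approx_prime_v_implies_u_in_approx_v} (which you re-derive inline via the two monotonicity facts), and the strictness argument is the paper's contradiction — equal indices force the two hypothesis DFAs to be isomorphic, which clashes with \Cref{lemma:observation_table:automaton:correct_representatives} applied to \(w \in \representatives'\). The only difference is one of detail: the paper asserts the isomorphism in a single sentence, whereas you construct it explicitly (well-definedness, bijectivity, and the transition check combining closedness of both tables with the refinement); this is a worthwhile elaboration of a step the paper leaves implicit, not a different approach.
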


The proof of this proposition uses the next lemma stating that the \approxSet s cannot increase when an observation table is extended with some new words. In it, we denote by $\Approx'$ the \approxSet s with respect to the table $\observationTable'$.  
\begin{lemma}\label{lem:observation_table:u_in_approx_prime_v_implies_u_in_approx_v}
    Let \(\observationTable = (\representatives, \separatorsCounters, \separatorsMapping, \obsMapping, \obsCounters)\) and \(\observationTable' = (\representatives', \separatorsCounters', \separatorsMapping', \obsMapping', \obsCounters')\) be two observation tables up to the same counter limit \(\ell \in \N\) such that \(\representatives \cup \representatives \Sigma \subseteq \representatives' \cup \representatives' \Sigma\), \(\separatorsCounters \subseteq \separatorsCounters'\), and \(\separatorsMapping \subseteq \separatorsMapping'\).
    Then, \(\forall u, v \in \representatives \cup \representatives \Sigma\),
    \[
        u \in \Approx'(v) \implies u \in \Approx(v).
    \]
\end{lemma}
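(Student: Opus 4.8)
The plan is to verify directly the two defining conditions of $u \in \Approx(v)$ (from \Cref{def:approx}), feeding in the information that $u \in \Approx'(v)$ supplies. Two observations drive the argument. First, since $\separatorsCounters \subseteq \separatorsCounters'$, any requirement that $u \in \Approx'(v)$ enforces for all separators in $\separatorsCounters'$ holds a fortiori for the separators in $\separatorsCounters$. Second, the entries stored by the two tables agree on their common domain in a controlled way, which lets me rewrite the primed statements as unprimed ones.

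I would make the second observation precise as follows. For membership there is nothing to fear: for every $u \in \representatives \cup \representatives\Sigma$ and $s \in \separatorsMapping$ we have $\obsMapping(us) = 1 \iff us \in L_\ell \iff \obsMapping'(us) = 1$, since membership in $L_\ell$ does not depend on the table. The counter entries are the delicate part, and I would control them through the containment $\prefTable \subseteq \prefTable[\obsTable']$. This containment is immediate from the definition of $\prefTable$ together with the hypotheses $\representatives \cup \representatives\Sigma \subseteq \representatives' \cup \representatives'\Sigma$, $\separatorsMapping \subseteq \separatorsMapping'$, and the membership agreement just noted. It yields the key one-directional fact: whenever $\obsCounters(us) \neq \bot$, i.e.\ $us \in \prefTable$, we also have $us \in \prefTable[\obsTable']$, and both entries equal $\counterAut{us}$, so $\obsCounters'(us) = \obsCounters(us)$.

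Granting these, both conditions of $u \in \Approx(v)$ fall out. Fix $s \in \separatorsCounters \subseteq \separatorsCounters'$; from $u \in \Approx'(v)$ we get $\obsMapping'(us) = \obsMapping'(vs)$, hence $\obsMapping(us) = \obsMapping(vs)$ by membership agreement. For the counter condition, suppose moreover $\obsCounters(us) \neq \bot$ and $\obsCounters(vs) \neq \bot$; then $\obsCounters'(us) = \obsCounters(us)$ and $\obsCounters'(vs) = \obsCounters(vs)$ are both non-$\bot$, so $u \in \Approx'(v)$ forces $\obsCounters'(us) = \obsCounters'(vs)$ and therefore $\obsCounters(us) = \obsCounters(vs)$. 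The main obstacle, and the very reason the implication is only one-directional, is exactly these counter entries: enlarging the table can resolve a $\bot$ into a genuine counter value, so the proof must rely only on the preservation of non-$\bot$ values furnished by $\prefTable \subseteq \prefTable[\obsTable']$, never on the converse.
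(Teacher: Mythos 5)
Your proof is correct and follows essentially the same route as the paper's: both establish that $\obsMapping$ and $\obsMapping'$ agree (since both tables target the same $L_\ell$), use $\prefTable \subseteq \prefTable[\obsTable']$ to conclude that non-$\bot$ counter entries are preserved with the same value, and then verify the two conditions of \Cref{def:approx} for each $s \in \separatorsCounters \subseteq \separatorsCounters'$. Your added justification of the containment $\prefTable \subseteq \prefTable[\obsTable']$, which the paper asserts without proof, is a small but welcome bonus.
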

\begin{proof}
    By hypothesis, since both tables use the same counter limit $\ell$, the same language \(L_{\ell}\) is considered.
    Thus it holds that \(\obsMapping(us) = \obsMapping'(us)\) for all $u \in \representatives \cup \representatives \Sigma$ and $s \in \separatorsCounters$.
    Moreover, we have $\prefTable \subseteq \prefTable[\obsTable']$.
    This means that we also have \(\forall us \in (\representatives \cup \representatives \Sigma) \separatorsCounters\), \(\obsCounters(us) \neq \bot \implies \obsCounters'(us) = \obsCounters(us)\), and some \(\obsCounters(us) = \bot\) can possibly be replaced by \(\obsCounters'(us) \in \{0, \dotsc, \ell\}\).
    
    Let \(u, v \in \representatives \cup \representatives\Sigma\) be such that \(u \in \Approx'(v)\), i.e., \(\forall s \in \separatorsCounters', \obsMapping'(us) = \obsMapping'(vs)\) and \(\obsCounters'(us) \neq \bot \land \obsCounters'(vs) \neq \bot \implies \obsCounters'(us) = \obsCounters'(vs)\).
    We want to show that \(u \in \Approx(v)\).

    Let \(s \in \separatorsCounters\).
    We start by proving that \(\obsMapping(us) = \obsMapping(vs)\).
    Since \(u \in \Approx'(v)\), we have \(\obsMapping'(us) = \obsMapping'(vs)\).
    Therefore \(\obsMapping(us) = \obsMapping'(us) = \obsMapping'(vs) = \obsMapping(vs)\). We now show that \(\obsCounters(us) \neq \bot \land \obsCounters(vs) \neq \bot \implies \obsCounters(us) = \obsCounters(vs)\).
    From \(\obsCounters(us) \neq \bot \land \obsCounters(vs) \neq \bot\), it follows that \(\obsCounters'(us) \neq \bot \land \obsCounters'(vs) \neq \bot\). Since \(u \in \Approx'(v)\), we have \(\obsCounters(us) = \obsCounters'(us) = \obsCounters'(vs) = \obsCounters(vs)\).
\qed\end{proof}

\begin{proof}[of \Cref{prop:betterApproximation}]
    The first part of the claim is an immediate consequence of \Cref{lem:observation_table:u_in_approx_prime_v_implies_u_in_approx_v}.
We focus on showing that the index of \(\equivTable[\obsTable']\) is strictly greater than the index of \(\equivTable\).
    By contradiction, let us assume this is false.
    Let \(\autTable\) (resp.\ \(\autTable[\obsTable']\)) be the DFA constructed from \(\equivTable\) (resp.\ \(\equivTable[\obsTable']\)). By definition of the automata and the first part of the proposition, it holds that \(\autTable\) and \(\autTable[\obsTable']\) are isomorphic.
    However, \(w\) is a counterexample for \(\autTable\), i.e., \(w \notin L_{\ell} \iff w \in \languageOf{\autTable}\). By \Cref{lemma:observation_table:automaton:correct_representatives} and since \(w \in \representatives'\), we have \(w \in L_{\ell} \iff w \in \languageOf{\autTable[\obsTable']} \).
    This is a contradiction with \(\autTable\) and \(\autTable[\obsTable']\) being isomorphic.
\qed\end{proof}

Using this proposition and \Cref{long:prop:observation_table:u_equiv_v_implies_u_in_approx_v}, we deduce that after a finite number of steps, we will obtain an observation table \(\observationTable\) and its corresponding DFA \(\autTable\) such that \(\languageOf{\autTable} = L_{\ell}\).

\begin{corollary}
Let $\ell \in \mathbb{N}$ be a counter limit. With the described learning process, a DFA accepting $L_\ell$ is eventually learned.    
\end{corollary}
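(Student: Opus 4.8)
The plan is to combine the two main facts established just before the corollary: first, that every time the DFA $\autTable$ fails a partial equivalence query, the refinement step (\Cref{prop:betterApproximation}) strictly increases the index of $\equivTable$; and second, that this index is bounded. Since $\ell$ is fixed throughout, the corollary is essentially a termination argument layered on top of correctness.

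\begin{proof}
    Fix the counter limit $\ell \in \N$. We iterate the learning loop at counter limit $\ell$: starting from the current observation table $\observationTable$, we make it closed, $\Sigma$- and \prefixed (possible in finitely many steps by \Cref{prop:finiteProcess}), construct the DFA $\autTable$ as in \Cref{def:observation_table:automaton}, and ask a partial equivalence query. If the answer is positive, then by definition of the query $\languageOf{\autTable} = L_\ell$ and we are done. Otherwise, the teacher returns a counterexample $w$ with $w \in L_\ell \iff w \notin \languageOf{\autTable}$, and we process it as in \Cref{sec:cex} to obtain a new closed, $\Sigma$- and \prefixed table $\observationTable'$. By \Cref{prop:betterApproximation}, the index of $\equivTable[\obsTable']$ is strictly greater than the index of $\equivTable$.

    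It remains to bound this index, so that the strictly increasing sequence of indices cannot continue forever. The states of $\autTable$ are the $\equivTable$-classes of $\representatives$, and by \Cref{lem:rightcongruence} the relation $\equivTable$ refines $\equivBGROCA$ restricted to words of height at most $\ell$: indeed, by \Cref{long:prop:observation_table:u_equiv_v_implies_u_in_approx_v}, $u \equivBGROCA v$ implies $u \in \Approx(v)$, so two words lying in distinct $\equivBGROCA$-classes cannot be $\equivTable$-equivalent, while all \binWords collapse into a single class. Hence the index of $\equivTable$ never exceeds the number of $\equivBGROCA$-classes $\equivalenceClass{w}_{\equivBGROCA}$ with $\heightAut{w} \leq \ell$, which by \Cref{lem:sizeEquiv} is at most $(\ell+1)|Q| + 1$. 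Therefore the index is capped by a finite quantity depending only on $\ell$ and $\automaton$.

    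Consequently, the loop can produce at most $(\ell+1)|Q| + 1$ negative partial equivalence queries before the index can no longer strictly increase, so after finitely many iterations the query must return a positive answer. At that point $\autTable$ is a DFA with $\languageOf{\autTable} = L_\ell$, which is the desired conclusion.
\qed\end{proof}

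The main obstacle to watch for is the uniform bound on the index: \Cref{prop:betterApproximation} only guarantees strict growth, so termination hinges on the refinement never surpassing the fixed number of genuine $\equivBGROCA$-classes of bounded height, for which the direction $u \equivBGROCA v \implies u \in \Approx(v)$ of \Cref{long:prop:observation_table:u_equiv_v_implies_u_in_approx_v} is exactly what is needed.
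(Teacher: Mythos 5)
Your proposal follows the paper's own route: \Cref{long:prop:observation_table:u_equiv_v_implies_u_in_approx_v} to compare $\equivTable$ with $\equivBGROCA$, \Cref{lem:sizeEquiv} for the bound $(\ell + 1)|Q| + 1$, and \Cref{prop:betterApproximation} for the strict increase of the index after each failed partial equivalence query; termination then follows. The extra operational detail (the explicit loop, the appeal to \Cref{prop:finiteProcess}) is a harmless elaboration of what the paper leaves implicit.

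One step, however, is stated backwards. You write that \enquote{$\equivTable$ refines $\equivBGROCA$} and that \enquote{two words lying in distinct $\equivBGROCA$-classes cannot be $\equivTable$-equivalent}, attributing this to \Cref{lem:rightcongruence} (which only establishes that $\equivTable$ is an equivalence relation and a congruence over $\representatives$, and says nothing about $\equivBGROCA$). That claim is the converse of what \Cref{long:prop:observation_table:u_equiv_v_implies_u_in_approx_v} provides, and it is false in general: a closed, $\Sigma$- and \prefixed table can perfectly well merge words from distinct $\equivBGROCA$-classes --- this is exactly the imprecision that counterexamples exist to remove, and if your claim were true, the very first table would already yield $L_\ell$. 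The proposition gives the opposite direction: on $\withoutBinWords{\representatives \cup \representatives\Sigma}$ we have $u \equivBGROCA v \implies u \equivTable v$, so $\equivTable$ is \emph{coarser} than $\equivBGROCA$, i.e., each $\equivTable$-class is a union of $\equivBGROCA$-classes. It is this coarseness that yields the upper bound on the index (at most the number of $\equivBGROCA$-classes of words of height at most $\ell$ for the non-\binWords, plus the single class collecting all \binWords); your inverted statement would, if anything, give a \emph{lower} bound on the index, which is useless for termination. Since you do quote the correct implication in the same sentence and the bound you conclude from it is the right one, the repair is simply to delete the false clause and argue via coarseness --- after which your argument coincides with the paper's proof.
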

\begin{proof}
On one hand, given an observation table $\observationTable$ up to $\ell$, by \Cref{long:prop:observation_table:u_equiv_v_implies_u_in_approx_v}, we have that for all \(u, v \in \withoutBinWords{\representatives \cup \representatives \Sigma}\), \(u \equivBGROCA v \implies u \equivTable v\), and all the words in $\representatives \cup \representatives \Sigma \setminus \withoutBinWords{\representatives \cup \representatives \Sigma}$ are in the same equivalence class of $\equivTable$. On the other hand, by \Cref{lem:sizeEquiv}, the number of equivalence classes of
$\equivBGROCA$ up to counter limit $\ell$ is bounded by $(\ell + 1)|Q| + 1$. With \Cref{prop:betterApproximation}, it follows that the index of $\equivTable$ eventually stabilizes. This means that the teacher stops giving any counterexample and the final DFA $\autTable$ accepts
$L_\ell$.
\qed\end{proof}

\subsection{Learning algorithm}\label{subsec:learning_algo}

We now have every piece needed to give the learning algorithm for ROCAs. It
has the same structure as the algorithm for VCAs (see \Cref{alg:lstar_vca})
that was detailed in \Cref{subsec:LearningVCAs}. Here, we only recall the main steps of this algorithm and what are the differences (without explicitly giving the algorithm).

We start by initializing the observation table \(\observationTable\) with \(\ell = 0, \representatives = \separatorsCounters = \separatorsMapping = \{\emptyword\}\).
Then, we make the table closed, \(\Sigma\)-, and \prefixed (this will be explained in \Cref{subsec:makingClosed} below), construct the DFA $\autTable$, and ask a partial equivalence query with $\autTable$.
If the teacher answers positively, we have learned a DFA accepting \(L_{\ell}\).
Otherwise, we use the provided counterexample to update the table without increasing \(\ell\) (as explained in \Cref{sec:cex}). 

Once the learned DFA $\autTable$ accepts the language $L_{\ell}$, the next proposition states that the initial fragments (up to a certain counter limit) of both $\autTable$ and $\behaviorGraphAut$ are isomorphic.
This means that, once we have learned a long enough initial fragment, we can extract a periodic description from \(\autTable\) that is valid for \(\behaviorGraphAut\).

\begin{proposition}\label{prop:isoPrefix}
    Let \(\behaviorGraphAut\) be the behavior graph of an ROCA $\automaton$, $K$ be its width, and $m, k$ be the offset and the period of a periodic description of \(\behaviorGraphAut\).
    Let \(s = m + {(K \cdot k)}^4\). Let \(\observationTable\) be a closed, $\Sigma$- and \prefixed observation table up to \(\ell > s\) such that \(\languageOf{\autTable} = L_{\ell}\).
    Then, the trim parts of the subautomata of \(\behaviorGraphAut\) and \(\autTable\) restricted to the levels in \(\{0, \dotsc, \ell - s\}\) are isomorphic. 
\end{proposition}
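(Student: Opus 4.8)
The plan is to exhibit an explicit isomorphism $\isomorphism$ from the trim part of the subautomaton of $\autTable$ on levels $\{0,\dots,\ell-s\}$ onto the corresponding subautomaton of $\behaviorGraphAut$, defined on states by $\isomorphism(\equivalenceClass{u}_{\equivTable}) = \equivalenceClass{u}_{\equivBGROCA}$ for each represented, non-\binWord $u$ with $\counterAut{u}\le \ell-s$. First I would check that this assignment is meaningful and level-preserving: since $\emptyword\in\separatorsCounters$, \Cref{def:approx} forces $\obsCounters(u)=\obsCounters(v)$ whenever $u\equivTable v$ for non-\binWords, hence $\counterAut{u}=\counterAut{v}$, so each state of $\autTable$ carries a well-defined level. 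Injectivity of $\isomorphism$ is then immediate from \Cref{long:prop:observation_table:u_equiv_v_implies_u_in_approx_v}, which gives $u\equivBGROCA v\implies u\equivTable v$, so no two distinct $\equivBGROCA$-classes collapse. Preservation of the initial state is trivial; preservation of final states and transitions follows from \Cref{def:observation_table:automaton,lemma:observation_table:automaton:correct_representatives} together with $\languageOf{\autTable}=L_\ell$, since for a low-level word $u$ we have $\obsMapping(u)=1\iff u\in L_\ell\iff u\in L$, and the successor $\equivalenceClass{ua}_{\equivTable}$ maps to $\equivalenceClass{ua}_{\equivBGROCA}$ by construction.

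For surjectivity I would use closedness and reachability inside the restricted subautomaton. Every state $\equivalenceClass{x}_{\equivBGROCA}$ of the restricted trim subautomaton at a level $\le \ell-s$ is reachable by a word $x\in\prefixes{L}$ whose run stays within those levels, so $\heightAut{x}\le\ell$; the run of $x$ in $\autTable$ ends in some $\equivalenceClass{u}_{\equivTable}$ with $u\in\representatives$ by closedness, and the well-definedness claim below yields $x\equivBGROCA u$, placing $\equivalenceClass{x}_{\equivBGROCA}$ in the image of $\isomorphism$. Thus $\isomorphism$ is a level-preserving bijection commuting with transitions and acceptance, i.e.\ an isomorphism of the two restricted automata.

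The crux — and the step I expect to be the main obstacle — is \emph{well-definedness}: for low-level words, $u\equivTable v\implies u\equivBGROCA v$. Because $\languageOf{\autTable}=L_\ell$, words reaching the same state of $\autTable$ share the same $L_\ell$-residual, so $uw\in L_\ell\iff vw\in L_\ell$ for every $w$. For any $w$ whose runs from $u$ and from $v$ never exceed height $\ell$, this coincides with $uw\in L\iff vw\in L$ and, reading off counter deltas along a common run, with $\counterAut{uw}=\counterAut{vw}$; hence both conditions defining $\equivBGROCA$ already hold for all \emph{height-bounded} witnesses. What remains is to show that any violation of either condition by an \emph{unbounded} $w$ can be replaced by a height-bounded one, and this is precisely what the margin $s=m+{(K\cdot k)}^4$ must buy.

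To produce the bounded witness I would invoke the ultimately periodic structure of $\behaviorGraphAut$ (\Cref{thm:behavior_graph:periodicity}) through the isomorphism with a VCA behavior graph (\Cref{thm:behavior_graph:isomorphism_to_visibly}), where counter values are visible and, by \Cref{lemma:visibly:same_cv}, determined by the Myhill--Nerode class on $\prefixes{L}$; the counter condition is then subsumed by the language condition, so only membership-distinguishers need to be controlled. Working in the product of two copies of the behavior graph to track the pair of states reached from $u$ and from $v$, beyond the offset $m$ the structure is $k$-periodic with at most $K$ states per level, hence at most ${(K\cdot k)}^2$ reachable pairs per period block. A shortest distinguishing run may then be assumed not to repeat a configuration of pair-of-states-and-phase with a net height gain, so every height-increasing cycle can be pumped out; a careful accounting (squaring once for the product and once to bound simultaneously the up-excursion and the phase alignment) keeps the excess height of the folded witness below ${(K\cdot k)}^4$. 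As $u,v$ lie at level $\le \ell-m-{(K\cdot k)}^4$, the folded witness stays within height $\ell$, so the distinction is visible inside $L_\ell$, contradicting $u\equivTable v$. The delicate points are exactly this pumping bound and the transfer of the counter-value condition through the VCA isomorphism; the remainder is bookkeeping.
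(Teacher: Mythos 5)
Your plan has the same skeleton as the paper's proof: the same map $\isomorphism(\equivalenceClass{u}_{\equivTable}) = \equivalenceClass{u}_{\equivBGROCA}$ on non-\binWord table entries of height at most $\ell-s$, level-preservation via $\emptyword \in \separatorsCounters$, injectivity from \Cref{long:prop:observation_table:u_equiv_v_implies_u_in_approx_v}, and the hard direction (well-definedness, i.e.\ $u \equivTable v \implies u \equivBGROCA v$) correctly reduced to the claim that any $\equivBGROCA$-difference between two words at level at most $\ell-s$ is witnessed by a word of height at most $\ell$ --- which is exactly how the margin $s = m + {(K\cdot k)}^4$ is spent in the paper.

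The gap is that you never actually establish this bounded-witness claim. The paper does not prove it either: it invokes \Cref{prop:behavior_graph:bounded_height} (Neider and L\"oding's Lemma~3, transferred to ROCAs through \Cref{thm:behavior_graph:isomorphism_to_visibly}), and the constant $s$ in the statement of \Cref{prop:isoPrefix} is taken verbatim from that lemma. Your substitute --- pumping in the product of two copies of the behavior graph, quotiented by the phase --- is only a gesture, and as described it fails: excising a single height-increasing cycle shifts every later level of the run down by a multiple of $k$, so the runs no longer end at level $0$ (the only level where final states live), may attempt to visit levels below the offset $m$ where the periodic structure is not available, and may push the counter below zero. One must instead excise \emph{matched} ascending and descending cycle pairs while keeping both runs valid, the counter non-negative throughout, and the membership disagreement intact; that balancing argument is precisely the nontrivial content of the cited lemma, and your accounting for the exponent (``squaring once for the product and once for the phase alignment'') is asserted rather than derived. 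The same lemma is also silently needed in your surjectivity step, where ``the well-definedness claim below yields $x \equivBGROCA u$'' is applied to a word $x$ outside $\representatives \cup \representatives\Sigma$, for which $\equivTable$ is not even defined; the paper sidesteps this by showing that $\isomorphism$ embeds the restricted subautomaton of $\autTable$ into $\limBehaviorGraphAut[\ell - s]$ and then using that two deterministic subautomata accepting the same language $L_{\ell-s}$ with all states reachable have isomorphic trim parts. With \Cref{prop:behavior_graph:bounded_height} cited at the crux instead of your pumping sketch, your proof becomes essentially the paper's.
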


The proof of this proposition is given in \Cref{subsec:properties}.

Hence we extract all possible periodic descriptions \(\alpha\) from \(\autTable\), exactly as for VCAs. By \Cref{prop:DescriptionAut}, each of these descriptions \(\alpha\) yields an ROCA \(\automaton_{\alpha}\) on which we ask an equivalence query.
If the teacher answers positively, we have learned an ROCA accepting \(L\) and we are done.
Otherwise, the teacher returns a counterexample \(w\).

It may happen that \(w \notin \languageOf{\automaton_{\alpha}} \iff w \in \languageOf{\autTable}\) if the description \(\alpha\) only considers the first few levels of the DFA $\autTable$ and, by doing so, discards important knowledge appearing on the further levels.
Thus, such a counterexample must be discarded as $\autTable$ already correctly
classifies it.

If we do not have any ROCAs or every counterexample was discarded, we use
\(\autTable\) directly as an ROCA (this can be done simply by constructing an
ROCA that simulates \(\autTable\) without ever modifying the counter).
Since we know that \(\languageOf{\autTable} = L_{\ell}\), we are sure the returned counterexample is usable.

Finally, let \(w\) be an arbitrary counterexample among the ones that were not discarded.
The new counter limit is the height of \(w\), i.e., \(\ell = \heightAut{w}\).
We then extend and update the table by adding $\prefixes{w}$ to $\representatives$. Notice that this operation may change some values of \(\obsMapping\) since a word that was rejected may now be accepted due to the higher counter limit.

\section{Properties of the behavior graph of an ROCA}\label{sec:isomorphism}

In this section, we provide the proofs of \Cref{thm:behavior_graph:periodicity} and \Cref{prop:DescriptionAut} from \Cref{subsec:BG_ROCA} stating two important properties of the behavior graph of an ROCA\@. We also prove \Cref{prop:isoPrefix} from \Cref{subsec:learning_algo} useful in our learning algorithm. The latter proof requires a preliminary lemma.

\subsection{Relating the behavior graphs of VCAs and ROCAs}
To this end, from an ROCA \(\automaton\) accepting a language \(L\), we first explain how to construct a VCL \(\visibly{L}\) such that \(\visibly{L}\) encodes the counter operations defined in the transitions of \(\automaton\).
For instance, each time we see a transition increasing the counter value, the corresponding symbol in \(\visibly{L}\) will be a call. This will allow us to derive an isomorphism between the behavior graph of $\automaton$ as defined in \Cref{def:behavior_graph} and the behavior graph of \(\visibly{L}\) as defined in \Cref{def:visibly:behavior_graph}.

Let \(\automaton = (Q, \Sigma, \deltaZero, \deltaNotZero, q_0, F)\) be an ROCA and let \(L = \languageOf{\automaton}\). First we define a pushdown alphabet \(\pushdownAlphabet = \callAlphabet \cup \returnAlphabet \cup \internalAlphabet\) with \(\callAlphabet = \{a_c \mid a \in \Sigma\}\), \(\returnAlphabet = \{a_r \mid a \in \Sigma\}\), and \(\internalAlphabet = \{a_{int} \mid a \in \Sigma\}\). That is, for each symbol in \(\Sigma\), we create three new symbols: one call, one return, and one internal action. 

Then, we define a VCA $\visibly{\automaton}$ over $\pushdownAlphabet$ and a mapping $\lambda : \Sigma^* \to \pushdownAlphabet^*$ so that for all \(u \in \Sigma^*\) we have
\(\counterAut{u} = \countervalue{\toVisibly(u)}\). The automaton \(\visibly{\automaton} = (Q, \pushdownAlphabet, \visiblyDeltaZero, \visiblyDeltaNotZero, q_0, F)\) has the same set of states, initial state, and set of final states, but two new transition functions that are partial and mimic $\deltaZero$ and $\deltaNotZero$ as follows: 
	\begin{align*}
		\deltaZero(q,a) = (p,0) &\implies \visiblyDeltaZero(q,a_{int}) = p,\\
		\deltaZero(q,a) = (p,+1) &\implies \visiblyDeltaZero(q,a_c) = p,\\
		\deltaNotZero(q,a) = (p,0) &\implies \visiblyDeltaNotZero(q,a_{int}) = p,\\
		\deltaNotZero(q,a) = (p,+1) &\implies \visiblyDeltaNotZero(q,a_c) = p,\\
		\deltaNotZero(q,a) = (p,-1) &\implies \visiblyDeltaNotZero(q,a_r) = p.
	\end{align*}
The language accepted by $\visibly{\automaton}$ is denoted $\visibly{L}$.

Finally, given $u = a_0 \dots a_k \in \Sigma^*$, we define $\toVisibly(u) \in \visibly{\Sigma}^*$ as follows. 
The word $\toVisibly(u) = \visibly{a_0} \dots \visibly{a_k}$ is constructed using $\automaton$ by setting each $\visibly{a_i}$ to $b_c$ if $\counterAut{a_0 \dots a_i} - \counterAut{a_0 \dots a_{i-1}}$ is positive and $a_i$ is $b$; to $b_r$ if the difference is negative and $a_i$ is $b$; and to $b_{int}$ if the difference is zero and $a_i$ is $b$.
Note that given $\toVisibly(u) \in \visibly{\Sigma}^*$, it is easy to come back to $u$: simply discard the index in $\{c,r,int\}$ of each symbol of $\toVisibly(u)$. We denote by $\sigma$ this mapping from $\pushdownAlphabet$ to $\Sigma$, i.e. $\sigma(a_x) = a$ for $a \in \Sigma$ and $x \in \{c, r, int\}$.

\begin{example}
    We consider again the ROCA \(\automaton\) given in \Cref{fig:example:roca}.
    The VCA \(\visibly{\automaton}\) constructed from \(\automaton\) is given in \Cref{fig:example:vca}. Let \(u = aababaa\) and let us study \(\toVisibly(u)\).
    Recall from \Cref{example:roca} that we have the following run for \(u\) in \(\automaton\):
    \[
        (q_0, 0)
            \transition^a_{\automaton} (q_0, 1)
            \transition^a_{\automaton} (q_0, 2)
            \transition^b_{\automaton} (q_1, 2)
            \transition^a_{\automaton} (q_1, 1)
            \transition^b_{\automaton} (q_1, 1)
            \transition^a_{\automaton} (q_1, 0)
            \transition^a_{\automaton} (q_2, 0).
    \]
    Thus, by encoding the counter operations into the symbols, we obtain the word \(\toVisibly(u) = \visibly{u} = a_c a_c b_{int} a_r b_{int} a_r a_{int}\) and the corresponding run in \(\visibly{\automaton}\):
    \[
        (q_0, 0)
            \transition^{a_c}_{\visibly{\automaton}} (q_0, 1)
            \transition^{a_c}_{\visibly{\automaton}} (q_0, 2)
            \transition^{b_{int}}_{\visibly{\automaton}} (q_1, 2)
            \transition^{a_r}_{\visibly{\automaton}} (q_1, 1)
            \transition^{b_{int}}_{\visibly{\automaton}} (q_1, 1)
            \transition^{a_r}_{\visibly{\automaton}} (q_1, 0)
            \transition^{a_{int}}_{\visibly{\automaton}} (q_2, 0).
    \]
\qed\end{example}

The next lemma is easily proved.

\begin{lemma}\label{lem:lambda(u)}
    For all \(u \in \Sigma^*\), we have $\counterAut{u} = \countervalue{\toVisibly(u)}$. Moreover, $\visibly{L} = \{\toVisibly(u) \mid u \in L\}$ and $\prefixes{\visibly{L}} = \{\toVisibly(u) \mid u \in \prefixes{L}\}$.
\qed\end{lemma}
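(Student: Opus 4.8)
The plan is to prove the three assertions in order, deriving the two language equalities from a single run correspondence between $\automaton$ and $\visibly{\automaton}$.

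First, I would settle the counter identity $\counterAut{u} = \countervalue{\toVisibly(u)}$, which is immediate from the construction of $\toVisibly$. Writing $u = a_0 \dotsm a_k$, the tag chosen for $\visibly{a_i}$ is exactly the one whose sign $\signOf{\visibly{a_i}}$ equals the increment $\counterAut{a_0 \dotsm a_i} - \counterAut{a_0 \dotsm a_{i-1}}$. Hence $\countervalue{\toVisibly(u)} = \sum_{i=0}^{k} \signOf{\visibly{a_i}}$ is a telescoping sum equal to $\counterAut{a_0 \dotsm a_k} - \counterAut{\emptyword} = \counterAut{u}$.

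Next, the heart of the argument is the claim, proved by induction on $|u|$, that the unique run of $u$ in $\automaton$ and the unique run of $\toVisibly(u)$ in $\visibly{\automaton}$ pass through the very same sequence of configurations; in particular $(q_0,0) \transition_{\automaton}^{u} (q,n)$ iff $(q_0,0) \transition_{\visibly{\automaton}}^{\toVisibly(u)} (q,n)$. The base case is trivial. For the inductive step, once both runs have reached a common configuration $(q,n)$, the counter values agree, so both automata use their zero-branch ($\deltaZero$, $\visiblyDeltaZero$) if $n=0$ or their nonzero-branch ($\deltaNotZero$, $\visiblyDeltaNotZero$) if $n>0$; the defining implications relating $\deltaZero,\deltaNotZero$ to $\visiblyDeltaZero,\visiblyDeltaNotZero$, together with the sign computation of the first part, then force reading $a$ in $\automaton$ and the matching tagged symbol $\visibly{a}$ in $\visibly{\automaton}$ to reach the same state with the same counter update. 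Instantiating this with the acceptance condition (final state and counter $0$) gives $u \in L \iff \toVisibly(u) \in \visibly{L}$, hence $\{\toVisibly(u) \mid u \in L\} \subseteq \visibly{L}$.

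Finally, for the reverse inclusion I would use the tag-stripping map $\sigma$. Given $w \in \visibly{L}$ and $u = \sigma(w)$, the point to verify is that $w = \toVisibly(u)$: since in $\visibly{\automaton}$ a tagged transition is defined only when its tag matches the counter operation performed by the corresponding $\deltaZero$ or $\deltaNotZero$ transition of $\automaton$, each tag along the accepting run of $w$ is forced to be precisely the one $\toVisibly$ assigns to $\sigma(w)$; the run correspondence then gives $u \in L$. The prefix statement follows from the same two facts plus the observation that $\toVisibly(u)$ is a prefix of $\toVisibly(uv)$ for every $v$ (each tag depends only on the counter difference at its position, i.e.\ only on the prefix read up to there): if $u \in \prefixes{L}$ with $uv \in L$ then $\toVisibly(u)$ is a prefix of $\toVisibly(uv) \in \visibly{L}$, and conversely any $y \in \prefixes{\visibly{L}}$ is a prefix of some $\toVisibly(u') \in \visibly{L}$, whence $\sigma(y) \in \prefixes{L}$ and $\toVisibly(\sigma(y)) = y$.

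The only genuinely delicate step is this forcing of tags in the reverse inclusion; the rest is a routine induction and a telescoping computation, which is why the lemma is advertised as easy.
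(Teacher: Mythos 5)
Your proof is correct, and it fills in exactly the argument the paper leaves implicit: the paper states this lemma without proof (``easily proved''), and the intended justification is precisely your combination of the telescoping-sum computation, the configuration-by-configuration run correspondence between $\automaton$ and $\visibly{\automaton}$, and the observation that the partiality of $\visiblyDeltaZero,\visiblyDeltaNotZero$ forces the tags of any word accepted by $\visibly{\automaton}$ to agree with $\toVisibly(\sigma(\cdot))$. In particular, your identification of the tag-forcing step (and the prefix-monotonicity of $\toVisibly$) as the only delicate points is accurate; nothing is missing.
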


The announced isomorphism is stated in the following theorem. 

\begin{theorem}\label{thm:behavior_graph:isomorphism_to_visibly}
    Let \(\automaton = (Q, \Sigma, \deltaZero, \deltaNotZero, q_0, F)\) be an ROCA accepting $L = \languageOf{\automaton}$ and let $\visibly{L}$ be the corresponding VCL\@. 
    Let \(\behaviorGraphAut = (Q_{\equivBGROCA}, \Sigma, \delta_{\equivBGROCA}, q^0_{\equivBGROCA}, F_{\equivBGROCA})\) be the behavior graph of \(\automaton\) and \(\behaviorGraph[\visibly{L}] = (Q_{\nerodeCongruence}, \pushdownAlphabet, \delta_{\nerodeCongruence}, q^0_{\nerodeCongruence}, F_{\nerodeCongruence})\) be the behavior graph of \(\visibly{L}\).
    Then, \(\behaviorGraphAut\) and \(\behaviorGraph[\visibly{L}]\) are isomorphic up to \(\sigma\). Moreover this isomorphism respects the counter values (i.e., level membership) and both offset and period of periodic descriptions.
\end{theorem}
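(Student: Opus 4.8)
The plan is to exhibit the natural candidate isomorphism $\isomorphism$ sending $\equivalenceClass{u}_{\equivBGROCA} \mapsto \equivalenceClass{\toVisibly(u)}_{\nerodeCongruence}$ together with its inverse $\equivalenceClass{\visibly{x}}_{\nerodeCongruence} \mapsto \equivalenceClass{\sigma(\visibly{x})}_{\equivBGROCA}$, and to check that each map is well defined; then surjectivity (from $\prefixes{\visibly{L}} = \{\toVisibly(u) \mid u \in \prefixes{L}\}$, \Cref{lem:lambda(u)}) together with the identity $\sigma(\toVisibly(u)) = u$ makes them mutually inverse bijections $Q_{\equivBGROCA} \to Q_{\nerodeCongruence}$. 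Throughout I rely on the elementary observation that, since the decoration of the $i$-th letter of $\toVisibly(w)$ depends only on the counter difference at that step of the run of $\automaton$, the encoding is prefix-compatible: $\toVisibly(uw) = \toVisibly(u)\mu$, where $\mu$ (with $\sigma(\mu)=w$) is the encoding of $w$ read from the configuration reached after $u$. I also use repeatedly that both $\visibly{L}$ and $\prefixes{\visibly{L}}$ are unions of $\nerodeCongruence$-classes.

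For well-definedness of $\isomorphism$, i.e. $u \equivBGROCA v \Rightarrow \toVisibly(u) \nerodeCongruence \toVisibly(v)$ for $u,v \in \prefixes{L}$, I would show for every $\visibly{w} \in \pushdownAlphabet^*$ that $\toVisibly(u)\visibly{w} \in \visibly{L} \Rightarrow \toVisibly(v)\visibly{w} \in \visibly{L}$. Writing $\toVisibly(u)\visibly{w} = \toVisibly(uw_0)$ with $w_0 = \sigma(\visibly{w})$ gives $uw_0 \in L$, hence $vw_0 \in L$ by the first clause of $\equivBGROCA$. The second clause forces $\counterAut{uw_0'} = \counterAut{vw_0'}$ on every common prefix $w_0'$ (all lie in $\prefixes{L}$), so the counter operations, and hence the decorations, of $w_0$ agree whether read after $u$ or after $v$; thus $\toVisibly(vw_0) = \toVisibly(v)\visibly{w} \in \visibly{L}$. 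Symmetry closes this direction.

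The hard part is well-definedness of the inverse, namely $\toVisibly(u) \nerodeCongruence \toVisibly(v) \Rightarrow u \equivBGROCA v$, because here the second clause of $\equivBGROCA$ is the \emph{conclusion} rather than a hypothesis, so I cannot assume suffix decorations are synchronized. I would prove by induction on $\lengthOf{w}$ the statement: whenever $uw \in \prefixes{L}$ then $vw \in \prefixes{L}$, $\toVisibly(uw) \nerodeCongruence \toVisibly(vw)$, and the encoding of $w$ after $u$ equals the encoding after $v$. In the inductive step, right-congruence of $\nerodeCongruence$ turns $\toVisibly(uw')b_x \in \prefixes{\visibly{L}}$ into $\toVisibly(vw')b_x \in \prefixes{\visibly{L}}$; since $\visibly{\automaton}$ mimics $\automaton$ and the transition functions $\deltaZero, \deltaNotZero$ are total, \emph{exactly one} decoration of $b$ is processable from the configuration of $\visibly{\automaton}$ reached after $\toVisibly(vw')$, and that decoration is dictated by $\automaton$'s actual operation there; membership in $\prefixes{\visibly{L}}$ therefore forces $x$ to be the operation after $vw'$ too (the IH, via \Cref{lemma:visibly:same_cv}, gives $\counterAut{uw'} = \counterAut{vw'}$, so both runs consult the same transition function). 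This yields $\toVisibly(vw'b) = \toVisibly(vw')b_x$ and, through $\sigma$ and \Cref{lem:lambda(u)}, membership of $vw'b$ in $\prefixes{L}$. Applying this lemma to $w$ delivers both clauses of $\equivBGROCA$: for membership, transport $uw \in L$ along $\toVisibly(uw) \nerodeCongruence \toVisibly(vw)$ using that $\visibly{L}$ is $\nerodeCongruence$-closed; for counter values, \Cref{lemma:visibly:same_cv} applied to $\toVisibly(uw) \nerodeCongruence \toVisibly(vw)$ gives $\counterAut{uw} = \counterAut{vw}$.

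Finally I would verify that $\isomorphism$ preserves all structure: $\toVisibly(\emptyword) = \emptyword$ handles the initial states; finality transfers via $u \in L \iff \toVisibly(u) \in \visibly{L}$; and a transition $\delta_{\equivBGROCA}(\equivalenceClass{u}_{\equivBGROCA}, a) = \equivalenceClass{ua}_{\equivBGROCA}$ maps to $\delta_{\nerodeCongruence}(\equivalenceClass{\toVisibly(u)}_{\nerodeCongruence}, \visibly{a}) = \equivalenceClass{\toVisibly(ua)}_{\nerodeCongruence}$, where $\visibly{a}$ is the decoration uniquely determined by determinism of $\automaton$ and satisfying $\sigma(\visibly{a}) = a$ — this is precisely ``isomorphic up to $\sigma$''. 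Level membership is immediate from $\counterAut{u} = \countervalue{\toVisibly(u)}$ in \Cref{lem:lambda(u)}, so $\isomorphism$ sends level $\ell$ to level $\ell$; transporting the enumerations $\nu_\ell$ along $\isomorphism$ then matches the two descriptions letter-for-letter (the explicit counter operation recorded in $\automaton$'s $\tau_\ell$ being carried by the decoration in $\visibly{L}$'s $\tau_\ell$), so any periodic description of one graph yields a periodic description of the other with the same offset $m$ and period $k$.
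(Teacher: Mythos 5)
Your proposal is correct and follows essentially the same route as the paper: both establish $u \equivBGROCA v \iff \toVisibly(u) \nerodeCongruence \toVisibly(v)$ on $\prefixes{L}$, obtain the bijection from \Cref{lem:lambda(u)}, and verify initial/final states, transitions up to $\sigma$, level membership, and periodic descriptions exactly as you describe; your forward direction matches the paper's verbatim. The one point of divergence is the backward direction, which you handle by induction on $\lengthOf{w}$, synchronizing decorations letter by letter via processability and the totality of $\automaton$'s transition functions. The paper gets the same conclusion in one step: right-congruence of $\nerodeCongruence$ gives $\toVisibly(v)\visibly{w} \in \visibly{L}$ (resp.\ $\prefixes{\visibly{L}}$), and since by \Cref{lem:lambda(u)} these sets contain only valid encodings, applying $\sigma$ immediately yields $\toVisibly(v)\visibly{w} = \toVisibly(vw)$, hence $vw \in L$ (resp.\ $\counterAut{uw} = \counterAut{vw}$ via \Cref{lemma:visibly:same_cv}). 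So your induction is sound but redundant: the synchronization it establishes by hand is already packaged in \Cref{lem:lambda(u)}, which you in fact invoke at each inductive step anyway.
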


\begin{proof}
    We start by showing that \(\forall u, v \in \prefixes{L}\), we have 
    \[
        u \equivBGROCA v \iff \toVisibly(u) \nerodeCongruence \toVisibly(v).
    \] 
    As $\prefixes{\visibly{L}} = \{\toVisibly(u) \mid u \in \prefixes{L}\}$ by \Cref{lem:lambda(u)}, we will get the required one-to-one correspondence between the states $\equivalenceClass{u}_\equivBGROCA$ of $\behaviorGraphAut$ and the states $\equivalenceClass{\toVisibly(u)}_{\nerodeCongruence}$ of $\behaviorGraph[\visibly{L}]$. Notice that this correspondence respects the counter values of the states since all words in state $\equivalenceClass{u}_\equivBGROCA$ have the same counter value $\counterAut{u}$ which is equal (by \Cref{lem:lambda(u)}) to the counter value $\countervalue{\toVisibly(u)}$ of all words in state $\equivalenceClass{\toVisibly(u)}_{\nerodeCongruence}$. In other words, both $\equivalenceClass{\toVisibly(u)}_{\nerodeCongruence}$ and $\equivalenceClass{u}_\equivBGROCA$ are in level $\countervalue{\toVisibly(u)}$ of their corresponding behavior graphs.
    
    Let \(u, v \in \prefixes{L}\) be such that \(u \equivBGROCA v\), i.e., \(\forall w \in \Sigma^*, uw \in L \iff vw \in L\) and \(uw, vw \in \prefixes{L} \implies \counterAut{uw} = \counterAut{vw}\).
    We show that \(\forall \visibly{w} \in \pushdownAlphabet^*, \toVisibly(u)\visibly{w} \in \visibly{L} \iff \toVisibly(v)\visibly{w} \in \visibly{L}\).
    Let \(\visibly{w} \in \pushdownAlphabet^*\) and assume \(\toVisibly(u)\visibly{w} \in \visibly{L}\). By definition of $\visibly{\automaton}$, there exists $x \in \Sigma^*$ such that $ux \in L$ and $\toVisibly(ux) = \toVisibly(u)\visibly{w}$.  
    Since \(u \equivBGROCA v\), we get \(vx \in L\) and $\counterAut{uy} = \counterAut{vy}$ for all $y \in \prefixes{x}$.
    By definition of $\visibly{\automaton}$, it follows that \(\toVisibly(v) \visibly{w} \in \visibly{L}\), meaning that \(\toVisibly(u) \visibly{w} \in \visibly{L} \implies \toVisibly(v) \visibly{w} \in \visibly{L}\). Using a similar argument, we prove that \(\toVisibly(v) \visibly{w} \in \visibly{L} \implies \toVisibly(u) \visibly{w} \in \visibly{L}\), meaning that \(\toVisibly(u) \nerodeCongruence \toVisibly(v)\).

    We now show that \(\forall \toVisibly(u), \toVisibly(v) \in \prefixes{\visibly{L}}\) such that \(\toVisibly(u) \nerodeCongruence \toVisibly(v)\), we have $u \equivBGROCA v$. 
    First, let us show that \(\forall w \in \Sigma^*, uw \in L \iff vw \in L\). Assume that $uw \in L$. Then $\toVisibly(uw) = \toVisibly(u)\visibly{w} \in \visibly{L}$ with $\visibly{w} \in \visibly{\Sigma}^*$. As \(\toVisibly(u) \nerodeCongruence \toVisibly(v)\), we get $\toVisibly(v)\visibly{w} \in \visibly{L}$. Hence by discarding with $\sigma$ the index in $\{c,r,int\}$ of each symbol of $\toVisibly(v)\visibly{w}$, we obtain that $vw \in L$. The other implication is proved similarly and we have thus proved that \(uw \in L \iff vw \in L\). 
    Second, let us show that \(\forall w \in \Sigma^*, uw, vw \in \prefixes{L} \implies \counterAut{uw} = \counterAut{vw}\). Let \(w \in \Sigma^*\) and $\visibly{w} \in \visibly{\Sigma}^*$ be such that $\toVisibly(uw) = \toVisibly(u)\visibly{w} \in \prefixes{\visibly{L}}$.
    It follows that \(\toVisibly(u)\visibly{w} \nerodeCongruence \toVisibly(v)\visibly{w}\) and \(\toVisibly(v)\visibly{w} \in \prefixes{\visibly{L}}\) since $\toVisibly(u) \nerodeCongruence \toVisibly(v)$. From \(\toVisibly(u)\visibly{w} \nerodeCongruence \toVisibly(v)\visibly{w}\) and \(\toVisibly(v)\visibly{w} ,\toVisibly(v)\visibly{w} \in \prefixes{\visibly{L}}\), we deduce \(\countervalue{\toVisibly(u) \visibly{w}} = \countervalue{\toVisibly(v) \visibly{w}}\) by \Cref{lemma:visibly:same_cv}. We also have that $\toVisibly(v)\visibly{w} = \toVisibly(vw)$.
    By \Cref{lem:lambda(u)}, we get $\counterAut{uw} = \counterAut{vw}$.
    
    The one-to-one correspondence between the states of $\behaviorGraphAut$ and the states of $\behaviorGraph[\visibly{L}]$ is established. Notice that it puts in correspondence the initial (resp.\ final) states of both behavior graphs. We have also a correspondence with respect to the transitions (up to $\sigma$). Indeed, this follows from how both transition functions $\delta_{\equivBGROCA}$ and $\delta_{\nerodeCongruence}$ are defined.
    
    Finally, if $\alpha$ is a periodic description of $\behaviorGraphAut$ with offset $m$ and period $k$, then by $\lambda$ we get a periodic description of $\behaviorGraph[\visibly{L}]$ with the same offset and period. The converse is also true by using $\sigma$.
\qed\end{proof}

\subsection{Properties}\label{subsec:properties}

The isomorphism stated in \Cref{thm:behavior_graph:isomorphism_to_visibly} allows us to re-use some of the properties described in \Cref{sec:visibly} for VCAs and to transfer them to ROCAs. In particular, we have that the behavior graph $\behaviorGraphAut$ of an ROCA \(\automaton\) has a periodic description (\Cref{thm:behavior_graph:periodicity}). The proof of this result directly follows from \Cref{thm:visibly:behavior_graph:periodicity,thm:behavior_graph:isomorphism_to_visibly}.

Let us now prove our \Cref{prop:DescriptionAut} regarding the construction of an ROCA from the periodic description of its behavior graph. An illustrating example is given after the proof.

\begin{proof}[of \Cref{prop:DescriptionAut}]
    Let $\behaviorGraphAut$ be the behavior graph of an ROCA $\automaton$ accepting $L$. Let \(\alpha = \tau_0 \dotso \tau_{m-1} {(\tau_m \dotso \tau_{m+k-1})}^\omega\) be a periodic description of \(\behaviorGraphAut\) with offset \(m\) and period \(k\). Recall that the mappings $\tau_{\ell}$ used in this description are defined as \(\tau_{\ell} : \{1, \dotsc, K\} \times \Sigma \to \{1, \dotsc, K\} \times \{-1, 0, +1\}\) with \(K\) the width of the behavior graph.
    
    Let us explain how to construct an ROCA \(\automaton_{\alpha}\) accepting $L$.
    The proof is split into two cases, depending on the period $k$. It uses the enumerations $\nu_{\ell}$ and mappings \(\tau_{\ell}\) defined for the periodic description $\alpha$. 
    In the following constructions, it is supposed that all transitions which are not explicitly defined go to a bin state noted \(\bot\). That is, the transition functions are always total.

    First, assume \(k = 0\), i.e., $\alpha = \tau_0 \dotso \tau_{m-1}$. This means that \(\behaviorGraphAut\) is finite and \(L\) is therefore a regular language.
    The idea is simply to construct an ROCA that does not modify its counter value.
    That is, only \(\deltaZero\) is actually used. The states of this ROCA consist of two components. The first component stores the number of the current equivalence class (given by the enumeration $\nu_{\ell}$). The second component keeps track of the current level $\ell$ (given by the index $\ell$ of $\tau_{\ell}$). The transitions are defined according to the description $\alpha$.
    Let \(\automaton_{\alpha} = (Q, \Sigma, \deltaZero, \deltaNotZero, q_0, F)\) with:
    \begin{itemize}
        \item \(Q = \{1, \dotsc, K\} \times \{0, \dotsc, m - 1\} \uplus \{\bot\}\),
        \item \(q_0 = (\nu_0(\equivalenceClass{\emptyword}_{\equivBGROCA}), 0)\), 
        \item \(F = \{(\nu_0(\equivalenceClass{u}_{\equivBGROCA}), 0) \mid u \in L\}\),
        \item \(\forall q \in \{1, \dotsc, K\}, \forall \ell \in \{0, \dotsc, m - 1\}, \forall a \in \Sigma\), let \((p, c) = \tau_{\ell}(q, a)\).
            We define \(\deltaZero((q, \ell), a) = ((p, \ell + c), 0)\).
    \end{itemize}
    It is easy to see that \(\languageOf{\automaton_{\alpha}} = L\).

    Now assume \(k \neq 0\).
    The idea is the following: the counter value is always equal to zero as
    long as we remain in the initial part of the description $\alpha$, we
    modify the counter value when we are in the periodic part of the
    description.  For clarity, we encode the current level (from $0$ to
    $m+k-1$) into the states of the ROCA, in a way similar to the previous
    case, with the use of a modulo-\(k\) (\(+ m\)) counter in the repeating part.
    Let \(\automaton_{\alpha} = (Q, \Sigma, \deltaZero, \deltaNotZero, q_0, F)\) with: 
    \begin{itemize}
        \item \(Q = \{1, \dotsc, K\} \times \{0, \dotsc, m + k - 1\} \uplus \{\bot\}\), 
        \item \(q_0 = (\nu_0(\equivalenceClass{\emptyword}_{\equivBGROCA}), 0)\),
        \item \(F = \{(\nu_0(\equivalenceClass{u}_{\equivBGROCA}), 0) \mid u \in L\}\),
        \item The transition functions are defined as follows, \(\forall q \in \{1, \dotsc, K\}\), \(\forall a \in \Sigma\):
            \begin{itemize}
                \item \(\forall \ell \in \{0, \dotsc, m - 1\}\), let \((p, c) = \tau_{\ell}(q, a)\) and \(\deltaZero((q, \ell), a) = ((p, \ell + c), 0)\).
                \item For $\ell = m$, let \((p, c) = \tau_{\ell}(q, a)\). We need two functions $\deltaZero$ and $\deltaNotZero$ depending on whether we come from the initial part with counter $0$ or we are in the repeating part with counter $>0$.
                \begin{itemize}
                    \item If \(c = 0\), let \(\deltaZero((q, \ell), a) = \deltaNotZero((q, \ell), a) = ((p, \ell), 0)\).
                    \item If \(c = +1\), let \(\deltaZero((q, \ell), a) = \deltaNotZero((q, \ell), a) = ((p, \ell + (1 \bmod k)), +1)\).
                    (The \(1 \bmod k\) part is only useful when \(k = 1\) to guarantee we remain between \(0\) and \(m + k - 1 = m\).)
                    \item  If \(c = -1\), let \(\deltaZero((q, \ell), a) = ((p, \ell - 1), 0)\) and \(\deltaNotZero((q, \ell), a) = ((p, m + k - 1), -1)\) (We need to use a modulo-$k$ for the second component of the state in the periodic part). 
                \end{itemize}
                
                \item \(\forall \ell \in \{m + 1, \dotsc, m + k - 2\}\), let \((p, c) = \tau_{\ell}(q, a)\).
                    Let \(\deltaNotZero((q, \ell), a) = ((p, \ell + c), c)\).
                \item For $\ell = m+k-1$ and $k > 1$, let \((p, c) = \tau_{\ell}(q, a)\).
                \begin{itemize}
                    \item If \(c = 0\), let \(\deltaNotZero((q, \ell), a) = ((p, \ell), 0)\).
                    \item If \(c = +1\), let \(\deltaNotZero((q, \ell), a) = ((p, m), +1)\) (We need again to use a modulo-$k$).
                    \item If \(c = -1\), let \(\deltaNotZero((q, \ell), a) = ((p, \ell - 1), -1)\).
                \end{itemize}
            \end{itemize}
    \end{itemize}
This concludes our description of the algorithm to construct the ROCA\@. Clearly, the size of \(\automaton_{\alpha}\) is polynomial in \(m, k\) and \(K\). For completeness, its correctness is proved in \Cref{sec:correct-period2roca}.
\qed\end{proof}

\begin{example}\label{example:construction_roca_from_description}
    \begin{figure}
        \centering
        \begin{tikzpicture}[
    automaton,
    node distance = 70pt and 120pt,
]
    \node [state, initial]                  (1 0)   {\((1, 0)\)};
    \node [state, right=of 1 0]             (1 1)   {\((1, 1)\)};
    \node [state, right=of 1 1]             (1 2)   {\((1, 2)\)};
    \node [state, accepting, below=of 1 0]  (2 0)   {\((2, 0)\)};
    \node [state, right=of 2 0]             (2 1)   {\((2, 1)\)};
    \node [state, right=of 2 1]             (2 2)   {\((2, 2)\)};

    \path[deltaZero]
        foreach \s/\l/\t in {1 0/a/1 1, 1 0/b/2 0, 2 1/a/2 0} {
            (\s) edge node {\(\l, =0, 0\)} (\t)
        }
        (1 1)   edge [bend right]   node [']            {\(b, =0, 0\)}                  (2 1)
                edge                node                {\(a, =0, +1\)}                 (1 2)
        (2 0)   edge [loop left]    node[align=center]  {\(a, =0, 0\)\\\(b, =0, 0\)}    ()
        (2 1)   edge [in=-140, out=-110, loop]   node                {\(b, =0, 0\)}                  ()
    ;
    
    \path[deltaNotZero]
        (1 1)   edge [bend left]    node                {\(b, \neq0, 0\)}               (2 1)
                edge [bend left]   node                {\(a, \neq0, +1\)}              (1 2)
        (1 2)   edge [bend left=15]   node             {\(a, \neq0, +1\)}              (1 1)
                edge                node                {\(b, \neq0, 0\)}               (2 2)
        (2 1)   edge [in=-70, out=-40, loop]   node   {\(b, \neq0, 0\)}               ()
                edge [bend left=10]    node                {\(a, \neq0, -1\)}              (2 2)
        (2 2)   edge [bend left=10]    node                {\(a, \neq0, -1\)}              (2 1)
    ;

\end{tikzpicture}
        \caption{The ROCA \(\automaton_{\alpha}\) constructed from the periodic description \(\alpha = \tau_0 {(\tau_1 \tau_2)}^\omega\) of \(\behaviorGraphAut\) from \Cref{fig:example:behavior_graph}.}%
        \label{fig:example:construction_roca_from_description}
    \end{figure}

    Let \(\automaton\) be the ROCA from \Cref{fig:example:roca} and \(\behaviorGraphAut\) be its behavior graph from \Cref{fig:example:behavior_graph}.
    Let us assume that \(\nu_0(\equivalenceClass{\emptyword}_{\equivBGROCA}) = \nu_1(\equivalenceClass{a}_{\equivBGROCA}) = \dotso = 1\) and \(\nu_0(\equivalenceClass{b}_{\equivBGROCA}) = \nu_1(\equivalenceClass{ab}_{\equivBGROCA}) = \dotso = 2\).
    We then have the following \(\tau_i\) mappings:
    \begin{align*}
        \tau_0(1, a) &= (1, +1) & \tau_0(1, b) &= (2, 0)    & \tau_0(2, a) &= (2, 0)    & \tau_0(2, b) &= (2, 0)\\
        \tau_1(1, a) &= (1, +1) & \tau_1(1, b) &= (2, 0)    & \tau_1(2, a) &= (2, -1)   & \tau_1(2, b) &= (2, 0)\\
        \tau_2(1, a) &= (1, +1) & \tau_2(1, b) &= (2, 0)    & \tau_2(2, a) &= (2, -1)   & \tau_2(2, b) &= (2, 0)\\
        &\dotso                 & &\dotso                   & &\dotso                   & &\dotso
    \end{align*}
    Let \(\alpha = \tau_0 {(\tau_1 \tau_2)}^\omega\) be a periodic description of \(\behaviorGraph\).
    The resulting ROCA \(\automaton_{\alpha}\) is given in \Cref{fig:example:construction_roca_from_description}.
    We can clearly see that \(\automaton_{\alpha}\) mimics the periodic description as the states \((1, 0)\) and \((2, 0)\) form the initial part that is followed by a repeating part formed by the states \((1, 1), (2, 1), (1, 2)\) and \((2, 2)\).
\qed\end{example}

We conclude this section with the proof of \Cref{prop:isoPrefix}. This requires a preliminary lemma stating that one can bound the height $\heightAut{w}$ of a witness $w$ for non-equivalence with respect to $\equivBGROCA$. We recall that $\heightAut{w} = \max_{x \in \prefixes{w}} \counterAut{x}$. Such a property is proved in~\cite[Lemma 3]{neider2010learning} for VCAs and immediately transfers to ROCAs by \Cref{thm:behavior_graph:isomorphism_to_visibly}.

\begin{lemma}[{\cite[Lemma 3]{neider2010learning}}]\label{prop:behavior_graph:bounded_height}
    Let \(\automaton\) be an ROCA accepting a language $L \subseteq \Sigma^*$, $\behaviorGraphAut$ be its behavior graph, \(\alpha\) be a periodic description of $\behaviorGraphAut$ with offset \(m\) and period \(k\), and \(s = m + {(K \cdot k)}^4\).
    Let $\equivalenceClass{u}_\equivBGROCA$ and $\equivalenceClass{v}_\equivBGROCA$ be two distinct states of $\behaviorGraphAut$ such that $\counterAut{u} = \counterAut{v}$. Then there exists a word $w \in \Sigma^*$ such that $uw \in L \Leftrightarrow vw \not\in L$ and \(\heightAut{uw}, \heightAut{vw} \leq s + \counterAut{u}\).
\qed\end{lemma}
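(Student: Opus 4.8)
The plan is to prove this lemma by transferring the corresponding statement for VCAs, namely \cite[Lemma 3]{neider2010learning}, across the isomorphism of \Cref{thm:behavior_graph:isomorphism_to_visibly}. Let $\visibly{L}$ be the VCL associated with $\automaton$ and let $\behaviorGraph[\visibly{L}]$ be its behavior graph. Recall that $\behaviorGraphAut$ and $\behaviorGraph[\visibly{L}]$ are isomorphic up to $\sigma$, that this isomorphism respects counter values (i.e.\ level membership), and that it maps a periodic description with offset $m$ and period $k$ to one with the same offset and period; in particular the two behavior graphs share the same width $K$, so the \emph{same} constant $s = m + {(K \cdot k)}^4$ governs both. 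Using $u \not\equivBGROCA v \iff \toVisibly(u) \notNerodeCongruence \toVisibly(v)$ and \Cref{lem:lambda(u)}, the images $\equivalenceClass{\toVisibly(u)}_{\nerodeCongruence}$ and $\equivalenceClass{\toVisibly(v)}_{\nerodeCongruence}$ are two distinct states of $\behaviorGraph[\visibly{L}]$ lying in the common level $\countervalue{\toVisibly(u)} = \counterAut{u}$.

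First I would invoke \cite[Lemma 3]{neider2010learning} on $\behaviorGraph[\visibly{L}]$ to obtain a word $\visibly{w} \in \pushdownAlphabet^*$ with $\toVisibly(u)\visibly{w} \in \visibly{L} \iff \toVisibly(v)\visibly{w} \notin \visibly{L}$ and $\height{\toVisibly(u)\visibly{w}}, \height{\toVisibly(v)\visibly{w}} \leq s + \countervalue{\toVisibly(u)}$. The candidate separating word for $u$ and $v$ is then $w = \sigma(\visibly{w}) \in \Sigma^*$, obtained by discarding the counter-operation indices.

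Next I would carry the two conclusions back along $\toVisibly$ and $\sigma$. Assume without loss of generality $\toVisibly(u)\visibly{w} \in \visibly{L}$. Since $\prefixes{\visibly{L}} = \{\toVisibly(x) \mid x \in \prefixes{L}\}$ (\Cref{lem:lambda(u)}) and $\sigma$ inverts $\toVisibly$, the membership $\toVisibly(u)\visibly{w} \in \visibly{L}$ forces $\toVisibly(u)\visibly{w} = \toVisibly(uw)$, whence $uw \in L$. For the height, the identity $\counterAut{x} = \countervalue{\toVisibly(x)}$ of \Cref{lem:lambda(u)}, applied along all prefixes, gives $\heightAut{uw} = \height{\toVisibly(uw)} = \height{\toVisibly(u)\visibly{w}} \leq s + \counterAut{u}$. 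Both conclusions for $u$ thus transfer cleanly, and the same bookkeeping will settle the side of $v$ in the benign situation described below.

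The hard part will be transferring the property soundly on the side of $v$. The word $\visibly{w}$ records the counter profile that $\automaton$ exhibits while reading $w$ from the configuration reached on $u$, and nothing forces $\automaton$ to follow the same profile when reading $w$ from $v$, since the two distinct states may apply different counter operations to a common symbol. When the run of $\automaton$ on $vw$ does follow that profile — equivalently $\toVisibly(vw) = \toVisibly(v)\visibly{w}$ — the conclusions $vw \notin L$ and $\heightAut{vw} \le s + \counterAut{u}$ transfer verbatim as for $u$. The delicate case is when $\toVisibly(v)\visibly{w} \notin \prefixes{\visibly{L}}$ precisely because $\visibly{w}$ is an invalid profile for $v$, while $vw$ is nonetheless processable by $\automaton$ along a different profile; then $w$ itself need not separate $u$ and $v$ by membership, but the first position at which the two profiles disagree exhibits a prefix $w'$ of $w$ with $uw', vw' \in \prefixes{L}$ and $\counterAut{uw'} \neq \counterAut{vw'}$. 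I would resolve this by extending $w'$ with a continuation that drains the counter toward a final state, so that the residual counter gap $|\counterAut{uw'} - \counterAut{vw'}|$ forces acceptance from exactly one of $uw', vw'$, thereby converting the counter discrepancy into a genuine membership discrepancy. Keeping that continuation within the bound $s + \counterAut{u}$ is the real crux, and it is exactly where the polynomial slack $(K\cdot k)^4$ in $s$ is consumed through the ultimately-periodic (pumping) argument of \cite{neider2010learning}; my intention is to import this control through the level-preserving isomorphism rather than to reprove it from scratch.
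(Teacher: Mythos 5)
Your overall route is exactly the paper's: the paper offers no more proof than the sentence preceding the lemma, asserting that Lemma~3 of Neider and L\"oding ``immediately transfers'' to ROCAs through \Cref{thm:behavior_graph:isomorphism_to_visibly}, so your careful unfolding of that transfer --- invoking the VCA lemma on \(\behaviorGraph[\visibly{L}]\), projecting with \(\sigma\), and transporting the \(u\)-side membership and height via \Cref{lem:lambda(u)} --- is the intended argument, and your \(u\)-side bookkeeping is correct. You are also right, and more careful than the paper, in flagging that the \(v\)-side is not immediate: nothing forces \(\toVisibly(vw) = \toVisibly(v)\visibly{w}\), since \(\automaton\) may apply different counter operations to \(w\) from the two configurations.

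However, your proposed repair of the delicate case has a genuine gap, and it cannot be closed. You claim that after locating the first-disagreement prefix \(w'\) (with \(uw', vw' \in \prefixes{L}\) and \(\counterAut{uw'} \neq \counterAut{vw'}\)) one can append a draining continuation so that the counter gap ``forces acceptance from exactly one of \(uw', vw'\)''. That inference is false: two ROCA configurations with different counter values can have identical residual languages. For instance, let configuration \((q,1)\) on \(a\) decrement into a zero-level two-state loop accepting \(a^i\) iff \(i\) is odd, and let \((q',2)\) drain twice into a zero-level loop accepting \(a^i\) iff \(i\) is even; both configurations then have residual \(\{a^{j} \mid j \geq 2,\ j \text{ even}\}\). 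Hanging these off an initial state via \(\deltaZero(q_0,x)=(A,+1)\) and \(\deltaZero(q_0,y)=(B,+1)\), where \(A\) on \(a\) moves to \(q\) with counter operation \(0\) while \(B\) on \(a\) moves to \(q'\) with operation \(+1\), yields \(u = x\) and \(v = y\) with \(\counterAut{u}=\counterAut{v}=1\) and \(u \not\equivBGROCA v\) (the counters diverge on the common extension \(a\), which stays in \(\prefixes{L}\)), hence two distinct same-level states of \(\behaviorGraphAut\) --- yet \(u \nerodeCongruence v\), so \emph{no} word \(w\) whatsoever satisfies \(uw \in L \iff vw \notin L\), no matter how much slack \(s\) provides. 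The pumping control you hoped to import from Neider and L\"oding cannot help here: in the visibly world counter values are word-determined (\Cref{lemma:visibly:same_cv}), so equal residuals with divergent counters cannot arise there; the pathology is created by the \(\sigma\)-projection. What your argument actually establishes (and, on this example, all that is true) is the disjunctive statement: there is a bounded-height \(w\) such that either \(uw \in L \iff vw \notin L\), or \(uw, vw \in \prefixes{L}\) with \(\counterAut{uw} \neq \counterAut{vw}\) --- your first-disagreement prefix delivers precisely the second disjunct. In other words, the obstacle you hit is located in the lemma's statement, which is stronger than what the cited transfer (or any argument) can yield, rather than merely in your strategy.
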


\begin{proof}[of \Cref{prop:isoPrefix}]
    Let \(\behaviorGraphAut = (Q_{\equivBGROCA}, \Sigma, \delta_{\equivBGROCA}, q_{\equivBGROCA}^0, F_{\equivBGROCA})\) be the behavior graph of \(\automaton\) and let \(\autTable = (Q_{\observationTable}, \Sigma, \delta_{\observationTable}, q_{\observationTable}^0, F_{\observationTable})\) be the DFA constructed from the table \(\observationTable  = (\representatives, \separatorsCounters, \separatorsMapping, \obsMapping, \obsCounters)\).
    Recall that by definition of \(\withoutBinWords{\representatives}\), \(u \in \withoutBinWords{\representatives} \iff \obsCounters(u) \neq \bot\), that is, $u \in \prefixes{\obsTable}$. Recall that $\prefixes{\obsTable}\subseteq \prefixes{L}$.
    We denote by $\withoutBinWords{\representatives}_{\ell - s}$ (resp. $\withoutBinWords{\representatives \cup \representatives\Sigma}_{\ell - s}$) the set of elements $u$ of $\withoutBinWords{\representatives}$ (resp. $\withoutBinWords{\representatives \cup \representatives\Sigma}$) such that $\heightAut{u} \leq \ell-s$.
    
    We consider the two subautomata: $(1)$ $\cal B$ equal to $\autTable$
    restricted to the states $\equivalenceClass{u}_{\equivTable}$ with $u \in
    \withoutBinWords{\representatives}_{\ell - s}$ and $(2)$ the limited
    behavior graph $\limBehaviorGraphAut[\ell - s]$. Both subautomata accept $L_{\ell-s}$ and all
    their states are reachable. We define $\isomorphism$ such that:
    \[
        \isomorphism(\equivalenceClass{u}_{\equivTable}) = \equivalenceClass{u}_{\equivBGROCA}, \mbox{ for all } u \in \withoutBinWords{R}_{\ell - s},
    \]
    with the aim to show that $\isomorphism$ embeds $\cal B$ to
    $\limBehaviorGraphAut[\ell - s]$. Therefore $\cal B$ and
    $\isomorphism({\cal B})$ will be isomorphic subautomata both accepting
    $L_{\ell - s}$. It will follow that the trim parts of $\cal B$ and
    $\limBehaviorGraphAut[\ell - s]$ will also be isomorphic as stated
    in \Cref{prop:isoPrefix}.
    
    First notice that by definition of $\isomorphism$, $\equivalenceClass{u}_{\equivBGROCA}$ is a state of $\limBehaviorGraphAut[\ell - s]$ because $u \in \prefixes{L}$ and $\heightAut{u} \leq \ell-s$, for all \(u \in \withoutBinWords{R}_{\ell - s}\). 
    
    Then let us show that:
    \begin{align}
        \forall u, v \in \withoutBinWords{\representatives \cup \representatives\Sigma}_{\ell - s}, u \equivTable v \iff u \equivBGROCA v, \label{eq:respectClasses}
    \end{align}
    in order to deduce that \(\isomorphism\) respects the equivalence classes.
    Let \(u, v \in \withoutBinWords{\representatives \cup \representatives\Sigma}_{\ell - s}\). By \Cref{long:prop:observation_table:u_equiv_v_implies_u_in_approx_v}, we already know that \(u \equivBGROCA v \implies u \equivTable v\).
    Assume \(u \equivTable v\) and, towards a contradiction, assume \(u \not\equivBGROCA v\).
    If \(\counterAut{u} \neq \counterAut{v}\), it holds that \(\obsCounters(u) \neq \obsCounters(v)\) (since \(u, v \in \withoutBinWords{\representatives \cup \representatives\Sigma}\)).
    It follows that \(u \notin \Approx(v)\), that is, \(u \not\equivTable v\), which is a contradiction.
    So, we get \(\counterAut{u} = \counterAut{v}\). 
    Since \(u \not\equivBGROCA v\), we know by \Cref{prop:behavior_graph:bounded_height} that there exists a witness \(w \in \Sigma^*\) such that \(uw \in L \iff vw \notin L\) and \(\heightAut{uw}, \heightAut{vw} \leq \ell\) (since \(s + \counterAut{u} \leq s + \ell - s = \ell\)).
    Since \(\languageOf{\autTable} = L_{\ell}\), it is impossible that \(\delta_{\observationTable}(q_{\observationTable}^0, u) = \delta_{\observationTable}(q_{\observationTable}^0, v)\).
    As $u,v \in \representatives \cup \representatives\Sigma$, by definition of $\autTable$, it follows that \(u \not\equivTable v\) which is again a contradiction.
    We have proved that \(u \equivTable v \iff u \equivBGROCA v\).
    
    Note that $\isomorphism$ puts the initial states of both subautomata, $\cal B$ and $\limBehaviorGraphAut[\ell - s]$, in correspondence: \(\isomorphism(q_{\equivTable}^0) = \isomorphism(\equivalenceClass{\emptyword}_{\equivTable}) = \equivalenceClass{\emptyword}_{\equivBGROCA} = q_{\equivBGROCA}^0\). Let us show that it is also the case for the final states. Let $u \in \withoutBinWords{R}_{\ell - s}$. If $\equivalenceClass{u}_{\equivTable} \in F_{\observationTable}$, then by definition, $\obsMapping(u) = 1$, that is, $u \in L_{\ell} \subseteq L$. It follows that $\equivalenceClass{u}_{\equivBGROCA} \in F_{\equivBGROCA}$. Conversely, if $\equivalenceClass{u}_{\equivBGROCA} \in F_{\equivBGROCA}$, then by definition $u \in L$ and moreover $\heightAut{u} \leq \ell - s \leq \ell$. Thus $u \in L_{\ell}$ and $\equivalenceClass{u}_{\equivTable} \in F_{\observationTable}$.   
    
    It remains to prove that $\isomorphism$ respects the transitions. Let $\equivalenceClass{u}_{\equivTable}, \equivalenceClass{u'}_{\equivTable} \in \withoutBinWords{R}_{\ell - s}$ such that $u' \equivTable ua$ with $a \in \Sigma$. By~\eqref{eq:respectClasses}, we have $u' \equivBGROCA ua$. By definition of both transition functions, we get $\delta_{\observationTable}(\equivalenceClass{u}_{\equivTable},a) = \equivalenceClass{u'}_{\equivTable}$ and $\delta_{\equivBGROCA}(\equivalenceClass{u}_{\equivBGROCA},a) = \equivalenceClass{u'}_{\equivBGROCA}$.

    From all the above points, we conclude that $\isomorphism$ embeds $\cal B$ in $\limBehaviorGraphAut[\ell - s]$ and that it is an isomorphism between their trim parts.
\qed\end{proof}

\section{Complexity of the learning algorithm}\label{sec:correctness}

This section is devoted to the proof of our main theorem (\Cref{thm:main}) that establishes the complexity of our learning algorithm for ROCAs as well as the number of queries for each kind. In particular, we explain how to make an observation table closed and consistent and prove that this is feasible in a finite amount of time. We also explain how the observation table grows during the execution of the learning algorithm.

Hence in this section, we suppose that we have an ROCA $\automaton$ that accepts a language $L \subseteq \Sigma^*$. We denote by $Q$ the set of states of $\automaton$ and by $\lengthCe$ the length of the longest counterexample returned by the teacher on (partial) equivalence queries. The learning algorithm for $L$ is similar to the one given for VCAs in \Cref{subsec:LearningVCAs} (see \Cref{alg:lstar_vca}). In this algorithm, an iteration in the while loop is called a \emph{round}: it consists in making the current table closed and consistent and then handling the counterexample provided either by a partial equivalence query or by an equivalence query. Notice that the total number of rounds performed by the learning algorithm coincides with the total number of partial equivalence queries. 

\subsection{Number of partial equivalence queries and equivalence queries}\label{subsec:NbrEquivQueries}

We begin by evaluating the number of (partial) equivalence queries of our learning algorithm. 

\begin{proposition}\label{prop:nbrEquivQueries}
In the learning algorithm,
\begin{itemize}
    \item the final counter limit \(\ell\) is bounded by \(\lengthCe\),
    \item the number of partial equivalence queries in in $\complexity(\lengthCe^3)$,
    \item the number of equivalence queries is in $\complexity(|Q| \lengthCe^2)$.
\end{itemize}
\end{proposition}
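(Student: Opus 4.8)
The plan is to treat a \round{} --- one iteration of the main while loop, and hence (as noted at the start of \Cref{sec:correctness}) exactly one partial equivalence query --- as the unit of accounting, and to classify each \round{} according to the answer to its partial equivalence query. If the answer is negative, the counter limit $\ell$ is left untouched and the table is merely refined from the returned counterexample as in \Cref{sec:cex}; if it is positive, a batch of equivalence queries is issued and, should they all fail, $\ell$ is increased. I would first settle the bound on $\ell$. The limit is reassigned only when an equivalence-query counterexample $w$ is processed, and then to $\heightAut{w}$; since every transition alters the counter by at most one, $\counterAut{x} \le \lengthOf{x}$ for every prefix $x$ of $w$, so $\heightAut{w} \le \lengthOf{w} \le \lengthCe$. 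As $\ell$ is monotone non-decreasing, every value it takes is at most $\lengthCe$, and the run passes through at most $\lengthCe + 1$ distinct counter limits.

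For the partial equivalence queries I would bound, for each fixed limit $\ell$, the \round{}s with that limit. Every negative-answer \round{} strictly increases the index of $\equivTable$ by \Cref{prop:betterApproximation}, which applies precisely because $\ell$ is unchanged. By \Cref{long:prop:observation_table:u_equiv_v_implies_u_in_approx_v}, $\equivBGROCA$ refines $\equivTable$, so the index of $\equivTable$ never exceeds the number of $\equivBGROCA$-classes up to limit $\ell$, which is at most $(\ell+1)\sizeOfSet{Q} + 1$ by \Cref{lem:sizeEquiv}. Thus each limit contributes $\complexity((\ell+1)\sizeOfSet{Q})$ negative-answer \round{}s, together with at most one positive-answer \round{}. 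Summing these contributions over the at most $\lengthCe + 1$ limits gives the bound on partial equivalence queries stated in the proposition.

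For the equivalence queries I would observe that they are issued only inside the batch triggered by a positive partial equivalence query, so the number of batches is at most the number of positive-answer \round{}s, hence $\complexity(\lengthCe)$. Within one batch at limit $\ell$ we ask one equivalence query per periodic description extracted from $\autTable$ (mirroring the VCA procedure of \Cref{subsec:LearningVCAs} and justified by \Cref{prop:isoPrefix}), plus the single extra query on $\autTable$ viewed directly as an ROCA. The descriptions are indexed by an offset and a period whose admissible ranges are governed by $\ell \le \lengthCe$ and by the width $K \le \sizeOfSet{Q}$ from \Cref{lem:sizeLimBGAut}; bounding the number of descriptions per batch accordingly and multiplying by the $\complexity(\lengthCe)$ batches gives the bound on equivalence queries stated in the proposition.

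The main obstacle is the bookkeeping across the \round{}s in which $\ell$ grows. The monotone-refinement argument of \Cref{prop:betterApproximation} holds only for a fixed $\ell$: when $\ell$ increases, $L_\ell$ changes, previously rejected words may become accepted, the set $\prefTable$ and the stored values are recomputed, and $\equivTable$ is effectively rebuilt, so the index is \emph{not} monotone over the whole run. The delicate point is therefore to justify that the per-limit counts may simply be added --- which needs each increase of $\ell$ to be genuine progress, guaranteed because a usable equivalence counterexample has height strictly greater than the current $\ell$ and so forces $\ell$ upward, ensuring no limit is ever revisited --- and to pin down exactly how many periodic descriptions are extracted per batch, since that count is what fixes the precise exponents in the two query bounds.
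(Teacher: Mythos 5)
Your proposal is correct and takes essentially the same route as the paper's own proof: bound the final $\ell$ by the height (hence the length) of an equivalence counterexample, count partial equivalence queries per fixed $\ell$ via the strict index increase of \Cref{prop:betterApproximation} measured against the $(\ell+1)\sizeOfSet{Q}+1$ bound of \Cref{lem:sizeEquiv}, and count equivalence queries as the number of offset/period pairs per batch times the $\complexity(\lengthCe)$ batches; your extra justification for why the per-limit counts may be summed (no counter limit is ever revisited) is a point the paper passes over silently. One caveat: when your arithmetic is made explicit it yields $\complexity(\sizeOfSet{Q}\lengthCe^2)$ partial equivalence queries and $\complexity(\lengthCe^3)$ equivalence queries --- exactly what the paper's own proof text derives --- so your appeals to \enquote{the bound stated in the proposition} inherit the paper's internal discrepancy, namely that the proposition (and \Cref{thm:main}) list these two bounds in swapped positions relative to the proof.
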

\begin{proof}
    The proof is inspired by~\cite{neider2010learning}.
    First, notice that the counter limit $\ell$ of \(\observationTable\) is increased only when a counterexample for an equivalence query is processed.
    Thus, $\ell$ is determined by the height of a counterexample, and this height cannot exceed \(\frac{\lengthCe}{2} \leq \lengthCe\). Moreover, it follows that the number of times $\ell$ is increased is in $\complexity(\lengthCe)$.

    Second, let us study the number of partial equivalence queries for a fixed counter limit $\ell$. Recall that the index of $\equivTable$ strictly increases after each provided counterexample (see \Cref{prop:betterApproximation}) and that the number of equivalence classes of \(\equivBGROCA\) up to \(\ell\) is bounded by $(\ell + 1)|Q| + 1$ (see \Cref{lem:sizeEquiv}). By \Cref{long:prop:observation_table:u_equiv_v_implies_u_in_approx_v}, it follows that we ask at most \(\complexity(\sizeOfSet{Q} \lengthCe)\) partial equivalence queries by fixed \(\ell\).
    Therefore, the total number of partial equivalence queries is \(\complexity(\sizeOfSet{Q} \lengthCe^2)\).

    Finally, let us study the number of equivalence queries when a DFA $\autTable$ accepting $L_{\ell}$ has been learned. We generate at most \(\ell^2\) periodic descriptions (as the number of pairs of offset and period is bounded by \(\ell^2\)). We thus ask an equivalence query for the ROCA constructed from each such description, that is, at most \(\ell^2 = \complexity(\lengthCe^2)\) queries. This leads to a total number of equivalence queries in \(\complexity(\lengthCe^3)\).
\qed\end{proof}

\subsection{Making a table closed and consistent and handling counterexamples}\label{subsec:makingClosed}

In this section, we suppose we have an observation table $\observationTable = (\representatives, \separatorsCounters, \separatorsMapping, \obsMapping, \obsCounters)$ up to counter limit $\ell$. We denote by $\sizeOfSet{\obsTable}$ the size of $\obsTable$ which is given by the size of both $\representatives \cup \representatives\Sigma$ and $\separatorsMapping$.

In a first step, we focus on making $\obsTable$ closed, $\Sigma$- and \prefixed.
This requires extending $\observationTable$ by adding new words to $\representatives \cup \representatives\Sigma$,  $\separatorsCounters$, or $\separatorsMapping$ and updating the known values of $\observationTable$ since $\prefTable$ may change. In the sequel, the new table is denoted by $\observationTable' = (\representatives', \separatorsCounters', \separatorsMapping', \obsMapping', \obsCounters')$ up to the same level $\ell$. An observation table that is not closed (resp.\ not \consistent, not \prefixed) is called \emph{open} (resp.\ \emph{\inconsistent},  \emph{\notPrefixed}). We explain below how to \emph{resolve} an openness (resp.\ \inconsistency, \inconsistencyBot).

In a second step, we focus on how to handle a counterexample provided by the teacher after a partial equivalence query or an equivalence query. This also requires to extend and update $\obsTable$ into a new table $\observationTable[\ell']'$ such that $\ell' = \ell$ in case of partial equivalence query and $\ell' > \ell$ in case of equivalence query. 

\subsubsection{Making a table closed.}

Assume \(\observationTable\) is open, i.e., 
    \[
    \exists u \in \representatives\Sigma \mbox{ such that } \Approx(u) \cap \representatives = \emptyset.
    \]
We say that we have a \emph{$u$-openness}. We want to extend and update $\observationTable$ to resolve it. Notice that we have $u \not\in R$ since $u \in \Approx(u)$ and $\Approx(u) \cap \representatives = \emptyset$.
We thus define the new table \(\observationTable' = (\representatives', \separatorsCounters, \separatorsMapping, \obsMapping', \obsCounters')\) such that \(\representatives' = \representatives \cup \{u\}\) ($\separatorsCounters, \separatorsMapping$ are left unchanged). 
Then, \(u\) is no longer a word making the table open, i.e., the previous $u$-openness is resolved.
Moreover, we have:

\begin{lemma}\label{lem:resolveOpenness1}
    Let \(\observationTable\) be an observation table with a $u$-openness, \(u \in \representatives\Sigma\). Let \(\observationTable'\) be the new table as described above.
    Then $u \not \in \representatives$ and $\Approx'(u) \cap \representatives' = \{u\}$.
\end{lemma}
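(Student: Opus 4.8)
The plan is to derive both claims directly from reflexivity and symmetry of the approximation relation together with the monotonicity established in \Cref{lem:observation_table:u_in_approx_prime_v_implies_u_in_approx_v}. The first claim, $u \notin \representatives$, is immediate and was in fact already observed just before the statement: since $\Approx$ is reflexive we have $u \in \Approx(u)$, so if $u$ belonged to $\representatives$ it would lie in $\Approx(u) \cap \representatives$, contradicting the hypothesis $\Approx(u) \cap \representatives = \emptyset$ that defines the $u$-openness. Consequently $\representatives' = \representatives \cup \{u\}$ is a strictly larger set, and $u$ is a genuinely new representative.

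For the second claim I would prove the two inclusions separately. The inclusion $\{u\} \subseteq \Approx'(u) \cap \representatives'$ is trivial: $u \in \representatives'$ by construction, and $u \in \Approx'(u)$ because $\Approx'$ is reflexive.

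The reverse inclusion $\Approx'(u) \cap \representatives' \subseteq \{u\}$ is the substantive part. I would take an arbitrary $v \in \Approx'(u) \cap \representatives'$ and suppose $v \neq u$; since $\representatives' = \representatives \cup \{u\}$, this forces $v \in \representatives$. Both $u$ and $v$ lie in the \emph{old} set $\representatives \cup \representatives\Sigma$ (recall $u \in \representatives\Sigma$ by definition of the openness, and $v \in \representatives$). Using symmetry of $\Approx'$, from $v \in \Approx'(u)$ I get $u \in \Approx'(v)$. Now I would invoke \Cref{lem:observation_table:u_in_approx_prime_v_implies_u_in_approx_v}: for old words the approximation sets can only shrink under the extension, so $u \in \Approx'(v)$ yields $u \in \Approx(v)$, and hence, by symmetry of $\Approx$, $v \in \Approx(u)$. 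But then $v \in \Approx(u) \cap \representatives$, contradicting $\Approx(u) \cap \representatives = \emptyset$. Therefore no such $v$ exists, and $\Approx'(u) \cap \representatives' = \{u\}$.

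The only point requiring care — and the reason the statement is not vacuous — is that $\Approx'$ is genuinely a different relation from $\Approx$: adding $u$ to $\representatives$ can enlarge $\prefTable$ and thus turn some $\bot$ entries of $\obsCounters$ into actual counter values, which can only remove pairs from the approximation relation. \Cref{lem:observation_table:u_in_approx_prime_v_implies_u_in_approx_v} is precisely the tool that controls this direction of change, so the main (though routine) obstacle is checking that its hypotheses hold here, namely that the two tables share the same counter limit $\ell$, that $\separatorsCounters = \separatorsCounters'$ and $\separatorsMapping = \separatorsMapping'$ are left unchanged, and that $\representatives \cup \representatives\Sigma \subseteq \representatives' \cup \representatives'\Sigma$.
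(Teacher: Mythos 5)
Your proof is correct and follows essentially the same route as the paper's: both reduce the second claim to showing $\Approx'(u) \cap \representatives = \emptyset$ and derive a contradiction with the $u$-openness by transferring $v \in \Approx'(u)$ back to $v \in \Approx(u)$ via \Cref{lem:observation_table:u_in_approx_prime_v_implies_u_in_approx_v}. Your detour through symmetry (flipping to $u \in \Approx'(v)$ and back) is unnecessary, since the lemma applies directly to $v \in \Approx'(u)$ with both words in the old set $\representatives \cup \representatives\Sigma$, but this is harmless.
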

\begin{proof}
    We already explained that $u \not\in \representatives$. 
    Since \(\representatives' = \representatives \cup \{u\}\) and $u \in \Approx'(u)$, it suffices to show that \(\Approx'(u) \cap \representatives = \emptyset\) to get $\Approx'(u) \cap \representatives' = \{u\}$.
    By contradiction, assume \(\exists v \in \representatives, v \in \Approx'(u)\).
    By \Cref{lem:observation_table:u_in_approx_prime_v_implies_u_in_approx_v}, since \(v \in \representatives\) and \(u \in \representatives \Sigma\), we have \(v \in \Approx(u)\).
    However, due to the $u$-openness, we know that \(\Approx(u) \cap \representatives = \emptyset\).
    We thus have a contradiction and $\Approx'(u) \cap \representatives' = \{u\}$.
\qed\end{proof}

The next lemma will be useful in \Cref{subsec:growth} when we will study how the observation table grows during the learning process and what the total number of membership and counter value queries are.

\begin{lemma}\label{lem:resolveOpenness2}
Let \(\observationTable\) be an observation table with a $u$-openness, \(u \in \representatives\Sigma\). Let \(\observationTable'\) be the new table as described above. Then,
\begin{itemize}
    \item $|\representatives'| = |\representatives| + 1$, $|\separatorsCounters'| = |\separatorsCounters|$, and $|\separatorsMapping'| = |\separatorsMapping|$,
    \item the number of membership queries and the number of counter value queries are both bounded by a polynomial in $\sizeOfSet{\obsTable}$. 
\end{itemize}
\end{lemma}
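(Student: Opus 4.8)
The plan is to prove the two bullet points in turn: the first is a bookkeeping observation, and the second is a counting argument that reuses \Cref{lem:nbrQueries}.

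For the cardinalities, I would first invoke \Cref{lem:resolveOpenness1} to get $u \notin \representatives$, so that $\representatives' = \representatives \cup \{u\}$ is a disjoint union and $\sizeOfSet{\representatives'} = \sizeOfSet{\representatives} + 1$. I must also check that $\representatives'$ is still prefix-closed, as required of an observation table: since $u \in \representatives\Sigma$ we can write $u = va$ with $v \in \representatives$ and $a \in \Sigma$, and every proper prefix of $u$ is a prefix of $v$, hence already in the prefix-closed set $\representatives$. As the construction leaves the separator sets untouched, $\separatorsCounters' = \separatorsCounters$ and $\separatorsMapping' = \separatorsMapping$, which yields the remaining two equalities at once.

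For the query count, the key observation is that $\representatives' \cup \representatives'\Sigma = (\representatives \cup \representatives\Sigma) \cup u\Sigma$, because $u$ already belongs to $\representatives\Sigma$; thus the only genuinely new words are the at most $\sizeOfSet{\Sigma}$ words of $u\Sigma$. Following the query procedure underlying \Cref{lem:nbrQueries}, each new entry $\obsMapping'(ws)$ (with $w \in u\Sigma$, $s \in \separatorsMapping$) is computed with one membership query and at most $\sizeOfSet{ws}$ counter value queries on the prefixes of $ws$, while each entry $\obsCounters'(ws)$ costs at most one counter value query. The subtle point, and the one I expect to require the most care, is that enlarging $\representatives$ may enlarge $\prefTable$ into $\prefTable[\observationTable']$, so that some already-present entries $\obsCounters'(xs)$ switch from $\bot$ to a genuine counter value and must be refreshed; each such refresh costs at most one counter value query, legal because $\prefTable[\observationTable'] \subseteq \prefixes{L_{\ell}} \subseteq \prefixes{L}$.

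To turn these observations into a polynomial bound in $\sizeOfSet{\obsTable}$, I would use that $\representatives \cup \representatives\Sigma$ is prefix-closed and $\separatorsMapping$ suffix-closed, so every word occurring in the table has length bounded by $\sizeOfSet{\obsTable}$ and each membership entry therefore triggers only polynomially many counter value queries. An upper bound on all queries made while resolving the $u$-openness (including the $\obsCounters$ refreshes, which are automatically absorbed) is then obtained by recomputing the whole of $\observationTable'$, which by \Cref{lem:nbrQueries} uses a number of membership and counter value queries polynomial in $\sizeOfSet{\representatives' \cup \representatives'\Sigma}$ and $\sizeOfSet{\separatorsMapping'}$. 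By the first bullet point, $\sizeOfSet{\representatives' \cup \representatives'\Sigma} \leq \sizeOfSet{\representatives \cup \representatives\Sigma} + \sizeOfSet{\Sigma}$ and $\separatorsMapping' = \separatorsMapping$, so $\sizeOfSet{\observationTable'}$ is polynomial in $\sizeOfSet{\observationTable}$, and hence both query counts are polynomial in $\sizeOfSet{\obsTable}$, as claimed.
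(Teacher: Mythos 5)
Your proof is correct and follows essentially the same route as the paper's: the first bullet is direct bookkeeping (via \Cref{lem:resolveOpenness1}), and the second bullet counts the new entries $\obsMapping'(uas)$, $\obsCounters'(uas)$ plus the recomputation of all $\obsCounters'$ values forced by the possible growth of $\prefTable$, concluding by \Cref{lem:nbrQueries}. Your write-up is in fact somewhat more detailed than the paper's (e.g., verifying prefix-closedness of $\representatives'$ and bounding $\sizeOfSet{\representatives' \cup \representatives'\Sigma}$ explicitly), but the decomposition and the key lemma are identical.
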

\begin{proof}
    The first item is easily proved. Let us prove the second one. To extend
    and update $\obsTable$, the learner asks queries to the teacher in order to compute the new value $\obsMapping'(uas)$ (resp. $\obsCounters'(uas)$) for each $a \in \Sigma$ and each $s \in \separatorsMapping$ (resp. $s \in \separatorsCounters$). Moreover, as $\representatives' \neq \representatives$, the set $\prefTable[\observationTable']$ may strictly include $\prefTable$, and thus all the values $\obsCounters'(u's)$, $u' \in \representatives \cup \representatives\Sigma$, $s \in \separatorsCounters$ must be recomputed. By \Cref{lem:nbrQueries}, this requires a polynomial number of membership and counter value queries in $\sizeOfSet{\obsTable}$.
\qed\end{proof}

\subsubsection{Making a table \texorpdfstring{\consistent}{Σ-consistent}.}

Assume \(\observationTable\) is \inconsistent, i.e., \(\exists ua \in \representatives\Sigma\) such that \(ua \notin \bigcap_{v \in \Approx(u) \cap \representatives} \Approx(va)\).
In other words, 
    \[ 
    \exists ua \in \representatives\Sigma, \exists v \in \representatives \mbox{ such that } 
    v \in \Approx(u) \mbox{ and } ua \notin \Approx(va).
    \]
We say that we have a \emph{$(u,v,a)$-\inconsistency}. Let us explain how to resolve it. 
We have two cases: there exists $s \in \separatorsCounters$ such that either \(\obsMapping(uas) \neq \obsMapping(vas)\), or \(\obsCounters(uas) \neq \bot \land \obsCounters(vas) \neq \bot \land \obsCounters(uas) \neq \obsCounters(vas)\).
In both cases, let $\observationTable' = (\representatives, \separatorsCounters', \separatorsMapping', \obsMapping', \obsCounters')$ be the new table such that $\separatorsCounters' = \separatorsCounters \cup \{as\}$ and $\separatorsMapping' = \separatorsMapping \cup \{as\}$ ($\representatives$ is left unchanged).
Notice that we add \(as\) to \(\separatorsMapping\) in order to maintain \(\separatorsCounters' \subseteq \separatorsMapping'\).
The next lemma indicates that the $(u,v,a)$-\inconsistency is resolved since it states that $v$ no longer belongs to $\Approx'(u)$. Moreover it states that this \approxSet\ gets smaller.

\begin{lemma}\label{lem:resolveSigmaInc1}
    Let \(\observationTable\) be an observation table with a $(u,v,a)$-\inconsistency, $ua \in \representatives\Sigma$, $v \in \representatives$. Let \(\observationTable'\) be the new table as described above. 
    Then, $v \notin \Approx'(u)$ and \(\sizeOfSet{\Approx(u)} > \sizeOfSet{\Approx'(u)}\).
\end{lemma}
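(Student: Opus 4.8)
The plan is to establish the two assertions in order: first that the newly introduced separator $as$ tells $u$ and $v$ apart, so that $v \notin \Approx'(u)$; and then to combine this with the monotonicity of approximation sets under table extension (\Cref{lem:observation_table:u_in_approx_prime_v_implies_u_in_approx_v}) to deduce the strict inequality $\sizeOfSet{\Approx(u)} > \sizeOfSet{\Approx'(u)}$.

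For the first assertion, recall that the $(u,v,a)$-\inconsistency is witnessed by some $s \in \separatorsCounters$ falling into one of the two cases, and that $\observationTable'$ is built up to the same counter limit $\ell$. Hence $L_\ell$ is unchanged, so $\obsMapping$ and $\obsMapping'$ both assign to a word its membership in $L_\ell$ and therefore agree on all common arguments, and since $\prefTable \subseteq \prefTable[\observationTable']$ every non-$\bot$ entry of $\obsCounters$ is preserved with the same value by $\obsCounters'$ (these are exactly the facts used at the start of the proof of \Cref{lem:observation_table:u_in_approx_prime_v_implies_u_in_approx_v}). Now $as \in \separatorsCounters'$, and as words $u \cdot as = uas$ and $v \cdot as = vas$. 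In the first case we get $\obsMapping'(uas) = \obsMapping(uas) \neq \obsMapping(vas) = \obsMapping'(vas)$, so the membership requirement defining $\Approx'$ fails at the separator $as$. In the second case $\obsCounters(uas), \obsCounters(vas) \neq \bot$ with $\obsCounters(uas) \neq \obsCounters(vas)$, and preservation gives $\obsCounters'(uas), \obsCounters'(vas) \neq \bot$ together with $\obsCounters'(uas) \neq \obsCounters'(vas)$, so the counter-value requirement defining $\Approx'$ fails at $as$. In either case the defining condition of $\Approx'$ is violated for the pair, whence $v \notin \Approx'(u)$.

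For the second assertion, observe that $\representatives$ is left unchanged, so $\Approx(u)$ and $\Approx'(u)$ are both subsets of the fixed set $\representatives \cup \representatives\Sigma$. The tables $\observationTable$ and $\observationTable'$ satisfy the hypotheses of \Cref{lem:observation_table:u_in_approx_prime_v_implies_u_in_approx_v} (same level $\ell$, equal representative sets, and $\separatorsCounters \subseteq \separatorsCounters'$, $\separatorsMapping \subseteq \separatorsMapping'$), so that lemma yields $\Approx'(u) \subseteq \Approx(u)$. The \inconsistency hypothesis provides $v \in \Approx(u)$, while we have just shown $v \notin \Approx'(u)$; thus $v \in \Approx(u) \setminus \Approx'(u)$, the inclusion is strict, and $\sizeOfSet{\Approx(u)} > \sizeOfSet{\Approx'(u)}$.

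The only point requiring care is the preservation of the witness entries when separators are added: membership values are untouched because $\ell$, and hence $L_\ell$, is fixed, and non-$\bot$ counter values survive because $\prefTable$ can only grow. Granting these facts (already recorded in the cited lemma), the case analysis is routine and the strict-inclusion step is immediate from monotonicity.
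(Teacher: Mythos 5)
Your proof is correct and follows essentially the same route as the paper's: the same two-case analysis on the witness separator $as$ (using that $\ell$ is unchanged to preserve $\obsMapping$-values and that $\prefTable$ only grows to preserve non-$\bot$ counter values), followed by an appeal to \Cref{lem:observation_table:u_in_approx_prime_v_implies_u_in_approx_v} together with $v \in \Approx(u) \setminus \Approx'(u)$ to get the strict size decrease. The only difference is presentational: you spell out the strict-inclusion step slightly more explicitly than the paper does.
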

\begin{proof}
    Let us first explain why \(v \notin \Approx'(u)\). We go back to the two previous cases by considering the suffix $as \in \separatorsCounters'$. Either we have \(\obsMapping(uas) \neq \obsMapping(vas)\). Since we do not change \(\ell\), it follows that we have the same inequality \(\obsMapping'(uas) \neq \obsMapping'(vas)\). Thus, by definition, \(v \notin \Approx'(u)\). Or we have \(\obsCounters(uas) \neq \bot \land \obsCounters(vas) \neq \bot \land \obsCounters(uas) \neq \obsCounters(vas)\). Since \(\obsCounters(uas) \neq \bot\), we have \(\obsCounters'(uas) =  \obsCounters(uas) \neq \bot\) (likewise for \(vas\)). Therefore, it holds that \(\obsCounters'(uas) \neq \bot \land \obsCounters'(vas) \neq \bot \land \obsCounters'(uas) \neq \obsCounters'(vas)\). Thus again we get \(v \notin \Approx'(u)\).
    
    Since \(v \in \Approx(u)\), \(v \notin \Approx'(u)\) and $\representatives = \representatives'$, we get \(\sizeOfSet{\Approx(u)} > \sizeOfSet{\Approx'(u)}\) by \Cref{lem:observation_table:u_in_approx_prime_v_implies_u_in_approx_v}.
\qed\end{proof}

\begin{lemma}\label{lem:resolveSigmaInc2}
Let \(\observationTable\) be an observation table with a $(u,v,a)$-\inconsistency, $ua \in \representatives\Sigma$, $v \in \representatives$. Let \(\observationTable'\) be the new table as described above. Then,
\begin{itemize}
    \item $|\representatives'| = |\representatives|$, $|\separatorsCounters'| = |\separatorsCounters| + 1$, and $|\separatorsMapping'| = |\separatorsMapping| + 1$,
    \item the number of membership queries and the number of counter value queries are both bounded by a polynomial in $\sizeOfSet{\obsTable}$. 
\end{itemize}
\end{lemma}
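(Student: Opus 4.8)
The plan is to mirror the proof of \Cref{lem:resolveOpenness2}, the companion counting lemma for openness, but with the roles of rows and columns exchanged: here $\representatives$ is left untouched while the single suffix $as$ is appended to both $\separatorsCounters$ and $\separatorsMapping$. For the first item, $\sizeOfSet{\representatives'} = \sizeOfSet{\representatives}$ is immediate from $\representatives' = \representatives$. To get $\sizeOfSet{\separatorsCounters'} = \sizeOfSet{\separatorsCounters} + 1$ and $\sizeOfSet{\separatorsMapping'} = \sizeOfSet{\separatorsMapping} + 1$, I must first check that $as$ is a genuinely new separator; this plays the role that the freshness claim $u \notin \representatives$ (from \Cref{lem:resolveSigmaInc1}) plays in the openness case. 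That $as \notin \separatorsCounters$ follows directly from $v \in \Approx(u)$: were $as$ already in $\separatorsCounters$, applying the definition of $\Approx$ to the suffix $as$ would force $\obsMapping(uas) = \obsMapping(vas)$ and, whenever both counter entries are defined, $\obsCounters(uas) = \obsCounters(vas)$, contradicting the witness property of $s$ in either of the two cases distinguished in the construction. It then remains to argue $as \notin \separatorsMapping$, so that $\separatorsMapping' = \separatorsMapping \cup \{as\}$ is a strict enlargement.

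For the second item, extending $\obsTable$ to $\obsTable'$ amounts to filling in the new column indexed by $as$: using membership and counter value queries the learner computes $\obsMapping'(w \cdot as)$ for every $w \in \representatives \cup \representatives\Sigma$ and, since $as \in \separatorsCounters'$, also the entries $\obsCounters'(w \cdot as)$. As in \Cref{lem:resolveOpenness2}, the new membership entries equal to $1$ may enlarge $\prefTable$ into $\prefTable[\observationTable']$, so the previously undefined ($\bot$) values $\obsCounters'(w s')$ with $s' \in \separatorsCounters$ must be recomputed. Since $\representatives' \cup \representatives'\Sigma = \representatives \cup \representatives\Sigma$ and $\sizeOfSet{\separatorsMapping'} = \sizeOfSet{\separatorsMapping} + 1$, we have that $\sizeOfSet{\obsTable'}$ is polynomial in $\sizeOfSet{\obsTable}$; \Cref{lem:nbrQueries} then bounds the number of membership and of counter value queries needed for all these (re)computations by a polynomial in $\sizeOfSet{\obsTable}$, as required.

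The one step deserving care is the freshness claim $as \notin \separatorsMapping$. Freshness with respect to $\separatorsCounters$ is forced by $v \in \Approx(u)$ as above, but $\separatorsMapping \setminus \separatorsCounters$ may contain separators introduced while resolving a \inconsistencyBot, which are tracked for membership information only; in principle $as$ could coincide with one of these, which would make $\separatorsMapping' = \separatorsMapping$ instead. I would rule this out using the structural invariant describing how $\separatorsMapping$ is grown (detailed in \Cref{subsec:growth}), in the same spirit as the freshness argument used for the openness case. This membership-witness bookkeeping, rather than the query count, is the main obstacle, since the latter reduces directly to \Cref{lem:nbrQueries} once the size bound on $\obsTable'$ is in place.
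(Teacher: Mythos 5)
Your handling of the second item is exactly the paper's proof: fill in the new column $(\representatives \cup \representatives\Sigma)\{as\}$ with membership and counter value queries, recompute the $\obsCounters$ entries because $\prefTable$ may have grown, and invoke \Cref{lem:nbrQueries}. Your argument that $as \notin \separatorsCounters$ is also correct, and in fact more explicit than the paper, whose proof dismisses the whole first item as ``immediate'': if $as$ were already in $\separatorsCounters$, then $v \in \Approx(u)$ evaluated at the separator $as$ would contradict the witness property of $s$, so $|\separatorsCounters'| = |\separatorsCounters| + 1$ does hold.

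The gap is the deferred claim $as \notin \separatorsMapping$, and it cannot be closed, because the invariant you appeal to does not exist. Note first why the analogy with the openness case stops short: there, freshness ($u \notin \representatives$) follows from the defining condition of openness itself, and the same move here only reaches $\separatorsCounters$, because $\Approx$ never inspects $\separatorsMapping \setminus \separatorsCounters$. Concretely, words enter $\separatorsMapping \setminus \separatorsCounters$ when a \inconsistencyBot is resolved in the subcase $vs'' \in L_\ell$ (see \Cref{lem:resolveBotInc2}): there $\suffixes{s''}$ is added to $\separatorsMapping$ only, where $s'' = xs'$ with $s' \in \separatorsMapping$ and $x$ nonempty. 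Among these new suffixes is $bs'$ with $b$ the last letter of $x$; in a run where $s' \in \separatorsCounters$ and $b = a$, a word of the exact form $as$ with $s \in \separatorsCounters$ can therefore already sit in $\separatorsMapping \setminus \separatorsCounters$ when a later \inconsistency with witness $s$ is resolved --- nothing in $v \in \Approx(u)$ forbids $\obsMapping(uas) \neq \obsMapping(vas)$ for such an $as$, precisely because $\Approx$ ignores $\separatorsMapping \setminus \separatorsCounters$. In that situation $\separatorsMapping' = \separatorsMapping \cup \{as\} = \separatorsMapping$ and the claimed equality $|\separatorsMapping'| = |\separatorsMapping| + 1$ fails. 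The honest statement is $|\separatorsMapping'| \leq |\separatorsMapping| + 1$, which is also all that is ever used downstream: \Cref{prop:NbrInc-sizeS} and \Cref{prop:sizeSHatf} only need upper bounds on the growth of $\separatorsMapping$. So your instinct that this is the delicate point is right, but the repair is to weaken that one equality to an inequality, not to prove freshness.
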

\begin{proof}
    The first item is immediate. To extend and update $\obsTable$, the learner
    asks queries to the teacher so as to compute the new value $\obsMapping'(u'as)$ (resp. $\obsCounters'(u'as)$) for each $u' \in \representatives \cup \representatives\Sigma$. Moreover, as $\prefTable[\observationTable']$ may have changed, all the values $\obsCounters'(u's')$, $u' \in \representatives \cup \representatives\Sigma$, $s' \in \separatorsCounters$ must be recomputed. By \Cref{lem:nbrQueries}, this requires a polynomial number of membership and counter value queries in $\sizeOfSet{\obsTable}$.
\qed\end{proof}

\subsubsection{Making a table \texorpdfstring{\prefixed}{⊥-consistent}.}

Assume \(\observationTable\) is \notPrefixed, i.e., 
    \[ 
    \exists u, v \in \representatives \cup \representatives \Sigma, \exists s \in \separatorsCounters \mbox{ such that } u \in \Approx(v) \mbox{ and } \obsCounters(us) \neq \bot \Leftrightarrow \obsCounters(vs) = \bot.
    \]

We say that we have a \emph{$(u,v,s)$-\inconsistencyBot} and we call \emph{\mismatch} the disequality $\obsCounters(us) \neq \bot \Leftrightarrow \obsCounters(vs) = \bot$. We now explain how to resolve this \inconsistencyBot.
Let us assume, without loss of generality, that \(\obsCounters(us) \neq \bot\) and \(\obsCounters(vs) = \bot\).
So, \(us \in \prefTable\), i.e., there exist \(u' \in \representatives \cup \representatives\Sigma\) and \(s' \in \separatorsMapping\) such that \(us \in \prefixes{u's'}\) and \(\obsMapping(u's') = 1\).
We denote by $s''$ the word such that $us'' = u's'$. Notice that \(s\) is a prefix of \(s''\). We have two cases according to whether $u'$ is a prefix of $u$ (see \Cref{fig:observation_table:towards_prefixed:u_prime_prefix_u}) or $u$ is a proper prefix of $u'$ (see \Cref{fig:observation_table:towards_prefixed:u_prefix_u_prime}).

\begin{figure}
    \begin{subfigure}{.45\textwidth}
        \begin{tikzpicture}[
    auto
]
    \path
        coordinate (eps)    at (0, 0)
        coordinate (u')     at (1.5, 0)
        coordinate (u)      at (2.2, 0)
        coordinate (u s)    at (5, 0)
        coordinate (u s'')  at (5.4, 0)
    ;
    \draw
        (eps) -- (u s'')
    ;
    \foreach \start/\end/\lab in {eps/u/u, u/u s/s} {
        \draw
            let
                \p1 = ($(\start)!.5!(\end)$),
                \p{mid} = (\x1, 0.7)
            in
            (\start) to [out=45, in=180] node [at end] {\(\lab\)}   (\p{mid})
                    to [out=0, in=135]                              (\end)
        ;
    }
    \foreach \start/\end/\lab in {eps/u'/u', u'/u s''/s'} {
        \draw
            let
                \p1 = ($(\start)!.5!(\end)$),
                \p{mid} = (\x1, -0.7)
            in
            (\start) to [out=-45, in=-180] node [at end, below] {\(\lab\)} (\p{mid})
                    to [out=0, in=-135]                             (\end)
        ;
    }
    \draw
        let
            \p1 = ($(u)!.5!(u s'')$),
            \p{mid} = (\x1, 1.2)
        in
        (u) to [out=45, in=180] node [at end] {\(s''\)} (\p{mid})
            to [out=0, in=135]                          (u s'')
    ;
\end{tikzpicture}
        \caption{\(u'\) is a prefix of \(u\).}%
        \label{fig:observation_table:towards_prefixed:u_prime_prefix_u}
    \end{subfigure}
    \hfill 
    \begin{subfigure}{.45\textwidth}
        \begin{tikzpicture}[
    auto
]
    \path
        coordinate (eps)    at (0, 0)
        coordinate (u)      at (2.2, 0)
        coordinate (u')     at (3.7, 0)
        coordinate (u s)    at (5, 0)
        coordinate (u s'')  at (5.4, 0)
    ;
    \draw
        (eps) -- (u s'')
    ;
    \foreach \start/\end/\lab in {eps/u/u, u/u s/s} {
        \draw
            let
                \p1 = ($(\start)!.5!(\end)$),
                \p{mid} = (\x1, 0.7)
            in
            (\start)    to [out=45, in=180] node [at end] {\(\lab\)}    (\p{mid})
                        to [out=0, in=135]                              (\end)
        ;
    }
    \foreach \start/\end/\lab in {eps/u'/u', u'/u s''/s'} {
        \draw
            let
                \p1 = ($(\start)!.5!(\end)$),
                \p{mid} = (\x1, -0.7)
            in
            (\start)    to [out=-45, in=-180] node [at end, below] {\(\lab\)}   (\p{mid})
                        to [out=0, in=-135]                                     (\end)
        ;
    }
    \draw
        let
            \p1 = ($(u)!.5!(u s'')$),
            \p{mid} = (\x1, 1.2)
        in
        (u) to [out=45, in=180] node [at end] {\(s''\)} (\p{mid})
            to [out=0, in=135]                          (u s'')
    ;
\end{tikzpicture}
        \caption{\(u\) is a proper prefix of \(u'\).}%
        \label{fig:observation_table:towards_prefixed:u_prefix_u_prime}
    \end{subfigure}
    \caption{The different words used to resolve a $(u,v,s)$-\inconsistencyBot.}%
    \label{fig:observation_table:towards_prefixed}
\end{figure}

\begin{itemize}
\item We first suppose that $u'$ is a prefix of $u$. 

Let us show that $s'' \in \separatorsMapping \setminus \separatorsCounters$. As \(\separatorsMapping\) is suffix-closed, and $s''$ is a suffix of $s' \in \separatorsMapping$, it follows that $s''$ belongs to $\separatorsMapping$. By contradiction, assume $s'' \in \separatorsCounters$. Since \(u \in \Approx(v)\) and \(\obsMapping(us'') = \obsMapping(u's') = 1\), we have \(\obsMapping(vs'') = \obsMapping(us'') = 1\). Moreover, since \(s\) is a prefix of \(s''\), it holds that \(vs \in \prefTable\). This means that \(\obsCounters(vs) \neq \bot\) which is a contradiction.

To resolve the $(u,v,s)$-\inconsistencyBot, we consider the new table $\observationTable' = (\representatives, \separatorsCounters', \separatorsMapping, \obsMapping', \obsCounters')$ such that $\separatorsCounters' = \separatorsCounters \cup \suffixes{s''}$ ($\representatives$ and $\separatorsMapping$ are left unchanged).

\item We then suppose that \(u\) is a proper prefix of \(u'\). 

Then, to resolve the $(u,v,s)$-\inconsistencyBot, we have two cases: 
\begin{itemize}
    \item Either \(vs'' \in L_{\ell}\). Let $\observationTable' = (\representatives, \separatorsCounters, \separatorsMapping', \obsMapping', \obsCounters')$ be the updated table such that $\separatorsMapping' = \separatorsMapping \cup \suffixes{s''}$ ($\representatives$ and $\separatorsCounters$ are left unchanged).
    \item Or \(vs'' \notin L_{\ell}\). 
        This time, we consider the new table $\observationTable' = (\representatives, \separatorsCounters', \separatorsMapping', \obsMapping', \obsCounters')$ such that $\separatorsCounters' = \separatorsCounters \cup \suffixes{s''}$ and $\separatorsMapping' = \separatorsMapping \cup \suffixes{s''}$ ($\representatives$ is left unchanged).
\end{itemize}
\end{itemize}

The next lemma indicates that the $(u,v,s)$-\inconsistencyBot is indeed resolved, i.e., $u \not\in \Approx'(v)$ or the \mismatch $\obsCounters(us) \neq \bot \Leftrightarrow \obsCounters(vs) = \bot$ is eliminated.

\begin{lemma}\label{lem:resolveBotInc1}
Let \(\observationTable\) be an observation table with a $(u,v,s)$-\inconsistencyBot, $u, v \in \representatives \cup \representatives \Sigma$, $s \in \separatorsCounters$. Let \(\observationTable'\) be the new table as described above. Then,
\begin{itemize}
        \item If $u'$ is a prefix of $u$, then $u \notin \Approx'(v)$.
        \item If $u$ is a proper prefix of $u'$, then,
        \begin{itemize}
        \item if $vs'' \in L_\ell$, then $u \in \Approx'(v) \implies \obsCounters'(us) = \obsCounters'(vs)$.
        \item if $vs'' \notin L_\ell$, then $u \notin \Approx'(v)$.
        \end{itemize}
    \end{itemize}
In particular, either \(\sizeOfSet{\Approx(v)} > \sizeOfSet{\Approx'(v)}\) or the mismatch $\obsCounters(us) \neq \bot \Leftrightarrow \obsCounters(vs) = \bot$ is eliminated.
\end{lemma}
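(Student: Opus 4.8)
The plan is to verify each bullet by producing, inside the extended table $\observationTable'$, a concrete separator in $\separatorsCounters'$ that certifies $u \notin \Approx'(v)$ in the cases where the inconsistency is resolved by shrinking an \approxSet, and otherwise to show that the offending $\bot$ has been filled in. Two facts would be used throughout: the monotonicity $\prefTable \subseteq \prefTable[\observationTable']$ already recorded in the proof of \Cref{lem:observation_table:u_in_approx_prime_v_implies_u_in_approx_v}, so that $\obsCounters(us)\neq\bot$ persists as $\obsCounters'(us)\neq\bot$; and the fact that membership is unchanged at a fixed counter limit, i.e.\ $\obsMapping'(w)=\obsMapping(w)$ on the old domain. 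Recall we assume without loss of generality $\obsCounters(us)\neq\bot$ and $\obsCounters(vs)=\bot$, and that $s$ is a prefix of $s''$ with $us''=u's'$ and $\obsMapping(u's')=1$.

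First I would treat the case where $u'$ is a prefix of $u$. The decisive preliminary claim is $\obsMapping(vs'')=0$: indeed $s''$ is a suffix of $s'\in\separatorsMapping$, hence $s''\in\separatorsMapping$ by suffix-closedness, so were $\obsMapping(vs'')=1$ we would get $vs'' \in L_\ell$ and thus $vs\in\prefixes{vs''}\subseteq\prefTable$, contradicting $\obsCounters(vs)=\bot$. Since $us''=u's'$ gives $\obsMapping(us'')=1$, moving $s''$ into $\separatorsCounters'$ (which it enters, as $\separatorsCounters'=\separatorsCounters\cup\suffixes{s''}$) exposes this membership gap to $\Approx'$: we have $\obsMapping'(us'')=1\neq 0=\obsMapping'(vs'')$ with $s''\in\separatorsCounters'$, so the first clause of \Cref{def:approx} fails and $u\notin\Approx'(v)$.

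Next, for $u$ a proper prefix of $u'$, I would split on whether $vs''\in L_\ell$. If $vs''\notin L_\ell$, the argument mirrors the previous case: $s''\in\separatorsCounters'$, while $\obsMapping'(us'')=1$ and $\obsMapping'(vs'')=0$, so $u\notin\Approx'(v)$. If instead $vs''\in L_\ell$, then after adding $\suffixes{s''}$ to $\separatorsMapping'$ we get $\obsMapping'(vs'')=1$, which forces $vs\in\prefixes{vs''}$ into $\prefTable[\observationTable']$ (using $s''\in\separatorsMapping'$), hence $\obsCounters'(vs)\neq\bot$; together with $\obsCounters'(us)\neq\bot$ this already eliminates the \mismatch. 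Moreover, if $u\in\Approx'(v)$, then applying the second clause of \Cref{def:approx} to $s\in\separatorsCounters'$ (both counter entries now being non-$\bot$) gives $\obsCounters'(us)=\obsCounters'(vs)$, which is precisely the claimed implication.

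For the concluding ``in particular'' statement I would combine these outcomes with \Cref{lem:observation_table:u_in_approx_prime_v_implies_u_in_approx_v}: in the two cases yielding $u\notin\Approx'(v)$, since $u\in\Approx(v)$ (from the hypothesis of the \inconsistencyBot) and $\Approx'(v)\subseteq\Approx(v)$, the element $u$ is dropped, so $\sizeOfSet{\Approx(v)}>\sizeOfSet{\Approx'(v)}$; in the remaining case the \mismatch is eliminated. I expect the only genuinely delicate point to be the first case, namely the bookkeeping that $s''$ is \emph{already} a separator of $\separatorsMapping$ (so that $\obsMapping(vs'')$ is defined and pinned to $0$) yet \emph{not yet} in $\separatorsCounters$; this is exactly the reason for maintaining two separator sets, and it is the crux that makes resolving a \inconsistencyBot make progress.
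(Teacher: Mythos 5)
Your proof is correct and follows essentially the same route as the paper's: in the two resolving cases you exhibit $s''$ as a new element of $\separatorsCounters'$ on which $\obsMapping'$ separates $u$ from $v$ (using the same contradiction argument, via $\prefTable$, that pins $\obsMapping(vs'')=0$ when $u'$ is a prefix of $u$), and in the subcase $vs''\in L_\ell$ you show the $\bot$ at $vs$ gets filled, eliminating the \mismatch; the final size-decrease claim via \Cref{lem:observation_table:u_in_approx_prime_v_implies_u_in_approx_v} is also exactly the paper's argument. No gaps.
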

\begin{proof}
    Suppose $u'$ is a prefix of $u$, and consider $s'' \in \separatorsMapping$. We have \(\obsMapping(vs'') = 0\) since otherwise, as explained above, \(vs \in \prefTable\) in contradiction with \(\obsCounters(vs) = \bot\). Therefore, $\obsMapping'(vs'') = \obsMapping(vs'') = 0$. We also have \(\obsMapping'(us'') = \obsMapping'(u's') = \obsMapping(u's') = 1\) (by definition of $u's'$). Since $s'' \in \separatorsCounters'$, it follows that $u \notin \Approx'(v)$.

    Suppose that $u$ is a proper prefix of $u'$. The subcase $vs'' \notin L_\ell$ is solved similarly. Consider again $s'' \in \separatorsCounters'$. By hypothesis we have \(\obsMapping'(vs'') = 0\). Recall that we have \(\obsMapping'(us'') = 1\). Therefore $u \notin \Approx'(v)$.
    
    In the other subcase ($vs'' \in L_\ell$), we have $vs \in \prefixes{\obsTable'}$ since $s'' \in \separatorsMapping'$, $\obsMapping'(vs'') = 1$ and $vs$ is a prefix of $vs''$. As $us \in \prefixes{\obsTable}$ by hypothesis, we also have $us \in \prefixes{\obsTable'}$. Hence $\obsCounters'(us) \neq \bot \land \obsCounters'(vs) \neq \bot$. If $v \in \Approx'(u)$, it follows that $\obsCounters'(us) = \obsCounters'(vs)$.
    
    Therefore in all cases, either the size of $\Approx(v)$ has decreased (\Cref{lem:observation_table:u_in_approx_prime_v_implies_u_in_approx_v}) or the mismatch $\obsCounters(us) \neq \bot \Leftrightarrow \obsCounters(vs) = \bot$ is eliminated.
\qed\end{proof}

\begin{lemma}\label{lem:resolveBotInc2}
Let \(\observationTable\) be an observation table with a $(u,v,s)$-\inconsistencyBot, $u, v \in \representatives \cup \representatives \Sigma$, $s \in \separatorsCounters$. Let \(\observationTable'\) be the new table as described above. Then,
\begin{itemize}
    \item $|\representatives'| = |\representatives|$, $|\separatorsCounters'| \leq |\representatives \cup \representatives\Sigma| \cup |\separatorsMapping|$, and $|\separatorsMapping'| \leq |\representatives \cup \representatives\Sigma| \cup |\separatorsMapping|$,
    \item if the mismatch $\obsCounters(us) \neq \bot \Leftrightarrow \obsCounters(vs) = \bot$ is eliminated, we have $|\separatorsCounters'| = |\separatorsCounters|$,
    \item the number of membership queries and the number of counter value queries are both bounded by a polynomial in $\sizeOfSet{\obsTable}$. 
\end{itemize}
\end{lemma}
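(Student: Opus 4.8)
The plan is to run the same three-way case split used in the resolution procedure just described: the case where $u'$ is a prefix of $u$, and the two subcases where $u$ is a proper prefix of $u'$ (according to whether $vs'' \in L_\ell$). In each case $\representatives$ is left untouched, so the first equality $\sizeOfSet{\representatives'} = \sizeOfSet{\representatives}$ is immediate, and the whole work goes into bounding $\sizeOfSet{\separatorsCounters'}$ and $\sizeOfSet{\separatorsMapping'}$; the query count will then follow exactly as in \Cref{lem:resolveOpenness2,lem:resolveSigmaInc2}.

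For the separator bounds the key preliminary estimate I would establish is a bound on $|s''|$. Since $\representatives \cup \representatives\Sigma$ is prefix-closed and $\separatorsMapping$ is suffix-closed, the longest word of each has length strictly below its cardinality, so $|u'| \le \sizeOfSet{\representatives \cup \representatives\Sigma} - 1$ and $|s'| \le \sizeOfSet{\separatorsMapping} - 1$. Recall $us'' = u's'$. When $u'$ is a prefix of $u$, the word $s''$ is a suffix of $s'$, hence $\suffixes{s''} \subseteq \suffixes{s'} \subseteq \separatorsMapping$; thus $\separatorsCounters' = \separatorsCounters \cup \suffixes{s''} \subseteq \separatorsMapping$ and $\separatorsMapping' = \separatorsMapping$, and both cardinalities stay at most $\sizeOfSet{\separatorsMapping}$. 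When $u$ is a proper prefix of $u'$, writing $u' = ut$ gives $s'' = t s'$, and the only suffixes of $s''$ not already present in the suffix-closed $\separatorsMapping$ are the $|t| \le |u'| \le \sizeOfSet{\representatives \cup \representatives\Sigma} - 1$ suffixes strictly longer than $s'$. Hence each of $\separatorsCounters'$ and $\separatorsMapping'$ gains at most $\sizeOfSet{\representatives \cup \representatives\Sigma}$ elements over $\separatorsMapping$, yielding the announced bound $\sizeOfSet{\representatives \cup \representatives\Sigma} + \sizeOfSet{\separatorsMapping}$.

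For the second item I would appeal to \Cref{lem:resolveBotInc1}, which isolates the situation in which the mismatch is eliminated (rather than $\Approx(v)$ strictly shrinking): this is precisely the subcase where $u$ is a proper prefix of $u'$ and $vs'' \in L_\ell$. In that subcase the procedure extends only $\separatorsMapping$ and leaves $\separatorsCounters$ unchanged, so $\sizeOfSet{\separatorsCounters'} = \sizeOfSet{\separatorsCounters}$, as claimed. In all the remaining cases $\separatorsCounters$ is extended, but there one has $u \notin \Approx'(v)$, so the mismatch hypothesis of this item does not apply.

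Finally, for the query count I would argue verbatim as in \Cref{lem:resolveOpenness2,lem:resolveSigmaInc2}. By the first item $\sizeOfSet{\obsTable'}$ is linear in $\sizeOfSet{\obsTable}$. Introducing the new separators forces the learner to fill in the entries $\obsMapping'(u's)$ and $\obsCounters'(u's)$ over the new columns, and because $\prefTable$ may strictly grow it must moreover recompute every counter-value entry; by \Cref{lem:nbrQueries} each entry costs polynomially many membership and counter value queries, and there are polynomially many entries, so the total is polynomial in $\sizeOfSet{\obsTable}$. The step I expect to be the real obstacle is the length bound on $s''$: everything hinges on turning prefix-/suffix-closedness into the cardinality bounds on $|u'|$ and $|s'|$, and then carefully separating the genuinely new suffixes of $s'' = ts'$ from those already in $\separatorsMapping$.
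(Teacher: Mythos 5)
Your proposal is correct and takes essentially the same route as the paper's proof: the same three-way case split, the same use of suffix-closedness of \(\separatorsMapping\) (and prefix-closedness of \(\representatives \cup \representatives\Sigma\)) to see that when \(u'\) is a prefix of \(u\) all suffixes of \(s''\) already lie in \(\separatorsMapping\), while when \(u\) is a proper prefix of \(u'\) only the at most \(|u'|\) suffixes of \(s'' = ts'\) longer than \(s'\) are genuinely new, the same identification of the subcase \(vs'' \in L_\ell\) as the only one where the mismatch is eliminated with \(\separatorsCounters\) untouched, and the same query accounting via \Cref{lem:nbrQueries} together with the recomputation of counter values forced by the growth of \(\prefTable\). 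The only detail the paper makes explicit that you leave implicit is that deciding which subcase applies itself costs one membership query plus at most \(|vs''| \leq |v| + |u'| + |s'|\) counter value queries, a bound your length estimates on \(|u'|\) and \(|s'|\) already deliver.
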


\begin{proof}
    We consider the different cases as above, which all together will lead to the statements of this lemma. 
    
    Suppose first that $u'$ is a prefix of $u$. Then by construction of $\obsTable'$, we have $\representatives' = \representatives$, $\separatorsCounters' = \separatorsCounters \cup \suffixes{s''}$, and $\separatorsMapping' = \separatorsMapping$. As $\separatorsCounters' \subseteq \separatorsMapping'$, we get $|\separatorsCounters'| \leq |\separatorsMapping|$. Moreover, as $\representatives$, $\separatorsMapping$ are unchanged and $\suffixes{s''}$ is added to $\separatorsCounters$, we only need to add the new values $\obsCounters'(yz)$ to $\observationTable'$, for all $y \in \representatives \cup \representatives\Sigma$ and $z \in \suffixes{s''}$. Therefore there is no membership query and a polynomial number of counter value queries in $\sizeOfSet{\obsTable}$ (by \Cref{lem:nbrQueries}).
    
    Second, suppose that $u$ is a proper prefix of $u'$. We have two subcases according to whether $vs'' \in L_\ell$. To determine which subcase applies, by \Cref{lem:nbrQueries} and the preceding argument, this requires one membership query and at most $|vs''|$ counter value queries. The latter number is polynomial in the size of $\obsTable$. Indeed we have $v \in \representatives \cup \representatives\Sigma$, $s'' = xs'$ with $s' \in \separatorsMapping$ and $x$ prefix of $u' \in \representatives \cup \representatives\Sigma$ (see \Cref{fig:observation_table:towards_prefixed:u_prefix_u_prime}), and thus $|vs''| \leq |v| + |u'| + |s'|$.
    
    Consider the first subcase $vs'' \in L_\ell$. Then by construction of $\obsTable'$, we have $\representatives' = \representatives$, $\separatorsCounters' = \separatorsCounters$, and $\separatorsMapping' = \separatorsMapping \cup  \suffixes{s''}$. Recall that as $s' \in \separatorsMapping$ is a suffix of $s''$, there are only $|x|$ suffixes to add to $\separatorsMapping$ with $|x| \leq |u'|$. Thus $|\separatorsMapping'| \leq |\separatorsMapping| + |\representatives \cup \representatives\Sigma|$. Notice also that this is the only subcase where the mismatch $\obsCounters(us) \neq \bot \Leftrightarrow \obsCounters(vs) = \bot$ may be eliminated and for which we have $|S'| = |S|$. For each new suffix $z$ to add to $\separatorsMapping$ and each $y \in \representatives \cup \representatives\Sigma$, we have to add the new value $\obsMapping'(yz)$ to $\obsTable'$, and this requires a polynomial number of membership and counter value queries in $\sizeOfSet{\obsTable}$ (by \Cref{lem:nbrQueries}). Moreover, as $\prefixes{\obsTable'}$ may have changed, the values $\obsCounters'(yz)$ have to be recomputed for all $y \in \representatives \cup \representatives\Sigma$, $z \in \separatorsCounters$. This requires again a polynomial number of counter value queries in $\sizeOfSet{\obsTable}$.
    
    Finally consider the second subcase $vs'' \not\in L_\ell$. Then by construction of $\obsTable'$, we have $\representatives' = \representatives$, $\separatorsCounters' = \separatorsCounters \cup \suffixes{s''}$, and $\separatorsMapping' = \separatorsMapping \cup  \suffixes{s''}$. Therefore this case is similar to the previous subcase with the exception that $\separatorsCounters' = \separatorsCounters \cup \suffixes{s''}$. The same arguments can be repeated for $\separatorsMapping$. As $\separatorsCounters' \subseteq \separatorsMapping'$, it follows that $|\separatorsCounters'| \leq |\separatorsMapping| + |\representatives \cup \representatives\Sigma|$. The way that $\obsTable'$ is extended and updated can also be repeated, which leads again to a polynomial number of membership and counter value queries. 
\qed\end{proof}

\subsubsection{Handling counterexamples.}
Suppose now that a counterexample $w \in \Sigma^*$ is provided by the teacher after a partial equivalence query or an equivalence query. This also requires to extend and update $\obsTable$ into a new table $\observationTable[\ell']' = (\representatives', \separatorsCounters', \separatorsMapping', \obsMapping[\ell']', \obsCounters[\ell']')$ such that $\ell' = \ell$ in case of partial equivalence query and $\ell' > \ell$ in case of equivalence query. In both cases, the new table is defined such that $\representatives' = \representatives \cup \prefixes{w}$, $\separatorsCounters' = \separatorsCounters$, and $\separatorsMapping' = \separatorsMapping$. We have the following lemma. 

\begin{lemma}\label{lem:resolveCe}
    Let \(\observationTable\) be an observation table and \(\observationTable[\ell']'\) be the new table as described above. Then,
    \begin{itemize}
        \item $|\representatives'| \leq |\representatives| + \lengthCe$, $|\separatorsCounters'| = |\separatorsCounters|$, and $|\separatorsMapping'| = |\separatorsMapping|$,
        \item the number of membership queries and the number of counter value queries are both bounded by a polynomial in $\lengthCe$ and $\sizeOfSet{\obsTable}$. 
    \end{itemize}
\end{lemma}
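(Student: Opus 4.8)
The plan is to treat \Cref{lem:resolveCe} as a bookkeeping lemma, in the same style as \Cref{lem:resolveOpenness2}, \Cref{lem:resolveSigmaInc2}, and \Cref{lem:resolveBotInc2}: I would first bound the sizes of the three syntactic components of $\observationTable[\ell']'$, and then bound the number of membership and counter value queries by reducing the whole computation of $\observationTable[\ell']'$ to \Cref{lem:nbrQueries}.

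For the first item, the last two equalities are immediate since $\separatorsCounters' = \separatorsCounters$ and $\separatorsMapping' = \separatorsMapping$ hold by construction of $\observationTable[\ell']'$. For the bound on $\representatives'$, recall $\representatives' = \representatives \cup \prefixes{w}$ where $w$ is the counterexample returned by the teacher, so $\lengthOf{w} \leq \lengthCe$ by definition of $\lengthCe$. The set $\prefixes{w}$ has exactly $\lengthOf{w}+1$ elements; since $\emptyword \in \representatives$ (the set $\representatives$ is non-empty and prefix-closed), at most $\lengthOf{w} \leq \lengthCe$ of those prefixes are genuinely new, whence $\sizeOfSet{\representatives'} \leq \sizeOfSet{\representatives} + \lengthCe$.

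For the second item, the key remark is that one may have to recompute the \emph{entire} table rather than only the new entries. Indeed, in the equivalence-query case we have $\ell' > \ell$, so the reference language changes from $L_\ell$ to $L_{\ell'}$: a word previously classified as rejected because its height exceeded $\ell$ may now belong to $L_{\ell'}$, and the set of prefixes underlying the counter-value entries may change as well. (In the partial-equivalence-query case $\ell' = \ell$, and only the entries involving the newly added words of $\prefixes{w}$ are affected, which is strictly cheaper.) In all cases it therefore suffices to bound the cost of filling every entry of $\observationTable[\ell']'$ from scratch. The number of entries of $\obsMapping[\ell']'$ is $\sizeOfSet{\representatives' \cup \representatives'\Sigma} \cdot \sizeOfSet{\separatorsMapping'}$ and the number of entries of $\obsCounters[\ell']'$ is $\sizeOfSet{\representatives' \cup \representatives'\Sigma} \cdot \sizeOfSet{\separatorsCounters'}$. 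By the first item, and using that $\sizeOfSet{\representatives' \cup \representatives'\Sigma} \leq \sizeOfSet{\representatives \cup \representatives\Sigma} + (\lengthCe+1)(1 + \sizeOfSet{\Sigma})$ together with $\sizeOfSet{\Sigma} \leq \sizeOfSet{\representatives \cup \representatives\Sigma}$ (all single letters lie in $\representatives\Sigma$ since $\emptyword \in \representatives$), both quantities are polynomial in $\sizeOfSet{\obsTable}$ and $\lengthCe$. Finally, applying \Cref{lem:nbrQueries} to $\observationTable[\ell']'$ with counter limit $\ell'$, computing all the values $\obsMapping[\ell']'(us)$ and $\counterAut{us}$ costs a number of membership and counter value queries polynomial in $\sizeOfSet{\representatives' \cup \representatives'\Sigma}$ and $\sizeOfSet{\separatorsMapping'}$, hence polynomial in $\sizeOfSet{\obsTable}$ and $\lengthCe$, as claimed.

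The main point to watch is not a deep obstacle but precisely this equivalence-query case $\ell' > \ell$: one must resist reusing the old entries, since both membership to the bounded language and the $\bot$/value status of counter entries are defined relative to $L_\ell$, which has changed. Once it is granted that the whole table is recomputed and that its dimensions remain polynomially bounded, the query bound follows from \Cref{lem:nbrQueries} exactly as in the earlier resolution lemmas.
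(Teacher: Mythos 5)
Your proof is correct and follows essentially the same route as the paper: the first item by direct inspection of $\representatives' = \representatives \cup \prefixes{w}$ (with $\emptyword$ already in $\representatives$), and the second by observing that the worst case is the equivalence-query case $\ell' > \ell$, where the whole table must be recomputed, and then invoking \Cref{lem:nbrQueries}. Your additional bookkeeping (the explicit bound on $\sizeOfSet{\representatives' \cup \representatives'\Sigma}$ and the remark $\sizeOfSet{\Sigma} \leq \sizeOfSet{\representatives \cup \representatives\Sigma}$) only makes explicit what the paper leaves implicit.
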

\begin{proof}
    The first statement is easily proved. The worst case with respect to the number of queries happens when $\ell' > \ell$. In this case, we have to compute all the values $\obsMapping[\ell']'(us)$, $u \in \representatives' \cup \representatives'\Sigma$, $s \in \separatorsMapping$ and all the values $\obsCounters[\ell']'(us)$, $u \in \representatives' \cup \representatives'\Sigma$, $s \in \separatorsCounters$. By \Cref{lem:nbrQueries}, this requires a polynomial number of membership and counter value queries in $\lengthCe$ and $\sizeOfSet{\obsTable}$.
\qed\end{proof}

\subsection{Growth of the table and number of membership and counter value queries}\label{subsec:growth}

In this section, we study the size of the observation table at the end of the execution of our learning algorithm. We respectively denote by $\Rf, \SHatf$ the final sets $\representatives, \separatorsMapping$ at the end of the execution. With this study, we will be able to show that making an observation table closed and consistent can be done in a finite amount of time (\Cref{prop:finiteProcess}). We will also be able to count the total number of membership and counter value queries performed during the execution of the algorithm. 

We begin with the study of the size of $\Rf \cup \Rf\Sigma$.

\begin{proposition}\label{prop:nbrOpen-sizeR}
\begin{itemize}
    \item The size of $\Rf \cup \Rf\Sigma$ is in $\complexity(|Q| |\Sigma| \lengthCe^4)$, thus polynomial in $|Q|, |\Sigma|$ and $\lengthCe$.
    \item During the process of making a table closed, $\Sigma$- and \prefixed, the number of resolved openness cases is in $\complexity(|Q|\lengthCe)$. 
\end{itemize}
\end{proposition}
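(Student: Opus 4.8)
The plan is to track how the representative set \(\representatives\) grows over the entire run and to reduce both bullets to a single observation: \(\representatives\) is enlarged only in two situations. It grows when a counterexample \(w\) is processed, which adds \(\prefixes{w}\) and hence at most \(\lengthCe\) new words (\Cref{lem:resolveCe}), and it grows when an openness is resolved, which adds exactly one word (\Cref{lem:resolveOpenness2}). Crucially, resolving a \inconsistency{} or an \inconsistencyBot{} leaves \(\representatives\) untouched and only enlarges \(\separatorsCounters\) and \(\separatorsMapping\) (\Cref{lem:resolveSigmaInc2} and \Cref{lem:resolveBotInc2}). Thus \(\sizeOfSet{\Rf}\) is bounded by the number of openness resolutions plus \(\lengthCe\) times the number of processed counterexamples, and \(\sizeOfSet{\Rf \cup \Rf\Sigma} \leq (1 + \sizeOfSet{\Sigma})\,\sizeOfSet{\Rf}\). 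I therefore establish the second bullet first and feed it into the first.

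For the number of openness resolutions, the structural fact is \Cref{lem:resolveOpenness1}: when an openness is resolved the added word \(u\) satisfies \(\Approx'(u) \cap \representatives' = \{u\}\), i.e.\ \(u\) lies in the \approxSet{} of no representative already present. I claim that each such \(u\) which is not a \binWord{} pins down a fresh class of \(\equivBGROCA\). Observe first that \(\obsCounters\)-values are exact counter values and that \(\obsMapping\)-values can only flip from \(0\) to \(1\) as the counter limit grows (since \(L_{\ell} \subseteq L_{\ell'}\) for \(\ell \leq \ell'\)); hence \(\prefTable\) only grows and \binWord{} status is monotone over the whole run. Now if two openness-added non-\binWord{} words were \(\equivBGROCA\)-equivalent, then \Cref{long:prop:observation_table:u_equiv_v_implies_u_in_approx_v} would force the one inserted earlier to belong to the \approxSet{} of the one inserted later at the moment of the later insertion, contradicting the freshness guaranteed by \Cref{lem:resolveOpenness1}. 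Since \(\equivBGROCA\)-classes are fixed and class-occupancy by a non-\binWord{} representative is monotone, the non-\binWord{} resolutions number at most the \(\equivBGROCA\)-classes up to the final counter limit \(\ell \leq \lengthCe\) (\Cref{prop:nbrEquivQueries}), which is \(\complexity(\sizeOfSet{Q}\lengthCe)\) by \Cref{lem:sizeEquiv}. The \binWord{} resolutions are treated separately: all \binWords{} share a single \approxSet{} (\Cref{long:prop:observation_table:u_equiv_v_implies_u_in_approx_v}), so no \binWord{} representative can be present at such an insertion, and together with the monotonicity of \binWord{} status this contributes at most the same order, yielding \(\complexity(\sizeOfSet{Q}\lengthCe)\) in total.

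For the size of \(\Rf \cup \Rf\Sigma\), I combine this with the query counts. The number of processed counterexamples equals the number of rounds, i.e.\ the number of partial equivalence queries, which is polynomial in \(\sizeOfSet{Q}\) and \(\lengthCe\) by \Cref{prop:nbrEquivQueries}; each counterexample has length at most \(\lengthCe\) and so contributes at most \(\lengthCe\) prefixes to \(\representatives\). Adding the \(\complexity(\sizeOfSet{Q}\lengthCe)\) words coming from openness resolutions makes \(\sizeOfSet{\Rf}\) polynomial in \(\sizeOfSet{Q}\) and \(\lengthCe\), and multiplying by \((1 + \sizeOfSet{\Sigma})\) gives \(\sizeOfSet{\Rf \cup \Rf\Sigma} = \complexity(\sizeOfSet{Q}\sizeOfSet{\Sigma}\lengthCe^4)\), the claimed bound.

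The hard part will be the \binWord{} bookkeeping. Because \(\Approx\) is not transitive and a representative's \binWord{} status may switch once (from \(\bot\) to an actual counter value) as \(\prefTable\) grows, the clean "one fresh \(\equivBGROCA\)-class per openness" accounting does not transfer verbatim to words that are wildcards at insertion time. The delicate point is to argue, using the monotonicity of \binWord{} status together with the single-\approxSet{} property of \binWords, that inserting fresh \binWords{} cannot recur within an already-settled class and therefore cannot inflate the count beyond \(\complexity(\sizeOfSet{Q}\lengthCe)\); this wildcard case is where all the care is needed, while the remainder is the routine bookkeeping sketched above.
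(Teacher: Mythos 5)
Your high-level bookkeeping coincides with the paper's: $\representatives$ grows only when an openness is resolved (one word, \Cref{lem:resolveOpenness1,lem:resolveOpenness2}) or when a counterexample is processed (at most $\lengthCe$ words, \Cref{lem:resolveCe}), while \consistency and \consistencyBot repairs leave $\representatives$ untouched. Your treatment of the words that are non-\binWords{} at insertion time is also sound: monotonicity of $\prefTable$ over the run, together with \Cref{lem:resolveOpenness1} and \Cref{long:prop:observation_table:u_equiv_v_implies_u_in_approx_v}, makes them pairwise $\equivBGROCA$-inequivalent, and \Cref{lem:sizeEquiv} with $\ell \leq \lengthCe$ bounds their number. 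The structural difference is that you aim for a \emph{global}, whole-run bound of $\complexity(|Q|\lengthCe)$ on openness resolutions, whereas the proposition's second bullet --- and the paper's proof --- is a \emph{per-round} bound: the paper bounds the openness resolutions inside one execution of the make-closed-and-consistent procedure by roughly the index of $\equivBGROCA$ up to the current $\ell$, i.e.\ $(\ell+1)|Q|+2$, and then gets the first bullet by multiplying the per-round growth of $\representatives$ (that quantity plus $\lengthCe$) by the $\complexity(\lengthCe^3)$ rounds.

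The genuine gap is exactly the one you flag yourself: the \binWord{} case of your global bound. The facts you invoke --- all \binWords{} share one \approxSet, hence no representative may be a \binWord{} when a fresh \binWord{} is inserted, plus monotonicity of \binWord{} status --- only show that such insertions are \emph{serialized} (each earlier one must turn non-$\bot$ before the next); they do not bound how many there are. For $\complexity(|Q|\lengthCe)$ you would additionally need that $\bot$-at-insertion words which later leave $\bot$ status occupy pairwise distinct $\equivBGROCA$-classes. Within one round this can be proved: at such an insertion every representative $v$ has a nonzero $\obsMapping$-row on $\separatorsCounters$, so if the new word $b$ satisfied $b \equivBGROCA v$, picking $s$ with $vs \in L_{\ell}$ gives $bs \in L \setminus L_{\ell}$ (else $b$ would not be a \binWord), and the counter-value clause of $\equivBGROCA$ then forces some prefix of $b$ itself to exceed $\ell$, so $b$ stays a \binWord{} for the rest of that round. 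But this argument does \emph{not} survive across rounds: a word whose height exceeds the current $\ell$ stays $\bot$ in that round, yet becomes non-$\bot$ once $\ell$ increases, after which the \emph{same} class can absorb another $\bot$-at-insertion representative in the next round. With $\complexity(\lengthCe^3)$ rounds, nothing in your sketch prevents a single class from contributing one $\bot$-insertion per round, which would ruin the global $\complexity(|Q|\lengthCe)$ claim --- and with it your derivation of the first bullet, which feeds that global bound in. The repair is to prove precisely the per-round statement (which is what the second bullet actually asserts), where your tools plus the row argument above suffice, and then recover the first bullet as the paper does, by multiplying the per-round growth of $\representatives$ by the number of rounds.
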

\begin{proof}
    We first study the growth of $\representatives$ and the number of resolved openness cases after one round of the learning algorithm, that is, after first making the current table $\obsTable$ closed and consistent, and then handling a counterexample to a (partial) equivalence query.
    
    During the process of making $\obsTable$ closed and consistent, notice that $\representatives$ increases only when an openness is resolved (by \Cref{lem:resolveOpenness2,lem:resolveSigmaInc2,lem:resolveBotInc2}). By \Cref{lem:resolveOpenness1}, when a $u$-openness is resolved, $\representatives$ is increased with $u$ which is the only representative in its new \approxSet. By \Cref{long:prop:observation_table:u_equiv_v_implies_u_in_approx_v}, it follows that the number of resolved openness cases is bounded by the index of $\equiv$ up to counter limit $\ell$. This index is bounded by $(\ell + 1) |Q| + 1$ (see \Cref{lem:sizeEquiv}).
    As $\ell \leq \lengthCe$ by \Cref{prop:nbrEquivQueries}, the number of resolved openness cases is in $\complexity(|Q|t)$ and $\withoutBinWords{\representatives}$ may increase by at most $(\lengthCe + 1) |Q| + 1$ words. By adding the potential \binWord of $\representatives \setminus \withoutBinWords{\representatives}$, we get that $\representatives$ may increase by at most $(\lengthCe + 1) |Q| + 2$ words.

    After that, to handle the counterexample provided by the teacher, $R$ may still increase by at most $\lengthCe$ words (by \Cref{lem:resolveCe}).
    
    Let us now study the size of $\Rf$. We denote by $r(i)$ the size of $\representatives$ respectively at the initialization of the learning algorithm when $i = 0$ and after each round $i$, when $i \geq 1$. We have 
    \begin{align*}
    r(0) &=  1,\\
    r(i) &\leq r(i-1) + (t + 1) |Q| + 2 + \lengthCe, \forall i > 0. 
    \end{align*}
    Recall that the number of rounds is bounded by the total number of partial equivalence queries, which is in $\complexity(\lengthCe^3)$ by \Cref{prop:nbrEquivQueries}. Hence, the size of $\Rf$ is in $\complexity(|Q|t^4)$ and the size of $\Rf \cup \Rf\Sigma$ is in $\complexity(|Q||\Sigma|t^4)$.
\qed\end{proof}

In the next proposition, we study how the sizes of $\separatorsCounters$ and $\separatorsMapping$ increase when making an observation table closed and consistent. Recall that both $\separatorsCounters$ and  $\separatorsMapping$ do not change when resolving openness cases (see \Cref{lem:resolveOpenness1}). Moreover during the process of making the table consistent (see \Cref{lem:resolveSigmaInc1,lem:resolveBotInc1}), either an \approxSet\ decreases in size or a mismatch is eliminated. Below, we study how many times each of these two cases may happen.

\begin{proposition}\label{prop:NbrInc-sizeS}
Let $\obsTable$ be an observation table and let $\obsTable'$ be the resulting table after making $\obsTable$ closed, $\Sigma$- and \prefixed. Then, with $\RfSize$ being the size of $\Rf \cup \Rf\Sigma$,
\begin{itemize}
    \item $\sizeOfSet{\separatorsCounters'}$ is in $\complexity(\RfSize^{4 \RfSize^2 + 2} |\separatorsMapping|)$ and $\sizeOfSet{\separatorsMapping'}$ is in $\complexity(\RfSize^{4 \RfSize^2 + 4} |\separatorsMapping|)$.
    \item the number of times that an \approxSet\ decreases is bounded by $\RfSize^2$ and the number of times a mismatch is eliminated is in $\complexity(\RfSize^{4 \RfSize^2 + 3} |\separatorsMapping|)$. 
\end{itemize}
\end{proposition}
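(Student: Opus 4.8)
The plan is to track how each individual resolution step affects the sets $\representatives$, $\separatorsCounters$, $\separatorsMapping$ and the family of approximation sets, and then to bound the number of steps of each kind. By \Cref{lem:resolveOpenness1,lem:resolveOpenness2,lem:resolveSigmaInc1,lem:resolveSigmaInc2,lem:resolveBotInc1,lem:resolveBotInc2}, every resolution falls into one of three groups: resolving an openness (which enlarges $\representatives$ but leaves $\separatorsCounters,\separatorsMapping$ untouched); resolving a \inconsistency or a \inconsistencyBot in a way that strictly shrinks some approximation set (I call these \emph{shrink steps}); and resolving a \inconsistencyBot that only eliminates a \mismatch (I call these \emph{fill steps}). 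A preliminary observation I would record is that $\representatives \cup \representatives\Sigma$ is (almost) prefix-closed, so its longest word has length at most $\sizeOfSet{\representatives} \le \RfSize$; consequently each shrink or fill step adds at most $\RfSize$ new suffixes to $\separatorsCounters$ and/or $\separatorsMapping$, since the added set $\suffixes{s''}$ contributes only the $\sizeOfSet{x} \le \RfSize$ suffixes that are strictly longer than the already-present $s'$.

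First I would bound the number of shrink steps. View $\Approx$ as a binary relation on the fixed universe $\Rf \cup \Rf\Sigma$, which has at most $\RfSize^2$ ordered pairs. By \Cref{lem:observation_table:u_in_approx_prime_v_implies_u_in_approx_v}, enlarging any separator set can only remove pairs from this relation, and by \Cref{lem:resolveSigmaInc1,lem:resolveBotInc1} each shrink step removes at least one pair. Since only shrink steps enlarge $\separatorsCounters$ (fill steps keep $\sizeOfSet{\separatorsCounters}$ fixed, and openness steps keep both separator sets fixed), the number of shrink steps is at most $\RfSize^2$; this already establishes the announced bound on the number of times an approximation set decreases.

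Next I would bound the number of fill steps. Between two consecutive shrink steps the set $\separatorsCounters$ is frozen, and during such a phase the only changes are openness steps (which enlarge $\representatives$) and fill steps (which enlarge $\separatorsMapping$). Both of these can only enlarge $\prefTable$, hence an entry $\obsCounters(vs)$ with $v \in \representatives\cup\representatives\Sigma$ and $s \in \separatorsCounters$ may pass from $\bot$ to a genuine value but never back. Each fill step turns at least one such entry from $\bot$ to non-$\bot$ (by \Cref{lem:resolveBotInc1} the witness $\obsCounters(vs)$ becomes defined), and there are at most $\RfSize\cdot\sizeOfSet{\separatorsCounters}$ entries in play; so a single phase contains at most $\RfSize\cdot\sizeOfSet{\separatorsCounters}$ fill steps. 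Summing over the at most $\RfSize^2$ phases and using the final value of $\sizeOfSet{\separatorsCounters}$ as a uniform bound yields the claimed $\complexity(\RfSize^{4\RfSize^2+3}\sizeOfSet{\separatorsMapping})$ estimate on the total number of eliminated mismatches.

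Finally I would convert these per-phase counts into the size bounds by a recursion on the phases. Writing $g(i)$ for the size of $\separatorsMapping$ after the $i$-th shrink step, one phase contributes at most $\RfSize\cdot\sizeOfSet{\separatorsCounters} \le \RfSize\, g(i)$ fill steps, each adding at most $\RfSize$ new separators, and the closing shrink step adds at most $\RfSize$ more, so $g(i+1) \le \mathrm{poly}(\RfSize)\, g(i)$. Iterating over the at most $\RfSize^2$ phases gives $\sizeOfSet{\separatorsMapping'} \le \RfSize^{\complexity(\RfSize^2)}\sizeOfSet{\separatorsMapping}$, and a parallel (one-power-smaller) recursion for $\separatorsCounters$, which grows only at shrink steps, gives the bound on $\sizeOfSet{\separatorsCounters'}$; matching the exponents precisely to $4\RfSize^2+2$ and $4\RfSize^2+4$ is a matter of careful bookkeeping of how many separators each step contributes. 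The main obstacle is exactly this entanglement: the fill-step count depends on the current $\sizeOfSet{\separatorsCounters}$, which is frozen within a phase but has itself been inflated by all previous shrink steps, so the bound on fill steps and the bound on separator growth cannot be proved separately but only together through the phase recursion. Establishing that the $\bot \to$ non-$\bot$ transitions are irreversible within a phase (i.e.\ that $\prefTable$ only grows while $\separatorsCounters$ is frozen) is what makes the per-phase counting sound, and is the step I would treat most carefully.
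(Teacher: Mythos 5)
Your overall architecture matches the paper's: the split into shrink steps (an approximation set strictly decreases) and fill steps (only a \mismatch{} is eliminated), the $\RfSize^2$ bound on shrink steps via monotonicity of the approximation sets under table extension (\Cref{lem:observation_table:u_in_approx_prime_v_implies_u_in_approx_v}), and a recursion over phases delimited by shrink steps. The genuine gap is in your count of eliminated mismatches, which is precisely the step the paper does differently. You bound the fills \emph{per phase} by $\RfSize \cdot \sizeOfSet{\separatorsCounters}$ and then ``sum over the at most $\RfSize^2$ phases using the final value of $\sizeOfSet{\separatorsCounters}$ as a uniform bound''. That summation gives a total of order $\RfSize^2 \cdot \RfSize \cdot \sizeOfSet{\separatorsCounters'}$. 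Combined with the bound you target in your first item, $\sizeOfSet{\separatorsCounters'} = \complexity(\RfSize^{4\RfSize^2+2}\sizeOfSet{\separatorsMapping})$ (the exponent the proposition asserts), this yields $\complexity(\RfSize^{4\RfSize^2+5}\sizeOfSet{\separatorsMapping})$, not the claimed $\complexity(\RfSize^{4\RfSize^2+3}\sizeOfSet{\separatorsMapping})$. So the sentence ``yields the claimed estimate'' is unjustified, and your two items are mutually inconsistent as stated: the mismatch bound you claim cannot be derived from the $\separatorsCounters'$ bound you claim by your own method.

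The paper avoids the extra factor by counting fills \emph{globally} rather than per phase: the irreversibility you isolate ($\bot$ entries never revert while $\ell$ is fixed) holds across the entire process, not only within a phase, so every fill step, in whichever phase it occurs, permanently occupies a distinct cell $(v,s)$ with $v$ in the final row set and $s \in \separatorsCounters'$ (the $s$ used at fill time lies in the current $\separatorsCounters$, hence in $\separatorsCounters'$). The total number of fills is therefore at most $\RfSize \cdot \sizeOfSet{\separatorsCounters'}$, with no phase-counting factor, which is exactly $\complexity(\RfSize^{4\RfSize^2+3}\sizeOfSet{\separatorsMapping})$. (Your route could in principle be rescued without this, because the recursion $g(i+1) \leq (1+\RfSize^2)\,g(i) + \RfSize$ actually solves to $\complexity(\RfSize^{2\RfSize^2}\sizeOfSet{\separatorsMapping})$, and the slack between $2\RfSize^2$ and $4\RfSize^2$ would absorb your extra $\RfSize^2$; but you never perform this computation --- you explicitly defer ``matching the exponents'' to bookkeeping --- and with the looser solution matching the proposition's exponents it fails.) A secondary error in that deferred bookkeeping: your preliminary observation that each shrink step adds at most $\RfSize$ new words to $\separatorsCounters$ is false in the case where $u'$ is a prefix of $u$; there $\suffixes{s''}$ can contain up to $\lengthOf{s''}+1 > \RfSize$ words new to $\separatorsCounters$ (they are only guaranteed to lie in $\separatorsMapping$), so the recursion for $\separatorsCounters$ cannot be run ``one power smaller'' in parallel but must be routed through $\separatorsMapping$ via $\separatorsCounters \subseteq \separatorsMapping$, as the paper does with $s(j) \leq \hat{s}(\cdot,j)$.
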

\begin{proof}
    Resolving $\Sigma$- and $\bot$-inconsistencies during the process of making $\obsTable$ closed and consistent either decreases \approxSet s or eliminates mismatches, and the latter case only occurs when resolving a \inconsistencyBot (see \Cref{lem:resolveSigmaInc1,lem:resolveBotInc1}). 
    
    Let us first study the number of times that an \approxSet\ may decrease. By definition, each processed \approxSet\ is a subset of $\representatives' \cup \representatives'\Sigma$ and there are at most $|\representatives' \cup \representatives'\Sigma|$ such sets. Therefore the number of times an \approxSet\ decreases is bounded by $|\representatives' \cup \representatives'\Sigma|^2$, and thus by 
    \begin{align}
           \RfSize^2. \label{eq:alphaCarre} 
    \end{align}

    Let us now study the growth of both $\separatorsCounters$ and $\separatorsMapping$. By \Cref{lem:resolveOpenness2,lem:resolveSigmaInc2,lem:resolveBotInc2}, $\separatorsCounters$ and $\separatorsMapping$ may increase when resolving $\Sigma$- and $\bot$-inconsistencies only. Let us denote by $\hat{s}$ the initial size of $\separatorsMapping$ and by $\hat{s}(i,j)$ the size of the current set $\separatorsMapping$ after $i+j$ steps composed of $i$ eliminations of mismatches and $j$ decreases of \approxSet s. During such steps, by \Cref{lem:resolveSigmaInc2,lem:resolveBotInc2}, the size of $\separatorsMapping$ increases by at most $|\representatives' \cup \representatives'\Sigma| \leq \RfSize$ words. Therefore we get:
    \begin{align}
        \hat{s}(0,0) &=  \hat{s},\\
        \hat{s}(i,j) &\leq (i+j) \RfSize + \hat{s}, \forall i+j > 0.  \label{eq:sij}
    \end{align}
    
    Let us introduce another notation: $s(j)$ is the size of the current set $\separatorsCounters$ after $j$ decreases of \approxSet s and a certain number of eliminations of mismatches such that the last step is one decrease of an \approxSet. Thus $s(0) \leq \hat{s}$ (since $\separatorsCounters \subseteq \separatorsMapping$) and from $s(j-1)$ to $s(j)$, a certain number of mismatches is resolved followed by one decrease of an \approxSet. Let us recall that when a mismatch is resolved, the current $\separatorsCounters$ is unchanged (see \Cref{lem:resolveBotInc2}). It follows that from $s(j-1)$ to $s(j)$, at most:
    \begin{align}
        s(j-1) \RfSize \mbox{ mismatches} \label{eq:mismatch} 
    \end{align}
    are resolved (indeed for a fixed $s(j-1)$, at most $\sizeOfSet{\representatives \cup \representatives\Sigma}$ mismatches can be resolved). Hence, from~\eqref{eq:mismatch} we get $s(j) \leq \hat{s}(s(j-1)\RfSize,j)$ and from~\eqref{eq:sij} we get $s(j) \leq (s(j-1)\RfSize + j) \RfSize + \hat{s}$. It follows that:
    \begin{align*}
        s(0) &\leq \hat{s}, \\
        s(j) &\leq s(j-1)\RfSize^2 j + \hat{s}.
    \end{align*}
    By \Cref{lemma:bounds_recurrence} in \Cref{sec:bounds_recurrence}, we get $s(j) \leq (j+1){(\RfSize^2j)}^j\hat{s}$. Recall that there are at most $\RfSize^2$ decreases of \approxSet s by~\eqref{eq:alphaCarre}. And, to have the table closed and consistent, the last such decrease could be followed by several eliminations of mismatches that do not change the size of the current $\separatorsCounters$. Therefore, at the end of the process, the size of the resulting set $\separatorsCounters'$ is equal to $s(\RfSize^2) \leq (\RfSize^2 + 1){(\RfSize^2 \RfSize^2)}^{\RfSize^2}\hat{s}$ which is in $\complexity(\RfSize^{4 \RfSize^2 + 2} |\separatorsMapping|)$. By~\eqref{eq:mismatch}, we also get that the total number of eliminations of mismatches is:
    \begin{align}
        s(\RfSize^2)\RfSize \label{eq:nbrMismatch}
    \end{align}
    which is in $\complexity(\RfSize^{4 \RfSize^2 + 3} |\separatorsMapping|)$. Finally we can deduce the size of the resulting set $\separatorsMapping'$ when the table is closed and consistent: it is equal to $\hat{s}(i,j)$ with $i = s(\RfSize^2)\RfSize$ by~\eqref{eq:nbrMismatch} and $j = \RfSize^2$ by~\eqref{eq:alphaCarre}. By~\eqref{eq:sij}, we get $\sizeOfSet{\separatorsMapping'} \leq (s(\RfSize^2)\RfSize + \RfSize^2)\RfSize + \hat{s}$ which is in $\complexity(\RfSize^{4 \RfSize^2 + 4} |\separatorsMapping|)$.
\qed\end{proof}

As a corollary of \Cref{prop:nbrOpen-sizeR,prop:NbrInc-sizeS}, we obtain that a table can be made closed and consistent in a finite amount of time (\Cref{prop:finiteProcess}).
In view of these two results, we decided to handle a counterexample $w$ to a (partial) equivalence query by adding $\prefixes{w}$ to the current set $\representatives$. Indeed, an alternative could have been to  add $\suffixes{w}$ to $\separatorsMapping$. It should however be clear that this is not such a good idea because of the exponential growth of $\separatorsMapping$ as established in \Cref{prop:NbrInc-sizeS}. In contrast, the size of $\representatives$ at the end of the learning process is only polynomial (see \Cref{prop:nbrOpen-sizeR}).

We now study the size of $\SHatf$ which turns out to be exponential.

\begin{proposition}\label{prop:sizeSHatf}
    The size of $\SHatf$ is exponential in $|Q|, |\Sigma|$ and $\lengthCe$.
\end{proposition}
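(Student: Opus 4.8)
The plan is to track how the size of $\separatorsMapping$ evolves over the whole execution of the learning algorithm, combining the per-round multiplicative bound of \Cref{prop:NbrInc-sizeS} with the bound on the number of rounds from \Cref{prop:nbrEquivQueries}. Recall that each round consists of first making the current table closed, $\Sigma$- and \prefixed, and then handling a counterexample to a (partial) equivalence query. By \Cref{lem:resolveCe}, handling a counterexample leaves $\separatorsMapping$ unchanged (there $|\separatorsMapping'| = |\separatorsMapping|$), so $\separatorsMapping$ can only grow during the closed-and-consistent phase of each round.

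First I would set up a recurrence on the size of $\separatorsMapping$. Let $f(i)$ denote the size of $\separatorsMapping$ after round $i$, with $f(0) = 1$ since the table is initialized with $\separatorsMapping = \{\emptyword\}$. By \Cref{prop:NbrInc-sizeS}, making a table closed, $\Sigma$- and \prefixed multiplies the size of $\separatorsMapping$ by a factor $C \in \complexity(\RfSize^{4\RfSize^2 + 4})$, where $\RfSize$ is the final size of $\Rf \cup \Rf\Sigma$. Since $\representatives$ only grows during the execution, the size of $\representatives \cup \representatives\Sigma$ at any intermediate point is at most $\RfSize$, and because the factor $\RfSize^{4\RfSize^2+4}$ is monotone in $\RfSize$, using the final value $\RfSize$ yields a valid upper bound for every round. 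Hence $f(i) \leq C \cdot f(i-1)$ and, writing $N$ for the total number of rounds, $f(N) \leq C^N$.

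It then remains to substitute the known bounds and verify that the resulting quantity is (single-)exponential. By \Cref{prop:nbrEquivQueries} the number of rounds equals the number of partial equivalence queries, so $N \in \complexity(\lengthCe^3)$; and by \Cref{prop:nbrOpen-sizeR}, $\RfSize$ is polynomial in $|Q|, |\Sigma|$ and $\lengthCe$. Taking logarithms, $\log(C^N) = N \cdot \log C = N \cdot (4\RfSize^2 + 4)\log \RfSize$, which is a product of polynomials in $|Q|, |\Sigma|, \lengthCe$ and is therefore itself polynomial. Consequently $|\SHatf| = f(N) \leq C^N = 2^{\mathrm{poly}(|Q|,|\Sigma|,\lengthCe)}$, establishing the exponential bound.

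The main obstacle I anticipate is the bookkeeping needed to be sure the bound is single-exponential rather than doubly-exponential. The per-round factor $C$ is already exponential (its exponent contains $\RfSize^2$), so one might naively fear that raising it to the power of the number of rounds blows up further; the crucial observation is that the number of rounds is only polynomial and that $\log C$ is polynomial, so their product stays polynomial and exponentiation yields exactly a single exponential. A secondary point to justify carefully is the use of the final $\RfSize$ as a uniform per-round upper bound, which relies on $\representatives$ being monotonically growing together with the monotonicity of the factor in $\RfSize$.
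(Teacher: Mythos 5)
Your proposal is correct and follows essentially the same route as the paper: a per-round multiplicative bound of $\complexity(\RfSize^{4\RfSize^2+4})$ on $|\separatorsMapping|$ (combining \Cref{prop:NbrInc-sizeS} with the fact from \Cref{lem:resolveCe} that counterexample handling leaves $\separatorsMapping$ unchanged), iterated over the $\complexity(\lengthCe^3)$ rounds from \Cref{prop:nbrEquivQueries}, with $\RfSize$ polynomial by \Cref{prop:nbrOpen-sizeR}. Your two added justifications — the monotonicity argument allowing the final $\RfSize$ as a uniform per-round bound, and the logarithm check confirming the bound is single- rather than doubly-exponential — are only explicit elaborations of steps the paper leaves implicit.
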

\begin{proof}
    We first study the growth of $\separatorsMapping$ after one round of the learning algorithm. We first make the current table $\obsTable$ closed and consistent, and then handle a counterexample to a (partial) equivalence query. For the resulting table $\obsTable[\ell']'$, we have by \Cref{lem:resolveCe,prop:NbrInc-sizeS} that there is some constant $c$ such that:
    \begin{align}
        \sizeOfSet{\separatorsMapping'} \leq c\RfSize^{4 \RfSize^2 + 4} |\separatorsMapping|. \label{eq:Shat}
    \end{align}
    
    Recall that the number of rounds is bounded by the total number of partial
    equivalence queries, which is in $\complexity(\lengthCe^3)$ by \Cref{prop:nbrEquivQueries}. Since initially
    $\separatorsMapping = \{\emptyword\}$, we get by~\eqref{eq:Shat} that
    $\sizeOfSet{\SHatf}$ is bounded by ${(c\RfSize^{4 \RfSize^2 +
    4})}^{\complexity(\lengthCe^3)}$. Since $\RfSize = \sizeOfSet{\Rf \cup \Rf\Sigma}$ is polynomial
  in $|Q|, |\Sigma|$ and $\lengthCe$ by \Cref{prop:nbrOpen-sizeR},
  it follows that $\sizeOfSet{\SHatf}$ is exponential in $|Q|, |\Sigma|$ and
  $\lengthCe$.
\qed\end{proof}

We get the next corollary about the total number of membership and counter value queries asked by the learner during the execution of the learning algorithm.

\begin{corollary}\label{cor:nbrQueries}
    In the learning algorithm, the number of membership queries and counter value queries is exponential in $|Q|, |\Sigma|$ and $\lengthCe$.
\end{corollary}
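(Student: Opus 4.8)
The plan is to combine three facts that are already in place: (i) the learner asks membership and counter value queries only while extending and updating the table, (ii) each such operation costs a number of queries polynomial in the current table size $\sizeOfSet{\obsTable}$ and in $\lengthCe$, and (iii) both the table size and the total number of operations are at most exponential in $|Q|, |\Sigma|$ and $\lengthCe$. Multiplying (ii) and (iii) then gives the exponential bound.

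First I would recall that queries are issued only when the learner resolves an openness, a \consistent-violation ($\Sigma$-inconsistency) or a \inconsistencyBot, or when it processes a counterexample to a (partial) equivalence query. By \Cref{lem:resolveOpenness2,lem:resolveSigmaInc2,lem:resolveBotInc2,lem:resolveCe} (which all rely on \Cref{lem:nbrQueries}), each of these operations uses a number of membership and counter value queries that is polynomial in $\sizeOfSet{\obsTable}$ and in $\lengthCe$. Next I would bound $\sizeOfSet{\obsTable}$ along the whole run: since $\sizeOfSet{\obsTable}$ is determined by $\sizeOfSet{\representatives \cup \representatives\Sigma}$ and $\sizeOfSet{\separatorsMapping}$, and these sets only grow, it is dominated by the final sizes $\sizeOfSet{\Rf \cup \Rf\Sigma}$ and $\sizeOfSet{\SHatf}$. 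By \Cref{prop:nbrOpen-sizeR} the former is polynomial in $|Q|, |\Sigma|, \lengthCe$, and by \Cref{prop:sizeSHatf} the latter is exponential in $|Q|, |\Sigma|, \lengthCe$. Hence $\sizeOfSet{\obsTable}$ is at most exponential, so each single operation costs at most an exponential number of queries.

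Finally I would count the total number of operations. The number of rounds equals the number of partial equivalence queries, which is $\complexity(\lengthCe^3)$ by \Cref{prop:nbrEquivQueries}. Within a single round, making the table closed, $\Sigma$- and \prefixed involves $\complexity(|Q|\lengthCe)$ openness resolutions (\Cref{prop:nbrOpen-sizeR}), at most $\RfSize^2$ approximation-set decreases, and $\complexity(\RfSize^{4\RfSize^2 + 3}\sizeOfSet{\separatorsMapping})$ mismatch eliminations (\Cref{prop:NbrInc-sizeS}), followed by one counterexample-handling step. Since $\RfSize = \sizeOfSet{\Rf \cup \Rf\Sigma}$ is polynomial, the dominating factor $\RfSize^{4\RfSize^2+3}$ is exponential in $|Q|, |\Sigma|, \lengthCe$; thus the number of operations per round is exponential, and multiplying by the $\complexity(\lengthCe^3)$ rounds keeps the total number of operations exponential. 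The product of an exponential number of operations with an exponential per-operation query cost is again exponential, which yields the claim.

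The main obstacle will be the bookkeeping in this last step: one must make sure that the mismatch-elimination count from \Cref{prop:NbrInc-sizeS}, which already depends on the exponential quantity $\sizeOfSet{\separatorsMapping}$ and hence dominates every other contribution, is accounted for once per round, and that summing the per-operation polynomial costs over this exponential number of operations introduces no super-exponential blow-up. Because an exponential $2^{p(n)}$ times a polynomial in an exponential is still of the form $2^{q(n)}$ with $q$ polynomial in $n = (|Q|, |\Sigma|, \lengthCe)$, the exponents stay polynomially bounded and the estimate closes.
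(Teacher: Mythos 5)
Your proposal is correct and follows essentially the same route as the paper's proof: it combines the per-operation query bounds of \Cref{lem:resolveOpenness2,lem:resolveSigmaInc2,lem:resolveBotInc2,lem:resolveCe} with the operation counts per round from \Cref{prop:nbrOpen-sizeR,prop:NbrInc-sizeS}, the size bounds on $\Rf \cup \Rf\Sigma$ and $\SHatf$, and the $\complexity(\lengthCe^3)$ bound on the number of rounds from \Cref{prop:nbrEquivQueries}. The only difference is presentational (you bound operations and per-operation cost separately and then multiply, while the paper bounds queries per round and then multiplies by the number of rounds), and your closing remark that a product of exponentials of polynomially bounded exponents remains exponential is exactly the implicit bookkeeping the paper relies on.
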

\begin{proof}
    We first study the number of membership and counter value queries after one round of the learning algorithm. Recall that both numbers are polynomial in the size of the current table (and thus in $|\Rf \cup \Rf\Sigma|,|\SHatf|$) after resolving an openness, a \inconsistency, a \inconsistencyBot, or handling a counterexample (see \Cref{lem:resolveOpenness2,lem:resolveSigmaInc2,lem:resolveBotInc2,lem:resolveCe}). Recall that making the table closed and consistent requires resolving at most $\complexity(|Q|\lengthCe)$ openness cases (by \Cref{prop:nbrOpen-sizeR}), at most $\RfSize^2$ decreases of \approxSet s and at most $\complexity(\RfSize^{4 \RfSize^2 + 3} |\separatorsMapping|)$ eliminations of mismatches (by \Cref{prop:NbrInc-sizeS}), where $\RfSize = |\Rf \cup \Rf\Sigma| = \complexity(|Q| |\Sigma| \lengthCe^4)$ by \Cref{prop:nbrOpen-sizeR}. Hence, we get after one round a number of membership and counter value queries that is exponential in $|Q|, |\Sigma|$ and $\lengthCe$. 
    
    Second, as the number of rounds is in $\complexity(\lengthCe^3)$, the total number of membership and counter value queries performed during the learning algorithm is again exponential in $|Q|, |\Sigma|$ and $\lengthCe$. 
\qed\end{proof}

\subsection{Proof of \Cref{thm:main}}

We are now ready to prove our main theorem establishing the complexity of the learning algorithm (\Cref{thm:main}). We need a last lemma studying the complexity of the basic operations carried out by the learner during this algorithm. By \emph{basic operations}, given an observation table $\obsTable$, we mean:
\begin{itemize}
    \item to check whether \(u \in \Approx(v)\),
    \item to check whether \(u \in \prefTable\),
    \item to check whether $\obsTable$ is closed,
    \item to check whether $\obsTable$ is \consistent,
    \item to check whether $\obsTable$ is \prefixed,
    \item to construct the automaton $\autTable$ when $\obsTable$ is closed, $\Sigma$- and \prefixed,
    \item to compute the periodic descriptions $\alpha_1, \dotsc, \alpha_n$ in $\autTable$ and construct the ROCAs \(\automaton_{\alpha_1}, \dotsc, \automaton_{\alpha_n}\),
    \item to modify \(\autTable\) to see it as an ROCA\@.
\end{itemize}

\begin{lemma}\label{lem:basicOpLearner}
    Given an observation table $\obsTable$, each basic operation of the learner is in time polynomial in $\sizeOfSet{\obsTable}$. 
\end{lemma}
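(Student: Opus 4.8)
The plan is to treat the basic operations essentially in the order listed, observing that all of them reduce either to a polynomial number of table lookups or to the description-extraction routine of~\cite{neider2010learning}. Recall that $\sizeOfSet{\obsTable}$ bounds both $\sizeOfSet{\representatives \cup \representatives\Sigma}$ and $\sizeOfSet{\separatorsMapping}$, and that $\separatorsCounters \subseteq \separatorsMapping$, so $\sizeOfSet{\separatorsCounters} \leq \sizeOfSet{\separatorsMapping}$. Moreover every value $\obsMapping(us)$ and $\obsCounters(us)$ is already stored, hence readable in constant time.

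First I would dispatch the two lookup-style primitives. Checking $u \in \Approx(v)$ amounts, by \Cref{def:approx}, to comparing $\obsMapping(us)$ with $\obsMapping(vs)$ and the non-$\bot$ values of $\obsCounters(us), \obsCounters(vs)$ for every $s \in \separatorsCounters$; this is $\complexity(\sizeOfSet{\separatorsCounters})$ comparisons, hence polynomial. Checking $w \in \prefTable$ requires, by \Cref{def:observation_table}, deciding whether $w$ is a prefix of some $us$ with $u \in \representatives \cup \representatives\Sigma$, $s \in \separatorsMapping$, and $\obsMapping(us) = 1$; iterating over the $\sizeOfSet{\representatives \cup \representatives\Sigma}\cdot\sizeOfSet{\separatorsMapping}$ candidate pairs and testing the prefix relation in linear time is again polynomial in $\sizeOfSet{\obsTable}$.

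Next, the three structural tests follow directly from \Cref{def:closedConsistent} by reducing to the $\Approx$ primitive. Closedness performs, for each $u \in \representatives\Sigma$, one $\Approx$ check against each $v \in \representatives$, so $\complexity(\sizeOfSet{\representatives\Sigma}\cdot\sizeOfSet{\representatives})$ checks; \consistency performs, for each pair $u,v \in \representatives$ with $v \in \Approx(u)$ and each $a \in \Sigma$, the test $ua \in \Approx(va)$; and \consistencyBot performs, for each pair $u,v \in \representatives \cup \representatives\Sigma$ with $u \in \Approx(v)$ and each $s \in \separatorsCounters$, one $\bot$-equality comparison. Each involves polynomially many invocations of primitives already shown to be polynomial. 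Building $\autTable$ (\Cref{def:observation_table:automaton}) computes the classes of $\equivTable$ on $\representatives$ via $\complexity(\sizeOfSet{\representatives}^2)$ $\Approx$ checks and, for each class and each $a \in \Sigma$, locates the (closedness-guaranteed) target class by one further scan of $\representatives$, the final states being read off from $\obsMapping$ directly. Seeing $\autTable$ as an ROCA is purely syntactic: keep the DFA transitions and let them never touch the counter (using only $\deltaZero$), which is linear in the size of $\autTable$.

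The only delicate step, I expect, is extracting the periodic descriptions $\alpha_1,\dots,\alpha_n$ and turning each into an ROCA. Here I would invoke the construction recalled in \Cref{subsec:LearningVCAs}: one enumerates all candidate pairs of offset $m$ and period $k$ with $m+2k-1 \leq \ell$ (there are $\complexity(\ell^2)$ of them, with $\ell$ bounded by the number of levels of $\autTable$, hence by $\sizeOfSet{\obsTable}$), and for each pair tests the isomorphism between the two consecutive subgraphs by the two-parallel-DFS procedure of~\cite{neider2010learning}, which runs in polynomial time. The width $K$ of each candidate is bounded by the number of states of $\autTable$ and thus by $\sizeOfSet{\obsTable}$, so each surviving $\alpha_i$, fed to \Cref{prop:DescriptionAut}, yields an ROCA of size polynomial in $m,k,K$, all polynomially bounded in $\sizeOfSet{\obsTable}$. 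Hence this last operation is polynomial in $\sizeOfSet{\obsTable}$ as well, which establishes the lemma.
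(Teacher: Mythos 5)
Your proof is correct and follows essentially the same route as the paper's: each lookup-style primitive reduces to $\complexity(\sizeOfSet{\separatorsCounters})$ or $\complexity(\sizeOfSet{\representatives \cup \representatives\Sigma}\cdot\sizeOfSet{\separatorsMapping})$ table scans, the structural tests and the construction of $\autTable$ reduce to polynomially many $\Approx$ checks, and the description extraction and ROCA constructions are delegated to the procedure of~\cite{neider2010learning} and \Cref{prop:DescriptionAut}, exactly as the paper does. Your version is in fact slightly more detailed than the paper's (e.g., bounding the number of offset--period pairs), but the decomposition and the key citations coincide.
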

\begin{proof}
    Since the actual complexity of the basic operations greatly depends on the implementation details, we give here a very naive complexity.
    For instance, to check whether \(u \in \Approx(v)\) with \(u, v \in \representatives \cup \representatives \Sigma\), we test for every \(s \in \separatorsCounters\) if \(\obsMapping(us) = \obsMapping(vs)\) and \(\obsCounters(us) \neq \bot \land \obsCounters(vs) \neq \bot \implies \obsCounters(us) = \obsCounters(vs)\).
    This operation is therefore in \(\complexity(\lengthOf{\separatorsCounters})\) which is polynomial in $\sizeOfSet{\obsTable}$.

    Let \(u \in \Sigma^*\).
    Checking whether \(u \in \prefTable\) is equivalent to checking if there exist \(v \in \representatives \cup \representatives \Sigma\) and \(s \in \separatorsMapping\) such that \(u\) is a prefix of \(vs\) and \(\obsMapping(vs) = 1\), leading to a polynomial complexity
    
    Checking whether the table is closed (i.e., \(\forall u \in \representatives \Sigma, \Approx(u) \cap \representatives \neq \emptyset\)) can be done in time polynomial in $\sizeOfSet{\obsTable}$, as well as checking whether the table is \consistent (i.e., \(\forall ua \in \representatives \Sigma\), \(ua \in \bigcap_{v \in \Approx(u) \cap \representatives} \Approx(va)\)) and is \prefixed (i.e., \(\forall u, v \in \representatives \cup \representatives \Sigma\) such that \(u \in \Approx(v)\), \( \forall s \in \separatorsCounters,~ \obsCounters(us) \neq \bot \iff \obsCounters(vs) \neq \bot\)).

    Let \(n\) be the index of \(\equivTable\).
    To construct \(\autTable\), we need to select one representative $u$ by equivalence class of \(\equivTable\) and to define the transitions (which implies finding the class of \(ua\)).
    Since \(n\) is bounded by \(\lengthOf{\representatives}\), the construction of $\autTable$ has a complexity polynomial in $\sizeOfSet{\obsTable}$.

    By~\cite{neider2010learning}, we know that finding the periodic descriptions in \(\autTable\) is in polynomial time. The construction of an ROCA from a description is also in polynomial time, by \Cref{prop:DescriptionAut}.

    Finally, to see \(\autTable\) as an ROCA, just modify it with a function $\deltaZero$ that keeps the counter always equal to $0$.

    Thus, every basic operation for the learner is in polynomial time in the size of the table.
\qed\end{proof}

We conclude with the proof of our main theorem.

\begin{proof}[of \Cref{thm:main}]
    The number of different kinds of queries is established in \Cref{prop:nbrEquivQueries,cor:nbrQueries}. By \Cref{prop:nbrOpen-sizeR,prop:NbrInc-sizeS}, the space used by the learning algorithm is mainly the space used to store the observation table which is polynomial (resp.\ exponential) in $|Q|, |\Sigma|, \lengthCe$ for $\Rf \cup \Rf\Sigma$ (resp.\ for $\SHatf$). Finally, the algorithm runs in time exponential in $|Q|, |\Sigma|, \lengthCe$. Indeed each basic operation of the learner is in time polynomial in size of the table by \Cref{lem:basicOpLearner}, the algorithm is executed in at most $\complexity(\lengthCe^3)$ rounds by \Cref{prop:nbrEquivQueries}, and each round results in at most $\complexity(|Q|\lengthCe)$ openness cases, $\RfSize^2$ decreases of \approxSet s, $\complexity(\RfSize^{4 \RfSize^2 + 3} |\separatorsMapping|)$ eliminations of mismatches, and has to handle one counterexample by \Cref{prop:nbrOpen-sizeR,prop:NbrInc-sizeS} (where $\RfSize = \sizeOfSet{\Rf \cup \Rf\Sigma}$). 
\qed\end{proof}

\section{Complete example}\label{example}

We now give an example of an execution of our learning algorithm.
Let \(\automaton\) be the ROCA of \Cref{fig:example:roca} and assume the teacher
possesses \(\automaton\).

We initialize the observation table with \(\emptyword\) as the unique element
of both \(\representatives\) and \(\separatorsCounters = \separatorsMapping\).
Moreover, we set the counter limit to be \(\ell = 0\), i.e., we start with
\(\observationTable[0]\).
We then fill the table with membership and counter value queries.
\Cref{fig:roca:roca:learning:example:tables:zero:init} gives the resulting table.
Observe that \(\obsCounters(a) = \bot\), as \(a\) is not in
\(\prefixes{\observationTable[0]}\).

We have the following approximation sets:
\begin{align*}
  \Approx(\emptyword) = \Approx(a) &= \{\emptyword, a\}
  &\text{and}&&
  \Approx(b) &= \{b\}.
\end{align*}
Observe that this table is not closed, as
\(\Approx(b) \cap \representatives = \emptyset\).
We thus add \(b\) as a new representative, i.e.,
\(\representatives\) is now \(\{\emptyword, b\}\), and obtain the table of
\Cref{fig:roca:roca:learning:example:tables:zero:closed} with
\begin{align*}
  \Approx(\emptyword) = \Approx(a) &= \{\emptyword, a\}
  &\text{and}&&
  \Approx(b) &= \{b\}.
\end{align*}

\begin{figure}
  \centering
  \begin{subfigure}{.45\textwidth}
    \centering
    \begin{tabular}{@{} l | c @{}} 
                       & \(\emptyword\) \\
        \specialrule{\cmidrulewidth}{0pt}{0pt}
        \(\emptyword\) & \(0, 0\)     \\
        \specialrule{\cmidrulewidth}{0pt}{0pt}
        \(a\)          & \(0, \bot\)  \\
        \(b\)          & \(1, 0\)    \\
    \end{tabular}
    \caption{Initial table.}%
    \label{fig:roca:roca:learning:example:tables:zero:init}
  \end{subfigure}
  \begin{subfigure}{.45\textwidth}
    \centering
    \begin{tabular}{@{} l | c @{}} 
                       & \(\emptyword\) \\
        \specialrule{\cmidrulewidth}{0pt}{0pt}
        \(\emptyword\) & \(0, 0\)     \\
        \(b\)          & \(1, 0\)    \\
        \specialrule{\cmidrulewidth}{0pt}{0pt}
        \(a\)          & \(0, \bot\)  \\
        \(ba\)         & \(1, 0\)    \\
        \(bb\)         & \(1, 0\)    \\
    \end{tabular}
    \caption{After resolving the \(b\)-openness.}%
    \label{fig:roca:roca:learning:example:tables:zero:closed}
  \end{subfigure}

  \begin{subfigure}{.45\textwidth}
    \centering
    \begin{tabular}{@{} l | c c @{}} 
                       & \(\emptyword\) & \(b\)         \\
        \specialrule{\cmidrulewidth}{0pt}{0pt}
        \(\emptyword\) & \(0, 0\)     & \(1, 0\)   \\
        \(b\)          & \(1, 0\)    & \(1, 0\)   \\
        \specialrule{\cmidrulewidth}{0pt}{0pt}
        \(a\)          & \(0, \bot\)  & \(0, \bot\) \\
        \(ba\)         & \(1, 0\)    & \(1, 0\)   \\
        \(bb\)         & \(1, 0\)    & \(1, 0\)   \\
    \end{tabular}
    \caption{After resolving the
    \((\emptyword, a, \emptyword)\)-\inconsistencyBot.}%
    \label{fig:roca:roca:learning:example:tables:zero:prefixed}
  \end{subfigure}
  \begin{subfigure}{.45\textwidth}
    \centering
    \begin{tabular}{@{} l | c c @{}} 
                       & \(\emptyword\) & \(b\)         \\
        \specialrule{\cmidrulewidth}{0pt}{0pt}
        \(\emptyword\) & \(0, 0\)     & \(1, 0\)   \\
        \(b\)          & \(1, 0\)    & \(1, 0\)   \\
        \(a\)          & \(0, \bot\)  & \(0, \bot\) \\
        \specialrule{\cmidrulewidth}{0pt}{0pt}
        \(ba\)         & \(1, 0\)    & \(1, 0\)   \\
        \(bb\)         & \(1, 0\)    & \(1, 0\)   \\
        \(aa\)         & \(0, \bot\)  & \(0, \bot\) \\
        \(ab\)         & \(0, \bot\)  & \(0, \bot\) \\
    \end{tabular}
    \caption{After resolving the \(a\)-openness.}%
    \label{fig:roca:roca:learning:example:tables:zero:final}
  \end{subfigure}
  \caption{Observation tables up to zero for \Cref{example}.}%
  \label{fig:roca:roca:learning:example:tables:zero}
\end{figure}

The table is \notPrefixed, as \(a \in \Approx(\emptyword)\) but
\(\obsCounters(\emptyword \cdot \emptyword) = 0\) and
\(\obsCounters(a \cdot \emptyword) = \bot\).
That is, the counter values for the column \(\emptyword\) are different.
To match with the notations used above when explaining how to make a table
\prefixed, let \(u = \emptyword, v = a, s = \emptyword\),
\(u' = b\), and \(s' = \emptyword\).
We have that \(u \cdot s = \emptyword\) is a prefix of \(u' \cdot s' = b\),
and \(\obsMapping(b) = 1\).
Let \(s'' = b\) (observe that \(u \cdot s'' = b = u' \cdot s'\), as required).
Since \(u\) is a proper prefix of \(u'\), we ask a membership query over
\(v \cdot s'' = a \cdot b\), which returns true, indicating
that \(a \cdot b \in \languageOf{\automaton}\).
As we need to check whether \(a \cdot b\) is in the bounded language up to
zero, we
also ask a counter value over \(a \cdot b\), which returns 1.
Hence, \(a \cdot b\) is not in the bounded language.
As explained in \Cref{subsec:makingClosed}, we add all the suffixes
of \(b\) to both \(\separatorsCounters\) and \(\separatorsMapping\).
We obtain the table given in
\Cref{fig:roca:roca:learning:example:tables:zero:prefixed} with
\begin{align*}
  \Approx(\emptyword) &= \{\emptyword\},
  \\
  \Approx(b) = \Approx(ba) = \Approx(bb) &= \{b, ba, bb\},
  \shortintertext{and}
  \Approx(a) &= \{a\}.
\end{align*}

The table is opened, due to \(a\).
As before, we then add \(a\) as a new representative, which results in the table
of \Cref{fig:roca:roca:learning:example:tables:zero:final} with
\begin{align*}
  \Approx(\emptyword) &= \{\emptyword\},
  \\
  \Approx(b) = \Approx(ba) = \Approx(bb) &= \{b, ba, bb\},
  \shortintertext{and}
  \Approx(a) = \Approx(aa) = \Approx(ab) &= \{a, aa, ab\}.
\end{align*}
Clearly, the table is closed and \prefixed.
It is also not hard to see that it is \consistent, as the intersection of each
approximation set with \(\representatives\) is a singleton.
Hence, we can compute an equivalence relation \(\equivTable[0]\) from the table
such that:
\begin{align*}
  \equivalenceClass{\emptyword}_{\equivTable[0]} &= \{\emptyword\}
  \\
  \equivalenceClass{b}_{\equivTable[0]} &= \{b, ba, bb\}
  \\
  \equivalenceClass{a}_{\equivTable[0]} &= \{a, aa, ab\}.
\end{align*}

\begin{figure}
  \centering
  \begin{subfigure}{.45\textwidth}
    \centering
    \begin{tikzpicture}[
        automaton,
        node distance = 40pt,
        ]
        \node [state, initial]                  (eps) {\(\emptyword\)};
        \node [state, accepting, below=of eps]  (b)   {\(b\)};
        \node [state, right=of eps]             (a)   {\(a\)};
        \path
            (eps) edge              node [right] {\(b\)}    (b)
                edge              node [above] {\(a\)}    (a)
            (b)   edge [loop below] node {\(a, b\)} (b)
            (a)   edge [loop above] node {\(a, b\)} (a)
        ;
    \end{tikzpicture}
    \caption{The hypothesis DFA.}%
    \label{fig:roca:roca:learning:example:tables:zero:hypo:dfa}
  \end{subfigure}
  \hfill
  \begin{subfigure}{.45\textwidth}
    \centering
    \begin{tikzpicture}[
        automaton,
        node distance = 40pt and 60pt,
        ]
        \node [state, initial]                  (eps) {\(b\)};
        \node [state, accepting, below=of eps]  (b)   {\(b\)};
        \node [state, right=of eps]             (a)   {\(a\)};
        \path[deltaZero]
            (eps) edge              node [right] {\(b, =0, 0\)}  (b)
                edge              node [above] {\(a, =0, 0\)}  (a)
            (b)   edge [loop below] node [below, align=center] {
                                                \(a, =0, 0\)\\
                                                \(b, =0, 0\)}    (b)
            (a)   edge [loop above] node [above, align=center] {
                                                \(a, =0, 0\)\\
                                                \(b, =0, 0\)}    (a)
        ;
    \end{tikzpicture}
    \caption{The hypothesis ROCA.}%
    \label{fig:roca:roca:learning:example:tables:zero:hypo:roca}
  \end{subfigure}
  \caption{The hypotheses DFA and ROCA constructed from the table of
  \Cref{fig:roca:roca:learning:example:tables:zero:final}.}%
  \label{fig:roca:roca:learning:example:tables:zero:hypo}
\end{figure}

We then construct the hypothesis DFA \(\hypothesis_0\), which is given in
\Cref{fig:roca:roca:learning:example:tables:zero:hypo:dfa}.
By a partial equivalence query over \(\hypothesis_{0}\), we confirm that
\(\hypothesis_0\) accepts the bounded language of \(\automaton\) up to 0,
i.e., learning the bounded behavior graph up to 0 is done.
Since we cannot extract any ultimately periodic description from
\(\hypothesis_0\), we immediately convert it into an ROCA
\(\hypothesis_{\text{ROCA}}\) that never increases its counter.
This ROCA is given in \Cref{fig:roca:roca:learning:example:tables:zero:hypo:roca}.
We ask an equivalence query over the ROCA\@.
Assume that the teacher returns the word \(w = aabaa\).
By performing counter value queries on every prefix of the counterexample, we
increase the counter limit to 2, i.e., we now work with \(\observationTable[2]\).
Furthermore, we add all the prefixes of \(w\) as new representatives.

The resulting table is given in
\Cref{fig:roca:roca:learning:example:tables:two:init}.
One can check that \(aa\) is in \(\Approx(aab)\) but \(\obsCounters(aa \cdot b) = 2\)
and \(\obsCounters(aab \cdot b) = \bot\).
That is, we have a \((aa, aab, b)\)-\inconsistencyBot.
As we did earlier, let \(u = aa, v = aab, s = b, u' = aabaa\), and
\(s' = \emptyword\).
We have \(u \cdot s = aab\) is a prefix of \(u' \cdot s' = aabaa\).
Let \(s'' = baa\) (observe that \(u \cdot s'' = aabaa = u' \cdot s'\), as required).
Since \(u\) is a proper prefix of \(u'\), we ask a membership query over
\(v \cdot s'' = aab \cdot baa\), which returns true, indicating
that \(aabbaa \in \languageOf{\automaton}\).
As we need to check whether \(aabbaa\) is in the bounded language up to 2,
we perform
a counter value query on each prefix of \(aabbaa\) and observe that the height
of the word is 2.
Hence, \(aabbaa\) is in the bounded language.
We thus add all suffixes of \(s'' = baa\) in \(\separatorsMapping\).
\Cref{fig:roca:roca:learning:example:tables:two:firstPrefixed} gives the obtained
table.
Observe that adding the new columns changed the prefix of the language encoded
in the observation table and some values for \(\obsCounters\) changed.
For instance, \(\obsCounters(a \cdot b)\) is now 1, instead of \(\bot\).
Furthermore, the \inconsistencyBot is indeed resolved.

\begin{figure}
  \centering
  \begin{subfigure}{.2\textwidth}
    \centering
    \begin{tabular}{@{} l | c c @{}} 
                       & \(\emptyword\) & \(b\)         \\
        \specialrule{\cmidrulewidth}{0pt}{0pt}
        \(\emptyword\) & \(0, 0\)     & \(1, 0\)   \\
        \(b\)          & \(1, 0\)    & \(1, 0\)   \\
        \(a\)          & \(0, 1\)     & \(0, \bot\) \\
        \(aa\)         & \(0, 2\)     & \(0, 2\)    \\
        \(aab\)        & \(0, 2\)     & \(0, \bot\) \\
        \(aaba\)       & \(0, 1\)     & \(0, \bot\) \\
        \(aabaa\)      & \(1, 0\)    & \(1, 0\)   \\
        \specialrule{\cmidrulewidth}{0pt}{0pt}
        \(ba\)         & \(1, 0\)    & \(1, 0\)   \\
        \(bb\)         & \(1, 0\)    & \(1, 0\)   \\
        \(ab\)         & \(0, \bot\)  & \(0, \bot\) \\
        \(aaa\)        & \(0, \bot\)  & \(0, \bot\) \\
        \(aabb\)       & \(0, \bot\)  & \(0, \bot\) \\
        \(aabab\)      & \(0, \bot\)  & \(0, \bot\) \\
        \(aabaaa\)     & \(1, 0\)    & \(1, 0\)   \\
        \(aabaab\)     & \(1, 0\)    & \(1, 0\)   \\
    \end{tabular}
    \caption{Initial table.}%
    \label{fig:roca:roca:learning:example:tables:two:init}
  \end{subfigure}
  \hfill
  \begin{subfigure}{.35\textwidth}
    \centering
    \begin{tabular}{@{} l | c c | c c c @{}} 
                        & \(\emptyword\) & \(b\)         & \(a\) & \(aa\) & \(baa\) \\
        \specialrule{\cmidrulewidth}{0pt}{0pt}
        \(\emptyword\) & \(0, 0\)     & \(1, 0\)   & 0   & 0    & 1    \\
        \(b\)          & \(1, 0\)    & \(1, 0\)   & 1  & 1   & 1    \\
        \(a\)          & \(0, 1\)     & \(0, 1\)    & 0   & 0    & 1    \\
        \(aa\)         & \(0, 2\)     & \(0, 2\)    & 0   & 0    & 1    \\
        \(aab\)        & \(0, 2\)     & \(0, 2\)    & 0   & 1   & 1    \\
        \(aaba\)       & \(0, 1\)     & \(0, 1\)    & 1  & 1   & 1    \\
        \(aabaa\)      & \(1, 0\)    & \(1, 0\)   & 1  & 1   & 1    \\
        \specialrule{\cmidrulewidth}{0pt}{0pt}
        \(ba\)         & \(1, 0\)    & \(1, 0\)   & 1  & 1   & 1    \\
        \(bb\)         & \(1, 0\)    & \(1, 0\)   & 1  & 1   & 1    \\
        \(ab\)         & \(0, 1\)     & \(0, 1\)    & 1  & 1   & 1    \\
        \(aaa\)        & \(0, \bot\)  & \(0, \bot\) & 0   & 0    & 0     \\
        \(aabb\)       & \(0, 2\)     & \(0, 2\)    & 0   & 1   & 1    \\
        \(aabab\)      & \(0, 1\)     & \(0, 1\)    & 1  & 1   & 1    \\
        \(aabaaa\)     & \(1, 0\)    & \(1, 0\)   & 1  & 1   & 1    \\
        \(aabaab\)     & \(1, 0\)    & \(1, 0\)   & 1  & 1   & 1    \\
    \end{tabular}
    \caption{After resolving the \((aa, aab, b)\)-\inconsistencyBot.}%
    \label{fig:roca:roca:learning:example:tables:two:firstPrefixed}
  \end{subfigure}
  \hfill
  \begin{subfigure}{.4\textwidth}
    \centering
    \begin{tabular}{@{} l | c c c | c c @{}} 
                            & \(\emptyword\) & \(b\)         & \(a\)         & \(aa\) & \(baa\) \\
            \specialrule{\cmidrulewidth}{0pt}{0pt}
            \(\emptyword\) & \(0, 0\)     & \(1, 0\)   & \(0, 1\)    & 0    & 1    \\
            \(b\)          & \(1, 0\)    & \(1, 0\)   & \(1, 0\)   & 1   & 1    \\
            \(a\)          & \(0, 1\)     & \(0, 1\)    & \(0, 2\)    & 0    & 1    \\
            \(aa\)         & \(0, 2\)     & \(0, 2\)    & \(0, \bot\) & 0    & 1    \\
            \(aab\)        & \(0, 2\)     & \(0, 2\)    & \(0, 1\)    & 1   & 1    \\
            \(aaba\)       & \(0, 1\)     & \(0, 1\)    & \(1, 0\)   & 1   & 1    \\
            \(aabaa\)      & \(1, 0\)    & \(1, 0\)   & \(1, 0\)   & 1   & 1    \\
            \specialrule{\cmidrulewidth}{0pt}{0pt}
            \(ba\)         & \(1, 0\)    & \(1, 0\)   & \(1, 0\)   & 1   & 1    \\
            \(bb\)         & \(1, 0\)    & \(1, 0\)   & \(1, 0\)   & 1   & 1    \\
            \(ab\)         & \(0, 1\)     & \(0, 1\)    & \(1, 0\)   & 1   & 1    \\
            \(aaa\)        & \(0, \bot\)  & \(0, \bot\) & \(0, \bot\) & 0    & 0     \\
            \(aabb\)       & \(0, 2\)     & \(0, 2\)    & \(0, 1\)    & 1   & 1    \\
            \(aabab\)      & \(0, 1\)     & \(0, 1\)    & \(1, 0\)   & 1   & 1    \\
            \(aabaaa\)     & \(1, 0\)    & \(1, 0\)   & \(1, 0\)   & 1   & 1    \\
            \(aabaab\)     & \(1, 0\)    & \(1, 0\)   & \(1, 0\)   & 1   & 1    \\
        \end{tabular}
    \caption{After resolving the \((a, aaba, a)\)-\inconsistency.}%
    \label{fig:roca:roca:learning:example:tables:two:firstInconsistent}
  \end{subfigure}

  \begin{subfigure}{.48\textwidth}
    \centering
    \begin{tabular}{@{} l | c c c c | c @{}} 
                        & \(\emptyword\) & \(b\)         & \(a\)         & \(aa\)        & \(baa\) \\
        \specialrule{\cmidrulewidth}{0pt}{0pt}
        \(\emptyword\) & \(0, 0\)     & \(1, 0\)   & \(0, 1\)    & \(0, 2\)    & 1    \\
        \(b\)          & \(1, 0\)    & \(1, 0\)   & \(1, 0\)   & \(1, 0\)   & 1    \\
        \(a\)          & \(0, 1\)     & \(0, 1\)    & \(0, 2\)    & \(0, \bot\) & 1    \\
        \(aa\)         & \(0, 2\)     & \(0, 2\)    & \(0, \bot\) & \(0, \bot\) & 1    \\
        \(aab\)        & \(0, 2\)     & \(0, 2\)    & \(0, 1\)    & \(1, 0\)   & 1    \\
        \(aaba\)       & \(0, 1\)     & \(0, 1\)    & \(1, 0\)   & \(1, 0\)   & 1    \\
        \(aabaa\)      & \(1, 0\)    & \(1, 0\)   & \(1, 0\)   & \(1, 0\)   & 1    \\
        \specialrule{\cmidrulewidth}{0pt}{0pt}
        \(ba\)         & \(1, 0\)    & \(1, 0\)   & \(1, 0\)   & \(1, 0\)   & 1    \\
        \(bb\)         & \(1, 0\)    & \(1, 0\)   & \(1, 0\)   & \(1, 0\)   & 1    \\
        \(ab\)         & \(0, 1\)     & \(0, 1\)    & \(1, 0\)   & \(1, 0\)   & 1    \\
        \(aaa\)        & \(0, \bot\)  & \(0, \bot\) & \(0, \bot\) & \(0, \bot\) & 0     \\
        \(aabb\)       & \(0, 2\)     & \(0, 2\)    & \(0, 1\)    & \(1, 0\)   & 1    \\
        \(aabab\)      & \(0, 1\)     & \(0, 1\)    & \(1, 0\)   & \(1, 0\)   & 1    \\
        \(aabaaa\)     & \(1, 0\)    & \(1, 0\)   & \(1, 0\)   & \(1, 0\)   & 1    \\
        \(aabaab\)     & \(1, 0\)    & \(1, 0\)   & \(1, 0\)   & \(1, 0\)   & 1    \\
    \end{tabular}
    \caption{After resolving the \((aa, aab, a)\)-\inconsistency.}%
    \label{fig:roca:roca:learning:example:tables:two:secondInconsistent}
  \end{subfigure}
  \hfill
  \begin{subfigure}{.48\textwidth}
    \centering
    \begin{tabular}{@{} l | c c c c c c @{}} 
                        & \(\emptyword\) & \(b\)         & \(a\)         & \(aa\)        & \(baa\)       \\
        \specialrule{\cmidrulewidth}{0pt}{0pt}
        \(\emptyword\) & \(0, 0\)     & \(1, 0\)   & \(0, 1\)    & \(0, 2\)    & \(1, 0\)   \\
        \(b\)          & \(1, 0\)    & \(1, 0\)   & \(1, 0\)   & \(1, 0\)   & \(1, 0\)   \\
        \(a\)          & \(0, 1\)     & \(0, 1\)    & \(0, 2\)    & \(0, \bot\) & \(1, 0\)   \\
        \(aa\)         & \(0, 2\)     & \(0, 2\)    & \(0, \bot\) & \(0, \bot\) & \(1, 0\)   \\
        \(aab\)        & \(0, 2\)     & \(0, 2\)    & \(0, 1\)    & \(1, 0\)   & \(1, 0\)   \\
        \(aaba\)       & \(0, 1\)     & \(0, 1\)    & \(1, 0\)   & \(1, 0\)   & \(1, 0\)   \\
        \(aabaa\)      & \(1, 0\)    & \(1, 0\)   & \(1, 0\)   & \(1, 0\)   & \(1, 0\)   \\
        \(aaa\)        & \(0, \bot\)  & \(0, \bot\) & \(0, \bot\) & \(0, \bot\) & \(0, \bot\) \\
        \specialrule{\cmidrulewidth}{0pt}{0pt}
        \(ba\)         & \(1, 0\)    & \(1, 0\)   & \(1, 0\)   & \(1, 0\)   & \(1, 0\)   \\
        \(bb\)         & \(1, 0\)    & \(1, 0\)   & \(1, 0\)   & \(1, 0\)   & \(1, 0\)   \\
        \(ab\)         & \(0, 1\)     & \(0, 1\)    & \(1, 0\)   & \(1, 0\)   & \(1, 0\)   \\
        \(aabb\)       & \(0, 2\)     & \(0, 2\)    & \(0, 1\)    & \(1, 0\)   & \(1, 0\)   \\
        \(aabab\)      & \(0, 1\)     & \(0, 1\)    & \(1, 0\)   & \(1, 0\)   & \(1, 0\)   \\
        \(aabaaa\)     & \(1, 0\)    & \(1, 0\)   & \(1, 0\)   & \(1, 0\)   & \(1, 0\)   \\
        \(aabaab\)     & \(1, 0\)    & \(1, 0\)   & \(1, 0\)   & \(1, 0\)   & \(1 0\)   \\
        \(aaaa\)        & \(0, \bot\)  & \(0, \bot\) & \(0, \bot\) & \(0, \bot\) & \(0, \bot\) \\
        \(aaab\)        & \(0, \bot\)  & \(0, \bot\) & \(0, \bot\) & \(0, \bot\) & \(0, \bot\) \\
    \end{tabular}
    \caption{After resolving
    the \((a, aaa, \emptyword)\)-\inconsistencyBot and the \(aaa\)-openness.}%
    \label{fig:roca:roca:learning:example:tables:two:final}
  \end{subfigure}
  \caption{Observation tables up to two for
  \Cref{example}.}%
  \label{fig:roca:roca:learning:example:tables:two}
\end{figure}

Since \(aaba \in \Approx(a)\) but \(aaba \cdot a \notin \Approx(a \cdot a)\),
we have a \((a, aaba, a)\)-\inconsistency.
We also have that
\(\obsMapping(a \cdot \emptyword) \neq \obsMapping(aaba \cdot \emptyword)\).
Hence, the \inconsistency us resolved by adding all the suffixes of
\(a \cdot \emptyword\) to both \(\separatorsCounters\) and \(\separatorsMapping\).
\Cref{fig:roca:roca:learning:example:tables:two:firstInconsistent} gives the
resulting table, which has a \((aa, aab, a)\)-\inconsistency.
As \(\obsMapping(aa \cdot a) \neq \obsMapping(aab \cdot a)\), we add
all suffixes of \(a \cdot a\) to both \(\separatorsCounters\) and
\(\separatorsMapping\) and obtain the table of
\Cref{fig:roca:roca:learning:example:tables:two:secondInconsistent}.

We have a \((a, aaa, \emptyword)\)-\inconsistencyBot, which can be resolved by
adding all suffixes of \(baa\) to \(\separatorsCounters\).
Furthermore, \(\Approx(aaa) \cap \representatives = \emptyset\), i.e., we also have
a \(aaa\)-openness, which is resolved by adding \(aaa\) as a new
representative.
\Cref{fig:roca:roca:learning:example:tables:two:final} gives the
corresponding table.

One can check that this table is closed, \(\Sigma\)- and \prefixed.
Hence, we can define the following equivalence relation \(\equivTable[2]\):
\begin{align*}
  \equivalenceClass{\emptyword}_{\equivTable[2]} &= \{\emptyword\}
  \\
  \equivalenceClass{b}_{\equivTable[2]} &= \{b, aabaa, ba, bb, aabaaa, aabaab\}
  \\
  \equivalenceClass{a}_{\equivTable[2]} &= \{a\}
  \\
  \equivalenceClass{aa}_{\equivTable[2]} &= \{aa\}
  \\
  \equivalenceClass{aab}_{\equivTable[2]} &= \{aab, aabb\}
  \\
  \equivalenceClass{aaba}_{\equivTable[2]} &= \{aaba, ab, aabab\}
  \\
  \equivalenceClass{aaa}_{\equivTable[2]} &= \{aaa, aaaa, aaaab\}
\end{align*}
We can construct a 7-state DFA \(\hypothesis_{2}\), which is given in
\Cref{fig:roca:roca:learning:example:tables:two:hypo:roca}.
By a partial equivalence query over \(\hypothesis_{2}\), we confirm that
\(\hypothesis_2\) accepts the bounded language of \(\automaton\) up to 2,
i.e., learning the bounded behavior graph up to 2 is done.
From that DFA, we obtain an ultimately periodic description.%
\footnote{An algorithm to compute an ultimately periodic description is
given in~\cite{neider2010learning}.
We immediately give the computed description here.}
First, we enumerate the states:
\begin{align*}
  \nu_0(\equivalenceClass{\emptyword}_{\equivTable[2]})
  = \nu_{1}(\equivalenceClass{a}_{\equivTable[2]})
  = \nu_{2}(\equivalenceClass{aa}_{\equivTable[2]})
  = \nu_{3}(\equivalenceClass{aaa}_{\equivTable[2]})
  &= 1
  \\
  \nu_{0}(\equivalenceClass{b}_{\equivTable[2]})
  = \nu_{1}(\equivalenceClass{aaba}_{\equivTable[2]})
  = \nu_{2}(\equivalenceClass{aab}_{\equivTable[2]})
  &= 2.
\end{align*}
This allows to abstract the transitions as:
\begin{align*}
  \tau_{0}(1, a)
  = \tau_{1}(1, a)
  = \tau_{2}(1, a)
  &= (1, +1)
  \\
  \tau_{0}(1, b)
  = \tau_{1}(1, b)
  = \tau_{2}(1, b)
  &= (2, 0)
  \\
  \tau_{0}(2, a)
  = \tau_{0}(2, b)
  = \tau_{1}(2, b)
  = \tau_{2}(2, b)
  &= (2, 0)
  \\
  \tau_{1}(2, a)
  = \tau_{2}(2, a)
  &= (2, -1).
\end{align*}
Then, we construct an ROCA \(\hypothesis\), as explained in
\Cref{def:observation_table:automaton},
which is given in \Cref{fig:roca:roca:learning:example:tables:two:hypo:roca}.
An equivalence query over \(\hypothesis\) returns true, meaning that
\(\languageOf{\hypothesis} = \languageOf{\automaton}\), and we are done.

\begin{figure}
  \centering
  \begin{subfigure}{\textwidth}
    \centering
  \begin{tikzpicture}[
  automaton,
]
  \node [state, initial]                  (eps) {\(\emptyword\)};
  \node [state, accepting, below=of eps]  (b)   {\(b\)};
  \node [state, right=of eps]             (a)   {\(a\)};
  \node [state, right=of a]               (aa)  {\(aa\)};
  \node [state, right=of aa]              (aaa) {\(aaa\)};
  \node [state, right=of b]               (aaba){\(aaba\)};
  \node [state, right=of aaba]            (aab) {\(aab\)};
  \path
    (eps) edge              node [left]   {\(b\)}     (b)
          edge              node [above]  {\(a\)}     (a)
    (b)   edge [loop below] node [below]  {\(a, b\)}  (b)
    (a)   edge              node [above]  {\(a\)}     (aa)
          edge              node [left]   {\(b\)}     (aaba)
    (aa)  edge              node [above]  {\(a\)}     (aaa)
          edge              node [left]   {\(b\)}     (aab)
    (aab) edge              node [above]  {\(a\)}     (aaba)
          edge [loop below] node [below]  {\(b\)}     ()
    (aaba)edge              node [above]  {\(a\)}     (b)
          edge [loop below] node [below]  {\(b\)}     ()
  ;
\end{tikzpicture}
    \caption{The DFA.}%
    \label{fig:roca:roca:learning:example:tables:two:hypo:dfa}
  \end{subfigure}

  \begin{subfigure}{\textwidth}
    \centering
    \begin{tikzpicture}[
    automaton,
    node distance = 80pt and 80pt,
    ]
    \node [state, initial]                  (eps) {\(b\)};
    \node [state, accepting, below=of eps]  (b)   {\(b\)};
    \node [state, right=of eps]             (a)   {\(a\)};
    \node [state, right=of a]               (aa)  {\(aa\)};
    \node [state, right=of b]               (aaba){\(aaba\)};
    \node [state, right=of aaba]            (aab) {\(aab\)};
    \path[deltaZero]
        (eps) edge              node [left]   {\(b, =0, 0\)} (b)
            edge              node [above]  {\(a, =0, 0\)} (a)
        (b)   edge [loop below] node [below, align=center]  {
                                            \(a, =0, 0\)\\
                                            \(b, =0, 0\)}  (b)
        (a)   edge              node [above]  {\(a, =0, +1\)}  (aa)
            edge [bend right=15] node [left]   {\(b, =0, 0\)} (aaba)
        (aaba)edge              node [above]  {\(a, =0, 0\)} (b)
            edge [in=-110, out=-140, loop] node [below left=-1pt and -12pt] {\(b, =0, 0\)} ()
    ;
    \path[deltaNotZero]
        (a)   edge [bend left=15] node [right]  {\(b, \neq 0, 0\)} (aaba)
        (aa)  edge [bend left=15] node [below]  {\(a, \neq 0, +1\)} (a)
            edge                node [right]  {\(b, \neq 0, 0\)} (aab)
        (aab) edge [loop below]   node [below]  {\(b, \neq 0, 0\)} (aab)
            edge                node [above]  {\(a, \neq 0, -1\)} (aaba)
        (aaba)edge [in=-70, out=-40, loop] node [below right=-1pt and -12pt, align=center] {
                                                \(a, \neq 0, -1\)\\
                                                \(b, \neq 0, 0\)} ()
    ;
    \draw [deltaNotZero, rounded corners = 10pt]
        let
        \p{s} = (a.north),
        \p{t} = (aa.north),
        \p{m} = ($(\p{s}) + (0, 0.5)$),
        in
        (\p{s}) -- (\x{s}, \y{m})
                -- (\x{t}, \y{m})
                node [above, midway]  {\(a, \neq 0, +1\)}
                -- (\p{t})
    ;
    \end{tikzpicture}
    \caption{The ROCA.}%
    \label{fig:roca:roca:learning:example:tables:two:hypo:roca}
  \end{subfigure}
  \caption{The hypothesis DFA constructed from the table of
  \Cref{fig:roca:roca:learning:example:tables:two:final}
  and a hypothesis ROCA.}
\end{figure}

\section{Experiments}\label{sec:experiments}

We evaluated our learning algorithm on two types of benchmarks.
The first uses randomly generated ROCAs, while the second focuses on a new approach to learn an ROCA that can efficiently verify whether a JSON document is valid against a given JSON schema. Notice that while there already exist several algorithms that infer
a JSON schema from a collection of JSON documents (see survey~\cite{DBLP:conf/edbt/BaaziziCGS19}), none are based on
learning techniques nor do they yield an automaton-based validation algorithm.

We first discuss some optimizations and implementation details, followed by the
random benchmarks.  We then introduce more precisely the context of our
JSON-based use case and give the results.

\subsection{Implementation details}

Let \(\observationTable = (\representatives, \separatorsCounters, \separatorsMapping, \obsMapping, \obsCounters)\) be an observation table up to counter limit \(\ell\).
We present here two optimizations regarding \(\observationTable\).
The first is an efficient way to store and manipulate \(\prefTable\), while the second improves the computations of the \approxSet s. Finally, We give our experimental framework.

\subsubsection{Using a tree to store observations.}\label{sec:impl:observation_tree}

The first optimization we consider is to store \(\prefTable\) in a tree structure, called the \emph{observation tree} and denoted by $\observationTree$.
Then, when a new word \(u\) is added in \(\prefTable\), it is sufficient to only update the path in \(\observationTree\) leading to \(u\). Finally, instead of directly storing the values for \(\obsMapping\) and \(\obsCounters\) in the table $\obsTable$, each cell of $\obsTable$ stores a pointer to a node in the tree $\observationTree$. We only give the general intuition here.

Hence, our goal is to store all the values that are needed in \(\observationTable\), i.e., the values for the functions \(\obsMapping\) and \(\obsCounters\) (on their respective domains).
If we store all the information directly in \(\observationTree\), having to recompute some \(\obsCounters\) due to a change in \(\prefTable\) is more immediate.
Indeed, let \(u\) be a word added in \(\prefTable\).
As only the prefixes of \(u\) can potentially have new values for \(\obsCounters\), it is sufficient to only iterate over the ancestors of the node storing the data for \(u\).
In other words, the tree structure reduces the runtime complexity (although the worst-case scenario does not change much).

When new representatives or separators have to be added in \(\observationTable\), nodes are added in \(\observationTree\) and the table is extended and updated with pointers to the tree.
Moreover, when a word \(u\) is added in \(\observationTree\), all the prefixes \(x\) of \(u\) must also be added, even if \(x \notin (\representatives \cup \representatives\Sigma) \separatorsMapping\).
As a consequence, the number of nodes in \(\observationTree\) will be greater than \(\sizeOfSet{(\representatives \cup \representatives \Sigma) \separatorsMapping}\) since all the prefixes are stored in the tree.
In other words, the tree increases the memory consumption of the algorithm but significantly improves the run time.

\subsubsection{Efficient computation of \(\Approx\).} 

Let us now discuss how to efficiently compute the \approxSet s.
Let \(u, v \in \representatives \cup \representatives\Sigma\).
If there exists \(s \in \separatorsCounters\) such that \(\obsMapping(us) \neq \obsMapping(vs)\), it follows that \(v \notin \Approx(u)\). Thus when computing \(\Approx(u)\), we can limit the elements in \(\representatives \cup \representatives \Sigma\) to consider to those words \(v\) that agree with \(u\) on the values of \(\obsMapping\).

More precisely, let us define a relation as follows.
Let \(u, v \in \representatives \cup \representatives\Sigma\).
We say that \(u \nerodeCongruence_{\obsTable} v\) if and only if \(\forall s \in \separatorsCounters, \obsMapping(us) = \obsMapping(vs)\). The \approxSet s can thus be defined as:

\begin{definition}[Optimization of \Cref{def:approx}]
    Let \(\observationTable = (\representatives, \separatorsCounters, \separatorsMapping, \obsCounters, \obsMapping)\) be an observation table up to \(\ell\).
    Let \(u \in \representatives \cup \representatives \Sigma\) and \(v \in \equivalenceClass{u}_{\nerodeCongruence_{\obsTable}}\).
    Then, \(v \in \Approx(u)\) if \(\forall s \in \separatorsCounters\), \(\obsCounters(us) \neq \bot\) and \(\obsCounters(vs) \neq \bot \implies \obsCounters(us) = \obsCounters(vs)\).
\end{definition}
With this definition, to compute \(\Approx(u)\), it is enough to iterate over the elements \(v \in \equivalenceClass{u}_{\nerodeCongruence_{\obsTable}}\) (instead of all $v \in \representatives \cup \representatives\Sigma$).

One can efficiently store each equivalence class of $\nerodeCongruence_{\obsTable}$ using a set, in order to obtain a polynomial time complexity for lookup, addition, and removal of an element, in the size of the table.\footnote{The complexity can be reduced to a constant time, in average, if we assume the set relies on hash tables.}
Moreover, we can also store the \(\Approx\) sets instead of recomputing them from scratch each time.

Notice that updating the classes of \(\nerodeCongruence_{\obsTable}\) can be easily done. Suppose a value \(\obsMapping(us)\) (with \(u \in \representatives \cup \representatives \Sigma\) and \(s \in \separatorsCounters\)) is modified, and let \(\equivalenceClass{v}_{\nerodeCongruence_{\obsTable}}\) be the class of \(u\) before the modification.
We have to remove \(u\) from the set for \(\equivalenceClass{v}_{\nerodeCongruence_{\obsTable}}\) and then add it to the set for \(\equivalenceClass{u}_{\nerodeCongruence_{\obsTable}}\).
This can be done in polynomial time.

Finally, we need to compute the \(\Approx\) sets only when the table has to be made closed and consistent (since new elements are added in \(\representatives \cup \representatives \Sigma\), \(\separatorsCounters\), or \(\separatorsMapping\)).
Notice that not all \approxSet s are modified and thus should not be recomputed. This happens for \(\Approx(u)\) when \(\equivalenceClass{u}_{\nerodeCongruence_{\obsTable}}\) is not modified and the values \(\obsCounters(us)\) remain unchanged.

\subsubsection{Experimental framework.}

The ROCAs and the learning algorithm were implemented by extending the well-known Java libraries \textsc{AutomataLib} and \textsc{LearnLib}~\cite{AutomataLib,LearnLib}.
These modifications can be consulted on our GitHub repositories~\cite{ModifiedAutomataLib,ModifiedLearnLib}, while the code for the benchmarks is available on~\cite{BenchmarksCode}.
Implementation specific details (such as the libraries) are given alongside the code.
The server used for the computations ran Debian 10 over Linux 5.4.73-1-pve with a 4-core Intel\textregistered{} Xeon\textregistered{} Silver 4214R Processor with 16.5M cache, and 64GB of RAM\@. 
Moreover, we used OpenJDK version 11.0.12.

\subsection{Random ROCAs}

We discuss our benchmarks based on randomly generated ROCAs.
We begin by explaining how this generation works. We then explain how we check the equivalence of two ROCAs.
We finally comment our results.

\subsubsection{Random generation of ROCAs.}

An ROCA with given size $n = \sizeOfSet{Q}$ is randomly generated such that:
\begin{itemize}
    \item \(\forall q \in Q, q\) has a probability \(0.5\) of being final,
    \item \(\forall q \in Q, \forall a \in \Sigma, \deltaNotZero(q, a) = (p, c)\) with \(p\) a random state in \(Q\) and \(c\) a random counter operation in \(\{-1, 0, +1\}\).
        We define \(\deltaZero(q, a) = (p, c)\) in a similar way except that \(c \in \{0, +1\}\).
\end{itemize}
All random draws are assumed to come from a uniform distribution.

Since this generation does not guarantee the produced ROCA has \(n\) reachable states, we produce 100 ROCAs and select the ROCA with a number of reachable states that is maximal.
However, note that it is still possible the resulting ROCA does not have \(n\) (co)-reachable states.

\subsubsection{Equivalence of two ROCAs.} 

The language equivalence problem of ROCAs is known to be
decidable and NL-complete~\cite{DBLP:journals/jcss/BohmGJ14}.
Unfortunately, the algorithm described
in~\cite{DBLP:journals/jcss/BohmGJ14} is difficult to implement.
Instead, we use an ``approximate'' equivalence oracle for our experiments.\footnote{The teacher might, with some small probability, answer with false positives but never with false negatives.}

Let \(\automaton\) and \(\automaton[B]\) be two ROCAs such that \(\automaton[B]\) is the learned ROCA from a periodic description with period \(k\).
The algorithm explores the configuration space of both ROCAs in parallel.
If, at some point, it reaches a pair of configurations such that one is accepting and the other not, then we have a counterexample.
However, to have an algorithm that eventually stops, we need to bound the counter value of the configurations to explore. Our approach is to first explore up to counter value \({\sizeOfSet{\automaton \times \automaton[B]}}^2\) (in view of~\cite[Proposition 18]{DBLP:journals/jcss/BohmGJ14} about shortest accepting runs in an ROCA). If no counterexample is found, we add \(k\) to the bound and, with probability \(0.5\), a new exploration is done up to the new bound.
We repeat this whole process until we find a counterexample or until the random draw forces us to stop.
This is summarized in \Cref{alg:equivalence_rocas}.

\begin{algorithm}
    \caption{Algorithm for checking the equivalence of two ROCAs}%
    \label{alg:equivalence_rocas}
    \begin{algorithmic}[1]
        \Require Let \(\automaton = (Q^{\automaton}, \Sigma, \deltaZero^{\automaton}, \deltaNotZero^{\automaton}, q_0^{\automaton}, F^{\automaton})\) and \(\automaton[B] = (Q^{\automaton[B]}, \Sigma, \deltaZero^{\automaton[B]}, \deltaNotZero^{\automaton[B]}, q_0^{\automaton[B]}, F^{\automaton[B]})\) be two ROCAs, and \(k\) be the period of the description that was used to construct \(\automaton[B]\)
        \Ensure True is returned if our approximation for \(\languageOf{\automaton} = \languageOf{\automaton[B]}\) holds, or a counterexample otherwise
        \Statex
        \State \(Q \gets\) a queue initialized with \((\emptyword, (q_0^{\automaton}, 0), (q_0^{\automaton[B]}, 0))\)
        \State \(\ell \gets {(\sizeOfSet{Q^{\automaton}} \cdot \sizeOfSet{Q^{\automaton[B]}})}^2\)
        \Repeat
            \While{\(Q\) is not empty}
                \State \((w, c_{\automaton}, c_{\automaton[B]}) \gets \) the next element in \(Q\) \Comment{We pop the tuple from \(Q\)}
                \ForAll{\(a \in \Sigma\)}
                    \State Let \((q, n) \in Q^{\automaton} \times \N\) such that \(c_{\automaton} \transition\limits^a_{\automaton} (q, n)\)
                    \State Let \((p, m) \in Q^{\automaton[B]} \times \N\) such that \(c_{\automaton[B]} \transition\limits^a_{\automaton[B]} (p, m)\)
                    \If{\((q \in F^{\automaton} \land n = 0) \iff \neg(p \in F^{\automaton[B]} \land m = 0)\)}
                        \State \Return \(wa\)\Comment{\(wa\) is a witness that \(\automaton\) and \(\automaton[B]\) are not equivalent}
                    \ElsIf{\(n \leq \ell \land m \leq \ell\) and \((wa, (q, n), (p, m))\) has not yet been seen}
                        \State Add \((wa, (q, n), (p, m))\) in \(Q\)
                    \EndIf
                \EndFor
            \EndWhile
            \State \(\ell \gets \ell + k\)
        \Until{we stop with probability \(0.5\)}
        \State \Return true
    \end{algorithmic}
\end{algorithm}

\subsubsection{Results.}

\begin{table}
    \centering
    \begin{tabular}{r >{\raggedleft \arraybackslash}p{30pt} @{\hspace{10pt}} r}
\toprule
$|Q|$ &  $|\Sigma|$ &  TO (20 min) \\
\midrule
             4 &           1 &            0 \\
             4 &           2 &            5 \\
             4 &           3 &           16 \\
             4 &           4 &           41 \\
             5 &           1 &            0 \\
             5 &           2 &           23 \\
             5 &           3 &           55 \\
             5 &           4 &           83 \\
\bottomrule
\end{tabular}

    \caption{Number (over 100) of executions with a timeout (TO). The executions for the missing pairs $(\sizeOfSet{Q},\sizeOfSet{\Sigma})$ could all finish.}%
    \label{table:timeouts_errors}
\end{table}
For our random benchmarks, we let the size $|Q|$ of the ROCA vary between one and five, and the size $|\Sigma|$ of the alphabet between one and four.
For each pair $(\sizeOfSet{Q},\sizeOfSet{\Sigma})$, we execute the learning algorithm on 100 ROCAs (generated as explained above).
We set a timeout of 20 minutes and a memory limit of 16GB\@.
The number of executions with a timeout is given in \Cref{table:timeouts_errors} (we do not give the pairs $(\sizeOfSet{Q},\sizeOfSet{\Sigma})$ where every execution could finish).

\begin{figure}
    \centering
    \begin{subfigure}{.45\textwidth}
        \centering
        \includegraphics{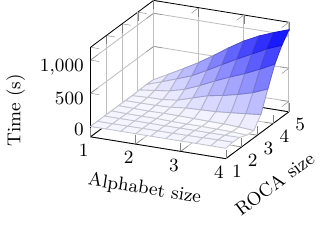}
        \caption{Mean of the total time taken by the learning algorithm.}%
        \label{fig:random:with_alphabet:time}
    \end{subfigure}%
    \hfill
    \begin{subfigure}{.45\textwidth}
        \centering
        \includegraphics{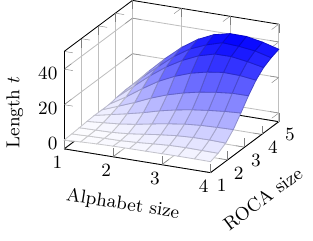}
        \caption{Mean of the length $t$ of the longest counterexample.}%
        \label{fig:random:with_alphabet:cex}
    \end{subfigure}

    \begin{subfigure}{.45\textwidth}
        \centering
        \includegraphics{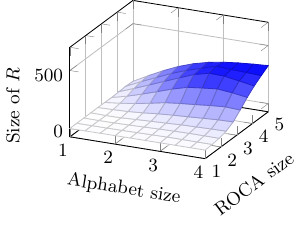}
        \caption{Mean of the final size of \(\representatives\).}%
        \label{fig:random:with_alphabet:R}
    \end{subfigure}%
    \hfill
    \begin{subfigure}{.45\textwidth}
        \centering
        \includegraphics{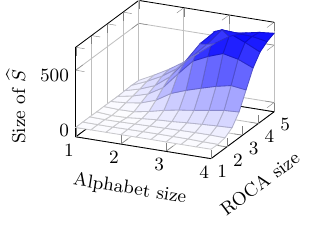}
        \caption{Mean of the final size of \(\separatorsMapping\).}%
        \label{fig:random:with_alphabet:HatS}
    \end{subfigure}
    \caption{Results for the benchmarks based on random ROCAs.}%
    \label{fig:random:with_alphabet}
\end{figure}

The mean of the total time taken by the algorithm is given in \Cref{fig:random:with_alphabet:time}.
One can see that it has an exponential growth in both sizes $\sizeOfSet{Q}$ and $\sizeOfSet{\Sigma}$.
Note that executions with a timeout had their execution time set to 20 minutes, in order to highlight the curve.
Let us now drop all the executions with a timeout. The mean length of the longest counterexample provided by the teacher for (partial) equivalence queries is presented in \Cref{fig:random:with_alphabet:cex} and the final size of the sets \(\representatives\) and \(\separatorsMapping\) are presented in \Cref{fig:random:with_alphabet:R,fig:random:with_alphabet:HatS}. 
Note that the curves go down  due to the limited number of remaining executions (for instance, the ones that could finish did not require long counterexamples).
We can see that \(\separatorsMapping\) grows larger than \(\representatives\). We conclude that these empirical results confirm the theoretical complexity claims from \Cref{thm:main} and \Cref{prop:nbrOpen-sizeR,prop:sizeSHatf}.

We conclude this section on the benchmarks based on random ROCAs by highlighting the fact that our random ROCAs do not reflect real-life automata.
That is, in more concrete cases, ROCAs may have a natural structure that is induced by the problem.
Thus, results may be very different.
We also applied our learning algorithm to a more realistic use case.

\subsection{JSON documents and JSON Schemas}\label{sec:json}

Let us now discuss the second set of benchmarks, which is a proof of concept for our learning algorithm based on JSON documents~\cite{DBLP:journals/rfc/rfc8259}. This format is the currently most popular one used for exchanging information on the web.

Our goal is to construct an ROCA that can validate a JSON document, according to some given constraints.
This use case is inspired by~\cite{DBLP:conf/webdb/ChiticR04} but applied on a more recent format.

A \emph{JSON document} is a text document that follows a specific structure.
Namely, five different types of data can be present in a document\footnote{The standard specifies a sixth type but we do not consider it in our use case.}:
\begin{itemize}
    \item An \emph{object} is an unordered collection of pairs key-value where key is a finite string and value can be any of the five different data types.
        An object must start with \verb!{! and end with \verb!}!.
    \item An \emph{array} is an ordered collection of values.
        Again, a value can be any of the five different data types.
        An array must start with \verb![! and end with \verb!]!.
    \item A \emph{string} is a finite sequence of any Unicode characters and must start and end with \verb!"!.
    \item A \emph{number} is any positive or negative decimal real number.
        In particular, an \emph{integer} is any positive or negative number without a decimal part.
    \item A \emph{boolean} can be \verb!true! or \verb!false!.
\end{itemize}
A JSON document must start with an object.

In the context we consider, the constraints a document must satisfy are given by a \emph{JSON Schema} which is itself a JSON document describing the kind of values that must be associated to each key.
An example of a schema is given in \Cref{fig:json}.
JSON schemas are described more thoroughly on the official website~\cite{JSONSchemaSite}.
They can be seen as a counterpart to DTDs for XML documents.
A schema can allow a recursive structure (to model a tree, for instance).
The ranges for the different values can be restricted.
For example, one can force an integer to be in the interval \([0, 10]\), or to be a multiple of two, and so on.

\lstinputlisting[language=JSON, caption=Example of a JSON Schema., label=fig:json, float]{basicTypes.json}

\subsubsection{Learning a JSON Schema as an ROCA.}

We propose to validate a JSON document against a given JSON schema as follows.
The learner learns an ROCA from the schema. With this ROCA, one can easily decide whether a JSON document is valid against the schema.

In this learning process, we suppose the teacher knows the target schema and the queries are specialized as follows:
\begin{itemize}
    \item Membership query: the learner provides a JSON document and the teacher returns true if and only the document is valid for the schema.
    \item Counter value query: the learner provides a JSON document and the teacher returns the number of unmatched \verb!{! and \verb![!. Adding the two values is a heuristic abstraction that allows us to summarize two-counter information into a single counter value. Importantly, the abstraction is a design choice regarding our implementation of a teacher for these experiments and not an assumption made by our learning algorithm.
    \item Partial equivalence query: the learner provides a DFA and a counter limit \(\ell\).
        The teacher randomly generates an a-priori fixed number of documents with a height not exceeding the counter limit $\ell$ and checks whether the DFA and the schema both agree on the documents' validity.
        If a disagreement is noticed, the incorrectly classified document is returned.
    \item Equivalence query: the learner provides an ROCA\@.
        The teacher also generates an a-priori number of randomly generated documents (but, this time, without a bound on the height) and verifies that the ROCA and the schema both agree on the documents' validity.
        If one is found, an incorrectly classified document is returned.
\end{itemize}

Note that the randomness of the (partial) equivalence queries implies that the learned ROCA may not completely recognize the same set of documents than the schema.
Setting the number of generated documents to be a large number would help reducing the probability that an incomplete ROCA is learned.
One could also control the randomness of the generation to force some important keys to appear, even if these keys are not required in the schema.

In order for an ROCA to be learned in a reasonable time, some abstractions must be made mainly for reducing the alphabet size.
\begin{itemize}
    \item If an object has a key named \verb!key!, we consider the sequence of characters \verb!"key"! as a single alphabet symbol.
    \item Strings, integers, and numbers are abstracted as follows. All strings must be equal to \verb!"\S"!, all integers to \verb!"\I"!, and all numbers to \verb!"\D"!.
        Booleans are left as-is since they can only take two values (\verb!true! or \verb!false!).
    \item The symbols \verb!,!, \verb!{!, \verb!}!, \verb![!, \verb!]!, \verb!:! are all considered as different alphabet symbols (note that since \verb!"! is considered directly into the keys' symbols or the values' symbols, that symbol does not appear here).
\end{itemize}

Moreover, notice that the alphabet is not known at the start of the learning process (due to the fact that keys can be any strings). Therefore we slightly modify the learning algorithm to support growing alphabets.
More precisely, the learner's alphabet starts with the symbols \verb!{! and \verb!}! (to guarantee we can at least produce a syntactically valid JSON document for the first partial equivalence query) and is augmented each time a new symbol is seen.

A last abstraction was applied for our benchmarks: we assume that each object is composed of an ordered (instead of unordered) collection of pairs key-value.
It is to be noted that the learning algorithm can learn without this restriction. However it requires substantially more time as all possible orders inside each object must be considered (and they all induce a different set of states in the ROCA). Our learning algorithm could be improved by taking into account that objects are unordered collections; we did not investigate this approach.

\subsubsection{Results.}

\begin{table}
    \centering
    \begin{tabularx}{\textwidth}{@{} R R R R R R R R @{}}
    \toprule
    Schema & TO (1h) &  Time (s) &  $\lengthCe$ &  $|R|$ &  $|\widehat{S}|$ &  $|\automaton|$ &  $|\Sigma|$ \\
    \midrule
    1 & 0 &     16.39 &        31.00 &  55.55 &            32.00 &           33.00 &       19.00\\
    2 & 27 &   1045.64 &        12.99 &  57.84 &            33.74 &           44.29 &       14.70 \\
    3 & 19 &    922.19 &        49.49 & 171.94 &            50.49 &           51.16 &        9.00\\
    \bottomrule
\end{tabularx}
    \caption{Results for JSON benchmarks.}%
    \label{table:results:json}
\end{table}

We considered three JSON schemas.
The first is the document from \Cref{fig:json} which lists all possible types of values (i.e., it contains an integer, a double, and so on).
The second is a real-world JSON schema\footnote{We downloaded the schema from the JSON Schema Store~\cite{SchemaStore}. We modified the file to remove all constraints of type \enquote{enum}.} used by a code coverage tool called Codecov~\cite{codecov}.
Finally, the third schema encodes a recursive list, i.e., an object containing a list with at most one object defined recursively.
This last example is used to force the behavior graph to be ultimately periodic, and is given in \Cref{sec:recursive_schema}.

The \Cref{table:results:json} gives the results of the benchmarks, obtained by fixing the number of random documents by (partial) equivalence query to be 1000.
For each schema, 100 experiments were conducted with a time limit of one hour by execution.
We can see that real-world JSON schemas and recursively-defined schemas can be both learned by our approach.
One last interesting statistics we can extract from the results is that the number of representatives is larger than the number of separators, unlike for the random benchmarks.

\section{Conclusion}\label{sec:conclusion}

We have designed a new learning algorithm for realtime one-counter automata. Our algorithm uses membership, counter value, partial equivalence, and equivalence queries. The algorithm executes in exponential time and space, and requires at most an exponential number of queries. We have implemented this algorithm and evaluated it on two benchmarks.

As future work, we believe one might be able to remove the use of partial equivalence queries. In this direction, perhaps replacing our use of Neider and L\"oding's VCA algorithm by Isberner's TTT algorithm~\cite{DBLP:conf/rv/IsbernerHS14} for visibly pushdown automata might help. Indeed, the TTT algorithm does not need partial equivalence queries. 

Another interesting direction concerns lowering the (query) complexity of our algorithm. In~\cite{DBLP:journals/iandc/RivestS93}, it is proved that \LStar algorithm~\cite{DBLP:journals/iandc/Angluin87} can be modified so that adding a single separator after a failed equivalence query is enough to update the observation table. This would remove the suffix-closedness requirements on the separator sets $S$ and $\widehat{S}$. It is not immediately clear to us whether the definition of \(\bot\)-consistency presented here holds in that context. Further optimizations, such as discrimination tree-based algorithms (such as Kearns and Vazirani's algorithm~\cite{DBLP:books/daglib/0041035}), also do not need the separator set to be suffix-closed.

It would also be interesting to directly learn the one-counter language instead of an ROCA\@. Indeed, our algorithm learns some ROCA that accepts the target language. It would be desirable to learn some canonical representation of the language (e.g.\ a minimal automaton, for some notion of minimality).

Finally, as far as we know, there currently is no active learning algorithm for deterministic one-counter automata (such that $\emptyword$-transitions are allowed). We want to study how we can adapt our learning algorithm in this context.

\printbibliography

\clearpage
\appendix

\section{Angluin's DFA-learning algorithm}\label{app:Lstar}

The $\LStar$ algorithm proposed in~\cite{DBLP:journals/iandc/Angluin87} is the following one.

\begin{algorithm}
    \caption{Learning a DFA~\cite{DBLP:journals/iandc/Angluin87}}%
    \label{alg:lstar}
    \begin{algorithmic}[1]
        \Require The target language \(L\)
        \Ensure A DFA accepting \(L\) is returned
        \Statex
        \State Initialize \(\obsTable[]\) with \(\representatives = \separatorsCounters = \{\emptyword\}\)
        \State Fill \(\obsTable[]\) by asking membership queries
        \While{true}
            \State Make $\obsTable[]$ closed and \consistent
            \State Construct the DFA \(\automaton_{\obsTable[]}\) from \(\obsTable[]\)
            \State Ask an equivalence query over \(\automaton_{\obsTable[]}\)
            \If{the answer is positive}
                \State \Return \(\automaton_{\obsTable[]}\)
            \Else
                \State Given the counterexample $w$, add $\prefixes{w}$ to $R$
                \State Update \(\obsTable[]\) by asking membership queries
            \EndIf
        \EndWhile
    \end{algorithmic}
\end{algorithm}

\section{Correctness of the ROCA construction from the periodic description of a behavior graph}\label{sec:correct-period2roca}

Let $\behaviorGraphAut$ be the behavior graph of an ROCA $\automaton$ and $\alpha$ be a periodic description of $\behaviorGraphAut$. We explain in the proof of \Cref{prop:DescriptionAut} how to construct an ROCA $\automaton_{\alpha}$ from $\alpha$ that accepts the same language $L$ as $\automaton$. We here prove that this construction is correct.

\begin{proof}[of \Cref{prop:DescriptionAut} (correctness)]
    We prove that
\(\forall u \in \Sigma^*\), it holds that \(u \in L \iff u \in \languageOf{\automaton_{\alpha}}\).
    To do so, in view of how are defined the final states of $\automaton_{\alpha}$, we just need to show that \(\forall u \in \Sigma^*\), we have \[\equivalenceClass{\emptyword}_{\equivBGROCA} \transition^u_{\behaviorGraphAut} \equivalenceClass{u}_{\equivBGROCA} \iff ((\nu_0(\equivalenceClass{\emptyword}_{\equivBGROCA}), 0), 0) \transition^u_{\automaton_{\alpha}} ((\nu_c(\equivalenceClass{u}_\equivBGROCA), c), \max\{0, n - m\})\] with \(n = \counterAut{u}\) and
    \[
        c = \begin{cases}
            n & \text{if \(n \leq m\)},\\
            m + ((n - m) \mod k) & \text{otherwise.}
        \end{cases}
    \]

    Notice that if \(\equivalenceClass{u}_{\equivBGROCA}\) is a reachable state in \(\behaviorGraphAut\), then \(u \in \prefixes{L}\) by definition.
    Moreover, by construction, only the states \(\equivalenceClass{u}_{\equivBGROCA}\) are considered when constructing \(\automaton_{\alpha}\).
    Thus, \(u \in \prefixes{L}\) and \(\counterAut{u}\) is well-defined.

    We do so by induction over the length of \(u\).
    Let \(u \in \Sigma^*\) such that \(\lengthOf{u} = 0\), i.e., \(u = \emptyword\).
    So, \((q_0, 0) \transition\limits^{\emptyword}_{\automaton_\alpha} (q_0, 0)\).
    By construction, \((q_0, 0) = (\nu_0(\equivalenceClass{\emptyword}_{\equivBGROCA}), 0)\) which implies the statement is verified.
    
    Now, let \(i \in \N\) and assume it is true for any \(v \in \Sigma^*\) of length \(i\).
    Let \(u \in \Sigma^*\) such that \(\lengthOf{u} = i + 1\), i.e., \(u = va\) with \(a \in \Sigma\) and \(v \in \Sigma^*\) such that \(\lengthOf{v} = i\).
    By the induction hypothesis, we know that \(\equivalenceClass{\emptyword}_{\equivBGROCA} \transition\limits^v_{\behaviorGraphAut} \equivalenceClass{v}_{\equivBGROCA} \iff ((\nu_0(\equivalenceClass{\emptyword}_{\equivBGROCA}), 0), 0) \transition\limits^v_{\automaton_{\alpha}} ((\nu_{c'}(\equivalenceClass{v}_{\equivBGROCA}), c'), \max\{0, n' - m\})\) with \(n' = \counterAut{v}\) and
    \[
        c' = \begin{cases}
            n' & \text{if \(n' \leq m\),}\\
            m + ((n' - m) \mod k) & \text{otherwise.}
        \end{cases}
    \]
    It is consequently sufficient to prove that the last transition is correct, i.e.,
    \[
        \equivalenceClass{v}_{\equivBGROCA} \transition^a_{\behaviorGraphAut} \equivalenceClass{u}_{\equivBGROCA} \iff ((\nu_{c'}(\equivalenceClass{v}_{\equivBGROCA}), c'), \max\{n' - m\}) \transition^a_{\automaton_{\alpha}} ((\nu_{c}(\equivalenceClass{u}_{\equivBGROCA}), c), \max\{n - m\})
    \] with \(n = \counterAut{u}\) and
    \[
        c = \begin{cases}
            n & \text{if \(n \leq m\),}\\
            m + ((n - m) \mod k) & \text{otherwise.}
        \end{cases}
    \]
    We do so in three cases:
    \begin{itemize}
        \item If \((n' < m) \lor (n' = m \land n \leq m)\), then \(\max\{0, n - m\} = 0 = \max\{0, n' - m\}\), and we only work with \(\deltaZero\).
            Moreover, we do not change the counter value.
            We have \(c' = n' = \counterAut{v}\) and \(c = n = \counterAut{u}\) and the following equivalences (note that \(n - n' \in \{-1, 0, +1\}\)):
            \begin{align*}
                \equivalenceClass{v}_{\equivBGROCA} \transition^a_{\behaviorGraphAut} \equivalenceClass{u}_{\equivBGROCA} &\iff \tau_{n'}(\nu_{n'}(\equivalenceClass{v}_{\equivBGROCA}), a) = (\nu_{n}(\equivalenceClass{u}_{\equivBGROCA}), n - n')\\
                &\iff \tau_{c'}(\nu_{c'}(\equivalenceClass{v}_{\equivBGROCA}), a) = (\nu_{c}(\equivalenceClass{u}_{\equivBGROCA}), n - n')\\
                &\iff \deltaZero((\nu_{c'}(\equivalenceClass{v}_{\equivBGROCA}), c'), a) = ((\nu_c(\equivalenceClass{u}_{\equivBGROCA}), c), 0)\\
                &\iff ((\nu_{c'}(\equivalenceClass{v}_{\equivBGROCA}), c'), 0) \transition^a_{\automaton_{\alpha}} ((\nu_c(\equivalenceClass{u}_{\equivBGROCA}), c), 0).
            \end{align*}
        \item If \(n' = m\) and \(n > m\), then  $n = m+1$ since $n' = \counterAut{v}$ and $n = \counterAut{u}$. Moreover, \(\max\{0, n' - m\} = 0\) and \(\max\{0, n - m\} = \max\{0, m + 1 - m\} = 1 \).
            Again, we work with \(\deltaZero\) but we increase the counter value.
            We have \(c' = m\) and \(c = m + ((m + 1 - m) \mod k) = m + (1 \mod k)\).
            We have the following equivalences:
            \begin{align*}
                \equivalenceClass{v}_{\equivBGROCA} \transition^a_{\behaviorGraphAut} \equivalenceClass{u}_{\equivBGROCA} &\iff \tau_{n'}(\nu_{n'}(\equivalenceClass{v}_{\equivBGROCA}), a) = (\nu_{n}(\equivalenceClass{u}_{\equivBGROCA}), n - n')\\
                &\iff \tau_{c'}(\nu_{c'}(\equivalenceClass{v}_{\equivBGROCA}), a) = (\nu_{c}(\equivalenceClass{u}_{\equivBGROCA}), 1)\\
                &\iff \deltaZero((\nu_{c'}(\equivalenceClass{v}_{\equivBGROCA}), c'), a) = ((\nu_{c}(\equivalenceClass{u}_{\equivBGROCA}), c), +1)\\
                &\iff ((\nu_{c'}(\equivalenceClass{v}_{\equivBGROCA}), c'), 0) \transition^a_{\automaton_{\alpha}} ((\nu_{c}(\equivalenceClass{u}_{\equivBGROCA}), c), +1).
            \end{align*}
        \item Finally, if \(n' > m\), then \(\max\{0, n' - m\} = n' - m\) and \(\max\{0, n - m\} = n - m\).
            This time, we work with \(\deltaNotZero\) and the counter value is modified according to \(n - n'\).
            We have \(c' = m + ((n' - m) \mod k)\) and \(c = m + ((n - m) \mod k)\).
            We have the following equivalences:
            \begin{align*}
                \equivalenceClass{v}_{\equivBGROCA} \transition^a_{\behaviorGraphAut} \equivalenceClass{u}_{\equivBGROCA} &\iff \tau_{n'}(\nu_{n'}(\equivalenceClass{v}_{\equivBGROCA}), a) = (\nu_n(\equivalenceClass{u}_{\equivBGROCA}), n - n')\\
                &\iff \tau_{c'}(\nu_{c'}(\equivalenceClass{v}_{\equivBGROCA}), a) = (\nu_{c}(\equivalenceClass{u}_{\equivBGROCA}), n - n')\\
                &\iff \deltaNotZero((\nu_{c'}(\equivalenceClass{v}_{\equivBGROCA}), c'), a) = ((\nu_c(\equivalenceClass{u}_{\equivBGROCA}), c), n - n')\\
                &\iff ((\nu_{c'}(\equivalenceClass{v}_{\equivBGROCA}), c'), n' - m) \transition^a_{\automaton_{\alpha}} ((\nu_c(\equivalenceClass{u}_{\equivBGROCA}), c), n - m).
            \end{align*}
    \end{itemize}
    We have shown that the last transition is correct, in every case.
    It is thus obvious that \(\languageOf{\automaton_{\alpha}} = L = \languageOf{\behaviorGraphAut}\).
\qed\end{proof}

\section{Upper bounds on recursive functions}\label{sec:bounds_recurrence} 

\begin{lemma}\label{lemma:bounds_recurrence}
    Let \(\alpha, \beta \geq 1\) be constants and \(S\) be a function defined as:
    \begin{align*}
        S(0) &= \beta,\\
        S(j) &= S(j - 1) \cdot \alpha \cdot j + \beta, \quad\forall j \geq 1.
    \end{align*}
    Then, for all \(j \in \N\), it holds that \(S(j) \leq (j + 1 ) \cdot {(\alpha \cdot j)}^j \cdot \beta\).
\end{lemma}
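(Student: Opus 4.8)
The plan is to prove the bound by induction on $j$, since both the recurrence defining $S$ and the claimed upper bound are indexed by $j$ and the recurrence expresses $S(j)$ directly in terms of $S(j-1)$. For the base case $j = 0$, I would simply observe that $S(0) = \beta$ while the right-hand side evaluates to $(0+1) \cdot (\alpha \cdot 0)^0 \cdot \beta = \beta$ under the standard convention $0^0 = 1$, so the inequality holds (with equality).

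For the inductive step, I would fix $j \geq 1$ and assume the bound at index $j-1$, namely $S(j-1) \leq j \cdot (\alpha(j-1))^{j-1} \cdot \beta$. Substituting into the recurrence and expanding $(\alpha(j-1))^{j-1} = \alpha^{j-1}(j-1)^{j-1}$ gives
\[
S(j) = S(j-1)\,\alpha\, j + \beta \leq j^2\,\alpha^j\,(j-1)^{j-1}\,\beta + \beta.
\]
To reach the target $(j+1)(\alpha j)^j \beta = \alpha^j\bigl(j^{j+1} + j^j\bigr)\beta$, I would invoke two elementary facts valid for $j \geq 1$: first $(j-1)^{j-1} \leq j^{j-1}$ (which also holds at $j = 1$, where both sides equal $1$), so that $j^2 \alpha^j (j-1)^{j-1}\beta \leq \alpha^j j^{j+1}\beta$; and second $1 \leq \alpha^j j^j$, which is immediate from $\alpha \geq 1$ and $j \geq 1$. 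Combining these yields
\[
S(j) \leq \alpha^j j^{j+1}\beta + \beta \leq \alpha^j j^{j+1}\beta + \alpha^j j^j \beta = (j+1)(\alpha j)^j \beta,
\]
which closes the induction.

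The argument is essentially routine once it is set up this way, so I do not expect a serious obstacle; the only delicate points are bookkeeping at the boundary $j = 1$, where the terms $0^0$ and $(j-1)^{j-1}$ must be read under the convention $0^0 = 1$, and the observation that the additive constant $+\beta$ in the recurrence is precisely what the slack factor $(j+1)$ (rather than $j$) in the bound is there to absorb, via the inequality $1 \leq \alpha^j j^j$. I would therefore state those two elementary inequalities explicitly and treat the remaining arithmetic simplifications as straightforward.
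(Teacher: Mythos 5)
Your proof is correct, but it takes a genuinely different route from the paper. The paper does not use induction: it unrolls the recurrence explicitly, writing $S(j) = \bigl(\alpha^j P(1,j) + \alpha^{j-1}P(2,j) + \dots + \alpha P(j,j) + 1\bigr)\cdot\beta$ where $P(k,\ell) = k(k+1)\dotsm\ell$, and then bounds each of the $j+1$ summands by $(\alpha j)^j$ using $P(k,j) \leq j^{j-k+1} \leq j^j$. That expansion makes the origin of the factor $(j+1)$ completely transparent --- it is literally the number of terms in the unrolled sum --- and it yields an exact closed form for $S(j)$ before any estimation. Your induction instead trades the bookkeeping of the full sum for two one-line inequalities, $(j-1)^{j-1} \leq j^{j-1}$ and $1 \leq \alpha^j j^j$, and correctly identifies that the slack between $j$ and $j+1$ in the bound is exactly what absorbs the additive $+\beta$; the arithmetic in your inductive step checks out, including the boundary case $j=1$ under the convention $0^0 = 1$ (a convention the paper's statement needs at $j=0$ as well). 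Both arguments are sound; yours is somewhat easier to verify line by line, while the paper's shows more structure of the recurrence.
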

\begin{proof}
    Let \(P(k, \ell) = k \cdot (k + 1) \cdot (k + 2) \dotsm \ell\) for any $k, \ell \in \N$ such that \(1 \leq k \leq \ell\).
    Notice that \(P(k, \ell) \leq \ell^{\ell - k +1} \leq \ell^{\ell}\).
    
    Let $j \in \N$. We have:
    \begin{align*}
        S(j)    &= S(j - 1) \cdot \alpha \cdot j + \beta\\
                &= (S(j - 2) \cdot \alpha \cdot (j - 1) + \beta) \cdot \alpha \cdot j + \beta\\
                &= ((S(j - 3) \cdot \alpha \cdot (j - 2) + \beta) \cdot \alpha \cdot (j-1) + \beta) \cdot \alpha \cdot j + \beta\\
                &= \dotso\\
                &= S(0) \cdot \alpha^j P(1, j) + (\alpha^{j-1} P(2, j) + \alpha^{j-2} P(3, j) + \dotsb + \alpha P(j, j) + 1) \cdot \beta\\
                &= (\alpha^j P(1, j) + \alpha^{j-1} P(2, j) + \alpha^{j-2} P(3, j) + \dotsb + \alpha P(j, j) + 1) \cdot \beta\\
                &\leq (j + 1) \cdot {(\alpha \cdot j)}^j \cdot \beta.
    \end{align*}
\qed\end{proof}

\section{Third schema used for the experiments from \Cref{sec:json}}\label{sec:recursive_schema}
\Cref{fig:json:recursive} gives the third JSON schema that was used in our JSON-based benchmarks presented in \Cref{sec:json}.
The \verb!{"$ref": "#"}! indicates that the each item in the array \verb!children! is defined as the top object, i.e., we have a recursive definition.
\lstinputlisting[language=json, label=fig:json:recursive, caption=Third schema.]{recursiveList.json}

\end{document}